\numberwithin{equation}{section}
\newtheorem{thm}{Theorem}[section]
\newtheorem*{thm*}{Theorem}
\newtheorem*{prop*}{Proposition}
\newtheorem{theorem}[thm]{Theorem}
\newtheorem{cor}[thm]{Corollary}
\newtheorem{prop}[thm]{Proposition}
\newtheorem{proposition}[thm]{Proposition}
\newtheorem{lem}[thm]{Lemma}
\newtheorem{lemma}[thm]{Lemma}
\theoremstyle{definition}
\newtheorem{defn}[thm]{Definition}
\newtheorem*{defn*}{Definition}
\newtheorem{definition}[thm]{Definition}
\newtheorem{ex}[thm]{Example}
\newtheorem{rem}[thm]{Remark}
\newtheorem{remark}[thm]{Remark}
\newcommand{\sT}{\mathscr{T}}
\newcommand{\N}{\mathbb{N}}
\newcommand{\id}{\mathrm{id}}
\DeclareMathOperator*{\colim}{colim}
\newcommand{\Z}{\mathbb{Z}}
\newcommand{\C}{\mathbb{C}}
\renewcommand{\to}{\longrightarrow}
\newcommand{\fA}{\mathfrak{A}}
\newcommand{\cQ}{\mathcal{Q}}
\newcommand{\cP}{\mathcal{P}}
\newcommand{\cE}{\mathcal{E}}
\newcommand{\cF}{\mathcal{F}}
\newcommand{\cB}{\mathcal{B}}
\newcommand{\cI}{\mathcal{I}}
\newcommand{\Hslab}[1]{H^{\mathrm{slab},a}}
\renewcommand{\tr}{\mathrm{tr}}
\newcommand{\sP}{\mathscr{P}}
\newcommand{\tn}[1]{\textnormal{#1}}
\newcommand{\defeq}{\vcentcolon=}
\renewcommand{\1}{\mathds{1}}
\newcommand{\bbC}{\mathbb{C}}
\newcommand{\bbS}{\mathbb{S}}
\newcommand{\bbP}{\mathbb{P}}
\newcommand{\bbN}{\mathbb{N}}
\newcommand{\bbZ}{\mathbb{Z}}
\newcommand{\bbE}{\mathbb{E}}
\newcommand{\bbR}{\mathbb{R}}
\newcommand{\Unitary}{\mathrm{U}}
\newcommand{\GL}{\mathrm{GL}}
\newcommand{\cM}{\mathcal{M}}
\newcommand{\E}{\mathcal{E}}
\newcommand{\B}{\mathcal{B}}
\newcommand{\F}{\mathcal{F}}
\newcommand{\I}{\mathcal{I}}
\newcommand{\w}{\mathbf{w}}
\newcommand{\bv}{\mathbf{v}}
\DeclareMathOperator{\vecspan}{\mathrm{span}}
\newcommand{\lchi}{{\leq}\,\chi}
\newcommand{\lchiminus}{{\leq}\,(\chi-1)}
\newcommand{\lop}{L}
\newcommand{\rop}{R}
\DeclareMathOperator{\range}{\mathrm{range}}
\newcommand{\cX}{\mathcal{X}}
\newcommand{\cY}{\mathcal{Y}}
\newcommand{\PU}{\mathbb{P}\Unitary}
\renewcommand{\PV}{\mathbb{P}V}
\title[Classifying Space for MPS Phases]{A Classifying Space for Phases of Matrix Product States}
\author[Spiegel]{Daniel D.\ Spiegel\textsuperscript{1,2}}
\author[Qi]{Marvin Qi\textsuperscript{3,4,5}}
\author[Stephen]{David T.\ Stephen\textsuperscript{3,4,6}}
\author[Hermele]{Michael Hermele\textsuperscript{3,4}}
\author[Pflaum]{Markus J.\ Pflaum\textsuperscript{4,7}}
\author[Beaudry]{Agn\`es Beaudry\textsuperscript{7}}
\thanks{\textsuperscript{1}Department of Mathematics, University of California, Davis}
\thanks{\textsuperscript{2}Center for Quantum Mathematics and Physics, University of California, Davis}
\thanks{\textsuperscript{3}Department of Physics, University of Colorado Boulder}
\thanks{\textsuperscript{4}Center for Theory of Quantum Matter, University of Colorado Boulder}
\thanks{\textsuperscript{5}Kadanoff Center for Theoretical Physics, University of Chicago}
\thanks{\textsuperscript{6}Department of Physics and Institute for Quantum Information and Matter, California Institute of Technology}
\thanks{\textsuperscript{7}Department of Mathematics, University of Colorado Boulder}
\begin{document}

\begin{abstract}
We construct a topological space $\B$ consisting of translation invariant injective matrix product states (MPS) of all physical and bond dimensions and show that it has the weak homotopy type $K(\bbZ, 2) \times K(\bbZ, 3)$. The implication is that the phase of a family of such states parametrized by a space $X$ is completely determined by two invariants: a class in $H^2(X;\Z)$ corresponding to the Chern number per unit cell and a class in $H^3(X;\Z)$, the so-called Kapustin--Spodyneiko (KS) number. The space $\B$ is defined as the quotient of a contractible space $\E$ of MPS tensors 
by an equivalence relation describing gauge transformations of the tensors. 
We prove that the projection map $p\colon \E \rightarrow \B$ is a quasifibration, and this allows us to determine the weak homotopy type of $\B$. 
As an example, we review the Chern number pump---a family of MPS parametrized by $S^3$---and prove that it generates $\pi_3(\B)$.
\end{abstract}

\maketitle

\tableofcontents

\section{Introduction}

\subsection{Motivation and main result}
This paper is concerned with the topology of states of quantum spin chains, specifically those states that can be represented as translation invariant injective matrix product states (MPS). We construct a topological space $\B$ which is a \emph{classifying space} for phases of parametrized families of such states. That is, a family of translation invariant injective MPS parametrized by a compact Hausdorff space $X$ can be defined as a map $X \rightarrow \B$, and the phase of this family can be defined as the corresponding element in the set $[X, \B]$ of homotopy classes of such maps. This realizes the proposal we made in \cite[VIII]{ChartingGroundStates} for such a classifying space.

Our space $\B$ consists of translation invariant injective matrix product states (MPS) of \emph{all} physical and bond dimensions. We prove that  $\B$  is weakly homotopy equivalent to a space  $K(\bbZ, 2) \times K(\bbZ, 3)$, where $K(\Z, n)$ denotes an \emph{Eilenberg-Mac Lane space}, i.e., a space whose only non-zero homotopy group $\Z$ is in degree $n$. Under the correspondence
\[H^n(X;\Z) \cong [X, K(\Z, n)]\]
for $n\geq 0$, this rigorously establishes the following claim. The phase of a family of translation invariant injective MPS determines and is uniquely determined by two invariants: 
\begin{enumerate}[(1)]
\item a class in $H^2(X;\Z)$ corresponding to the Chern number, or Berry phase, per unit cell, which persists as a one-dimensional invariant because we have imposed translation invariance; and,
\item a class in $H^3(X;\Z)$, which is the Kapustin--Spodyneiko (KS) number of the family \cite{KapustinSpodyneiko}, and can be interpreted as a flow of Berry curvature \cite{qpump}.
\end{enumerate}

Our work is largely motivated by Kitaev's conjecture, as described in talks from 2013 \cite{kitaevSimonsCenter} and 2019 \cite{kitaev}. In these talks, Kitaev explained (among other things) how in each spatial dimension $k$, there should be a space of gapped bosonic invertible lattice systems $\cQ_k$, so that phases of families of such systems would correspond to homotopy classes of maps $[X,\cQ_k]$. He predicted the following homotopy types:
\[\cQ_0 \simeq K(\Z,2), \quad \quad \cQ_1 \simeq K(\Z,3),  \quad \quad \cQ_2 \simeq \Z \times K(\Z,4). \] 
This answer for $\cQ_0$ can be understood as follows. An invertible bosonic phase over a space $X$ is uniquely determined by the line bundle of ground states. Complex line bundles are classified by their first Chern class so that
\[\mathrm{Line}_\C(X) \cong H^2(X;\Z) =[X,K(\Z,2)].\] 
It follows that $\cQ_0$ should be a $K(\Z,2)$.  The prediction that $\cQ_1$ is a $K(\Z,3)$ means that phases parametrized by a compact Hausdorff space $X$ in spatial dimension one are in one-to-one correspondence with cohomology classes $H^3(X;\Z)$.  The correspondence is obtained by assigning to a family its KS number \cite{KapustinSpodyneiko, ArtymowiczKapustinSopenko}. 

Under the widely accepted hypothesis that gapped invertible lattice systems can be described by a topological quantum field theory at long range (e.g. \cite{KapustinTurzilloMinyoung}), one can use the work of Freed--Hopkins \cite{freed_hopkins} to describe the weak homotopy type of the spaces $\cQ_k$ for all $k\geq 0$.\footnote{The $\cQ_k$ should be the spaces of the loop spectrum $\Sigma^2 I_\bbZ MSO$.} 
However, precise constructions for the spaces $\cQ_k$ with $k \geq 1$ have not been given in terms of lattice systems, and this paper represents a first step towards a precise definition of $\cQ_1$.

It was shown by Hastings in \cite{Hastings_2007} that the ground state of a gapped Hamiltonian in one spatial dimension satisfies an area law for the entanglement entropy, and such area law states were shown in \cite{VerstraeteCirac2006,Schuch2008} to be efficiently approximated by an MPS.  Motivated by this result, one-dimensional bosonic gapped phases with finite internal symmetry group $G$ were studied by considering the $G$-action on MPS, leading to the classification of bosonic symmetry protected topological (SPT) phases in terms of group cohomology $H^2(G; {\rm U}(1)) \cong H^3(G; \Z)$ \cite{PollmannEntanglementSpectrum, ChenGuWen2011_1, FidkowskiKitaev, SchuchClassifyingQuantumPhasesMPS, ChenGuWen2011_2}.  This in turn was a key piece of evidence for Kitaev's prediction that $\cQ_1 \simeq K(\Z,3)$.  An important part of Kitaev's conjecture is the idea that an invertible $k$-dimensional system with internal symmetry $G$ can be modeled by a continuous map $BG \rightarrow \cQ_k$, where $BG$ is the classifying space of $G$.  Taking $\cQ_1 \simeq K(\Z,3)$, one thus recovers the expected classification of $1$-dimensional invertible systems with $G$ symmetry, namely  
\[[BG, \cQ_1] = H^3(BG; \Z) = H^3(G; \Z).\]  

MPS therefore provide a natural starting point and, after imposing translation invariance, a highly tractable class of states. In fact, there has been a string of recent work constructing phase invariants for families of such states, for example, \cite{PhysRevB.110.035114,OhyamaRyuHigherStructures, ohyama2024higher, Shiozaki:2023xky, ChartingGroundStates}. It already follows from \cite{GeometryofMPS} that, given a fixed bond dimension $\chi$, the classifying space $\cB(\chi)$ for translation invariant injective MPS of constant bond dimension $\chi$ has homotopy type
\begin{equation}\label{eq:Bchi_homotopy_type}
\cB(\chi) \simeq K(\Z,2) \times B\PU(\chi),
\end{equation}
i.e., the product of a $K(\Z,2)$ capturing the Chern number per unit cell with the classifying space of the projective unitary group $\PU(\chi) = \Unitary(\chi)/\Unitary(1)\1$, where here $\1$ is the $\chi\times \chi$ identity matrix. However, our work in \cite{ChartingGroundStates} demonstrates that the bond dimension of an MPS can change under a continuous deformation, and hence there are continuous families of injective MPS whose bond dimension is not constant. Such examples are at the core of the motivation for our construction of $\B$: it suggests that a classifying space for injective translation invariant MPS should, in some sense, be built by somehow gluing the spaces $\cB(\chi)$ for all $\chi$.

This paper achieves that vision. We construct a space $\cB$ whose points correspond to translation invariant injective matrix product states, and $\cB$ can be written as a set-theoretic union
\[
	\cB = \bigcup_{\chi \geq 1} \cB(\chi),
\]
where $\cB(\chi)$ has the homotopy type \eqref{eq:Bchi_homotopy_type} when endowed with the subspace topology obtained from $\cB$. The space $\cB$ is constructed as a quotient by gauge transformations of a contractible space $\cE$ whose points are MPS tensors. The quotient map $p \colon \E \rightarrow \B$ is a quasifibration and the weak homotopy type of $\cB$ is that of $K(\bbZ, 2) \times K(\bbZ, 3)$. Consequently, phases of translation invariant injective matrix product states parametrized by a compact Hausdorff space $X$ are classified by 
\[ [X,\cB ] \cong H^2(X;\Z) \times H^3(X;\Z).\]
The class in  $ H^2(X;\Z) $ corresponds to  a well-defined ``Chern number per unit cell'', and the class in $H^3(X;\Z)$ is the KS number of the phase.

The fact that the space $\B$ we construct is a $K(\bbZ, 2) \times K(\bbZ, 3)$ rather than the $K(\bbZ, 3)$ expected for $\cQ_1$ reflects our restriction to translation invariant states: if we were to relax translation invariance, we expect we could push-off the Chern number per unit cell to the boundary or rather to infinity on an infinite lattice. However, at this point, we have not made this idea rigorous in any sense.

\subsection{Sketch of the paper}

Below, we give a sketch of the construction for $\B$, with full definitions beginning in Section \ref{sec:thespace}. However, it is helpful to first describe a construction of $\cQ_0$ which is a template for our construction of $\B$.

One first identifies a zero-dimensional quantum state as an element of $\bbC \bbP^n$ for some $n$. One can obtain this element of $\bbC \bbP^n$ by first describing the state as an element of the unit sphere $\bbS^{2n+1}$ of the Hilbert space $\bbC^{n+1}$ and then quotienting by a $\Unitary(1)$ gauge freedom, corresponding to an unphysical overall phase. We then increase the ``physical dimension'' $n$ by allowing the state to become entangled with ancillary degrees of freedom. Up to a change of basis, the process of adding unentangled ancilla is given by embedding $\bbS^{2n+1} \rightarrow \bbS^{2m+1}$ by appending zeros to a vector. This descends to a map on quotients $\bbC \bbP^n \rightarrow \bbC \bbP^m$. Taking a colimit as $n \rightarrow \infty$ yields a diagram
\begin{equation}\label{eq:zero_d_diagram}
\begin{tikzcd}
\bbS^{3} \rar \dar & \bbS^5 \rar \dar & \cdots\rar & \bbS^\infty \dar \\
\bbC\bbP^1 \rar & \bbC \bbP^2 \rar & \cdots \rar & \bbC\bbP^\infty.
\end{tikzcd}
\end{equation}
The space $\cQ_0$ is identified as the colimit $\bbC \bbP^\infty$. The map $\bbS^\infty \rightarrow \bbC \bbP^\infty$ is a fiber bundle with typical fiber $\Unitary(1)$ and the total space $\bbS^\infty$ is contractible, so we see that $\cQ_0=\bbC \bbP^\infty$ is a $K(\bbZ, 2)$ from the long exact sequence on homotopy groups corresponding to the bundle.

In higher dimension, gapped ground states of bosonic systems in the thermodynamic limit are described as pure states of a uniformly hyperfinite $C^*$-algebra $\fA$, called a quasi-local algebra. As indicated above, we are in particular concerned with the subset of pure states consisting of translation invariant injective MPS, in part because of the simplicity of describing their gauge freedom. To be precise, an injective MPS tensor is defined to be an array of complex numbers $K^i_{\alpha \beta}$, $i \in \qty{1,\ldots, d}$ and $\alpha, \beta \in \qty{1,\ldots, D}$ for some numbers $d, D \in \bbN$ such that the matrices $K^i$ span $M_D(\bbC)$. The number $d$ is called the \emph{physical dimension} and $D$ is called the \emph{bond dimension} of $K$. Section \ref{sec:pure_state_space} briefly reviews the works of Fannes, Nachtergaele, and Werner \cite{FannesNachtergaeleWernerFCS,FannesNachtergaeleWernerPureStates} to explain how such a tensor gives rise to a pure state on a quasi-local algebra $\fA$. In particular, if two tensors $K$ and $L$ yield the same state, then they must be of the same bond dimension and furthermore are related by a gauge transformation of the form 
\[
L^i = \lambda XK^iX^{-1},
\]
where $\lambda \in \bbC^\times$ and $X \in \GL(D)$. If one further requires the \emph{right-normalization condition}:
\[
\sum_i K^iK^{i*} = \sum_j L^j L^{j*} = \1_{D \times D}, 
\]
then in fact $\lambda \in \Unitary(1)$ and $X \in \Unitary(D)$. We find this more restricted gauge group much easier to work with, and will therefore impose this normalization condition on our tensors. 

We wish to allow the physical dimension to increase as in the zero-dimensional case. In 1d we now also have the bond dimension and would like to allow this to increase arbitrarily as well. Indeed, continuous connected families of injective MPS can have varying bond dimension, such as the Chern number pump described in Section \ref{sec:example}. Therefore it is natural to embed an injective MPS tensor $K$ in an $\bbN \times \bbN \times \bbN$ array $K^i_{\alpha \beta}$ by adding zeros to the original array; the tensor can then be varied to change the physical or bond dimension. 

However, if we only consider tensors of the form 
\[
\mqty(K^i& 0 \\ 0 & 0)
\]
then requiring the right-normalization condition will prevent a continuous connected family of injective MPS from ever changing bond dimension. To rectify this, we allow tensors of the form
\[
\mqty(K^i & 0 \\ M^i & 0)
\]
where $M^i_{\alpha \beta}$ are arbitrary complex numbers, except that only finitely many are nonzero. Finally, we allow gauge transformations of these tensors. An overall phase $\lambda \in \Unitary(1)$ can be absorbed into the $K^i$ and $M^i$ matrices, so we ultimately consider arrays of the form
\begin{equation}\label{eq:A_tensor}
A^i = X\mqty(K^i& 0\\M^i&0)X^*,
\end{equation}
where $X \in \Unitary(\infty) = \colim_D \Unitary(D)$ is a unitary that does not depend on the index $i$. 

Tensors of this form were considered by Ohyama and Ryu in \cite{OhyamaRyuHigherStructures}, although there it was required that $A$ satisfy the right-normalization condition, while we only require $K$ to satisfy this. Following \cite{OhyamaRyuHigherStructures}, we call the bond dimension of $K$ the \emph{essential rank} of the MPS tensor $A$. Under a gauge transformation, a tensor can transform as
\[
X\mqty(K^i & 0 \\ M^i & 0)X^* \mapsto \lambda Y\mqty(K^i&0\\N^i&0)Y^*,
\]
where $\lambda \in \Unitary(1)$, $Y \in \Unitary(\infty)$, and the $N^i_{\alpha \beta}$ are arbitrary complex numbers except that only finitely many are nonzero. 

We define $\E(d, D)$ to be the space of such arrays with $A^i_{\alpha \beta} = 0$ if $i > d$, $\alpha > D$, or $\beta > D$ and we define $\B(d, D)$ to be the quotient of $\E(d, D)$ modulo gauge transformations. In Section \ref{sec:pure_state_space}, we relate this to the pure state space $\sP(d)$ of the quasi-local algebra $\fA(d)$ with on-site physical dimension $d$ via the following proposition. 

\begin{prop*}[\ref{prop:EMPS(d,D)_to_P(d)}]
The map $\E(d, D) \rightarrow \sP(d)$ that associates to each tensor $A$ of the form \eqref{eq:A_tensor} the pure state corresponding to the injective MPS $K$ is well-defined and continuous with respect to the weak* topology on $\sP(d)$. Furthermore, this map factors through a continuous injection $\cB(d, D) \rightarrow \sP(d)$ such that the diagram below commutes.
\[
	\begin{tikzcd}
	\E(d, D) \arrow[d,"p"] \arrow[r] & \sP(d)\\
	\B(d, D) \arrow[ur] & 
	\end{tikzcd}
\]
\end{prop*}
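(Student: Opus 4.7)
I split the argument into three parts: well-definedness of the map $\E(d,D) \to \sP(d)$, its continuity in the weak-$*$ topology, and its descent to a continuous injection from $\B(d,D)$. The organizing principle is that, although the decomposition $A^i = X\begin{pmatrix} K^i & 0 \\ M^i & 0 \end{pmatrix} X^*$ of a tensor $A \in \E(d,D)$ is far from unique, the essential injective MPS tensor $K$ (and hence the state $\omega_K$) can be extracted from $A$ intrinsically using the transfer operator $E_A^*\colon M_D(\bbC) \to M_D(\bbC)$, $Y \mapsto \sum_i (A^i)^* Y A^i$.

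\textbf{Well-definedness via the transfer operator.} Working in any basis that realizes the block form $A^i = \begin{pmatrix} K^i & 0 \\ M^i & 0 \end{pmatrix}$, a direct computation shows that $E_A^*$ lands in the subspace of matrices supported in the upper-left $\chi \times \chi$ block and acts there as $E_K^*$. Injectivity of $K$ guarantees that $1$ is a simple, isolated eigenvalue of $E_K^*$, and hence of $E_A^*$, with a one-dimensional eigenspace spanned by $\tilde\rho(A) := X \begin{pmatrix} \rho_K & 0 \\ 0 & 0 \end{pmatrix} X^*$, where $\rho_K$ is the unique positive unit-trace fixed point of $E_K^*$. The range of $\tilde\rho(A)$ recovers the essential subspace, and from this data $(\chi, K)$ can be reconstructed up to the gauge action $K^i \mapsto \lambda Y K^i Y^*$; since gauge-equivalent right-normalized injective MPS tensors yield the same pure state by Fannes--Nachtergaele--Werner, the assignment $A \mapsto \omega_K$ is independent of the choice of decomposition.

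\textbf{Continuity.} The weak-$*$ topology on $\sP(d)$ is generated by evaluations on local observables, and norm density of the local subalgebra in $\fA(d)$ reduces the problem to checking continuity at each local observable $O$ on an interval $[1,n]$. A computation with the block form yields the identity
\[
\omega_A(O) = \sum_{i_1,\dots,j_n}\langle j_1\cdots j_n|O|i_1\cdots i_n\rangle\,\tr\!\Bigl(\tilde\rho(A)\,A^{i_1}\cdots A^{i_n}(A^{j_n})^*\cdots(A^{j_1})^*\Bigr),
\]
reducing the question to continuity of $A \mapsto \tilde\rho(A)$. The operator $E_A^*$ depends polynomially on $A$, and by the spectral picture above the eigenvalue $1$ is simple and isolated in the spectrum of $E_A^*$ at every point of $\E(d,D)$; standard perturbation theory for isolated simple eigenvalues then implies that $\tilde\rho(A)$, normalized to unit trace, depends continuously on $A$.

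\textbf{Descent and injectivity.} If $A$ and $A'$ are gauge equivalent in $\E(d,D)$, their essential tensors agree up to the Fannes--Nachtergaele--Werner gauge group, so $\omega_A = \omega_{A'}$, and continuity of the induced map $\B(d,D) \to \sP(d)$ follows from the universal property of the quotient topology. Conversely, if $\omega_A = \omega_{A'}$, the uniqueness half of Fannes--Nachtergaele--Werner provides a $\lambda \in \Unitary(1)$ and a unitary intertwiner of the essential parts; combining this with the change-of-basis unitaries $X$ and $X'$ of the two decompositions produces an explicit gauge transformation from $A$ to $A'$, the discrepancy in the $M$-blocks being absorbed by the freedom to replace $M^i$ by $N^i$. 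The main technical obstacle is the continuity step at tensors where the essential rank $\chi$ jumps; the intrinsic formulation via the fixed point $\tilde\rho(A) \in M_D(\bbC)$ is what makes continuity work across such strata, since $\tilde\rho(A)$ remains a continuous family in the ambient space even when its rank changes.
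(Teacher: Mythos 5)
Your proposal follows essentially the same route as the paper's proof: both define the transfer operator $E_A^*\colon B\mapsto\sum_i A^{i*}BA^i$, show via the block structure that its trace-one fixed point $\tilde\rho(A)$ (the paper's $\tilde T$) is unique and intrinsically determined by $A$, express $\omega_K$ directly in terms of $A$ and $\tilde\rho(A)$, obtain continuity from perturbation theory for the simple isolated eigenvalue $1$, and conclude injectivity on the quotient from Fannes--Nachtergaele--Werner uniqueness. The argument is correct and captures the key point (that $\tilde\rho(A)$ is continuous across rank jumps while $Q(A)$ is not); the only minor difference is that the paper does not phrase it as ``reconstructing $(\chi,K)$'' but simply substitutes $\tilde T$ into the expectation-value formula, which is the cleaner version of the same idea.
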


However, in Section \ref{sec:not_embedding} we show that the map $\B(d, D) \rightarrow \sP(d)$ is not in general an embedding when $\sP(d)$ is given the weak* topology, so that the topology on $\B(d, D)$ is strictly finer than the weak* topology. To prove this, we exhibit a weak*-continuous path of pure states in the image of the map $\B(4, 2) \rightarrow \sP(4)$ that is not continuous in $\B(4,2)$. The path is essentially an interpolation between a product state and the ground state of the 1d AKLT model \cite{AKLT1987,AKLT1988,FannesNachtergaeleWernerFCS,SchollwockDMRGMPS}. As the path approaches the product state, the state undergoes a quantum phase transition as described in \cite{QPTinMPS} and has a discontinuity in the topology of $\B(4, 2)$. We hope that understanding the topology of $\B(d, D)$ may shed some light on the open question of appropriately topologizing $\sP(d)$ so that maps like this become embeddings.

We define $\E$ and $\B$ as colimits of the spaces $\E(d, D)$ and $\B(d, D)$, so that we get a diagram analogous to \eqref{eq:zero_d_diagram}:
\[
\xymatrix@C=1.8pc{\E(1, D)\ar[r] \ar[d] & \cdots \ar[r] &  \E(d, D)\ar[r] \ar[d] & \E(d+1, D+1) \ar[r] \ar[d] & \cdots \ar[r] &  \E \ar[d]^-p \\
\B(1, D)  \ar[r] & \cdots\ar[r] &  \B(d, D)\ar[r] & \B(d+1, D+1)\ar[r] & \cdots \ar[r] &  \B.
}
\]
In Section \ref{sec:contractibility} we show the following result, analogous to the zero-dimensional case.

\begin{thm*}[\ref{thm:E_contractible}]
The space $\E$ is contractible.
\end{thm*}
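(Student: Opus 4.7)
The plan is to construct an explicit contracting homotopy $H\colon \E \times I \to \E$ modeled on the classical contraction of $S^\infty$ that serves as the zero-dimensional template in \eqref{eq:zero_d_diagram}. Recall that $S^\infty = \colim_n S^{2n-1}$ is contractible because each inclusion $S^{2n-1} \hookrightarrow S^{2n+1}$ is null-homotopic, via a two-stage move that first shifts any vector into an orthogonal block and then linearly interpolates with a fixed basepoint (avoiding the origin because the shifted vector is orthogonal to it). The analogue for $\E = \colim_{d,D} \E(d,D)$ should shift the support of the MPS tensor into a fresh block of bond and physical indices and then interpolate with a trivial tensor.

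Concretely, take as basepoint the tensor $A_\ast$ with $K = 1 \in \bbC$ (viewed as a $1 \times 1$ matrix), $M = 0$, and $X = \mathds{1}$, so that $(A_\ast)^i_{\alpha\beta}$ equals $1$ when $i = \alpha = \beta = 1$ and vanishes otherwise. For $A \in \E(d,D)$ presented as $A^i = X \begin{pmatrix} K^i & 0 \\ M^i & 0 \end{pmatrix} X^*$, I would build a path from $A$ to $A_\ast$ inside $\E(d+1, 2D+1)$ in three phases. In Phase~1, linearly shrink $M$ to zero via the straight-line homotopy $(X, K, M) \mapsto (X, K, (1-t)M)$; right-normalization and injectivity of $K$ are unaffected and $(1-t)M$ still has finite support, so this stays in $\E$. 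In Phase~2, with $M = 0$, conjugate by a path $Y(t)$ in $\Unitary(2D+1)$ from $\mathds{1}$ to a unitary that relocates the $K$-block from bond indices $1,\ldots,D$ to indices $D+2,\ldots,2D+1$, leaving bond index $1$ free; since $\Unitary(2D+1)$ is path-connected such a path exists, and replacing $X$ with $Y(t)X$ keeps the tensor in the required form. In Phase~3, fade in $A_\ast$ at the newly freed bond index $1$ while simultaneously deforming the shifted $K$-block toward a trivial tensor of bond dimension $1$.

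The main obstacle is Phase~3: the space of right-normalized injective MPS tensors is stratified by bond dimension into disconnected components, so a continuous path of decompositions $K(t)$ alone cannot change bond dimension. However, $\E$ is designed so that the essential rank of a point need \emph{not} be locally constant on continuous paths of tensors --- this is the phenomenon already illustrated by the Chern pump in Section~\ref{sec:example}. The triple $(X, K, M)$ realizing an element of $\E$ need not vary continuously with the element; only the tensor itself must. I would exploit this by arranging $A(t)$ so that at intermediate times it is represented by a generic bond-dim-$(D+1)$ injective $K(t)$ (path-connected to the shifted $K$ within its stratum), and at the endpoint it collapses to $A_\ast$ by a discontinuous reduction of the decomposition to bond dimension $1$. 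Verifying that the resulting tensor path is genuinely continuous in the colimit topology on $\E$, and patching the three phases into a single continuous $H$, constitutes the central technical work.
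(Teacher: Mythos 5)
The high-level shape of your argument is reasonable---shift the tensor's support into a fresh block of indices, then interpolate toward a trivial basepoint---and this is in fact the skeleton of the paper's proof of Theorem~\ref{thm:E_contractible}. But there is a genuine gap exactly where you flag one, in Phase~3, and a second, subtler problem in Phase~1.

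\textbf{Phase~3.} For $A(t)$ to lie in $\E$ it is not enough that the array has finitely many nonzero entries: in the standard form $X\mqty(K&0\\M&0)X^*$, the upper-left block $K$ must be a right-normalized \emph{injective} MPS tensor, i.e.\ the matrices $\{K^i\}_i$ must span all of $M_\chi(\bbC)$. The interpolation you describe, taken literally, passes through block-diagonal tensors of the form $\diag(s A_\ast^i,\, c K^i)$, and these fail the membership condition twice over. First, the span of such matrices lies inside the block-diagonal subalgebra, of dimension at most $1+D^2 < (D+1)^2$, so the combined tensor is never injective at the joint bond dimension $D+1$. Second, one cannot fade $K$ out by a scalar $c<1$: right-normalization would require $\sum_i (cK^i)(cK^i)^* = c^2\1 = \1$, forcing $c=1$. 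Your escape clause---arranging $A(t)$ to be a genuine bond-dimension-$(D+1)$ injective tensor at intermediate times---is the right strategy, but you do not construct such a tensor, and that construction is precisely the substance of the theorem. (Also, ``path-connected to the shifted $K$ within its stratum'' cannot be taken literally: the shifted $K$ has bond dimension $D$, a different stratum from $D+1$.) The paper resolves this by introducing many new physical legs carrying the off-diagonal data $v(j,\gamma)=\bra{\gamma}A^jX$ and $w(j,\gamma)=X^*A^j\ket{\gamma}$; because $K$ is injective, these vectors span the old bond space, and the homotopy $G$ produces for $t\in(0,1)$ the $(\chi+1)\times(\chi+1)$ matrices displayed in \eqref{eq:G(A,t)_matrices}, which one checks span $M_{\chi+1}(\bbC)$ and are right-normalized. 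The coupling terms $v$ and $w$ are not optional decorations: without them, the intermediate tensors leave $\E$.

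\textbf{Phase~1.} The straight-line homotopy $(X,K,M)\mapsto(X,K,(1-t)M)$, read as a map on tensors, is $A\mapsto \underline{A}+(1-t)(A-\underline{A})$ with $\underline{A}=Q(A)A$. But $Q(A)$---the projection onto the range of $L(A)$---is \emph{not} continuous on $\E(d,D)$; it is only continuous on each fixed-rank stratum $\E(d,D,\chi)$, as the paper records just before Lemma~\ref{lem:projection_pt_continuity}. A contraction must be a single continuous map on all of $\E\times I$, and Phase~1 as written is discontinuous wherever the essential rank jumps. This is why the paper never isolates the $M$-block in the middle of the contraction: its homotopies $E$, $F$, $G$, $H$ are given by explicit formulas in the raw matrix entries $A^i_{\alpha\beta}$ and in gauge-invariant, manifestly continuous scalars such as $\Tr(R(A))$, with no reference to the non-canonical, discontinuously-varying decomposition $(X,K,M)$; the $M$-part is only faded out at the very end (the homotopy $H$), after $G(\cdot,1)$ has already forced the $K$-part into the constant $1\times 1$ form, so that no appeal to $Q(A)$ is needed.
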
  

But unlike in the zero-dimensional case, the map $p:\E \rightarrow \B$ is not a fiber bundle. The difficulty is that the fibers over points of $\B$ with different essential rank are not homeomorphic, nor are gauge transformations described by the free action of a compact group. Although $p:\E \rightarrow \B$ is not a fiber bundle, in Section \ref{sec:quasifibration} we prove the following result.

\begin{thm*}[\ref{thm:quasifibration}]
The map $p:\E \rightarrow \B$ is a quasifibration.
\end{thm*}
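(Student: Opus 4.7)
The plan is to establish that $p$ is a quasifibration by filtering $\B$ by essential rank and applying Dold's recognition criterion inductively, then passing to the colimit. Define $\B^{\leq \chi} \subset \B$ to be the subspace whose (gauge-unique) injective MPS representative $K$ has bond dimension at most $\chi$, and set $\E^{\leq \chi} := p^{-1}(\B^{\leq \chi})$. Since injectivity is an open condition on tensors and matrix rank is lower semicontinuous, each $\B^{\leq \chi}$ is closed in $\B$, producing an increasing closed filtration $\B^{\leq 1} \subset \B^{\leq 2} \subset \cdots$ with $\B = \bigcup_{\chi \geq 1} \B^{\leq \chi}$. I would prove by induction on $\chi$ that $p|_{\E^{\leq \chi}} \colon \E^{\leq \chi} \to \B^{\leq \chi}$ is a quasifibration, and then invoke stability of quasifibrations under sequential colimits along these closed inclusions.

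The base case $\chi = 1$ (product states) is immediate. For the inductive step from $\chi - 1$ to $\chi$, let $\B^{=\chi} := \B^{\leq \chi} \setminus \B^{\leq \chi - 1}$ be the open stratum on which essential rank is exactly $\chi$. Over $\B^{=\chi}$, the gauge group acts on the injective part $K$ of each tensor $A^i = X\mqty(K^i & 0 \\ M^i & 0)X^*$ with compact stabilizer $\PU(\chi)$, and the restriction $p^{-1}(\B^{=\chi}) \to \B^{=\chi}$ inherits the fiber-bundle structure underlying \eqref{eq:Bchi_homotopy_type}; in particular it is a quasifibration. Combined with the inductive hypothesis on $\B^{\leq \chi - 1}$, Dold's theorem reduces the inductive step to constructing an open neighborhood $U_\chi \subset \B^{\leq \chi}$ of $\B^{\leq \chi - 1}$ together with a deformation $h \colon U_\chi \times [0,1] \to U_\chi$ satisfying $h_0 = \id$, $h_t(\B^{\leq \chi - 1}) \subset \B^{\leq \chi - 1}$, and $h_1(U_\chi) \subset \B^{\leq \chi - 1}$, together with a lift $\tilde h \colon p^{-1}(U_\chi) \times [0,1] \to p^{-1}(U_\chi)$ such that $\tilde h_1$ restricts to a weak equivalence on each fiber.

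The main obstacle is constructing this deformation, since fibers over $\B^{=\chi}$ and over $\B^{\leq \chi - 1}$ are not homeomorphic (they are quotients of $\Unitary(\infty)$ by different stabilizer subgroups). The guiding idea is to exploit the flexibility of the presentation $A^i = X\mqty(K^i & 0 \\ M^i & 0)X^*$: when a rank-$\chi$ tensor lies near $\B^{\leq \chi - 1}$, one of the bond directions of $K$ must carry small spectral weight in the transfer matrix $\sum_i K^i \ox \overline{K^i}$. Choosing a gauge that isolates this direction as the last row and column of $K$, one continuously slides the corresponding block from $K^i$ into $M^i$ while simultaneously scaling it to zero; this yields a canonical path of equivalence classes in $\B$ terminating in $\B^{\leq \chi - 1}$, and the chosen gauge provides a compatible lift to $\E$. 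The delicate points to verify are (i) making the choice of ``small direction'' continuous across loci where multiple transfer-matrix eigenvalues coincide---the quantum phase transitions exemplified in Section \ref{sec:not_embedding}; (ii) ensuring that the lift extends continuously through points with non-trivial stabilizer; and (iii) checking that $\tilde h_1$ induces weak equivalences between the two distinct fiber types, which should ultimately reduce to a contractibility argument for $\Unitary(\infty)$-quotients compatible with the contractibility of $\E$ established in Theorem \ref{thm:E_contractible}.
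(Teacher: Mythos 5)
Your high-level architecture matches the paper's: filter by essential rank, prove $\E(\lchi) \to \B(\lchi)$ is a quasifibration by induction using the Dold--Thom criteria, and pass to the colimit; the open stratum $\B(\chi)\setminus\B(\lchiminus)$ is a Serre fibration, and the inductive step hinges on a deformation of a neighborhood of the lower stratum. That much is exactly what Section~\ref{sec:quasifibration} does. But the core construction --- the deformation $h$ and its lift --- is left as a heuristic in your proposal, and that heuristic has a genuine unresolved obstruction. Picking a gauge that ``isolates the small direction as the last row and column of $K$'' and then sliding that block out of $K$ cannot be done continuously: the direction is not unique when eigenvalues of the Gram matrix $\lop(K)=\sum_i K^{i*}K^i$ collide, and when $\lop(K)$ is a scalar multiple of $\1_\chi$ there is no ``small direction'' at all. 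You flag this as delicate point (i) but do not resolve it, and no gauge choice will.

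The paper's resolution uses two ingredients your outline is missing. First, instead of choosing a direction, the deformation acts by gauge-covariant functional calculus: $F(A,t)=f_{t,\delta}(\lop(\underline A))$ with $\delta=\lambda_\chi(\lop(\underline A))$, applied simultaneously to the whole spectrum, so no continuous eigenvector selection is needed. Second, this construction genuinely fails on the locus $\E(\chi)\setminus N(\chi)$ where $\lop(\underline A)$ has only one nonzero eigenvalue (there $f_{1,\delta}(\lop(K))=0$ and $G(A,1)\notin\E$), so the paper defines the open set $O(\lchi)=\E(\lchiminus)\cup N(\chi)$, runs the deformation criterion there, and then reassembles $\E(\lchi)\to\B(\lchi)$ via the open-cover criterion with $V=\E(\chi)$; the overlap is a Serre fibration restriction. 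Your inductive step appears to apply only the deformation criterion on a single open neighborhood, which cannot succeed because such a neighborhood necessarily contains points with isotropic $\lop(K)$. Finally, you also need a concrete verification that $\tilde h_1$ is a weak equivalence between fibers of different essential rank; the paper reduces this to a Stiefel-manifold computation (Proposition~\ref{prop:Stiefel_weak_homotopy_equiv}), and your gesture toward ``contractibility of $\Unitary(\infty)$-quotients'' would need to be made precise along those lines.
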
 

This is precisely the property required for a map like $p$ to induce a long exact sequence on homotopy groups, thus allowing us to access the homotopy type of $\B$. Showing that $p$ is a quasifibration is the main technical challenge of the paper.

As a first step, in Section~\ref{sec:fixed_chi} we study the restriction of $p$ to the spaces of tensors $\E(\chi)$ and states $\B(\chi)$ of fixed essential rank $\chi$.

\begin{prop*}[\ref{prop:BUchi}]
The restriction $p:\E(\chi) \rightarrow \B(\chi)$ is a Serre fibration with fibers $\cF(\chi)$ homotopy equivalent to $\Unitary(1) \times B\Unitary(1)$. Moreover, there exists a commutative diagram
\[
	\begin{tikzcd}
	\cF(\chi) \arrow[r, hook] \arrow[d]& \E(\chi) \arrow[rr,"p"]\arrow[d]&& \cB(\chi) \arrow[d]\\
	\Unitary(1) \times B\Unitary(1) \arrow[r] & B\Unitary(\chi) \arrow[r] & B\bbP\Unitary(\chi) \arrow[r] & B\Unitary(1) \times B\bbP \Unitary(\chi)
	\end{tikzcd}
\]
where the vertical arrows are homotopy equivalences, the map $B\Unitary(\chi)  \to  B\PU(\chi) $ is the map on classifying spaces induced by the group quotient and the map $ B\PU(\chi) \to B(\Unitary(1)\times \PU(\chi)) $ is induced by the inclusion of $\PU(\chi)$ in $\Unitary(1)\times \PU(\chi)$.
\end{prop*}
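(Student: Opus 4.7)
The plan is to identify $\E(\chi)$ and $\B(\chi)$ with explicit classifying-space models, verify that $p$ agrees with the displayed composition in the bottom row under these identifications, and then read off the homotopy type of the fiber from the long exact sequence.

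First I would construct an \emph{essential range} map $\rho : \E(\chi) \to \Gr_\chi(\bbC^\infty) \simeq B\Unitary(\chi)$ sending a tensor $A$ with presentation as in \eqref{eq:A_tensor} to the $\chi$-dimensional subspace $V = X(\bbC^\chi) \subset \bbC^\infty$. A short argument shows $V$ is independent of the chosen presentation: if two presentations give the same $A$, then the change-of-basis unitary must preserve the block structure and hence fix $V$. The fiber of $\rho$ over a fixed $V \cong \bbC^\chi$ splits as a product of (i) the space of finitely-supported matrices $M$, which is a filtered colimit of finite-dimensional vector spaces and hence contractible, and (ii) the space of right-normalized injective MPS tensors of bond dimension $\chi$ taken as the colimit in the physical dimension $d \to \infty$. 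The normalization $\sum_i K^i K^{i*} = \1$ identifies (ii) with an open dense subset of the complex Stiefel manifold $V_\chi(\bbC^{d\chi})$; since the connectivity of these Stiefel manifolds grows unboundedly with $d$, the colimit is contractible. Hence $\rho$ is a weak equivalence and $\E(\chi) \simeq B\Unitary(\chi)$.

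Next I would invoke the identification $\B(\chi) \simeq B\Unitary(1) \times B\PU(\chi)$ recalled in \eqref{eq:Bchi_homotopy_type}, in which $B\PU(\chi)$ records the residual projective-unitary gauge on the injective tensor and $B\Unitary(1)$ records the Chern number per unit cell of a family. To verify commutativity of the square, I would check that on the level of homotopy groups $p$ factors as $\pi_2 B\Unitary(\chi) = \bbZ \to \pi_2 B\PU(\chi) = \bbZ/\chi \hookrightarrow \bbZ \oplus \bbZ/\chi = \pi_2(B\Unitary(1) \times B\PU(\chi))$, where the first map is reduction mod $\chi$ and the second is inclusion of the second factor. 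That the composition is null on the $B\Unitary(1)$ factor matches the physical intuition: a family that lifts from $\B(\chi)$ to $\E(\chi)$, i.e.~admits a continuous choice of tensor, cannot have nontrivial Chern number per unit cell. The Serre fibration property is then established by exhibiting local trivializations of $p$ using the principal $B\Unitary(1)$-bundle structure on $B\Unitary(\chi) \to B\PU(\chi)$; on $\E(\chi)$ itself one constructs path lifts directly, using the contractibility of the gauge orbit data from the first step.

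Finally, $\cF(\chi)$ sits in an iterated fibration sequence $B\Unitary(1) \to \cF(\chi) \to \Unitary(1)$ arising from $\hofib(B\Unitary(\chi) \to B\PU(\chi)) \simeq B\Unitary(1)$ and $\hofib(B\PU(\chi) \hookrightarrow B\Unitary(1) \times B\PU(\chi)) \simeq \Unitary(1)$. The long exact sequence yields $\pi_1 \cF(\chi) = \bbZ$, $\pi_2 \cF(\chi) = \bbZ$, and $\pi_n \cF(\chi) = 0$ for $n \geq 3$; the $k$-invariant vanishes because any map $S^1 \to K(\bbZ, 3)$ is null-homotopic, so $\cF(\chi) \simeq \Unitary(1) \times B\Unitary(1)$. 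The main obstacle is the first step: rigorously verifying that the essential range is intrinsically defined and that the fiber of $\rho$ is genuinely contractible in the $d \to \infty$ colimit requires a careful Stiefel-manifold analysis together with attention to the colimit topology on $\E(\chi)$; once these technicalities are in place, identifying $p$ with the displayed composition, establishing the Serre fibration property, and computing $\cF(\chi)$ are standard if somewhat intricate exercises in homotopy-theoretic bookkeeping.
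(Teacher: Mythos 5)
Your Grassmannian model is a genuinely different angle from the paper's. The paper instead shows that $\E(d,D,\chi)$ is homeomorphic to the balanced product $\cF(d,D,\chi)\times_{\Unitary(1)\times\PU(\chi)}\cI(d,\chi)$ (Theorem \ref{thm:E_d_D_chi_identification}), passes to the colimit, and manipulates balanced products until it reaches $V_\chi(\C^\infty)\times_{\Unitary(\chi)}\cI(\chi)$, which is manifestly a $B\Unitary(\chi)$; the homotopy equivalences to the classifying-space models are then \emph{constructed} so that the diagram commutes on the nose, and the identification of the maps on classifying spaces falls out of the construction. Your essential-range map $\rho:\E(\chi)\to\Gr_\chi(\C^\infty)$ is a clean idea, and your description of its fiber over a fixed $V$ as $\cI(\chi)\times\{\text{finitary }M\}$ is correct. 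However, there are two gaps that need attention.

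The serious gap is commutativity. You propose to ``verify commutativity of the square'' by checking that $p$ factors correctly on $\pi_2$. This is only a necessary condition. The target $B\Unitary(1)\times B\PU(\chi)$ is not an Eilenberg--Mac Lane space, and homotopy classes of maps $B\Unitary(\chi)\to B\PU(\chi)$ are not determined by their effect on $\pi_2$ (there is higher data---e.g.\ in $\pi_4=\bbZ$ and the $k$-invariants of the Postnikov towers). Moreover, you have not actually constructed the vertical map $\B(\chi)\to B\Unitary(1)\times B\PU(\chi)$ or related it to $\rho$; the two weak equivalences you invoke (the Grassmannian one and the one from \eqref{eq:Bchi_homotopy_type}) are built independently, and nothing so far forces the square to commute. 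The proposition asserts a specific model for $p$, and the only way to get it is to construct the comparison maps compatibly, as the paper does via the balanced product. Because the fiber computation and the iterated fibration sequence for $\cF(\chi)$ are downstream of this commutativity, they inherit the gap.

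Two secondary points. First, ``contractible fibers imply $\rho$ is a weak equivalence'' requires a quasifibration hypothesis; this is fixable since $\rho$ is $\Unitary(D)$-equivariant onto the homogeneous space $\Gr_\chi(\C^D)$ at each finite stage and hence a fiber bundle, but you should say so. Second, the Serre fibration claim cannot be ``established by exhibiting local trivializations ... using the principal $B\Unitary(1)$-bundle structure on $B\Unitary(\chi)\to B\PU(\chi)$'': being a Serre fibration is a point-set property not transported along weak equivalences of total and base space, so the structure on the model is irrelevant. The paper's route is to prove $p:\E(d,D,\chi)\to\B(d,D,\chi)$ is an honest fiber bundle over a paracompact base (using the balanced-product identification), and then get compact HLP in the colimit; something along those lines is needed here too.
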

We note that the above is essentially a $d = D = \infty$ variation of a result of Haegeman, Mari\"en, Osborne, and Verstraete, who proved that the space of MPS tensors is a fiber bundle over the space of MPS \cite{GeometryofMPS}.

The key consequence of the fact that $p\colon \E \rightarrow \B$ is a quasifibration is a long exact sequence on homotopy groups. That is, for any point $A\in \E$ of essential rank $\chi$, letting $p(A)$ be the base point in $\B$ and $\cF(\chi)$ be the fiber over $p(A)$, we have 
 \[
 \cdots \rightarrow \pi_{n}(\cF(\chi), A) \rightarrow \pi_n(\E, A) \rightarrow \pi_n(\B, p(A)) \rightarrow \pi_{n-1}(\cF(\chi), A) \rightarrow \cdots
 \]
Since $\E$ is contractible, we can compute the homotopy groups of $\B$ from the homotopy groups of the fiber $\cF(\chi)$, which from Section~\ref{sec:fixed_chi}, we know to be 
\[\cF(\chi) \simeq   \Unitary(1) \times B\Unitary(1) \simeq K(\Z,1)\times K(\Z,2).\]
Note that the right-hand side is independent of the essential rank $\chi$ of $A$. In fact, in Section~\ref{sec:homotopy_type} we compute not only the homotopy groups of $\B$ but its weak homotopy type. Note that knowing the homotopy groups of a space is, in general, not sufficient to determine its weak homotopy type.

\begin{thm*}[\ref{thm:weakequivalence}]
The space $\B$ has the weak homotopy type $K(\bbZ,2) \times K(\bbZ, 3)$. In particular, $\pi_n(\B) = \bbZ$ if $n = 2$ or $n = 3$ and is trivial for all other $n$.
\end{thm*}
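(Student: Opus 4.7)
My plan is to compute the homotopy groups of $\B$ via the quasifibration long exact sequence, and then to promote the matching of homotopy groups to a weak equivalence through a Postnikov tower analysis. First, I would apply the LES of $p\colon \E \to \B$ (Theorem \ref{thm:quasifibration}) at an arbitrary basepoint $A \in \E$ of essential rank $\chi$. Since $\E$ is contractible (Theorem \ref{thm:E_contractible}), the LES collapses to $\pi_n(\B) \cong \pi_{n-1}(\F(\chi))$ for all $n \geq 1$. Proposition \ref{prop:BUchi} identifies $\F(\chi) \simeq \Unitary(1) \times B\Unitary(1) \simeq K(\Z,1) \times K(\Z,2)$, so $\pi_2(\B) \cong \pi_1(\Unitary(1)) = \Z$, $\pi_3(\B) \cong \pi_2(B\Unitary(1)) = \Z$, and $\pi_n(\B) = 0$ for all other $n$. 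Path-connectedness of $\B$ follows from that of $\E$.

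Next, since $\B$ is simply connected with homotopy concentrated in degrees $2$ and $3$, Postnikov theory (applied after CW approximation if needed) realizes $\B$ up to weak equivalence as a principal fibration
\[
K(\Z,3) \longrightarrow \B \longrightarrow K(\Z,2),
\]
classified by a $k$-invariant $k \in H^4(K(\Z,2); \Z) \cong \Z$. Proving $k = 0$ finishes the theorem, since then the fibration splits and $\B \simeq K(\Z,2) \times K(\Z,3)$ weakly.

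To kill $k$, I plan to produce a section (up to homotopy) of the Postnikov projection $\B \to K(\Z,2)$. The natural candidate is the inclusion $j\colon \B(1) \hookrightarrow \B$ of rank-$1$ product states: Proposition \ref{prop:BUchi} at $\chi = 1$ gives $\B(1) \simeq K(\Z,2) \times B\PU(1) = K(\Z,2)$, so it suffices to show $j_*\colon \pi_2(\B(1)) \to \pi_2(\B)$ is an isomorphism. A generator of $\pi_2(\B(1))$ is represented by a family of single-site states $\ket{\phi_x}$, $x \in S^2$, whose Berry line bundle has Chern number one; lifting the induced map $S^2 \to \B$ over a punctured sphere to the contractible space $\E$ and inspecting the boundary loop in $\F(\chi)$, one sees the overall $\Unitary(1)$ phase winds once --- matching the generator of $\pi_1(\Unitary(1))$ identified with $\pi_2(\B)$ via the LES. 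Inverting $j$ up to homotopy then produces the required section.

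The main obstacle I anticipate is this last verification: tracing the Berry Chern number through the LES connecting homomorphism back to $\pi_1(\F(\chi))$ requires carefully unwinding the constructions from Sections \ref{sec:thespace} through \ref{sec:fixed_chi}, including the explicit identifications in Proposition \ref{prop:BUchi}. A more conceptual alternative, worth pursuing if the above becomes unwieldy, is to construct classifying maps $\B \to K(\Z,2)$ and $\B \to K(\Z,3)$ directly --- the first from the Chern-number-per-unit-cell and the second from the Kapustin--Spodyneiko class --- and then verify that the product map is a weak equivalence using the Chern number pump of Section \ref{sec:example} (asserted there to generate $\pi_3(\B)$) together with Whitehead's theorem.
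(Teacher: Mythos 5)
Your proposal is correct, and it takes a genuinely different route from the paper. The homotopy group computation via the quasifibration LES is identical to the paper's (Proposition~\ref{prop:connectingiso}). But for the product structure, you invoke the Postnikov tower: $\B$ (after CW approximation) is a principal fibration $K(\Z,3)\to \B\to K(\Z,2)$ classified by a $k$-invariant in $H^4(K(\Z,2);\Z)\cong\Z$, and you kill $k$ by exhibiting a section coming from $\B(1)\hookrightarrow\B$. The paper instead introduces intermediate quotients $\B_2 = \E/\Unitary(1)$ and $\B_3 = \E/(\text{everything but the phase})$, proves each is an Eilenberg--Mac~Lane space via a quasifibration over $\E$, shows $\B_2\to\B$ and $\B_3\to\B$ are isomorphisms on $\pi_2$ and $\pi_3$ respectively, glues to a map $K(\Z,2)\vee K(\Z,3)\to\B$, and extends over the product by cellular obstruction theory (the cells of $K(\Z,2)\times K(\Z,3)$ not in the wedge have dimension $\geq 6$, where $\pi_*\B$ vanishes). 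Your approach is more economical in that it requires producing only a single map $K(\Z,2)\to\B$ rather than two, at the cost of invoking the $k$-invariant machinery; the paper's approach is more elementary but requires building the auxiliary spaces $\B_2,\B_3$.

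On your ``main obstacle'': you are right that $\B(1)\to\B$ inducing an isomorphism on $\pi_2$ is the crux, but you need not trace a Berry phase through the connecting map. The paper proves exactly this (Section~\ref{sec:example}, first lemma) by a much cleaner argument: the inclusion $\E(1)\hookrightarrow\E$ covers $\B(1)\hookrightarrow\B$ and restricts to the \emph{identity} on the fiber $\F(1)$, so naturality of the quasifibration LES gives a commutative square
\[
\begin{tikzcd}
\pi_2\B(1) \arrow[r,"\cong"] \arrow[d] & \pi_1\F(1) \arrow[d,"="] \\
\pi_2\B \arrow[r,"\cong"] & \pi_1\F(1)
\end{tikzcd}
\]
where the left map is $j_*$ (one also needs $\pi_2\E(1)\to\pi_2\B(1)$ to be zero, which follows since the map $\E(1)\to\B(1)$ models $B\Unitary(1)\to B\PU(1)\to B(\Unitary(1)\times\PU(1))$ with $B\PU(1)$ contractible). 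I would replace your Chern-number-chasing with this naturality argument; it closes the gap immediately and uses only what you have already set up.
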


In Section~\ref{sec:example}, we describe  two fundamental examples of parametrized phases from this new perspective. The families are given by continuous functions
\[ \psi_2\colon S^2 \to \B \quad \quad \text{and}  \quad \quad \psi_3 \colon S^3 \to \B,\]
which are generators for the homotopy group $\pi_2(\B)$ and $\pi_3(\B)$. The family over $S^2 \cong \C\bbP^1$ corresponds to the canonical non-trivial phase in dimension $0$ repeated at every site of a one-dimensional lattice. Its non-triviality is only due to the translation invariance imposed on the systems. 
The family over $S^3$ is much more interesting: it is  the Chern number pump from \cite{qpump,ChartingGroundStates}, and captures a true  non-trivial one-dimensional phase. We prove that $\psi_3$ generates $\pi_3(\B)$ directly from the long exact sequence on homotopy groups. In \cite{ChartingGroundStates}, we explained how to associate a gerbe to certain  families of injective MPS and checked that for $\psi_3$, the corresponding class in $H^3(S^3;\Z)$ was a generator, and so the two results are consistent.

\begin{rem}
In this paper, we choose to work in the category of \emph{all} topological spaces, as opposed to some convenient category such as compactly generated spaces. 
 While working in all spaces has the disadvantage of increasing the technicality of the presentation, the benefit is a very detailed understanding of the point set topology of the spaces constructed. This gives us quite a bit of control over the situation.
\end{rem}

\section{The Space of MPS}\label{sec:thespace}

We now expand on the definitions of the spaces $\E$ and $\B$ described in the introduction. Let us begin with a reminder of the definition of an injective MPS tensor.

\begin{definition}
We define an \textit{MPS tensor} to be an array of complex numbers $A^i_{\alpha \beta} \in \bbC$ with \textit{physical index} $i \in \qty{1,\ldots, d}$ and \textit{bond indices} $\alpha, \beta \in \qty{1,\ldots, D}$ for some $d, D \in \bbN$. The natural number $d$ is called the \textit{physical dimension} of the MPS tensor and $D$ is called the \textit{bond dimension}. An MPS tensor $A^i_{\alpha \beta}$ is said to be \textit{injective} if the linear map
\[
M_D(\bbC) \rightarrow \bbC^d, \quad B \mapsto \sum_i \tr\qty(A^iB)\ket{i}
\]
is injective, where $\ket{i}_{i=1}^d$ is the standard basis of $\bbC^d$. Equivalently, an MPS tensor is injective if and only if $M_D(\bbC) = \vecspan\{A^1, \ldots, A^d\}$.
\end{definition}

We would like to define a space consisting of injective MPS tensors of all physical and bond dimensions in such a way that allows for continuous transitions between different physical and bond dimensions.

\begin{definition}
We define $M_\infty(\bbC)$ to be the set of all infinite arrays of complex numbers $A_{\alpha \beta} \in \bbC$, where $\alpha, \beta \in \bbN$ but with only finitely many pairs of indices $\alpha$ and $\beta$ such that $A_{\alpha \beta} \neq 0$. We then define
\[
\cM = \qty{A \in \prod_{i=1}^\infty M_\infty(\bbC): A^i = 0 \textnormal{ for all but finitely many $i \in \bbN$}},
\]
where $A^i \in M_\infty(\bbC)$ is the $i$th infinite matrix in the sequence of infinite matrices $A \in \prod_{i=1}^\infty M_\infty(\bbC)$. Thus, an element $A \in \cM$ is an array of complex numbers $A^{i}_{\alpha \beta}$ with indices $i,\alpha, \beta \in \bbN$ and with only finitely many nonzero entries.  

Given $d, D \in \bbN$, we define
\[
\cM(d, D) = \qty{A \in \cM: A^i_{\alpha \beta} = 0 \tn{ whenever $i > d$ or $\alpha > D$ or $\beta > D$}}.
\]
Note that $\cM(d, D)$ is a finite-dimensional vector space and therefore has a unique topology making it a Hausdorff topological vector space.
We see that any MPS tensor of physical  dimension $d$ and bond dimension $D$ can be realized as an element of $\cM(d, D)$ by adding zeros to the array. 
\end{definition}

Observe that if $A \in M_\infty(\bbC)$ and $B$ is an $\bbN\times \bbN$ infinite matrix, possibly with infinitely many nonzero entries, then for any $\alpha, \gamma \in \bbN$,
\begin{align*}
(BA)_{\alpha \gamma} = \sum_\beta B_{\alpha \beta}A_{\beta \gamma} \qqtext{and} (AB)_{\alpha \gamma} = \sum_\beta A_{\alpha \beta}B_{\beta \gamma}
\end{align*}
are sums of finitely many nonzero terms, hence they are well-defined complex numbers. If $B$ has only finitely many nonzero entries per column, then
$(BA)_{\alpha \gamma}$ is nonzero only for finitely many $\alpha, \gamma \in \bbN$, hence $BA \in M_\infty(\bbC)$. Similarly, if $B$ has finitely many nonzero entries per row, then $AB \in M_\infty(\bbC)$.

Define $\Unitary(\infty)$ to be the group of $\bbN \times \bbN$ matrices of the form
\[
X = \mqty(X_0 &&& \\ &1&& \\ &&1 & \\ &&&\ddots ),
\]
where $X_0 \in \Unitary(D)$ for some $D \in \bbN$ and where all blank entries are zero. 
We identify $\Unitary(D)$ 
with its image in $\Unitary(\infty)$.
Thus,
\begin{align*}
\Unitary(\infty) = \bigcup_{D \in \bbN} \Unitary(D).
\end{align*}
By our previous remarks, $\Unitary(\infty)$ has a well-defined left group action by conjugation on $M_\infty(\bbC)$. Likewise, an element $X \in \Unitary(\infty)$ has a left action on $A \in \cM$ defined by conjugating each $A^i$ by $X$. 

In general, given $A \in \cM$ and an infinite matrix $B$, we will use $BA$ and $AB$ to denote the element of $\cM$ given by multiplying each $A^i$ on the left or right with $B$, assuming this multiplication gives a well-defined element of $\cM$.

\begin{definition}\label{defn:Idchi}
An MPS tensor $K$ of physical dimension $d$ and bond dimension $\chi$ satisfies the \emph{right-normalization condition} if
\[
\sum_{i=1}^d K^iK^{i*} = \1
\]
where the $\1$ is the $\chi \times \chi$ identity matrix. If an MPS tensor satisfies the right-normalization condition, then we will say it is \emph{right-normalized}. 
We let $\I(d,\chi) \subset M_\chi(\bbC)^d$ be the subspace of right-normalized injective MPS tensors of physical dimension $d$ and bond dimension $\chi$.
\end{definition}
\begin{rem}
  It is worth noting that the set of injective MPS tensors is an open subset of $M_\chi(\bbC)^d$, while the set of MPS tensors satisfying the right normalization condition is a closed subset. Therefore $\cI(d, \chi)$ is the intersection of an open and a closed subset of $M_\chi(\bbC)^d$, or in other words locally closed. Hence, and that is what we mainly need in the following, $\cI(d, \chi)$ is a locally compact space.  
\end{rem}

\begin{definition}\label{def:EGL}
Given $d, D, \chi \in \bbN$ with $\chi \leq D$, we define $\E(d, D, \chi)$ to be the set of those $A \in \cM(d, D)$ of the form:
\begin{equation}\label{eq:EUMPS_def}
A^i = X\mqty(K^i & 0\\ M^i &0) X^{-1},
\end{equation}
where $X \in \Unitary(D)$, the $M^i$ are any $(D - \chi) \times \chi$ matrices, and $K\in \cI(d,\chi)$.
Note that both sides of \eqref{eq:EUMPS_def} may be understood as infinite matrices with finitely many nonzero entries.  Following \cite{OhyamaRyuHigherStructures}, we call $\chi$ the \emph{essential rank} of $A$. Taking unions over the essential rank, we define
\begin{align*}
\E(d, D) \defeq \bigcup_{\chi \leq D} \E(d, D, \chi) \qqtext{and} \E(d, D, \lchi) \defeq \bigcup_{\chi' \leq \chi} \E(d, D, \chi').
\end{align*}
We topologize $\E(d, D, \chi)$, $\E(d, D, \lchi)$, and $\E(d, D)$ as subspaces of $\cM(d, D)$.
\end{definition}

\begin{definition}
Given $A \in \cM(d,D)$, we define
\[
\lop(A) = \sum_{i} A^{i*}A^i \qqtext{and} \rop(A) = \sum_i A^iA^{i*}.
\]
The letters $\lop$ and $\rop$ tell us which factor the star goes on. We note that $\lop(A), \rop(A) \in M_D(\bbC)$ and $\lop(A)$ and $\rop(A)$ are positive matrices and continuous functions of $A \in \cM(d,D)$. We define $Q(A)$ to be the projection onto the image of $\lop(A)$, i.e.,
\[
Q(A) = \theta(\lop(A)),
\]
where $\theta$ is the Heaviside step function:
\[
\theta(x) = \begin{cases} 0 &\tn{if $x \leq 0$} \\ 1 &\tn{if $x > 0$.} \end{cases}
\]
\end{definition}

If $A \in \E(d, D, \chi)$ with $X$, $K$, and $M$ as in \eqref{eq:EUMPS_def}, then
\[
\lop(A) = X\mqty(L(K) + L(M) & 0 \\ 0 & 0)X^{-1}.
\]
One can check that injectivity of $K$ implies that $L(K)$ is invertible, hence $L(K) + L(M)$ is also invertible. Therefore in this case
\begin{equation}\label{eq:QA}
Q(A) = X\mqty(\1 & 0\\0 & 0)X^{-1}.
\end{equation}

From the above constructions, we see that $\E(d, D, \chi) \cap \E(d, D, \chi') = \varnothing$ if $\chi \neq \chi'$. Indeed, the essential rank of $A \in \E(d, D)$ is uniquely determined by $A$; it is the rank of $\lop(A)$.

\begin{proposition}
Given $d, D, \chi \in \bbN$ with $\chi \leq D$, the projection valued map $Q:\E(d, D, \chi) \rightarrow M_D(\bbC)$ is continuous.
\end{proposition}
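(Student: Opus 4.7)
The plan is to combine two observations: first, that $\lop(A)$ has \emph{constant} rank $\chi$ on $\E(d,D,\chi)$; and second, that the projection onto the image of a Hermitian matrix of locally constant rank is continuous via the Riesz holomorphic functional calculus.

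First I would establish the constant-rank claim. Writing $A \in \E(d, D, \chi)$ as in \eqref{eq:EUMPS_def} and using $X^{-1} = X^*$ (since $X \in \Unitary(D)$), a direct block multiplication yields
\[
\lop(A) \;=\; X \mqty(\lop(K) + \lop(M) & 0 \\ 0 & 0) X^{-1},
\]
where $\lop(K) = \sum_i K^{i*} K^i$ and $\lop(M) = \sum_i M^{i*} M^i$. Then $\lop(K)$ is positive definite: if $\lop(K) v = 0$, then $\sum_i \|K^i v\|^2 = 0$, so every $K^i$ annihilates $v$, and the injectivity hypothesis $\vecspan\{K^1, \dots, K^d\} = M_\chi(\bbC)$ forces $v = 0$. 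Since $\lop(M) \geq 0$, the sum $\lop(K) + \lop(M)$ is invertible, so $\lop(A)$ has rank exactly $\chi$ throughout $\E(d, D, \chi)$.

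Next I would upgrade this to continuity of $Q$ via functional calculus. Fix $A_0 \in \E(d,D,\chi)$ and let $\lambda_0 > 0$ be the smallest positive eigenvalue of $\lop(A_0)$. Choose a simple closed contour $\Gamma \subset \{z \in \bbC : \mathrm{Re}(z) \geq \lambda_0/2\}$ that encloses every positive eigenvalue of $\lop(A_0)$ and is disjoint from the origin (e.g.\ a small rectangle containing the positive spectrum). The map $A \mapsto \lop(A)$ is polynomial, hence continuous, and the resolvent $(zI - \lop(A))^{-1}$ depends continuously on $(z, A)$ away from the spectrum. Therefore there is a neighborhood $U$ of $A_0$ in $\E(d,D,\chi)$ on which $\Gamma$ is disjoint from the spectrum of $\lop(A)$. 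The constant-rank conclusion then forces the $\chi$ positive eigenvalues of $\lop(A)$ to remain inside $\Gamma$, while the $D - \chi$ zero eigenvalues remain outside. Riesz functional calculus gives
\[
Q(A) \;=\; \frac{1}{2\pi i} \oint_\Gamma (zI - \lop(A))^{-1}\, dz \qquad \text{for all } A \in U,
\]
and the right-hand side is manifestly continuous in $A$. Since $A_0$ was arbitrary, $Q$ is continuous.

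The main obstacle is the constant-rank property: it prevents the discontinuity of $\theta$ at $0$ from infecting $Q$ by guaranteeing a locally uniform spectral gap of $\lop(A)$ bounded away from $0$. Once that is in hand, the Riesz projection formula delivers the continuity automatically.
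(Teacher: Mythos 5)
Your proof is correct and mirrors the paper's argument: both hinge on observing that $\lop(A)$ has constant rank $\chi$ on $\E(d,D,\chi)$, and then deduce continuity of $Q$ from the continuity of the range projection on the fixed-rank stratum of self-adjoint matrices. The paper simply cites that last continuity fact as known, whereas you supply a proof of it via the Riesz holomorphic functional calculus — a standard and entirely valid way to justify it.
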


\begin{proof}
As mentioned above, for $A \in \E(d, D, \chi)$ the rank of $\lop(A)$ is equal to $\chi$ and is therefore constant on $\E(d, D, \chi)$. Projection onto the image is continuous on the set of self-adjoint matrices of a given size and rank.
\end{proof}

Although $Q$ is not continuous on $\E(d, D)$, the following lemma is sometimes useful when we want to consider non-constant essential rank.

\begin{lemma}\label{lem:projection_pt_continuity}
Let $d, D \in \bbN$ and $A_0 \in \E(d, D)$. The functions
\begin{align*}
\E(d,D) \rightarrow M_D(\bbC), \quad A \mapsto Q(A)Q(A_0)\\
\E(d,D) \rightarrow M_D(\bbC), \quad A \mapsto Q(A_0)Q(A)
\end{align*}
are continuous at $A_0$. 
\end{lemma}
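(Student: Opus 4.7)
The plan is to isolate a spectral piece of $Q(A)$ that varies continuously with $A$ near $A_0$, using the fact that only the small eigenvalues of $L(A)$ can cause $Q(A)$ to jump. Note that $Q$ itself is in general \emph{not} continuous at $A_0$ when the essential rank of $A$ is allowed to vary, since the rank of $L(A)$ is only lower semicontinuous; the main task is therefore to show that the discontinuous part of $Q(A)$ is annihilated by $Q(A_0)$ in the limit.

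Let $\chi_0$ be the essential rank of $A_0$ and let $\lambda > 0$ denote the smallest nonzero eigenvalue of the positive self-adjoint matrix $L(A_0)$. I would fix a simple closed contour $\Gamma \subset \bbC$ enclosing the interval $[\lambda/2,\, \|L(A_0)\| + 1]$ and disjoint from $\sigma(L(A_0))$. Since $L \colon \E(d,D) \to M_D(\bbC)$ is continuous (it is polynomial in the entries of $A$) and $(z - L)^{-1}$ depends continuously on $L$ for $z$ bounded away from $\sigma(L)$, there is a neighborhood $U$ of $A_0$ in $\E(d,D)$ on which the Riesz projection
\[
P(A) \defeq \frac{1}{2\pi i}\oint_\Gamma (z - L(A))^{-1}\,dz
\]
is well-defined and continuous in $A$. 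By the choice of $\Gamma$ we have $P(A_0) = Q(A_0)$. For every $A \in U$, the image of $P(A)$ is the span of eigenvectors of $L(A)$ whose eigenvalues lie inside $\Gamma$, all of which are strictly positive, so this span lies in $\range L(A) = \im Q(A)$; consequently $Q(A) P(A) = P(A)$ on $U$.

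Given this, continuity of $A \mapsto Q(A) Q(A_0)$ at $A_0$ follows at once from the decomposition
\[
Q(A) Q(A_0) \;=\; Q(A)\bigl(Q(A_0) - P(A)\bigr) + Q(A) P(A) \;=\; Q(A)\bigl(Q(A_0) - P(A)\bigr) + P(A),
\]
since $\|Q(A)\| \le 1$ and $P(A) \to P(A_0) = Q(A_0)$ as $A \to A_0$. Continuity of $A \mapsto Q(A_0) Q(A)$ at $A_0$ then follows by taking adjoints, since $Q(A)$ and $Q(A_0)$ are self-adjoint. The only real difficulty I anticipate is the conceptual one of recognizing that one cannot hope for continuity of $Q$ itself near $A_0$, and that the contour $\Gamma$ must be chosen to carve off precisely the spectral subspace that survives to $\im Q(A_0)$; once $P(A)$ is set up, the remaining argument is a routine norm estimate.
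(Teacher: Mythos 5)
Your proof is correct, but it takes a genuinely different route from the one in the paper. The paper's argument is entirely elementary: for a unit vector $x \in \range L(A_0)$ it writes $x = L(A_0)w$ with $\|w\| \le \lambda_0^{-1}$ (where $\lambda_0$ is the smallest nonzero eigenvalue of $L(A_0)$), uses the identity $Q(A)L(A) = L(A)$ and the triangle inequality, and obtains the explicit Lipschitz-type bound $\|Q(A)Q(A_0)x - Q(A_0)x\| \le 2\lambda_0^{-1}\|L(A) - L(A_0)\|$. You instead bring in holomorphic functional calculus: the Riesz projection $P(A)$ over a contour separating the nonzero spectrum of $L(A_0)$ from $0$ is continuous near $A_0$, agrees with $Q(A_0)$ at $A_0$, and satisfies $Q(A)P(A)=P(A)$ because its range is a subspace of $\range L(A)$; the decomposition $Q(A)Q(A_0) = Q(A)\bigl(Q(A_0) - P(A)\bigr) + P(A)$ then does the work. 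Both approaches isolate the same phenomenon — that the discontinuity in $Q$ comes from small eigenvalues of $L(A)$, which $Q(A_0)$ kills — but yours packages it via spectral projections while the paper does it by a direct norm estimate. The paper's proof has the advantage of being more elementary and giving an explicit rate of continuity in terms of $\|L(A) - L(A_0)\|$; yours is arguably more transparent conceptually (it makes explicit the continuously-varying spectral piece), at the cost of invoking continuity of the spectrum and the Riesz calculus, and it does not directly yield a quantitative bound (though one could extract one from the contour-integral representation). Both reductions of the second map to the first via Hermitian conjugation coincide.
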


\begin{proof}
The second function is obtained by composing the first function with Hermitian conjugation, which is continuous, so it suffices to prove continuity of the first function at $A_0$. For this, let $\lambda_0$ be the smallest nonzero eigenvalue of $\lop(A_0)$ and note that it suffices to show that 
\[
\norm{Q(A)Q(A_0)x - Q(A_0)x} \leq 2\lambda_0^{-1}\norm{\lop(A) - \lop(A_0)}
\]
for all $x \in (\ker Q(A_0))^\perp = \range \lop(A_0)$ with $\norm{x} \leq 1$. 
Given such an $x$, there exists $w \in (\ker \lop(A_0))^\perp$ such that $\lop(A_0)w = x$. Since $w \in (\ker \lop(A_0))^\perp$, it follows that $\norm{w} \leq \lambda_0^{-1}$ since $\lambda_0^{-1}$ is the norm of the inverse of the restriction of $\lop(A_0)$ to $(\ker \lop(A_0))^\perp$. Then, using $Q(A)L(A)=L(A)$ and that $Q(A)$ has norm one,
\begin{align*}
\norm{Q(A)Q(A_0)x - Q(A_0)x} &= \norm{Q(A)\lop(A_0)w - \lop(A_0)w}\\
&\leq \norm{Q(A)\lop(A)w - \lop(A_0)w} \\
&\qquad \qquad + \norm{Q(A)\lop(A_0)w - Q(A)\lop(A)w}\\
&\leq 2\norm{\lop(A)w - \lop(A_0)w}\\
&\leq 2\lambda_0^{-1}\norm{\lop(A) - \lop(A_0)},
\end{align*}
as desired.
\end{proof}

\begin{definition}
We now allow the physical dimension $d$ and bond dimension $D$ to vary. 
Taking unions over $d$ and $D$, we define
\begin{align*}
\E = \bigcup_{d, D \in \bbN} \E(d, D), \qquad \E(\chi) = \underset{D \geq \chi}{\bigcup_{d, D \in \bbN}} \E(d, D, \chi), \qquad \E({\leq}\,\chi) = \bigcup_{\chi' \leq \chi} \E(\chi'). 
\end{align*}
As with fixed $d$ and $D$, we have $\E(\chi) \cap \E(\chi') = \varnothing$ if $\chi \neq \chi'$. Furthermore,
\[
\E(\chi) \cap \E(d, D) = \E(d, D, \chi) \qqtext{and} \E(\lchi) \cap \E(d, D) = \E(d, D, \lchi).
\]
We now endow $\E$ with the final topology induced by the inclusion maps $\E(d, D) \hookrightarrow \E$. 
\end{definition}

\begin{remark}\label{rem:topology_discussion}
With the topology above, $\E$ is the colimit (in the category of topological spaces and continuous maps) of the system of spaces $\E(d, D)$ and closed embeddings $\E(d, D) \hookrightarrow \E(d', D')$ for $d \leq d'$ and $D \leq D'$. The spaces $\E(n, n)$ for $n \in \bbN$ are cofinal in this system, so $\E$ is a sequential colimit. It follows that each inclusion $\E(d, D) \hookrightarrow \E$ is a closed embedding, i.e., the subspace topology on $\E(d, D)$ obtained from $\cM(d, D)$ coincides with the subspace topology on $\E(d, D)$ obtained from $\E$. Furthermore, the spaces $\E(d, D)$ are metrizable since they are subspaces of the finite-dimensional topological vector space $\cM(d, D)$, so the colimit $\E$ enjoys some nice basic topological properties. For example, $\E$ is paracompact Hausdorff and compactly generated. By ``compactly generated,'' we mean that $\E$ is a weak Hausdorff $k$-space as defined in \cite{LewisCGTOP}.

We can topologize $\E(\chi)$ and $\E(\lchi)$ as subspaces of $\E$ or as colimits of $\E(d, D, \chi)$ and $\E(d, D, \lchi)$, and these two topologies coincide.
To see this, first observe that $\E(d, D, \lchi)$ is closed in $\E(d, D)$. Indeed, it is the intersection of $\E(d,D)$ with the preimage of the closed set $\qty{B \in M_D(\bbC): \rank B \leq \chi}$ under the continuous map $L:\cM(d, D) \rightarrow M_D(\bbC)$. Since $\E(\lchi) \cap \E(d, D) = \E(d, D, \lchi)$, it follows that $\E(\lchi)$ is closed in $\E$. Then $\E(\chi)$ is the intersection of the closed set $\E(\lchi)$ with the open set $\E \setminus \E(\lchiminus)$. As noted in \cite[Appendix A]{LewisCGTOP}, the intersection of an open and closed subset of a $k$-space is again a $k$-space in the subspace topology.  Thus, the subspace topologies that $\E(\chi)$ and $\E(\lchi)$ inherit from $\E$ are compactly generated topologies. It now follows from Proposition \ref{prop:subspace_colimit_equal} that the subspace and colimit topologies on $\E(\chi)$ and $\E(\lchi)$ coincide.
\end{remark}

\begin{definition}\label{def:gauge_transform}
Given $A \in \cE(\chi)$, we define a \emph{gauge transformation} of $A$ to be a transformation of the form
\[
A^i \mapsto \lambda Z(A^i+\tilde A^i)Z^{-1},
\]
where $(\lambda, Z) \in \Unitary(1) \times \Unitary(\infty)$ and $\tilde A \in \cM$ such that 
\[
Q(A)\tilde A^i =  0 \qqtext{and} \tilde A^i Q(A) = \tilde A^i
\]
for all $i$. If $X$, $K$, and $M$ correspond to $A$ as in \eqref{eq:EUMPS_def}, then this is equivalent to requiring $\tilde A^i$ to be of the form
\[
\tilde A^i = X\mqty(0 & 0 \\ N^i & 0)X^{-1}
\]
for arbitrary matrices $N^i$ with $\chi$ columns and any number of rows. From this description one sees that $\lambda Z (A^i + \tilde A^i)Z^{-1} \in \E(\chi)$ and
\[
Q(\lambda Z (A^i + \tilde A^i)Z^{-1}) = ZQ(A)Z^{-1}.
\]
One can check that gauge transformations define an equivalence relation ${\sim}$ on $\E$.
\end{definition}

\begin{rem}
The equivalence classes defined above can be understood as the orbits of a right group action on $\E$. The group is $G = \Unitary(1) \times (\cM \rtimes \Unitary(\infty))$, where $\Unitary(\infty)$ acts on $\cM$ by conjugating each matrix $(Z.B)^i = ZB^iZ^*$, and the action of $G$ on $\E$ is given by
\[
	A.(\lambda, Z, B) = \lambda Z^*(A + (\1 - Q(A))BQ(A))Z.
\]
However, $G$ is not locally compact and the action is not continuous due to discontinuities in $Q(A)$ as $A$ changes essential rank. For these reasons, we do not find this a particularly helpful perspective for our purposes and will avoid it for the remainder of the paper.
\end{rem}

\begin{definition}
We define the quotient space
\[
\B = \E/{\sim}
\]
with quotient map
\[
p : \E \rightarrow \B.
\]
We define $\B(d, D,\chi)$, $\B(d,D,\lchi)$, $\B(d,D)$, $\B(\chi)$, and $\B(\lchi)$ to be the images of $\E(d, D, \chi)$, $\E(d,D,\lchi)$, $\E(d,D)$, $\E(\chi)$, and $\E(\lchi)$ under $p$, respectively. Each of these spaces $\B({\cdots})$ is given the quotient topology induced by the corresponding space $\E({\cdots})$.
\end{definition}

\begin{remark}\label{rem}
Since $\E(d, D, \lchi)$ is closed in $\E(d, D)$ and saturated with respect to the quotient map $p$, it follows that the quotient topology on $\B(d, D, \lchi)$ coincides with the subspace topology it inherits from $\B(d, D)$. Likewise, since $\E(d, D, \chi)$ is the intersection of the closed saturated set $\E(d, D, \lchi)$ and the open saturated set $\E(d, D) \setminus \E(d, D, \lchiminus)$, we know that the quotient topology on $\B(d, D, \chi)$ coincides with the subspace topology it inherits from $\B(d, D)$.  By the same logic, the quotient topologies on $\B(\lchi)$ and $\B(\chi)$ coincide with the subspace topologies they inherit from $\B$.

 Since $\E(\lchi)$, $\E(\chi)$, and $\E$ have the final topology induced by the subspaces $\E(d, D, \lchi)$, $\E(d, D, \chi)$, and $\E(d, D)$, respectively, one can show that the quotient topology on $\B(\lchi)$, $\B(\chi)$, and $\B$ coincides with the final topology induced by the inclusions $\B(d, D, \lchi) \hookrightarrow \B(\lchi)$, $\B(d, D, \chi) \hookrightarrow \B(\chi)$, and $\B(d, d) \hookrightarrow \B$, respectively, across all $d , D \in \bbN$.
\end{remark}

\begin{proposition}
Let $d, D, d', D' \in \bbN$ with $d \leq d'$ and $D \leq D'$. The inclusion $\B(d, D) \rightarrow \B(d', D')$ is a closed embedding.
\end{proposition}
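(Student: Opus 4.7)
The plan is to check the three defining properties of a closed embedding. Continuity is immediate from the universal property of the quotient, since the composite $\E(d, D) \hookrightarrow \E(d', D') \to \B(d', D')$ descends through $p\colon \E(d, D) \to \B(d, D)$. Injectivity follows by reducing to canonical forms: the $\sim$-class of $A \in \E(d, D, \chi)$ is parameterized by the underlying normalized injective MPS $K \in \I(d, \chi)$ modulo a $\Unitary(1) \times \Unitary(\chi)$ action, and this action is realized within $\E(d, D)$ using block-diagonal unitaries $\mqty(U & 0 \\ 0 & \1_{D - \chi}) \in \Unitary(D)$; hence two elements of $\E(d, D)$ that become equivalent in $\E(d', D')$ are already equivalent in $\E(d, D)$.

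For the closed image, I would characterize the saturation $\hat\E \defeq p^{-1}(\B(d, D)) \cap \E(d', D')$ explicitly as the subset of $\E(d', D')$ consisting of those $A'$ of essential rank at most $D$ and satisfying the polynomial equations $A'^j A'^i = 0$ for all $j \in \{1, \ldots, d'\}$ and $i \in \{d+1, \ldots, d'\}$. The forward inclusion uses the gauge-invariance of the essential rank together with the identity $\ker Q(A') = \ker L(A') = \bigcap_j \ker A'^j$ and the relation $A'^i Q(A') = A'^i$: the requirement that the essential MPS of $A'$ be supported on physical indices $i \leq d$ becomes $Q(A') A'^i = 0$ for $i > d$, which after taking inner products is equivalent to the polynomial equations. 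The reverse inclusion constructs, from a decomposition $A'^i = W \mqty(K^i & 0 \\ M^i & 0) W^{-1}$ with $W \in \Unitary(D')$, an equivalent element of $\E(d, D, \chi)$ by applying a gauge transformation that zeros out $M$ and noting that the polynomial equations force $K^i = 0$ for $i > d$, so that the restriction of $K$ to $i \leq d$ lies in $\I(d, \chi)$. Both defining conditions on $\hat\E$ are closed (the rank bound via Remark~\ref{rem:topology_discussion}, the polynomial equations directly), so $\hat\E$ is closed in $\E(d', D')$; being saturated, $\B(d, D) = p(\hat\E)$ is then closed in $\B(d', D')$.

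For the homeomorphism onto the image, the subspace topology on $\B(d, D) \subset \B(d', D')$ matches the quotient topology induced from the closed saturated subspace $\hat\E$, by the standard fact invoked in Remark~\ref{rem}. It remains to show this coincides with the quotient topology induced from $\E(d, D)$. The $\hat\E$-quotient topology is automatically coarser, so the nontrivial direction is to show that for every saturated open $U \subset \E(d, D)$, its saturation $[U]$ in $\hat\E$ is open in $\hat\E$. For each $A'_0 \in [U]$ with $A'_0 \sim A_0 \in U$, fixing gauge data $(\lambda_0, Z_0, \tilde A_0)$ realizing the equivalence, I would use the joint continuity of the polynomial gauge map $(\lambda, Z, \tilde A, A) \mapsto \lambda Z(A + \tilde A) Z^{-1}$ to produce a neighborhood of $A'_0$ in $\hat\E$ contained in $[U]$ by simultaneously varying $A$ in $U$ and $(\lambda, Z, \tilde A)$ near the fixed data.

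The main obstacle is this last step. Although the gauge map is continuous, showing that its image covers an open neighborhood of each $A'_0$ in $\hat\E$ is delicate because the projection $Q$ is discontinuous across essential-rank strata, so a naive retraction $A' \mapsto Q(A') A'^i Q(A')$ is not itself continuous. The argument must therefore proceed parametrically, exploiting the richness of the gauge action to cover neighborhoods of each orbit in $\hat\E$---a feature closely related to the local trivializations underlying the Serre fibration property of Proposition~\ref{prop:BUchi}.
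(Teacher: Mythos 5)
Your first three steps are sound, and your characterization of the saturation $\hat\E \defeq p^{-1}(\B(d,D)) \cap \E(d', D')$ by the polynomial equations $A'^j A'^i = 0$ (for all $j$ and all $i > d$), together with the essential-rank bound, is actually cleaner than the paper's: the paper works with the equivalent condition $Q(A')A'^i = 0$ for $i > d$ and must invoke Lemma~\ref{lem:projection_pt_continuity} to show closedness, whereas your version is manifestly cut out by continuous (indeed polynomial) equations, the equivalence coming from $\ker Q(A') = \ker L(A') = \bigcap_j \ker A'^j$ as you say. (Injectivity, incidentally, requires no argument at all: $\B(d,D)$ and $\B(d',D')$ are both subsets of $\B$ under the single global quotient $p\colon \E\to\B$, so the map is literally a set inclusion.)

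The gap is in the final homeomorphism step, and you correctly locate it but do not close it. Showing that the saturation $[U]$ of a saturated open $U \subset \E(d, D)$ is open in $\hat\E$ requires exhibiting, near each $A'_0 \in [U]$ with $A'_0 \sim A_0 \in U$, a neighborhood $W \ni A'_0$ in $\hat\E$ together with, for each $A' \in W$, some $A \in U$ with $A' \sim A$. Since essential rank is lower semicontinuous, $W$ contains points of strictly larger essential rank than $A'_0$, and for such points the constraint defining the admissible $\tilde A$ in a gauge transformation $A' = \lambda Z(A+\tilde A)Z^{-1}$ is cut out by $Q(A)$, which jumps discontinuously as $A$ moves off the stratum of $A_0$. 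There is therefore no single continuous family $(\lambda, Z, \tilde A, A)$ near the fixed data whose image covers a neighborhood of $A'_0$: the very domain of the gauge map changes across strata. Proposition~\ref{prop:BUchi} supplies local trivializations only \emph{within} a fixed stratum $\E(d,D,\chi)$, and crossing strata is precisely the hard part---it is the reason the quasifibration machinery of Section~\ref{sec:quasifibration} exists at all. The paper does not attempt a local section; it argues sequentially. Given $A_k \to A_0$ in $p^{-1}(C)\cap\E(d',D')$ with $C$ closed saturated in $\E(d,D)$, it writes $A_k = X_k \mqty(K_k & 0\\ M_k & 0) X_k^*$, uses compactness of $\Unitary(D')$ to extract $X_{k_m}\to X$, truncates to physical dimension $d$ to get $B_m \to B_0$ in $\E(d,D')$ with $p(B_m)=p(A_{k_m})$, and then compresses by $P = \mqty(\1_{D\times D} & 0\\ 0 & 0)$ to obtain $PB_mP \to PB_0P$ with $PB_mP \in \E(d,D)\cap p^{-1}(C)$. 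The nontrivial content---a block-structure computation using unitarity and injectivity of the underlying MPS tensor---is that $PB_0P$ still lies in $\E(d,D)$; closedness of $p^{-1}(C)\cap\E(d,D)$ then finishes. Either this sequential compactness argument or a genuinely new construction of sections across strata is required; the parametric gauge-covering you sketch does not supply one.
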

\begin{proof}
First we show that $\B(d, D)$ is a closed subset of $\B(d', D')$. Observe that
\begin{align*}
p^{-1}(\B(d, D)) \cap \E(d', D') &=  \qty{A \in \E(d', D', {\leq}D): Q(A)A^i = 0 \tn{ for $i > D$}}.
\end{align*}
We know $\E(d', D', {\leq}D)$ is closed in $\E(d', D')$, so it suffices to show that 
\begin{equation}\label{eq:K_is_0_after_D}
\qty{A \in \E(d', D'): Q(A)A^i = 0 \tn{ for $i > D$}}
\end{equation}
is closed in $\E(d', D')$. Let $(A_k)_{k \in \bbN}$ be a sequence in \eqref{eq:K_is_0_after_D} converging to some $A_0 \in \E(d', D')$. By Lemma \ref{lem:projection_pt_continuity} we know $Q(A_0)Q(A_k) \rightarrow Q(A_0)$. Then for all $i > D$ we have
\[
Q(A_0)A_0^i = \lim_i Q(A_0)Q(A_k)A_k^i = 0.
\]
This proves that $A_0$ is in \eqref{eq:K_is_0_after_D}. Thus, $\B(d, D)$ is a closed subset of $\B(d', D')$.

It remains to show that the quotient topology on $\B(d, D)$ obtained from $\E(d, D)$ coincides with the subspace topology it obtains from $\B(d', D')$. Let $\sT_q$ be the quotient topology and let $\sT_s$ be the subspace topology. It is easily checked that $\sT_s \subset \sT_q$ using the universal properties of these topologies. Suppose $C \subset \B(d, D)$ is closed relative to $\sT_q$. We want to show that $C$ is closed in $\B(d', D')$. Since $\B(d', D')$ has the quotient topology, this is equivalent to showing that $p^{-1}(C) \cap \E(d', D')$ is closed in $\E(d', D')$. 

Let $(A_k)_{k \in \bbN}$ be a sequence in $p^{-1}(C) \cap \E(d', D')$ converging to some $A_0 \in \E(d', D')$. Note that $A_k, A_0 \in p^{-1}(\B(d, D))$ since $p^{-1}(\B(d,D))$ is closed in $\E(d', D')$. Let $X_k$, $K_k$, and $M_k$ correspond to $A_k$ as in Definition \ref{def:EGL}. By compactness of $\Unitary(D')$, there exists a subsequence of $(X_{k_m})_{m \in \bbN}$ such that $X_{k_m} \rightarrow X$ for some $X \in \Unitary(D')$. Then $X_{k_m}^*A_{k_m}X_{k_m} \rightarrow X^*A_0X$. Let $B_m$ and $B_0$ be the elements of $M_{D'}(\bbC)^{d'}$ whose first $d$ matrices are equal to those of $X_{k_m}^*A_{k_m}X_{k_m}$ and $X^*A_0X$, respectively, and whose last $d' - d$ matrices are equal to zero. Since $A_k, A_0 \in p^{-1}(\B(d, D))$, this truncation yields $B_m, B_0 \in \E(d, D')$ and $p(B_m) = p(A_{k_m})$ and $p(B_0) = p(A_0)$. We also still have $B_m \rightarrow B_0$.  

We need to truncate again to get elements of $\E(d, D)$. Let 
\[
P = \mqty(\1_{D \times D} & 0\\0 & 0) \in M_{D'}(\bbC).
\]
Since $B_{m} = \mqty(K_{k_m} & 0 \\ M_{k_m} & 0)$ and each $K_{k_m}$ has bond dimension at most $D$, it is clear that $PB_mP \in \E(d, D)$. Since $p(PB_mP) = p(B_m) = p(A_{k_m})$, we see that $PB_mP \in p^{-1}(C)$. Furthermore, $PB_mP \rightarrow PB_0P$ in $M_D(\bbC)^d$. 

Suppose we can show that $PB_0P \in \E(d, D)$. Since $C$ is closed with respect to  $\sT_q$, we know $p^{-1}(C) \cap \E(d, D)$ is closed in $\E(d, D)$. Therefore $PB_mP \rightarrow PB_0P$ implies $PB_0P \in p^{-1}(C)$. Thus,
\[
p(A_0) = \lim_m p(A_{k_m}) = \lim_m p(PB_mP) = p(PB_0P) \in C.
\]
This proves that $A_0 \in p^{-1}(C)$, hence $p^{-1}(C) \cap \E(d', D')$ is closed in $\E(d', D')$, as desired.

It remains to show that $PB_0P \in \E(d, D)$. Write
\[
B_0 = Y\mqty(L&0&0\\N_1&0_{(D - \chi) \times (D - \chi)}&0\\N_2&0&0_{(D' - D) \times (D' - D)})Y^*
\]
where $Y \in \Unitary(D')$, $L$ is a right-normalized injective MPS of bond dimension $\chi \leq D$, and $N_1$ and $N_2$ are arbitrary matrices. Write $Y$ in block form as
\[
Y = \mqty(Y_{11} & Y_{12} & Y_{13} \\ Y_{21} & Y_{22} & Y_{23} \\ Y_{31} & Y_{32} & Y_{33})
\]
where $Y_{11}$ is $\chi \times \chi$, $Y_{22}$ is $(D - \chi) \times (D - \chi)$, and $Y_{33}$ is $(D' - D) \times (D' - D)$. Since $B_m \rightarrow B_0$ and the last $D' - D$ columns of each $B_m$ are zero, we know the last $D' - D$ columns of $B_0$ are zero, hence the last $D' - D$ columns of $Y^*B_0$ are also zero. But the top right block of $Y^*B_0$ is $LY_{31}^*$. Since this must be zero and since $L$ is an injective MPS, we see that $Y_{31} = 0$.

The first $\chi$ columns of $Y$ lie in $\bbC^D \subset \bbC^{D'}$ and are orthogonal since $Y$ is unitary. They may be extended to an orthonormal basis of $\bbC^D$. In other words, there exist matrices $Z_1$ and $Z_2$ such that the matrix
\[
\tilde Y \defeq \mqty(Y_{11} & Z_1 & 0 \\ Y_{21} & Z_2 & 0 \\0 & 0 & \1_{(D' - D) \times (D' - D)})
\]
is unitary. Furthermore, unitarity of $Y$ implies 
\begin{align*}
Y_{11}^*Y_{11} + Y_{21}^*Y_{21} &= \1\\
Y_{11}^*Y_{12} + Y_{21}^*Y_{22} &= 0\\
Y_{11}^*Y_{13} + Y_{21}^*Y_{23} &= 0.
\end{align*}
We then observe that
\begin{align*}
PB_0P &= \tilde Y\tilde Y^*\mqty(Y_{11} & Y_{12} & Y_{13}\\Y_{21} & Y_{22} &Y_{23} \\ 0 & 0 & 0)\mqty(L&0&0\\N_1&0&0\\N_2&0&0)\mqty(Y_{11}^*&Y_{21}^*&0\\Y_{12}^* & Y_{22}^* & 0 \\ Y_{13}^* & Y_{23}^* & 0)\\
&= \tilde Y \mqty(\1 & 0 & 0 \\ W_1 & W_2 & W_3 \\ 0 & 0 & 0)\mqty(L&0&0\\N_1&0&0\\N_2&0&0)\tilde Y^* = \tilde Y \mqty(L& 0 & 0 \\ \tilde N & 0 & 0 \\ 0 & 0 & 0)\tilde Y^*,
\end{align*}
where $W_1$, $W_2$, $W_3$, and $\tilde N$ are some matrices. From this it is manifest that $PB_0P \in \E(d, D)$, completing the proof.
\end{proof}

\begin{remark}\label{rem:B_space_topologies}
We now observe that $\B(d, D, \chi) = \B(d', D', \chi) \cap \B(d, D)$ and $\B(d, D, \lchi) = \B(d', D', \lchi) \cap \B(d, D)$. It follows that the inclusions $\B(d, D, \chi) \rightarrow \B(d', D', \chi)$ and $\B(d, D, \lchi) \rightarrow \B(d', D', \lchi)$ are also closed embeddings since $\B(d, D)$ is closed in $\B(d', D')$. 

Since $\B(\chi)$, $\B(\lchi)$, and $\B$ are the sequential colimits of these closed embeddings, we know that all the inclusions $\B(d, D, \chi) \rightarrow \B(\chi)$, $\B(d, D, \lchi) \rightarrow \B(\lchi)$, and $\B(d, D) \rightarrow \B$ are closed embeddings. We have now shown that for each of the spaces $\B(\cdots)$, all reasonable topologies we could put on it coincide.

In Corollaries \ref{cor:B_Hausdorff} and \ref{cor:B_colimits_Hausdorff} of \S\ref{sec:pure_state_space} below we will show that $\B(d, D, \chi)$, $\B(d, D, \lchi)$, and $\B(d, D)$ are all Hausdorff. Since quotients of $k$-spaces are $k$-spaces, it follows that $\B(d, D, \chi)$, $\B(d, D, \lchi)$, and $\B(d, D)$ are all compactly generated. Since sequential colimits of compactly generated spaces along closed embeddings are compactly generated, we know $\B(\chi)$, $\B(\lchi)$, and $\B$ are compactly generated.
\end{remark}

\subsection{Mapping \texorpdfstring{$\B(d,D)$}{B(d,D)} into Pure State Space}\label{sec:pure_state_space}

In this section, we will show how $\B(d,D)$ maps into the pure state spaces of quasi-local algebras. Let us first recall the basic construction of matrix product states in the thermodynamic limit for fixed, finite physical and bond dimension.

Given $d \in \bbN$, we define $\fA(d)$ to be the quasi-local algebra on the lattice $\bbZ$ with on-site Hilbert space $\bbC^d$. In other words, for nonempty finite subsets $\Lambda \subset \bbZ$ we define
\[
\fA_{\Lambda}(d) = \bigotimes_{v \in \Lambda} M_d(\bbC)
\]
and the quasi-local algebra is the directed colimit
\[
\fA(d) = \overline{\bigcup_{\Lambda \subset \bbZ} \fA_\Lambda(d)}
\]
over nonempty finite subsets $\Lambda$.  
Let $\sP(d)$ be the set of pure states of $\fA(d)$. Given an injective right-normalized MPS tensor $K^i_{\alpha \beta}$ of bond dimension $\chi$ and physical dimension $d$, we now review how to obtain a pure state $\omega \in \sP(d)$, following Fannes, Nachtergaele, and Werner \cite{FannesNachtergaeleWernerFCS,FannesNachtergaeleWernerPureStates}. 

First, one constructs the completely positive linear map
\[
\bbE: M_\chi(\bbC) \otimes M_d(\bbC) \rightarrow M_\chi(\bbC), \quad \bbE(B \otimes C) = \sum_{i,j} C_{ij}K^{i*}BK^{j}.
\]
Given $C \in M_d(\bbC)$, we can then define a linear map $\bbE_C:M_\chi(\bbC) \rightarrow M_\chi(\bbC)$ by $\bbE_C(B) = \bbE(B \otimes C)$. We also recursively define the completely positive maps $\bbE^{(n)}: M_\chi(\bbC) \otimes M_d(\bbC)^{\otimes n}  \rightarrow M_\chi(\bbC)$ by setting $\bbE^{(1)} = \bbE$ and setting
\[
\bbE^{(n+1)} = \bbE \circ (\bbE^{(n)} \otimes \id_{M_d(\bbC)}).
\]
One can check by induction that
\[
\bbE^{(n)}(B \otimes C_1 \otimes \cdots \otimes C_n) = (\bbE_{C_n} \circ \cdots \circ \bbE_{C_1})(B).
\]

The map $\bbE_\1(B) = \sum_i K^{i*} BK^{i}$ is of particular interest. First of all, it is also completely positive. 
Moreover, injectivity of $K^i_{\alpha \beta}$ implies that $\bbE_\1$ is \textit{irreducible}, meaning the only projections $P \in M_\chi(\bbC)$ such that $PM_\chi(\bbC) P$ is invariant under $\bbE_\1$ are $P = \1$ and $P = 0$. 
The noncommutative Frobenius-Perron theorem of Evans and H{\o}egh-Krohn \cite{MR0482240} then implies that there exists a unique positive invertible $T \in M_\chi(\bbC)_+$ of trace one such that $T$ is an eigenvector of $\bbE_\1$. Likewise, the adjoint of $\bbE_\1$ with respect to the Hilbert-Schmidt inner product, given by $\bbE_\1^*(B) = \sum_i K^{i}BK^{i*}$, is also irreducible and therefore admits a positive eigenvector, unique up to scalar multiplication; we know this eigenvector is $\1$ with eigenvalue $1$ by the right-normalization condition.\footnote{Note that in \cite{FannesNachtergaeleWernerFCS,FannesNachtergaeleWernerPureStates} the opposite normalization is used. There it is assumed that $\bbE_\1(\1) = \1$.} Furthermore, as shown in \cite{MR0482240}, the eigenvalues of these eigenvectors are the same, so $\bbE_\1(T) = T$. It is also shown in \cite{MR0482240} that the corresponding eigenspaces are one-dimensional.

For $n \in \bbN$, the state $\omega$ is then defined on $\fA_{[-n, n] \cap \bbZ}$ by
\begin{align}
\omega(C_{-n} \otimes \cdots \otimes C_n) &=  \tr\qty[\bbE^{(n)}(T \otimes C_{-n} \otimes \cdots \otimes C_n)] \label{eq:MPS_state_formulae1} \\
&=  \tr\qty[ (\bbE_{C_n} \circ \cdots \circ \bbE_{C_{-n}})(T)] .\label{eq:MPS_state_formulae2}
\end{align}
From \eqref{eq:MPS_state_formulae1} we see that $\omega$ is a well-defined positive linear functional on $\fA_{[-n,n] \cap \bbZ}$. From \eqref{eq:MPS_state_formulae2} and the identities $\bbE_\1(T) = T$ and $\bbE_\1^*(\1) = \1$, we see that the value of the state is unchanged as we tensor on identities to the left and right of $C_{-n} \otimes \cdots \otimes C_n$, so that $\omega$ gives a well-defined positive linear functional on $\fA(d)$. The scaling $\tr(T) = 1$ ensures that $\omega(\1) = 1$. 

We note that $\omega$ is a translationally invariant state. 
Furthermore, we recall that $\omega$ is the unique frustration-free ground state of a translation invariant, nearest neighbor interaction \cite[Thm.~5.7]{FannesNachtergaeleWernerFCS}. From this, it follows that $\omega$ is a pure state.  

We also remark that \eqref{eq:MPS_state_formulae2} can be expanded to rewrite $\omega$ more explicitly in terms of the $K^i$ matrices. Setting $C = C_1 \otimes \cdots \otimes C_n$ for ease of notation, we have
\begin{equation}\label{eq:omega_in_terms_of_K}
\omega(C) = \underset{j_1,\ldots, j_n}{\sum_{i_1,\ldots, i_n}} C_{1,i_1j_1}\cdots C_{n,i_nj_n}\tr\qty(K^{i_n*}\cdots K^{i_1*}TK^{j_1}\cdots K^{j_n})
\end{equation}
We can also write the state restricted to $\fA_{[1,n]\cap \bbZ}$ as a mixture of matrix product states on $(\bbC^{d})^{\otimes n}$ with various boundary conditions. Precisely, if we diagonalize $T = \sum_\alpha \mu_\alpha \ketbra{v_\alpha}$ with orthonormal basis $\ket{v_\alpha}_{\alpha = 1}^\chi$ and we define $B_{\alpha \beta} = \ketbra{v_\beta}{v_\alpha}$, then one can show from \eqref{eq:omega_in_terms_of_K} that $\omega$ restricted to $\fA_{[1,n] \cap \bbZ}$ is represented by the density matrix
\begin{align*}
\varrho_{[1,n] \cap \bbZ} \defeq \sum_{\alpha, \beta} \mu_\alpha \ketbra{\Psi(B_{\alpha \beta})}
\end{align*}
where
\[
\ket{\Psi(B_{\alpha \beta})} \defeq \sum_{j_1,\ldots, j_n} \tr(K^{j_1}\cdots K^{j_n}B_{\alpha \beta})\ket{j_1\cdots j_n}
\]
and $\ket{j_1 \cdots j_n}$ is a standard basis vector for $(\bbC^d)^{\otimes n}$.

Under a gauge transformation of the MPS tensor $K^i \mapsto \lambda XK^iX^*$ with $(\lambda,X) \in \Unitary(1) \times \Unitary(\chi)$, we have a transformations of $T$ as:
\[
T \mapsto XTX^*
\]
Plugging this transformation into \eqref{eq:omega_in_terms_of_K}, we see that $K^i$ and $\lambda XK^i X^*$ define the same state $\omega$. 

Conversely, it follows from \cite{FannesNachtergaeleWernerPureStates} that if a given state $\omega \in \sP(d)$ is induced by two right-normalized injective MPS tensors $K$ and $L$ of physical dimension $d$ and bond dimensions $\chi_K$ and $\chi_L$, then $\chi_K = \chi_L$ and $K$ and $L$ are related by a gauge transformation as above. This concludes our review of \cite{FannesNachtergaeleWernerFCS,FannesNachtergaeleWernerPureStates}.

Given an injective MPS tensor $K$ of physical dimension $d$, we have seen how to define a pure state on $\sP(d)$. We will be considering this construction as a function of $K$, and it is useful to make this functional dependence explicit in our notation. Thus, we shall henceforth denote the induced state by $\omega_K$ rather than simply $\omega$. 
The MPS tensors in $\E(d, D)$ are of the more general form \eqref{eq:EUMPS_def} but nonetheless induce pure states on $\fA(d)$ as the following proposition shows.

\begin{prop}\label{prop:EMPS(d,D)_to_P(d)}
Given $d, D \in \bbN$, the map 
\begin{equation}\label{eq:EMPS(d,D)_to_P(d)}
\E(d, D) \rightarrow \sP(d), \quad A = X\mqty(K&0\\M&0)X^{*} \mapsto \omega_A \defeq \omega_K
\end{equation}
is well-defined and continuous with respect to the weak* topology on $\sP(d)$. There exists a unique function $\B(d, D) \rightarrow \sP(d)$ such that the diagram below commutes.
\[
\begin{tikzcd}
\E(d,D)\arrow[d,"p"]\arrow[r]& \sP(d)\\
\B(d, D) \arrow[ur] & 
\end{tikzcd}
\]
Moreover, $\B(d, D) \rightarrow \sP(d)$ is a continuous injection.
\end{prop}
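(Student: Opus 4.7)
The plan is to handle the three assertions—well-definedness, continuity on $\E(d, D)$, and injectivity of the descended map—in that order; the continuous factorization through $\B(d, D)$ will follow automatically from the universal property of the quotient topology.

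For well-definedness, suppose $A \in \E(d, D, \chi)$ admits two decompositions $A = X\mqty(K&0\\M&0)X^* = X'\mqty(K'&0\\M'&0)(X')^*$. Writing $P_\chi = \mqty(\1_\chi&0\\0&0)$, the identity $Q(A) = XP_\chi X^* = X'P_\chi(X')^*$ forces $W := X^*X'$ to commute with $P_\chi$, so $W = \mqty(U&0\\0&V)$ with $U \in \Unitary(\chi)$ and $V \in \Unitary(D - \chi)$. Comparing blocks yields $K' = U^* K U$, and the FNW gauge invariance recalled above then gives $\omega_K = \omega_{K'}$. An analogous block calculation for a gauge transformation of Definition \ref{def:gauge_transform} shows that the associated $K$ transforms by the $\Unitary(1) \times \Unitary(\chi)$-gauge $K \mapsto \lambda U K U^*$ reviewed above, so $\omega_A$ is constant on gauge orbits and the map descends to a well-defined function $\B(d, D) \to \sP(d)$.

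For continuity on $\E(d, D)$, an induction on $n$ using $A^i Q(A) = A^i$ and $Q(A)A^i Q(A) = X_1 K^i X_1^*$ (where $X_1 \in M_{D \times \chi}(\bbC)$ consists of the first $\chi$ columns of $X$) yields
\[
\omega_A(C_1 \otimes \cdots \otimes C_n) = \sum_{i,j} C_{1, i_1 j_1}\cdots C_{n, i_n j_n}\, \tr\bigl(A^{i_n*}\cdots A^{i_1*}\,S(A)\,A^{j_1}\cdots A^{j_n}\bigr),
\]
with $S(A) := X_1 T X_1^* \in M_D(\bbC)$ and $T \in M_\chi(\bbC)$ the FNW fixed point of $\bbE_\1^K$. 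Since $A \mapsto A^i$ is manifestly continuous and $\|\omega_A\| = 1$ uniformly, a standard $3\epsilon$-argument against norm-approximation by local observables reduces weak*-continuity of $A \mapsto \omega_A$ to continuity of $A \mapsto S(A)$. To control $S$, consider the transfer operator $\hat{\bbE}_A(B) := \sum_i A^{i*} B A^i$ on $M_D(\bbC)$. Writing $Q = Q(A)$, the identity $A^i(1 - Q) = 0$ yields by direct block computation that with respect to the decomposition $M_D(\bbC) = QM_DQ \oplus QM_D(1-Q) \oplus (1-Q)M_D Q \oplus (1-Q)M_D(1-Q)$, the operator $\hat{\bbE}_A$ is block upper triangular: each of the last three summands is sent into $QM_DQ$, so the corresponding diagonal blocks vanish, and the remaining diagonal block on $QM_DQ$ is conjugate via $X_1$ to $\bbE_\1^K$, which by injectivity of $K$ and the theorem of Evans and H{\o}egh-Krohn is irreducible with $1$ a simple isolated eigenvalue. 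Hence $1$ is a simple isolated eigenvalue of $\hat{\bbE}_A$ for \emph{every} $A \in \E(d, D)$, with eigenvector $S(A)$.

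Because $\hat{\bbE}_A$ depends continuously on $A$ and $1$ is uniformly a simple eigenvalue, finite-dimensional perturbation theory—encoded concretely via the Riesz projection $P_A = \tfrac{1}{2\pi i}\oint_\gamma (\lambda - \hat{\bbE}_A)^{-1}\,d\lambda$ around a small loop $\gamma$ separating $1$ from the rest of $\mathrm{spec}(\hat{\bbE}_{A_0})$—gives continuity of $A \mapsto P_A$ in a neighborhood of any $A_0$. Setting $B_0 = S(A_0)$ we have $P_{A_0}(B_0) = S(A_0)$ of trace $1$, so near $A_0$ we recover $S(A) = P_A(B_0)/\tr P_A(B_0)$, proving continuity of $S$ and hence of $A \mapsto \omega_A$. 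Finally, for injectivity of $\B(d, D) \to \sP(d)$: if $\omega_A = \omega_{A'}$, FNW uniqueness forces $K$ and $K'$ to share the bond dimension $\chi$ and satisfy $K' = \lambda U K U^*$ for some $(\lambda, U) \in \Unitary(1) \times \Unitary(\chi)$, and one writes an explicit gauge transformation from $A$ to $A'$ using phase $\lambda$, conjugation $Z = X'\mqty(U&0\\0&V)X^*$ for arbitrary $V \in \Unitary(D - \chi)$, and additive correction $N = V^* \lambda^{-1} M' U - M$. The principal obstacle is the continuity of $S$ across strata of varying essential rank, and the block upper-triangular structure of $\hat{\bbE}_A$ is precisely what keeps $1$ a simple isolated eigenvalue uniformly and makes perturbation theory applicable.
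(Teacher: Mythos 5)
Your proof is correct and tracks the paper's argument in all essentials: both characterize the paper's $\tilde{T}$ (your $S(A) = X_1 T X_1^*$) as the unique trace-one eigenvector of $B \mapsto \sum_i A^{i*}BA^i$ at the simple eigenvalue $1$, and both deduce weak*-continuity from finite-dimensional perturbation theory — the paper cites Kato, while your Riesz-projection argument is the standard unpacking of exactly that citation. The only real variation is in well-definedness, where the paper derives it as a byproduct of the uniqueness of $\tilde T$ as a function of $A$ alone, whereas you first establish the $\Unitary(1)\times\Unitary(\chi)$-gauge equivariance of the $K$-block by a direct block computation; your block-triangular analysis of $\hat{\bbE}_A$ also has the side benefit of making explicit that $1$ is \emph{algebraically} (not merely geometrically) simple, a point the paper leaves implicit in the Evans--H{\o}egh-Krohn input.
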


\begin{proof}
Let $A \in \E(d, D)$ and let $X$, $K$, and $M$ be as in \eqref{eq:EUMPS_def}. 
Let $T$ be related to $K$ as above. Then by direct substitution
\begin{equation}\label{eq:omega_in_terms_of_A}
\tr(K^{i_n*}\cdots K^{i_1*}TK^{j_1}\cdots K^{j_n}) = \tr\qty(A^{i_n*}\cdots A^{i_1*} \tilde T A^{j_1}\cdots A^{j_n})
\end{equation}
where
\begin{align*}
\tilde T = X\mqty(T&0\\0&0)X^{*}.
\end{align*}
Combining \eqref{eq:omega_in_terms_of_K} and \eqref{eq:omega_in_terms_of_A}, we see that
in order to show that $\omega_K$ is a well-defined function of $A$, it suffices to show that $\tilde T$ is a well-defined function of $A$. 

It is easy to check that $\sum_{i} A^{i*}\tilde TA^i = \tilde T$. Let $B \in M_D(\bbC)$ be any matrix satisfying this equation in place of $\tilde T$. Then
\begin{equation}\label{eq:eigenvec_B}
X^*BX = \sum_i \mqty(K^{i*}&M^{i*}\\0&0)X^*BX\mqty(K^{i}&0\\M^i&0).
\end{equation}
We see that 
\begin{equation}\label{eq:eigenvec_amplification}
X^*BX = \mqty( B' &0\\0&0)
\end{equation}
for some $ B' \in M_\chi(\bbC)$. Plugging this into \eqref{eq:eigenvec_B}, we find $\sum_i K^{i*}B'K^i = B'$. Since $1$ is a simple eigenvalue of the map $B' \mapsto \sum_i K^{i*} B' K^i$, we know $B' = \lambda T$ for some $\lambda \in \bbC$. Thus, $B = \lambda \tilde T$.

If $\tr(B) = 1$, then since we also have $\tr(\tilde T) = 1$, we know $\lambda = 1$. We have shown that $\tilde T$ spans the $1$-eigenspace of the linear map $B \mapsto \sum_i A^{i*} B A^i$ and that it is the unique eigenvector with eigenvalue one and trace one. It is therefore a well-defined function of $A$.

To prove weak*-continuity of \eqref{eq:EMPS(d,D)_to_P(d)}, we again combine \eqref{eq:omega_in_terms_of_K} and \eqref{eq:omega_in_terms_of_A} and observe that it suffices to show continuity of $\tilde T$ as a function of $A$. Since the linear map $B \mapsto \sum_i A^{i*}BA^i$ is a continuous function of $A$ and the 1-eigenspace is one-dimensional and spanned by $\tilde T$, the continuity of $\tilde T$ follows from \cite[Ch.~2, \S5]{Kato}.

It is clear from the definition that if $A_1, A_2 \in \E(d, D)$ and $A_1 \sim A_2$, then $\omega_{A_1} = \omega_{A_2}$. Therefore the map $\E(d, D) \rightarrow \sP(d)$ descends to a unique continuous map $\B(d, D) \rightarrow \sP(d)$. To show that this map is injective, suppose $A_1, A_2 \in \E(d, D)$ and $\omega_{A_1} = \omega_{A_2}$. Write
\[
A_i = X_i\mqty(K_i&0\\M_i&0)X_i^*
\]
as usual.
Since $\omega_{K_1} = \omega_{K_2}$, we know $K_1$ and $K_2$ have equal bond dimension $\chi$ and $K_1 = \lambda WK_2W^*$ for some $\lambda \in \Unitary(1)$ and $W \in \Unitary(\chi)$. Thus,
\[
A_1 = \lambda X_1\mqty(W &0\\0&\1)\mqty(K_2&0\\\lambda^* M_1W&0)\mqty(W^*&0\\0&\1)X_1^* \sim A_2.
\]
This proves that the map $\B(d, D) \rightarrow \sP(d)$ is injective. 
\end{proof}

\begin{cor}\label{cor:B_Hausdorff}
For all $d, D, \chi \in \bbN$ with $\chi \leq D$, the quotient spaces $\B(d, D, \chi)$, $\B(d, D, \lchi)$, and $\B(d,D)$ are Hausdorff.
\end{cor}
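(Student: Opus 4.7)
The plan is to leverage Proposition \ref{prop:EMPS(d,D)_to_P(d)}, which provides a continuous injection $\iota\colon \B(d,D) \to \sP(d)$, together with the fact that the pure state space of a unital $C^*$-algebra is Hausdorff in the weak* topology. Indeed, any two distinct states of $\fA(d)$ disagree on some element $a \in \fA(d)$, and the evaluation functional $\omega \mapsto \omega(a)$ is weak*-continuous, so disjoint open neighborhoods in $\C$ pull back to disjoint weak*-open neighborhoods separating the two states in $\sP(d)$.

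Given this, the argument for $\B(d,D)$ is essentially formal: if $[A_1] \ne [A_2]$ in $\B(d,D)$, then by injectivity of $\iota$ we have $\omega_{A_1} \ne \omega_{A_2}$ in $\sP(d)$, so there exist disjoint weak*-open neighborhoods $U_1, U_2$ of $\omega_{A_1}, \omega_{A_2}$. By continuity of $\iota$, the preimages $\iota^{-1}(U_1), \iota^{-1}(U_2)$ are disjoint open neighborhoods of $[A_1], [A_2]$ in $\B(d,D)$. This establishes the Hausdorff property for $\B(d,D)$.

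For $\B(d,D,\chi)$ and $\B(d,D,\lchi)$, we invoke Remark \ref{rem}: the quotient topologies on these spaces coincide with the subspace topologies they inherit from $\B(d,D)$. Since every subspace of a Hausdorff space is Hausdorff, both $\B(d,D,\chi)$ and $\B(d,D,\lchi)$ are Hausdorff.

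There is no genuine obstacle here; the only subtlety worth flagging is ensuring that the Hausdorff property of $\sP(d)$ really does follow from the weak* topology (which it does, since even the full state space is weak*-Hausdorff as points are separated by the continuous evaluation functionals $\omega \mapsto \omega(a)$ for $a \in \fA(d)$), and that the identification of quotient and subspace topologies from Remark \ref{rem} is what allows the Hausdorff property to descend to $\B(d,D,\chi)$ and $\B(d,D,\lchi)$ without having to reargue via $\iota$ directly on these smaller spaces.
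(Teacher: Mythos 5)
Your proof is correct and takes essentially the same approach as the paper's: the paper also deduces Hausdorffness from the continuous injection into $\sP(d)$ (which is Hausdorff), relying implicitly on Remark~\ref{rem} to handle $\B(d,D,\chi)$ and $\B(d,D,\lchi)$. You have merely made the subspace step explicit rather than restricting the injection directly, which is the same argument reorganized.
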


\begin{proof}
By Proposition \ref{prop:EMPS(d,D)_to_P(d)}, each space $\B({\cdots})$ has a continuous injection into $\sP(d)$, which is Hausdorff. The result follows. 
\end{proof}

We now define a colimit of pure state spaces as $d \rightarrow \infty$. We have a directed system of linear isometries $\iota_{d'd}:\bbC^{d} \rightarrow \bbC^{d'}$ for $d \leq d'$ defined by appending zeros to the end of a vector in $\bbC^d$. As described in \cite[\S2.2]{HomotopicalFoundations}, we also have a covariant functor from the category of finite-dimensional Hilbert spaces and linear isometries to the category of topological spaces and continuous maps. This functor maps $\bbC^d$ to $\sP(d)$ and maps $\iota_{d',d}$ to the continuous function $\sP(\iota_{d',d}) : \sP(d) \rightarrow \sP(d')$ defined by
\[
\sP(\iota_{d',d})(\omega)(C_{-n} \otimes \cdots \otimes C_n) = \omega(\iota_{d',d}^*C_{-n}\iota_{d',d} \otimes \cdots \otimes \iota_{d',d}^*C_{n}\iota_{d',d})
\]
for all $\omega \in \sP(d)$ and $C_{-n} \otimes \cdots \otimes C_n \in \fA_{[-n,n]\cap \bbZ}(d')$.
Thus, applying this functor yields a directed system of spaces $\sP(d)$ and continuous maps $\sP(d) \rightarrow \sP(d')$. We define 
\[
\sP(\infty) \defeq \colim_{d \rightarrow \infty} \sP(d)
\]
in the category of topological spaces and continuous maps.

We remark that the space $\sP(\infty)$ enjoys some basic topological properties. The maps $\sP(\iota_{d',d})$ are closed embeddings \cite[Prop.~2.12]{HomotopicalFoundations}, hence the inclusions into the colimit $\sP(d) \rightarrow \sP(\infty)$ are also closed embeddings. Since each of the spaces $\sP(d)$ is metrizable \cite[Prop.~4.3.2]{PedersenCAlgAutomorphisms}, hence compactly generated and paracompact Hausdorff, the space $\sP(\infty)$ is also compactly generated and paracompact Hausdorff.

\begin{prop}\label{prop:EMPS_to_P(infty)}
Given $d, d', D, D' \in \bbN$ such that $d \leq d'$ and $D \leq D'$, the diagram
\[
\begin{tikzcd}
\E(d, D) \arrow[d]\arrow[r, phantom, "\subset"] & \E(d', D')\arrow[d]\\
\sP(d) \arrow[r,"\sP(\iota_{d',d})"]& \sP(d')
\end{tikzcd}
\]
commutes, where the vertical arrows are the maps from Proposition \ref{prop:EMPS(d,D)_to_P(d)}. Thus, there exists a unique continuous map $\E \rightarrow \sP(\infty)$ such that
\[
\begin{tikzcd}
\E(d, D) \arrow[d]\arrow[r, phantom, "\subset"] & \E\arrow[d]\\
\sP(d) \arrow[r]& \sP(\infty)
\end{tikzcd}
\]
commutes for all $d, D \in \bbN$.

Furthermore, there exists a unique function $\B \rightarrow \sP(\infty)$ such that the diagram below commutes. 
\[
\begin{tikzcd}
\E \arrow[d,"p"]\arrow[dr]&\\
\B \arrow[r]& \sP(\infty)
\end{tikzcd}
\]
Moreover, the map $\B \rightarrow \sP(\infty)$ is a continuous injection.
\end{prop}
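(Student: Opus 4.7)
My plan is to handle the proposition's three assertions in order: commutativity of the square, the factorization through $p$, and injectivity.

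For commutativity of the first square, I would fix $A \in \E(d, D) \subseteq \E(d', D')$ and trace through the definitions. The key observation is that the zero-padding inclusion $\E(d, D) \hookrightarrow \E(d', D')$ forces $A^i = 0$ for $i > d$, so in the expansion \eqref{eq:omega_in_terms_of_K}--\eqref{eq:omega_in_terms_of_A} of $\omega_A$ as a state on $\fA(d')$, only physical indices in $\{1, \ldots, d\}$ contribute nonzero terms. Comparing with the definition of $\sP(\iota_{d',d})$ — which merely extracts the $\{1,\ldots,d\} \times \{1,\ldots,d\}$ block of each local operator $C_k$ via conjugation by $\iota_{d',d}$ — shows that $\omega_A \in \sP(d')$ coincides with $\sP(\iota_{d',d})(\omega_A)$, where on the right $\omega_A$ is regarded as an element of $\sP(d)$. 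The universal property of the sequential colimit $\E = \colim_{d,D} \E(d, D)$ then yields a unique continuous map $\E \to \sP(\infty)$.

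For the factorization through $p$, Proposition \ref{prop:EMPS(d,D)_to_P(d)} already shows that $\omega_{A_1} = \omega_{A_2}$ in $\sP(d)$ whenever $A_1 \sim A_2$, so the composite $\E \to \sP(\infty)$ is constant on gauge-equivalence classes and descends set-theoretically to $\B \to \sP(\infty)$. Continuity is then automatic from the universal property of the quotient topology on $\B = \E/{\sim}$, applied to the continuous map $\E \to \sP(\infty)$.

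For injectivity, suppose $p(A_1), p(A_2) \in \B$ have the same image in $\sP(\infty)$. Choosing $d, D \in \bbN$ large enough that $A_1, A_2 \in \E(d, D)$, and using that the map $\sP(d) \to \sP(\infty)$ is a closed embedding, hence injective (as noted in the paragraph preceding the proposition), the equality in $\sP(\infty)$ lifts to $\omega_{A_1} = \omega_{A_2}$ in $\sP(d)$. The injectivity of $\B(d, D) \to \sP(d)$ from Proposition \ref{prop:EMPS(d,D)_to_P(d)} then forces $p(A_1) = p(A_2)$ in $\B(d, D)$, and hence in $\B$, since the inclusion $\B(d, D) \hookrightarrow \B$ is itself injective. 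The main obstacle I anticipate is the first paragraph: carefully verifying that the zero-padding convention on MPS tensors is compatible with pullback by $\iota_{d',d}$ at the level of the state formula, including the observation that the auxiliary positive operator $\tilde T$ is also padded with zeros consistently across the two physical dimensions. Once this bookkeeping is settled, the remaining assertions fall out cleanly from the universal properties of colimits and quotients combined with the fixed-$(d, D)$ result.
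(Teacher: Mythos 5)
Your proposal is correct and follows essentially the same route as the paper: verify the square commutes by tracing the state formula through zero-padding and conjugation by $\iota_{d',d}$, then invoke the universal properties of the colimit and quotient together with the injectivity from Proposition~\ref{prop:EMPS(d,D)_to_P(d)}. You are slightly more explicit than the paper about the injectivity step (the paper dismisses it as ``immediate''), and your worry about $\tilde T$ is unfounded in a good way—since $T$ is determined by $K$ alone and the extra $K^i$ are zero, $T$ is literally unchanged under the inclusion, which is exactly the bookkeeping the paper relies on.
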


\begin{proof}
Let $A \in \E(d, D)$ with $K$ and $T$ corresponding to $A$ as usual. Let $C = C_{-n} \otimes \cdots \otimes C_n \in \fA_{[-n,n] \cap \bbZ}(d')$. If we follow the upper path $\E(d, D) \subset \E(d', D') \rightarrow \sP(d')$, then the expectation value of $C$ in the resulting state is
\[
\underset{j_{-n},\ldots, j_n}{\sum_{i_{-n},\ldots, i_n}} C_{-n,i_{-n}j_{-n}}\cdots C_{n,i_nj_n}\tr\qty(K^{i_n*}\cdots K^{i_{-n}*}TK^{j_{-n}}\cdots K^{j_n}),
\]
where the indices $i_k$ and $j_k$ run from $1$ to $d'$. However, since $K^{i} = 0$ for $i > d$, we may equivalently sum over each index from $1$ to $d$. If we follow the lower path $\E(d, D) \rightarrow \sP(d) \rightarrow \sP(d')$, then the expectation value of $C$ in the resulting state is the same, but with $C_{k, i_kj_k}$ replaced by $(\iota_{d',d}^*C_k\iota_{d',d})_{i_kj_k}$ and with the indices summed from $1$ to $d$. Since $(\iota_{d',d}^*C_k\iota_{d',d})_{i_kj_k} = C_{k,i_kj_k}$, these expectation values are clearly the same.

The statements about $\cB$ are immediate from Proposition \ref{prop:EMPS(d,D)_to_P(d)}.
\end{proof}

\begin{cor}\label{cor:B_colimits_Hausdorff}
The spaces $\B(\chi)$, $\B(\lchi)$, and $\B$ are all Hausdorff.
\end{cor}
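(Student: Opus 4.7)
The plan is to deduce all three Hausdorff statements from the continuous injection $\B \to \sP(\infty)$ provided by Proposition~\ref{prop:EMPS_to_P(infty)}. The key input already in hand is that $\sP(\infty)$ is (paracompact) Hausdorff, as noted in the discussion preceding Proposition~\ref{prop:EMPS_to_P(infty)}, because it is a sequential colimit along closed embeddings of the metrizable spaces $\sP(d)$. This mirrors exactly how Corollary~\ref{cor:B_Hausdorff} was obtained at the finite stage, except one level up.

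First I would handle $\B$ itself. Given distinct points $b_1, b_2 \in \B$, their images in $\sP(\infty)$ are distinct by injectivity of $\B \to \sP(\infty)$, so they can be separated by disjoint open sets in $\sP(\infty)$; the preimages under the continuous map $\B \to \sP(\infty)$ then separate $b_1$ and $b_2$ in $\B$. Hence $\B$ is Hausdorff.

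Next, for $\B(\chi)$ and $\B(\lchi)$, I would invoke Remark~\ref{rem:B_space_topologies}, which identifies the quotient topologies on $\B(\chi)$ and $\B(\lchi)$ with the subspace topologies obtained from $\B$ via the closed embeddings $\B(\chi) \hookrightarrow \B$ and $\B(\lchi) \hookrightarrow \B$. Subspaces of Hausdorff spaces are Hausdorff, so this immediately yields that $\B(\chi)$ and $\B(\lchi)$ are Hausdorff.

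No step is truly an obstacle here — the heavy lifting was done in constructing the map $\B \to \sP(\infty)$ and verifying injectivity (Propositions \ref{prop:EMPS(d,D)_to_P(d)} and \ref{prop:EMPS_to_P(infty)}) and in matching subspace and quotient topologies (Remark~\ref{rem:B_space_topologies}). The only thing one might want to double-check is that the standard ``separation pulls back along a continuous injection'' argument does not require anything more than Hausdorffness of the target, which it does not.
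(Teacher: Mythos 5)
Your proof is correct and is essentially the same as the paper's: both deduce Hausdorffness of $\B$ from the continuous injection $\B \to \sP(\infty)$ of Proposition~\ref{prop:EMPS_to_P(infty)} together with Hausdorffness of $\sP(\infty)$, and then obtain Hausdorffness of $\B(\chi)$ and $\B(\lchi)$ by viewing them as subspaces of $\B$ via Remark~\ref{rem:B_space_topologies}.
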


\begin{proof}
By Proposition \ref{prop:EMPS_to_P(infty)}, $\B$ has a continuous injection into $\sP(\infty)$, which is Hausdorff. Therefore $\B$ is Hausdorff. The spaces $\B(\chi)$ and $\B(\lchi)$ are Hausdorff since they are subspaces of $\B$, by Remark \ref{rem:B_space_topologies}.
\end{proof}

\subsection{An Example of a Quantum Phase Transition}\label{sec:not_embedding}

We show now that the map $\B(d, D) \rightarrow \sP(d)$ in Proposition \ref{prop:EMPS(d,D)_to_P(d)} is in general not a topological embedding when $\sP(d)$ is given the weak* topology. Let us call this map $\xi:\B(d, D) \rightarrow \xi(\B(d, D))$. To show that $\xi$ is not a homeomorphism, we will exhibit a weak*-continuous path in $\xi(\cB(d, D))$ such that $\xi^{-1}$ composed with this path is not continuous in $\cB(d, D)$. 

The path we choose is an example of a quantum phase transition as studied in \cite{QPTinMPS} and is inspired by the examples in that paper. We set $d = 4$ and $D = 2$. We first define a path in $M_2(\bbC)^4$. Given $g \in [0,1]$, we define
\begin{alignat*}{2}
K^1(g) &= \mqty(\sqrt{1 - g^2} & 0 \\0 & \sqrt{1 - g^2}) &\qquad K^2(g) &= \sqrt{\frac{2}{3}}\mqty(0 & g \\0 & 0)\\
K^3(g) &= \sqrt{\frac{1}{3}}\mqty(-g&0\\0&g) &\qquad K^4(g) &= -\sqrt{\frac{2}{3}}\mqty(0&0\\g&0).
\end{alignat*}
At $g = 0$, this is an MPS representation of a product state of the first physical basis vector, while at $g = 1$, this is an MPS representation of the AKLT state on the second, third, and fourth physical basis vectors \cite{SchollwockDMRGMPS}.
We see that $K(g) \in \E(4, 2)$ for all $g \in (0, 1)$. Furthermore, the map $\bbE_\1(B) = \sum_i K^{i*}BK^i$ is self-adjoint with spectrum $\qty{1, 1-(4/3)g^2}$, where $1 - (4/3)g^2$ is triply degenerate. The unique positive, invertible, trace one eigenvector of $\bbE_\1$ is given by $T = \frac{1}{2}\1$. Setting $C = C_1 \otimes \cdots \otimes C_n \in \fA(d)$, the equation
\[
\omega_g(C) = \frac{1}{2}\underset{j_1,\ldots, j_4}{\sum_{i_1, \ldots, i_4}} C_{1,i_1j_1} \cdots C_{n,i_nj_n}\tr(K^{i_n*}(g)\cdots K^{i_1*}(g)K^{j_1}(g)\cdots K^{j_n}(g))
\]
therefore gives a weak*-continuous function $(0,1) \rightarrow \xi(\cB(4, 2))$. At $g = 0$, we define 
\[
\omega_0(C) = C_{1,11}\cdots C_{n,11}.
\] 
This is the pure product state represented by the first standard basis vector on each site. It is clear from these equations that $\omega_g \rightarrow \omega_0$ in the weak* topology as $g \rightarrow 0$. Moreover, $\omega_0$ is represented by the element $\tilde K \in \E(4, 2)$ defined by
\[
\tilde K^1 = \mqty(1 & 0 \\0 & 0), \quad \tilde K^2 = \tilde K^3 = \tilde K^4 = \mqty(0 & 0 \\0 & 0).
\]
Thus, $\omega_0 \in \xi(\B(4, 2))$, so $g \mapsto \omega_g$ defines a path $\omega:[0,1) \rightarrow \xi(\B(4,2))$.

To show that  $\xi^{-1} \circ \omega$ is not continuous, we define a continuous map $F:\B \rightarrow \bbR$ such that $F|_{\B(4,2)} \circ \xi^{-1} \circ \omega$ is not continuous. First, define $f:\E \rightarrow \bbR$ by
\[
f(A) = \abs{\sum_{i=1}^\infty \tr(A^i)}.
\]
Then $f$ is manifestly continuous and it is easy to check from Definition \ref{def:gauge_transform} that $f$ is invariant under gauge transformations. Thus, $f$ factors through a continuous map $F:\B \rightarrow \bbR$. 

For $g \in (0,1)$, we have
\[
F(\xi^{-1}(\omega_g)) = F(p(K)) = f(K) = 2\sqrt{1 - g^2} \xrightarrow{g \rightarrow 0} 2.
\]
However, at $g = 0$, we have
\[
F(\xi^{-1}(\omega_0)) = F(p(\tilde K)) = f(\tilde K) = 1.
\]
Thus, $g \mapsto F(\xi^{-1}(\omega_g))$ is not continuous at $g = 0$. This proves that $\xi^{-1}$ is not continuous, so $\B(4,2) \rightarrow \sP(d)$ is not a topological embedding.

\section{Contractibility of \texorpdfstring{$\E$}{E}}\label{sec:contractibility}

Our main objective is to show that the projection $p:\E \rightarrow \B$ is a quasifibration, which will be accomplished in Section \ref{sec:quasifibration}. Once this is done, the long exact sequence for $p$ will provide a tool for computing the homotopy groups of $\B$. This computation will be made rather simple by the fact that $\E$ is contractible, as is shown in this section. This is the one-dimensional analog of the fact for zero-dimensional quantum systems that the limit of the spaces of gauge data $\bbS^\infty = \colim_d \bbS^d$ is contractible. We begin with some propositions and lemmas and prove contractibility in Theorem \ref{thm:E_contractible}.

\begin{proposition}\label{prop:isometry_path}
Let $\phi:\bbN \rightarrow \bbN$ be a strictly increasing function. There exist continuous maps $\Gamma_{ab}:[0,1] \rightarrow \bbR$
defined for all $a,b \in \bbN$ with the following properties:
\begin{enumerate}[{\rm (i)}]
\item\label{ite:Gamma(0)} $\Gamma_{ab}(0) = \delta_{ab}$,
\item\label{ite:Gamma(1)} $\Gamma_{ab}(1) = \delta_{a\phi(b)}$,
\item\label{ite:Gamma_fin_per_col} $\Gamma_{ab}(t) = 0$ for all $t \in [0,1]$ whenever $a > \phi(b)$,
\item\label{ite:Gamma_isometry} $\Gamma(t)^*\Gamma(t) = \1$ for all $t \in [0,1]$, where $\Gamma(t)$ is the $\bbN \times \bbN$ matrix whose entries are $\Gamma_{ab}(t)$.
\end{enumerate}
Here, $\delta_{ab}$ is the Kronecker delta function.
\end{proposition}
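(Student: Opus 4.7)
My plan is to reformulate the four conditions in terms of the column vectors $v_b(t) \defeq (\Gamma_{ab}(t))_{a \in \bbN}$: each $v_b(t)$ must be a unit vector supported in $\bbR^{\phi(b)}$ (meaning $(v_b(t))_a = 0$ for $a > \phi(b)$), must satisfy $v_b(0) = e_b$ and $v_b(1) = e_{\phi(b)}$, and the family $\{v_b(t)\}_{b \in \bbN}$ must be orthonormal at each $t \in [0,1]$. I will construct the $v_b$ by induction on $b$ and set $\Gamma_{ab}(t) \defeq (v_b(t))_a$.

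Because $\phi$ is strictly increasing, $\phi(b) \geq b$ for all $b$, and moreover $\phi(b) = b$ forces $\phi(c) = c$ for every $c \leq b$. I maintain the additional inductive invariant that $v_c(t) \equiv e_c$ whenever $\phi(c) = c$. For the inductive step there are two cases. If $\phi(b) = b$, then $\phi(c) = c$ for all $c \leq b$, so by the invariant $v_c(t) \equiv e_c$ for $c < b$, and the choice $v_b(t) \equiv e_b = e_{\phi(b)}$ satisfies every requirement and preserves the invariant.

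Suppose instead $\phi(b) > b$. Then $W(t) \defeq \bbR^{\phi(b)} \cap \{v_1(t), \ldots, v_{b-1}(t)\}^\perp$ is a continuously varying subspace of $\bbR^{\phi(b)}$ of constant dimension $\phi(b) - (b-1) \geq 2$, and by the inductive hypothesis $e_b \in W(0)$ and $e_{\phi(b)} \in W(1)$ are unit vectors. I will consider the real Stiefel fiber bundle $V_b(\bbR^{\phi(b)}) \to V_{b-1}(\bbR^{\phi(b)})$ that forgets the last frame vector; its typical fiber is $S^{\phi(b)-b}$. Pulling back along the continuous map $t \mapsto (v_1(t), \ldots, v_{b-1}(t))$ yields a fiber bundle $E \to [0,1]$ whose fiber over $t$ is the unit sphere in $W(t)$. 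This pullback bundle is trivial because its base is contractible, and since $\phi(b) - b \geq 1$ the fiber $S^{\phi(b)-b}$ is path-connected. Hence a continuous section of $E \to [0,1]$ with value $e_b$ at $t = 0$ and value $e_{\phi(b)}$ at $t = 1$ exists, and this section furnishes the required path $v_b$.

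The main obstacle is constructing $v_b$ in the case $\phi(b) > b$; the case split is necessary because when $\phi(b) = b$ the analogous fiber would be $S^0$ and the lifting argument fails, but then the constant choice $v_b \equiv e_b$ works by the invariant. Once all $v_b$ are built, properties (i)--(iii) are immediate, while (iv) holds because $v_b(t)$ is orthogonal in $\bbR^{\phi(b)}$ to all previously constructed $v_{b'}(t)$ with $b' < b$, and each such $v_{b'}(t)$ lies in $\bbR^{\phi(b')} \subseteq \bbR^{\phi(b)}$ since $\phi$ is increasing.
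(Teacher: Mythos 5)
Your proposal is correct, and it takes a genuinely different route from the paper. The paper constructs the entries $\Gamma_{ab}(t)$ by an explicit formula: it partitions $\bbN$ into $\phi$-orbits, writes each $b$ with $\phi(b)\neq b$ as $b=\phi^k(l)$ with $l\notin\phi(\bbN)$, and defines $\Gamma_{ab}(t)$ by closed-form expressions in $\sin(\pi t/2)$ and $\cos(\pi t/2)$ that rotate the basis along each orbit; the isometry identity $\Gamma(t)^*\Gamma(t)=\1$ is then checked by a direct (somewhat intricate) computation. Your approach instead constructs the columns $v_b(t)$ inductively, reducing the inductive step to picking a continuous unit-vector section of the pullback of the Stiefel sphere-bundle $V_b(\bbR^{\phi(b)})\to V_{b-1}(\bbR^{\phi(b)})$ over $[0,1]$, which exists because the base is contractible and the fiber $S^{\phi(b)-b}$ is connected when $\phi(b)>b$. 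The two relevant trade-offs: the paper's argument is fully explicit (useful because these $\Gamma(t)$'s are then plugged into concrete homotopies such as the one in Theorem \ref{thm:E_contractible}, and an explicit formula can be reasoned about quantitatively), at the cost of verifying the orthogonality identities by hand; your argument is shorter and conceptually cleaner, requires no algebraic verification, and isolates the single geometric fact (path-connectedness of $S^{\phi(b)-b}$ when $\phi(b)>b$) that makes the construction possible, but it is purely existential and relies on fiber-bundle triviality over a contractible paracompact base. Two small points worth spelling out in a final write-up: you should explicitly note that $v_c(t)\in\bbR^{\phi(c)}\subset\bbR^{\phi(b)}$ for $c<b$ (via strict monotonicity of $\phi$) so that $(v_1(t),\dots,v_{b-1}(t))$ really is a frame in $\bbR^{\phi(b)}$ — you state this at the end for (iv), but it is also what makes $W(t)$ well defined in the inductive step — and that the pullback bundle is trivial by the covering homotopy theorem for fiber bundles over paracompact bases.
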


\begin{proof}
For ease of notation, given $t \in [0,1]$ let us write
\[
s \defeq \sin\qty(\frac{\pi t}{2}) \qqtext{and} c \defeq \cos\qty(\frac{\pi t}{2})
\]
and let it be understood that $s$ and $c$ are functions of $t$.

Given $b \in \bbN$ such that $\phi(b) \neq b$, there exists a unique $k \in \qty{0} \cup \bbN$ and $l \in \bbN$ such that $b = \phi^k(l)$ and $l \notin \phi(\bbN)$, where $\phi^0 = \id_\bbN$. 
For any $a,b \in \bbN$ such that $\phi(b) \neq b$ and any $t \in [0,1]$, let $b = \phi^k(l)$ with $k$ and $l$ as described, then define
\[
\Gamma_{ab}(t) = \begin{cases} (-s)^k c  &\tn{if } a = l \\ (-s)^{k-j}c^2 &\tn{if } a = \phi^{j}(l) \tn{ for some } 0 < j \leq k \\ s &\tn{if } a = \phi^{k+1}(l) \\ 0 &\tn{otherwise} \end{cases}.
\]
Note that $0^0 = 1$ by definition, as can occur above if $t = 0$. If $\phi(b) = b$, then we define $\Gamma_{ab}(t) = 1$ if $a = b$ and $\Gamma_{ab}(t) = 0$ if $a \neq b$.

We claim that these $\Gamma_{ab}$ have all the desired properties. It is clear that each $\Gamma_{ab}$ is continuous and easily verified that \eqref{ite:Gamma(0)}, \eqref{ite:Gamma(1)}, and \eqref{ite:Gamma_fin_per_col} hold.

All that remains is to prove that $\Gamma(t)$ is an isometry. We compute:
\begin{align*}
[\Gamma(t)^*\Gamma(t)]_{b'b} &= \sum_{a = 1}^{\min(\phi(b), \phi(b'))} \Gamma_{ab'}(t)\Gamma_{ab}(t).
\end{align*}
We want to show that this is a Kronecker delta $\delta_{b'b}$. The above is clearly symmetric in $b$ and $b'$ so suppose $b \leq b'$. 

Suppose $\phi(b) = b$. Then the above reduces to $[\Gamma(t)^*\Gamma(t)]_{b'b} = \Gamma_{bb'}(t)$. If $\phi(b') = b'$, then this is $\delta_{bb'}$ by definition. If $\phi(b') \neq b'$, then we know $b \neq b'$ and we may write $b' = \phi^{k'}(l')$ as before. If $b = \phi^j(l')$ for some $j \in \qty{0,\ldots, k'+1}$, then we know $\phi^j(l') = b = \phi(b) = \phi^{j+1}(l')$. Injectivity of $\phi$ implies $l' = \phi(l')$, which contradicts the definition of $l'$. Therefore $b \neq \phi^j(l')$ for any $j \in \qty{0,\ldots, k'+1}$, so $\Gamma_{bb'}(t) = 0 = \delta_{bb'}$.

Now suppose $\phi(b) > b$ and note that this implies $\phi(b') > b'$. Write $b = \phi^k(l)$ as before. Then:
\begin{align*}
[\Gamma(t)^*\Gamma(t)]_{b'b} &= \sum_{j=0}^{k+1} \Gamma_{\phi^j(l)b'}(t)\Gamma_{\phi^j(l)b}(t) 
\end{align*}
Write $b' = \phi^{k'}(l')$ as usual. Observe that in order for $\Gamma_{\phi^j(l)b'}(t)$ to be nonzero, we must have $\phi^j(l) = \phi^{j'}(l')$ for some $j' \leq k'+1$. Injectivity of $\phi$ and the definition of $l$ and $l'$ then implies that $j = j'$ and $l = l'$. In particular, $b' = \phi^{k'}(l)$ and we have $k' \geq k$ since we have assumed $b \leq b'$. Thus,
\begin{align*}
[\Gamma(t)^*\Gamma(t)]_{b'b} &= \Gamma_{lb'}(t)\Gamma_{lb}(t) + \sum_{j=1}^{k}\Gamma_{\phi^j(l)b'}(t)\Gamma_{\phi^j(l)b}(t) + \Gamma_{\phi^{k+1}(l)b'}(t)\Gamma_{\phi^{k+1}(l)b}(t)\\
&= (-s)^{k'+k}c^2 + \sum_{j=1}^k (-s)^{k'+k-2j}c^4 + \Gamma_{\phi^{k+1}(l)b'}(t)s\\
&= (-1)^{k'+k}s^{k'-k}c^2\qty(s^{2k} + \sum_{j=1}^k s^{2k-2j}c^2) + \Gamma_{\phi^{k+1}(l)b'}(t)s\\
&= (-1)^{k'+k}s^{k'-k}c^2 + \Gamma_{\phi^{k+1}(l)b'}(t)s,
\end{align*}
where we have used the identity $s^{2k} + \sum_{j=1}^k s^{2k-2j}c^2 = 1$, which can be verified by induction.
If $b = b'$, then $k' = k$ and we get
\[
[\Gamma(t)^*\Gamma(t)]_{b'b} = c^2 + s^2 = 1.
\]
If $b \neq b'$, then we get
\[
[\Gamma(t)^*\Gamma(t)]_{b'b} = (-1)^{k'+k}s^{k'-k}c^2 + (-s)^{k'-k-1}c^2s = 0,
\]
since the two terms differ by a factor of $(-1)^{2k+1} = -1$. This concludes the proof.
\end{proof}

\begin{ex}
When $\phi : \N \rightarrow \N$ is the function $\phi(n)= n+1$, then the formula above produces a matrix $\Gamma(t)$ whose transpose $\Gamma(t)^*$ is given by 
\[
\begin{bmatrix}
 \cos (\frac{t\pi}{2}) & - \sin (\frac{t\pi}{2}) \cos (\frac{t\pi}{2}) & \sin ^2(\frac{t\pi}{2}) \cos (\frac{t\pi}{2}) &  -\sin ^3(\frac{t\pi}{2}) \cos (\frac{t\pi}{2}) &\cdots \\
 \sin (\frac{t\pi}{2}) & \cos ^2(\frac{t\pi}{2}) & -\sin (\frac{t\pi}{2}) \cos ^2(\theta)& \sin^2(\frac{t\pi}{2}) \cos ^2(\frac{t\pi}{2}) &\cdots \\
 0 & \sin (\frac{t\pi}{2}) & \cos ^2(\frac{t\pi}{2}) & -\sin (\frac{t\pi}{2}) \cos ^2(\frac{t\pi}{2}) &\cdots \\
 0 & 0 & \sin (\frac{t\pi}{2}) & \cos ^2(\frac{t\pi}{2})  &\cdots \\
 0 & 0 & 0 & \sin (\frac{t\pi}{2}) &\cdots \\
  0 & 0 & 0 & 0 &\cdots \\
\end{bmatrix}.
\]
\end{ex}

\begin{proposition}\label{prop:image_iso_transform}
Let $A \in \E(d, D)$ for some $d, D \in \bbN$.  Suppose $\Gamma$ is an $\bbN \times \bbN$ matrix and there exists a strictly increasing function $\phi:\bbN \rightarrow \bbN$ such that $\Gamma_{ij} = 0$ whenever $i > \phi(j)$. If $\Gamma^*\Gamma = \1$, then $B^i \defeq \sum_{j=1}^d \Gamma_{ij}A^j$ defines an element of $\E(\phi(d),D)$. 
\end{proposition}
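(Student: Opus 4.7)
The plan is to use the canonical block form of $A$ from Definition~\ref{def:EGL} and push the sum inside, showing that $\Gamma^*\Gamma = \1$ together with the isometric support condition preserves both the right-normalization and injectivity of the MPS tensor piece.

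First I would fix data realizing $A \in \E(d,D,\chi)$ for some $\chi \leq D$:
\[
A^j = X\begin{pmatrix} K^j & 0 \\ M^j & 0 \end{pmatrix} X^{-1}, \quad j = 1,\ldots, d,
\]
with $X \in \Unitary(D)$, right-normalized injective $K \in \cI(d,\chi)$, and arbitrary $M^j$. Since $X$ is independent of $j$, linearity gives
\[
B^i = \sum_{j=1}^d \Gamma_{ij} A^j = X \begin{pmatrix} \tilde K^i & 0 \\ \tilde M^i & 0 \end{pmatrix} X^{-1}, \quad \tilde K^i \defeq \sum_{j=1}^d \Gamma_{ij} K^j, \quad \tilde M^i \defeq \sum_{j=1}^d \Gamma_{ij} M^j.
\]
For $i > \phi(d)$ and $j \leq d$, monotonicity gives $i > \phi(d) \geq \phi(j)$, hence $\Gamma_{ij}=0$, so $B^i = 0$. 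In particular $B \in \cM(\phi(d), D)$, and only the top-left $\chi$-columns survive in the block form.

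Next I would verify that $\tilde K = (\tilde K^i)_{i=1}^{\phi(d)}$ is again a right-normalized injective MPS tensor of bond dimension $\chi$. The right-normalization follows from swapping the order of summation and using $\Gamma^*\Gamma = \1$:
\[
\sum_{i} \tilde K^i \tilde K^{i*} = \sum_{j,k} \left(\sum_i \Gamma_{ij}\overline{\Gamma_{ik}}\right) K^j K^{k*} = \sum_{j,k} (\Gamma^*\Gamma)_{kj} K^j K^{k*} = \sum_{j=1}^d K^j K^{j*} = \1_{\chi \times \chi},
\]
where the inner sum over $i$ is finite by the support condition on $\Gamma$. For injectivity, the same isometry relation lets us invert the transformation: for each $j \in \{1,\ldots,d\}$,
\[
\sum_{i} \overline{\Gamma_{ij}}\, \tilde K^i = \sum_{k} (\Gamma^*\Gamma)_{jk}\, K^k = K^j,
\]
so $\vecspan\{\tilde K^i\} \supseteq \vecspan\{K^1,\ldots,K^d\} = M_\chi(\bbC)$, giving the reverse containment trivially, hence equality.

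Finally I would assemble the conclusion: with $Y \defeq X$ playing the role of the unitary, $\tilde K$ as the right-normalized injective MPS piece of bond dimension $\chi$, and $\tilde M$ as the arbitrary matrix piece (it plainly has finitely many nonzero entries, being a finite linear combination of finitely supported $M^j$'s), we have exhibited $B$ in the form required by Definition~\ref{def:EGL} with physical index running up to $\phi(d)$. Therefore $B \in \E(\phi(d), D, \chi) \subseteq \E(\phi(d), D)$. No step looks genuinely hard; the only real subtlety is being careful that the sum $\sum_i \Gamma_{ij}\overline{\Gamma_{ik}}$ converges and equals $\delta_{jk}$, but this is immediate since for fixed $j,k$ the support condition makes it a finite sum, and the assumption $\Gamma^*\Gamma = \1$ then supplies the value.
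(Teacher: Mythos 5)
Your proposal is correct and follows essentially the same route as the paper's own proof: conjugate the block form by $X$, use $\Gamma^*\Gamma = \1$ to verify right-normalization, and use the inverse relation $\sum_i \overline{\Gamma_{ij}}\,\tilde K^i = K^j$ (a restatement of the paper's $\Gamma^*$ step) to recover injectivity.
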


\begin{proof}
Let $X$, $K$, and $M$ correspond to $A$ as in Definition \ref{def:EGL}. Observe that
\[
B^i = X\mqty(\sum_{j=1}^d \Gamma_{ij}K^j & 0\\ \sum_{j=1}^d \Gamma_{ij} M^j&0)X^{-1}.
\]
If $i > \phi(d)$, then for every $j \in \qty{1,\ldots, d}$ we have $i > \phi(d) \geq \phi(j)$, hence $\sum_{j=1}^d \Gamma_{ij}K^j = 0$ and $\sum_{j=1}^d \Gamma_{ij}M^j = 0$. Thus  $B^i = 0$ for $i > \phi(d)$. Therefore we see that $B \in \cM(\phi(d), D)$. 

It remains to show that $\sum_{j=1}^d \Gamma_{ij}K^j$ is a right-normalized MPS tensor. Observe that $\vecspan_i \sum_{j=1}^d \Gamma_{ij}K^j \subset \vecspan_i K^i = M_\chi(\bbC)$, where $\chi$ is the bond dimension of $K$. Moreover, for $i \in \qty{1,\ldots, d}$ we have
\[
\sum_{k=1}^{\phi(d)}(\Gamma^*)_{ik}\sum_{j=1}^d \Gamma_{kj}K^j  
= K^i,
\] 
implying $M_\chi(\bbC) = \vecspan_i K^i \subset \vecspan_i \sum_{j=1}^d \Gamma_{ij}K^j$, so that $\sum_{j=1}^d \Gamma_{ij}K^j$ is an injective MPS tensor. Furthermore, the fact that $\Gamma^*\Gamma = \1$ implies
\begin{align*}
\sum_{i=1}^{\phi(d)}\qty(\sum_{j=1}^d \Gamma_{ij}K^j )\qty(\sum_{k=1}^d \Gamma_{ik}K^{k})^* &= \sum_{j,k = 1}^d (\Gamma^*\Gamma)_{kj} K^j K^{k*}  \\
&= \sum_{j=1}^d K^j K^{j*} = \1,
\end{align*}
so $\sum_{j=1}^d \Gamma_{ij} K^j$ is properly normalized.
\end{proof}

For the next proposition, recall our notation that given $A \in \cM$ and an infinite matrix $\Delta$ with finitely many nonzero entries per column, $\Delta A \Delta^*$ refers to the element of $\cM$ obtained by multiplying each matrix $A^i$ by $\Delta$ on the left and $\Delta^*$ on the right. 

\begin{proposition}\label{prop:domain_iso_transform}
Let  $A \in \E(d, D)$ for some $d, D \in \bbN$. Suppose $\Delta$ is an $\bbN \times \bbN$ matrix and there exists a strictly increasing function $\phi:\bbN \rightarrow \bbN$ such that $\Delta_{\alpha \beta} = 0$ whenever $\alpha > \phi(\beta)$. If $\Delta^*\Delta = \1$, then $\Delta A\Delta^* \in \E(d, \phi(D))$. 
\end{proposition}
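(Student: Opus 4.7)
The plan is to exhibit explicit data $(Y, K, \tilde{M})$ showing that $\Delta A \Delta^*$ satisfies the defining form \eqref{eq:EUMPS_def} of $\E(d, \phi(D), \chi)$, where $\chi$ is the essential rank of $A$. The first observation is that since each $A^i$ has support only in the first $D \times D$ block, the product $\Delta A^i \Delta^*$ depends only on the restriction $\Delta|_{\bbC^D}$. The condition $\Delta_{\alpha\beta} = 0$ for $\alpha > \phi(\beta)$ together with $\phi$ increasing implies that for $\beta \leq D$, $\Delta_{\alpha\beta} = 0$ when $\alpha > \phi(D)$. Hence $\Delta$ restricts to an isometry $V\colon \bbC^D \to \bbC^{\phi(D)}$.

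Next I would extend $V$ to a full unitary. Because $V$ is an isometry with $\phi(D)-D$ dimensions of cokernel, we may choose an orthonormal basis of the orthogonal complement of $\mathrm{im}(V)$ in $\bbC^{\phi(D)}$ and thereby extend $V$ to a unitary $Y \in \Unitary(\phi(D))$ satisfying $Y\iota = V$, where $\iota\colon \bbC^D \hookrightarrow \bbC^{\phi(D)}$ is the standard inclusion. Viewing $Y \in \Unitary(\infty)$ via the usual embedding, we then have the identity
\[
\Delta A^i \Delta^* = V A^i V^* = Y\, (\iota A^i \iota^*)\, Y^*,
\]
and $\iota A^i \iota^*$ is simply the $\phi(D) \times \phi(D)$ matrix obtained by padding $A^i$ with zeros.

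Now I substitute the given decomposition $A^i = X\begin{pmatrix} K^i & 0 \\ M^i & 0 \end{pmatrix} X^{-1}$ with $X \in \Unitary(D)$ and $K \in \I(d,\chi)$. Block-decomposing $\bbC^{\phi(D)} = \bbC^\chi \oplus \bbC^{D-\chi} \oplus \bbC^{\phi(D)-D}$, one sees that
\[
\iota A^i \iota^* \;=\; \tilde X \begin{pmatrix} K^i & 0 & 0 \\ M^i & 0 & 0 \\ 0 & 0 & 0 \end{pmatrix} \tilde X^{-1}, \qquad \tilde X \defeq \begin{pmatrix} X & 0 \\ 0 & \1_{\phi(D)-D} \end{pmatrix} \in \Unitary(\phi(D)).
\]
Setting $\tilde M^i \defeq \begin{pmatrix} M^i \\ 0 \end{pmatrix}$, which is a $(\phi(D)-\chi) \times \chi$ matrix, and $Z \defeq Y \tilde X \in \Unitary(\phi(D))$, we obtain $\Delta A^i \Delta^* = Z\begin{pmatrix} K^i & 0 \\ \tilde M^i & 0 \end{pmatrix} Z^{-1}$, which is precisely the form required by Definition~\ref{def:EGL}. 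Finally, since $A^i = 0$ for $i > d$ implies $\Delta A^i \Delta^* = 0$ for $i > d$, we conclude $\Delta A \Delta^* \in \E(d, \phi(D), \chi) \subset \E(d, \phi(D))$.

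The main step is just the linear-algebraic bookkeeping: recognizing that $\Delta$ acts through an isometry on a finite-dimensional subspace, which may be completed to a unitary, after which the result is immediate. There is no real obstacle; one only needs to be careful that $Y$ is constructed to agree with $\Delta$ on the relevant subspace and that the essential rank $\chi$ is preserved (which it is, since the same $K$ appears in the new decomposition).
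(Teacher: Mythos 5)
Your proof is correct and takes essentially the same approach as the paper's: you observe that $\Delta$ restricted to $\bbC^D$ is an isometry into $\bbC^{\phi(D)}$ (the paper's $\Delta_D$), complete it to a unitary $Y \in \Unitary(\phi(D))$ (the paper's $(\Delta_D\;\tilde\Delta)$), and then read off the required block form. The only difference is cosmetic: the paper presents the decomposition in a single block-matrix identity before completing to a unitary, whereas you first reduce to $Y\iota A^i\iota^* Y^*$ and then substitute the decomposition of $A^i$.
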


\begin{proof}
Let $X$, $K$, and $M$ correspond to $A$ as in Definition \ref{def:EGL}. Let $\Delta_D$ be the $\phi(D) \times D$ upper left block of $\Delta$. Then for any $\phi(D) \times (\phi(D) - D)$ matrix $\tilde \Delta$ we can write
\[
\Delta A^i \Delta^* = \mqty(\Delta_D & \tilde \Delta)\mqty(X & 0\\0& \1)\mqty(\mqty(K^i& 0 \\M^i&0 ) &  0 \\0 & 0 )\mqty(X^{-1} & 0\\0&\1) \mqty(\Delta_D^*\\\tilde \Delta^*).
\]
Note that all block matrices above are of total dimension $\phi(D) \times \phi(D)$ . Since $\Delta^*\Delta = \1$ and $\Delta_{\alpha \beta} = 0$ whenever $\alpha > \phi(\beta)$, it follows that $\Delta_D^*\Delta_D^{\phantom{*}} = \1_{D \times D}$. Thus, $\tilde \Delta$ can be chosen so as to make $\mqty(\Delta_D & \tilde \Delta)$ unitary, from which it is apparent that $\Delta A\Delta^* \in \E(d, \phi(D))$.
\end{proof}

\begin{lemma}\label{lem:EH_contract_first_homotopy}
Let $\phi_1, \phi_2:\bbN \rightarrow \bbN$ be strictly increasing functions and let $\Gamma(t)$ and $\Delta(t)$ be the $\bbN \times \bbN$ matrices associated to $\phi_1$ and $\phi_2$ as in Proposition \ref{prop:isometry_path}.  The map
\begin{equation}\label{eq:zeroth_homotopy}
E:\E \rightarrow \E, \quad E(A)^i = \sum_j \Gamma_{ij}(1) \Delta(1) A^j \Delta(1)^*
\end{equation}
is well-defined, continuous, and homotopic to the identity map.
\end{lemma}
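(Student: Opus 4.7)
The plan is to prove all three claims at once by writing down an explicit homotopy. I would set
\[
H\colon [0,1] \times \E \to \E, \qquad H(t, A)^i = \sum_j \Gamma_{ij}(t)\, \Delta(t) A^j \Delta(t)^*,
\]
which at $t = 1$ recovers $E$, and which by Proposition \ref{prop:isometry_path}(i) reduces at $t = 0$ to the identity, since then $\Gamma_{ij}(0) = \delta_{ij}$ and $\Delta(0) = \1$.

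First I would check that $H(t, A) \in \E$ for every $t \in [0,1]$ and every $A \in \E$. If $A \in \E(d, D)$, then properties (iii)--(iv) of Proposition \ref{prop:isometry_path} show that $\Delta(t)$ meets the hypotheses of Proposition \ref{prop:domain_iso_transform}, so $\Delta(t) A \Delta(t)^* \in \E(d, \phi_2(D))$. Likewise $\Gamma(t)$ meets the hypotheses of Proposition \ref{prop:image_iso_transform}, and applying that proposition to $\Delta(t) A \Delta(t)^*$ yields $H(t, A) \in \E(\phi_1(d), \phi_2(D)) \subseteq \E$. Specializing to $t = 1$ gives in particular that $E$ is well-defined.

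Next I would verify continuity. Restricted to $[0,1] \times \E(d, D)$, the map $H$ factors through the finite-dimensional vector space $\cM(\phi_1(d), \phi_2(D))$; it is polynomial in $t$ via the continuous entries of $\Gamma(t)$ and $\Delta(t)$, and linear in $A$, hence continuous. Since $[0,1]$ is locally compact Hausdorff, the functor $[0,1] \times (-)$ preserves colimits in $\Top$, so $[0,1] \times \E$ agrees with the colimit of the subspaces $[0,1] \times \E(d, D)$, and the universal property of the colimit forces $H$ to be continuous. Evaluating at $t = 1$ yields continuity of $E$ itself.

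The only mildly technical point, and the one I would be most careful with, is the product-versus-colimit exchange used for continuity of $H$; this is the step it would be easiest to gloss over, but it is dispatched cleanly by the local compactness of $[0,1]$. Everything else is bookkeeping on top of Propositions \ref{prop:isometry_path}, \ref{prop:image_iso_transform}, and \ref{prop:domain_iso_transform}.
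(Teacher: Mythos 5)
Your proof is correct and follows essentially the same route as the paper's: define the homotopy $\tilde E(A,t)^i = \sum_j \Gamma_{ij}(t)\Delta(t)A^j\Delta(t)^*$, invoke Propositions \ref{prop:image_iso_transform} and \ref{prop:domain_iso_transform} for well-definedness, and check continuity on each $\E(d,D) \times [0,1]$ using that the product topology on $\E \times [0,1]$ coincides with the colimit topology over the $\E(d,D) \times [0,1]$ (which the paper asserts and you justify by local compactness of $[0,1]$, the standard reason). One cosmetic slip: the entries of $\Gamma(t)$ and $\Delta(t)$ are trigonometric rather than polynomial in $t$, but since you only need continuity this does not affect the argument.
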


\begin{proof}
The map
\begin{equation}\label{eq:EH_contract_first_homotopy}
\tilde E: \E \times [0,1] \rightarrow \E, \quad \tilde E(A,t)^i = \sum_j \Gamma_{ij}(t)\Delta(t)A^j\Delta(t)^*
\end{equation}
is well-defined by Proposition \ref{prop:image_iso_transform} and Proposition \ref{prop:domain_iso_transform}. At $t = 0$ this map restricts to the identity on $\E$ and at $t = 1$ it restricts to $B$.

To prove continuity, first note that the product topology on $\E \times [0,1]$ coincides with the final topology induced by the inclusions $\E(d,D) \times [0,1] \rightarrow \E \times [0,1]$. Therefore it suffices to prove continuity of the restriction of \eqref{eq:EH_contract_first_homotopy} to $\E(d,D) \times [0,1]$. The range of the restriction lies in $\E(\phi_1(d), \phi_2(D))$ by Proposition \ref{prop:image_iso_transform} and Proposition \ref{prop:domain_iso_transform}, and 
\[
\E(d,D) \times [0,1] \rightarrow \E(\phi_1(d), \phi_2(D)), \quad (A,t) \mapsto \sum_{j=1}^d \Gamma_{ij}(t)\Delta(t)A^j\Delta(t)^*
\]
is continuous since each matrix entry is a continuous function of $A$ and $t$. Composing with the inclusion $\E(\phi_1(d), \phi_2(D)) \rightarrow \E$ proves that the restriction of \eqref{eq:EH_contract_first_homotopy} to $\E(d, D) \times [0,1]$ is continuous, as desired.
\end{proof}

\begin{theorem}\label{thm:E_contractible}
The space $\E$ is contractible.
\end{theorem}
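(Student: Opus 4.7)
The plan is to construct an explicit contracting homotopy $H \colon \E \times [0,1] \to \E$ in two stages, mimicking the classical argument that $S^\infty$ is contractible, whose zero-dimensional analogue is described in the introduction. Fix as basepoint the trivial tensor $A_0 \in \E(1,1,1)$ with $A_0^1$ equal to the infinite matrix supported only at entry $(1,1)$ with value $1$ and $A_0^i = 0$ otherwise. The contracting homotopy will be the concatenation of a ``shift'' homotopy and a ``collapse'' homotopy.

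For the first stage, I apply Lemma~\ref{lem:EH_contract_first_homotopy} with $\phi_1(n) = \phi_2(n) = n+1$ to obtain a continuous homotopy $\tilde E \colon \E \times [0,1] \to \E$ from $\id_\E$ to the shift map $E$. By Propositions~\ref{prop:image_iso_transform} and~\ref{prop:domain_iso_transform}, $E(A)^1 = 0$ and each $E(A)^i$ has vanishing first row and first column, so the first physical index and the first bond row/column are simultaneously freed.

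For the second stage, I would construct a continuous homotopy $H' \colon \E \times [0,1] \to \E$ with $H'(A, 0) = E(A)$ and $H'(A, 1) = A_0$. The idea is to use the freed first slot to continuously plant the trivial tensor $A_0$ while attenuating the $E(A)$ part. Concretely, for $t \in (0,1)$ the family passes through tensors of essential rank $\chi+1$ (where $\chi$ is the essential rank of $A$): the enlarged $K$-block is assembled from three pieces, namely a top-left corner contribution accounting for $A_0$, the embedding of the essential block of $E(A)$ into the lower-right $\chi \times \chi$ block, and a finite collection of auxiliary matrix units supported off the main diagonal carrying small amplitudes. The three amplitudes are mixed via trigonometric weights so that the right-normalization identity $\sum_i K^i K^{i*} = \1$ and injectivity of the enlarged block both hold at every $t \in (0,1)$; as $t \to 0$ the auxiliary and corner amplitudes vanish in a coordinated way and the family collapses to $E(A)$ of essential rank $\chi$, and at $t = 1$ the $E(A)$-block and the auxiliaries are scaled away, leaving only the corner $A_0$. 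Concatenating $\tilde E$ with $H'$ at $t = 1/2$ yields the contraction; continuity on the colimit topology of $\E$ then reduces, as in the proof of Lemma~\ref{lem:EH_contract_first_homotopy}, to continuity on each $\E(d, D) \times [0,1]$.

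The hard part will be the second stage. Straight-line interpolation in the ambient space $\cM$ violates right-normalization, and the jumps in essential rank at the two endpoints require that the auxiliary and corner corrections vanish in a tightly coordinated way as $t \to 0$ and $t \to 1$, so that each interpolated tensor admits the presentation $X\begin{pmatrix} K & 0 \\ M & 0 \end{pmatrix}X^{-1}$ with injective right-normalized $K$ throughout. Moreover, the recipe for $H'$ must depend continuously on $A$ itself rather than on a chosen decomposition, because the decomposition $A = X\begin{pmatrix} K & 0 \\ M & 0 \end{pmatrix}X^{-1}$ is highly non-unique; this forces the construction to be expressed in terms of $A$ and the projection $Q(A)$ rather than in terms of $X$, $K$ and $M$ separately.
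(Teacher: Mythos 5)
Your two-stage outline has the right high-level shape, and you have correctly identified the central difficulties: the collapse step must be expressed in terms of $A$ itself (not a chosen presentation $X, K, M$), the essential rank must increase by one at intermediate times, and straight-line interpolation destroys right-normalization. These observations match the skeleton of the paper's proof. However, the proposal is an outline, not a proof: the entire second stage, which you yourself label ``the hard part,'' is described only in qualitative language (``a finite collection of auxiliary matrix units,'' ``mixed via trigonometric weights,'' ``vanish in a coordinated way''), and these are precisely the places where the actual construction lives. Asserting that such a homotopy exists is not the same as producing it. In particular, you need a concrete gauge-covariant way to encode the full data of an arbitrary $A$ (whose number of nonzero matrix entries is unbounded as $d, D$ vary) into the auxiliary components, together with an explicit normalization, and to verify right-normalization and injectivity of the interpolated $K$-block at every $t$.

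Two more concrete issues. First, shifting the physical index by $\phi_1(n) = n+1$ frees only a single physical slot, which is not enough room. The paper's construction needs to record the entire array $A^j_{\gamma\beta}$ as auxiliary tensor components, which it does by choosing a bijection $\psi \colon \bbN \times \bbN \to \bbN$ and planting one auxiliary component per pair $(j,\gamma)$; to have room for \emph{two} such families of auxiliaries (one in the $M$-block, one in the enlarged $K$-block) alongside the shifted copy of $A$, the paper shifts by $\phi_1(n) = 3n+1$, reserving indices $\equiv 0, -1, 1 \pmod 3$ for three disjoint roles. You could instead grow the physical dimension again in later stages, but you have not said how, and your claim of a single continuous interpolation $H'$ from $E(A)$ to $A_0$ does not obviously survive this. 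Second, even with room made, the paper needs three further homotopies ($F$, $G$, $H$ in its notation): one to grow auxiliary components in the $M$-block (normalized by $\Tr(R(A))^{-1/2}$, keeping essential rank $\chi$), one that performs the actual rank swap from $\chi$ to $\chi+1$ and then down to $1$, and a final linear scaling. Collapsing all of this into one homotopy $H'$, as you propose, is not clearly possible; at the very least the endpoints $t \to 0$ and $t \to 1$ would force two incompatible coordination requirements on the same family of weights. The verification that the interpolated tensors are right-normalized and injective (the paper computes $\sum_{j,\gamma} \overline{w}(j,\gamma)\overline{w}(j,\gamma)^* = \1$ and $\sum_{j,\gamma} \overline{v}(j,\gamma)\overline{v}(j,\gamma)^* = \Tr(R(A))$) is the substantive content of the theorem, and it is missing from the proposal.
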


\begin{proof}
Define $\Gamma(t)$ and $\Delta(t)$ as the paths associated to $\phi_1(n) = 3n+1$ and $\phi_2(n) = n+1$, respectively, as in Proposition \ref{prop:isometry_path}. By Lemma \ref{lem:EH_contract_first_homotopy}, we know the identity on $\E$ is homotopic to the map $E:\E \rightarrow \E$ of \eqref{eq:zeroth_homotopy}. Given $A \in \E$, we shall always let $X$, $K$, and $M$ correspond to $A$ as in \eqref{eq:EUMPS_def}, with $\chi$ denoting the bond dimension of $K$. Then $E$ may be written as
\[
E(A)^{3j+1} = \mqty(0&1\\X&0)\mqty(\mqty(K^j&0\\M^j&0)&0\\0&0)\mqty(0&1\\X&0)^{*}, \quad j \in \bbN
\]
with $E(A)^i = 0$ if $i \neq 3j + 1$ for some $j \in \bbN$. We will deform $E$ to a constant in three homotopies. 

Let $\psi:\bbN \times \bbN \rightarrow \bbN$ be any bijection. 
For our first homotopy $F:\E \times [0,1] \rightarrow \E$, we continue to let $F(A,t)^i = 0$ if $i \equiv -1 \bmod 3$ or $i = 1$ and we let $F(A,t)^{3j+1} = E(A)$ for all $j \in \bbN$. 
If $i = 0 \bmod 3$, then there exists $(j, \gamma) \in \bbN \times \bbN$ such that $i = \psi(j,\gamma)$, and we define
\begin{align*}
F(A,t)^{3\psi(j,\gamma)}_{\alpha \beta} &= t\Tr(\rop(A))^{-1/2}\delta_{\alpha 1} A^j_{\gamma, \beta - 1}
\end{align*}
where we set $A^j_{\gamma, 0} = 0$, as occurs when $\beta = 1$. Note that $\Tr(\rop(A))$ is a continuous function of $A \in \E$ and $\Tr(\rop(A)) \geq \chi$ for all $A \in \E(\chi)$, so that $\Tr(\rop(A))$ is nonzero. In matrix form, this can be written as
\[
F(A,t)^{3\psi(j,\gamma)} = t\Tr(\rop(A))^{-1/2}\mqty(0&1\\X&0)\mqty(0_{D \times D}&0\\v(j,\gamma)&0)\mqty(0&1\\X&0)^*
\]
where $D$ is the size of $X$ and
\[
v(j,\gamma) = \bra{\gamma}A^jX = \bra{\gamma}X\mqty(K^j&0\\M^j&0)
\]
and $\ket{\gamma}$ is the $\gamma$th standard basis vector. We see that only the first $\chi$ entries of $v(j,\gamma)$ can be nonzero. Furthermore, if $A \in \E(d, D)$ then $F(A,t)^i = 0$ if
\[
i > d' \defeq \max\{3d+1, \max\{3\psi(j,\gamma):j \leq d, \gamma \leq D\}\}.
\]
From these remarks we see that $F$ maps $\E(d, D) \times [0,1]$ into $\E(d', D+1)$. All matrix entries of $F$ are manifestly continuous, so $F$ is a well-defined continuous function $F:\E \times [0,1] \rightarrow [0,1]$.

It is clear from the definition that $F(A,0) = E(A)$. At $t = 1$, we have an alternate  form for $F(A,1)$ as a matrix product:
\begin{equation}\label{eq:F(A,1)1}
F(A,1)^{3j+1} = \mqty(1 & 0 \\0 & X)\mqty(0&0\\0&\mqty(K^j&0\\M^j&0))\mqty(1 & 0 \\0 & X)^*
\end{equation}
and
\begin{equation}\label{eq:F(A,1)2}
F(A, 1)^{3\psi(j,\gamma)} = \Tr(R(A))^{-1}\mqty(1 & 0 \\0 & X)\mqty(0&v(j,\gamma)\\0&0_{D \times D})\mqty(1 & 0 \\0 & X)^*.
\end{equation}
From here we perform our second homotopy. We define $G:\E \times [0,1] \rightarrow \E$ by
\begin{align*}
G(A,t)^i_{\alpha \beta} &= \begin{cases} \sqrt{t}\delta_{\alpha 1}\delta_{\beta 1} &\tn{if $i = 1$} \\ \sqrt{t}\delta_{\beta 1}A^j_{\alpha -1 ,\gamma} &\tn{if $i = 3\psi(j,\gamma) - 1$}\\ \sqrt{1 - t}F(A,1)^i &\tn{if $i = 0\bmod 3$}\\ \sqrt{1 - t}F(A,1)^i &\tn{if $i = 1\bmod 3$ and $i > 1$} \end{cases}
\end{align*}
where we define $A^j_{0, \gamma} = 0$, as occurs when $i = 3\psi(j,\gamma) - 1$ and $\alpha = 1$.  Thus, we scale \eqref{eq:F(A,1)1} and \eqref{eq:F(A,1)2} to zero and we grow components $G(A,t)^i$ for $i = 1$ and $i = 3\psi(j,\gamma) - 1$, where $F(A,1)^i$ was zero. It is clear that all matrix entries of $G(A,t)$ are continuous and that $G$ maps $\E(d, D) \times [0,1]$ to $\cM(d', D+1)$. Furthermore, $G(A, 0) = F(A,1)$. It remains to show that $G$ maps into $\E$. 

We begin by writing $G$ in matrix form. We have:
\[
G(A,t)^1 = \sqrt{t}\mqty(1 & 0 \\0 & X)\mqty(1&0\\0&0_{D \times D})\mqty(1 & 0 \\0 & X)^*.
\]
and
\[
G(A,t)^{3\psi(j,\gamma) - 1} = \sqrt{t}\mqty(1 & 0 \\0 & X)\mqty(0&0\\w(j,\gamma)&0_{D \times D})\mqty(1 & 0 \\0 & X)^*.
\]
where
\[
w(j,\gamma) = X^*A^j\ket{\gamma} = \mqty(K^j&0\\M^j&0)X^*\ket{\gamma}.
\]

Given $A \in \E(d, D, \chi)$ and $t \in (0,1)$, to show that $G(A,t) \in \E(d', D+1)$ we must show that the upper left $(\chi + 1) \times (\chi + 1)$ block of the matrices conjugated above form a right-normalized injective MPS tensor. This can be rephrased as follows. 
Let $\overline v(j,\gamma)$ and $\overline w(j,\gamma)$ be the truncations of $v(j,\gamma)$ and $w(j,\gamma)$ to their first $\chi$ entries.
For $t \in (0,1)$, to show that $G(A,t) \in \E(d', D+1)$ we must show that the matrices
\begin{equation}\label{eq:G(A,t)_matrices}
\begin{aligned}
\sqrt{t}\mqty(1 & 0 \\0 &0_{\chi \times \chi} ), &\quad \sqrt{t}\mqty(0&0\\\overline w(j,\gamma) & 0_{\chi \times \chi}), \\
 \sqrt{\frac{1-t}{\Tr(\rop(A))}}\mqty(0&\overline v(j,\gamma)\\0&0_{\chi\times \chi}), &\quad \sqrt{1-t}\mqty(0&0\\0&K^j)
 \end{aligned}
\end{equation}
form a right-normalized injective MPS tensor as $j$ ranges from $1$ to $d$ and $\gamma$ ranges from $1$ to $D$. Since $K$ is an injective MPS tensor, we see from the definition of $v(j,\gamma)$, respectively of $w(j,\gamma)$, that any element of $\bbC^\chi$ can be obtained by an appropriate linear combination of $\overline{v}(j,\gamma)$, respectively of $\overline w(j,\gamma)$. Therefore the matrices above span $M_{\chi + 1}(\bbC)$, hence form an injective MPS tensor. To show that it is right-normalized, we compute:
\begin{align*}
\sum_{j, \gamma} \overline{w}(j,\gamma)\overline{w}(j,\gamma)^* = \sum_{j , \gamma} \mqty(K^j & 0) X^*\ketbra{\gamma}X\mqty(K^{j*}\\0) = \1_{\chi \times \chi}
\end{align*}
and
\begin{align*}
\sum_{j, \gamma} \overline{v}(j,\gamma)\overline{v}(j,\gamma)^* = \sum_{j ,\gamma} \bra{\gamma}X\mqty(K^j\\M^j)\mqty(K^{j*}&M^{j*})X^*\ket{\gamma} = \Tr(R(A)).
\end{align*}
With these in hand, one can check that \eqref{eq:G(A,t)_matrices} is right-normalized. Thus, $G(A,t) \in \E$ for $(A, t) \in \E \times (0,1)$.

At $t = 1$, we have 
\[
G(A,1)^1 = \mqty(1&0\\0&0)
\]
and
\[
G(A,1)^{3\psi(j,\gamma) - 1} = \mqty(0&0\\A^j\ket{\gamma}&0)
\]
where $G(A,1)^i = 0$ for all other $i$. This is in $\E$, so we have shown that $G:\E \times [0,1] \rightarrow \E$ is a well-defined continuous map. 

For our final homotopy, we define $H:\E \times [0,1] \rightarrow \E$ by
\[
H(A,t)^1 = G(A,1)^1
\]
and
\[
H(A,t)^i = (1-t)G(A,1)^i
\]
for all $i > 1$. This is well-defined and continuous, satisfies $H(A, 0) = G(A,1)$, and 
\[
H(A,1)^i = \delta_{1i}\mqty(1&0\\0&0)
\]
which is independent of $A$. This concludes the proof.
\end{proof}

Since $\B$ is the image of $\E$ under the continuous map $p$, we have the following corollary.

\begin{cor}
The space $\B$ is path-connected.
\end{cor}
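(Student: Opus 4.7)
The plan is short because this corollary reduces almost immediately to the preceding theorem. Recall that a contractible space is in particular path-connected: if $H\colon \E\times[0,1]\to\E$ is a contracting homotopy with $H(\cdot,0)=\id_\E$ and $H(\cdot,1)$ the constant map to some point $A_0\in\E$, then for any $A\in\E$ the assignment $t\mapsto H(A,t)$ is a path from $A$ to $A_0$ in $\E$, so any two points of $\E$ can be joined by concatenating two such paths.

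Next, recall from Remark \ref{rem} (or equivalently from the construction) that $\B=\E/{\sim}$ and $p\colon\E\to\B$ is the quotient map, so in particular $p$ is continuous and surjective. Since the continuous image of a path-connected space is path-connected — given $b_0,b_1\in\B$, pick any preimages $A_0,A_1\in\E$ with $p(A_i)=b_i$, choose a path $\gamma\colon[0,1]\to\E$ with $\gamma(0)=A_0$ and $\gamma(1)=A_1$ using contractibility of $\E$, and then $p\circ\gamma\colon[0,1]\to\B$ is a continuous path from $b_0$ to $b_1$ — we conclude that $\B$ is path-connected.

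There is no real obstacle here: all of the substantive work has been done in Theorem \ref{thm:E_contractible}, and the corollary is essentially formal. The only thing worth noting is that we are using the surjectivity of $p$ onto $\B$, which holds by definition of $\B$ as the set-theoretic quotient $\E/{\sim}$ equipped with the quotient topology.
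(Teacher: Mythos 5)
Your proof is correct and follows exactly the paper's (one-line) argument: $\E$ is contractible by Theorem \ref{thm:E_contractible}, hence path-connected, and $\B = p(\E)$ is the continuous image of a path-connected space, hence path-connected. Nothing to add.
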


We will no longer need the functions $E$, $F$, $G$, and $H$ as they are defined above. In later sections, these letters will be free to represent other objects.

\section{The Serre Fibration \texorpdfstring{$\E(\chi) \rightarrow \B(\chi)$}{EchitoBchi}}\label{sec:fixed_chi}

As preparation for showing that $p:\E \rightarrow \B$ is a quasifibration, we consider the restriction of this map to the spaces of fixed essential rank.

Let us fix $d, D, \chi \in \bbN$ with $\chi \leq D$. In this section we will show that the projection $p:\E(d, D, \chi) \rightarrow \B(d, D, \chi)$ is a fiber bundle with typical fiber 
\begin{equation}\label{eq:dD_fiber}
\cF(d,D, \chi) \defeq \Unitary(1) \times \qty(\Unitary(D) \times_{\Unitary(1) \times \Unitary(D - \chi)} M_{(D - \chi)\times \chi}(\bbC)^d).
\end{equation}
As mentioned in the introduction, a similar result was obtained in \cite{GeometryofMPS}. 
Taking a colimit in $d$ and $D$ we will obtain that $p:\E(\chi) \rightarrow \B(\chi)$ is a Serre fibration with fiber $\cF(\chi) = \colim_{d, D} \cF(d, D, \chi)$. Moreover, we will show that the there exists a homotopy commutative diagram
\[
	\begin{tikzcd}
	\cF(\chi) \arrow[r, hook] \arrow[d]& \E(\chi) \arrow[rr,"p"]\arrow[d]&& \cB(\chi) \arrow[d]\\
	\Unitary(1) \times B\Unitary(1) \arrow[r] & B\Unitary(\chi) \arrow[r] & B\bbP\Unitary(\chi) \arrow[r] & B\Unitary(1) \times B\bbP \Unitary(\chi)
	\end{tikzcd}
\]
where the vertical arrows are homotopy equivalences. 

Let us explain the notation in \eqref{eq:dD_fiber}.
Let $G$ be a group, let $\cX$ and $\cY$ be a topological spaces, and let $G$ act to the right on $\cX$ and to the left on $\cY$. Consider the equivalence relation on $\cX \times \cY$ given by $(x, y) \sim (x', y')$ if $x' = xg$ and $y' = g^{-1}y$ for some $g \in G$. Then the balanced product of $\cX$ and $\cY$ by $G$ is the set
\[
\cX \times_G \cY \defeq \cX \times \cY/{\sim},
\]
equipped with the quotient topology obtained from the natural projection $\cX \times \cY \rightarrow \cX \times_G \cY$. The image of $(x,y) \in \cX \times \cY$ in the balanced product will be denoted with square brackets $[x, y] \in \cX \times_G \cY$.

In \eqref{eq:dD_fiber}, the group $\Unitary(1) \times \Unitary(D - \chi)$ acts on the right of $\Unitary(D)$ by
\[
X.(\lambda, Z) = X \mqty(\lambda \1_{\chi \times \chi} & 0 \\0 & Z)
\]
and acts on the left of $M_{(D - \chi) \times \chi}(\bbC)$ by
\[
(\lambda, Z).M = \lambda^* ZM.
\]

Recall from Definition~\ref{defn:Idchi} that $\I(d,\chi)$ is the space of injective MPS tensors of physical dimension $d$ and bond dimension $\chi$ satisfying the right normalization condition.
Let $\bbP\Unitary(\chi) = \Unitary(\chi)/(\Unitary(1)  \1)$. Observe that $\Unitary(1) \times \bbP\Unitary(\chi)$ has a free and continuous left group action on $\I(d,\chi)$ by
\begin{equation}\label{eq:U1_PUchi_left}
(\mu, W).A = \mu WAW^*.
\end{equation}
We define a right group action of $\Unitary(1) \times \bbP \Unitary(\chi)$ on $\F(d, D, \chi)$ by
\begin{equation}\label{eq:U1_PUchi_right}
(\nu, [X, M]).(\mu, W) = \qty(\nu \mu^*, \qty[X\mqty(W & 0 \\0 & \1), \mu^*MW]).
\end{equation}
It is easy to check that this is well-defined and continuous.
The significance of these definitions will now become apparent.

\begin{theorem}\label{thm:E_d_D_chi_identification}
The map 
\begin{equation}\label{eq:E_d_D_chi_identification}
\begin{aligned}
\F(d, D, \chi) \times_{\Unitary(1) \times \bbP\Unitary(\chi)} \cI(d, \chi) &\rightarrow \E(d, D, \chi) \\
[(\nu, [X,M]), K] &\mapsto \nu X\mqty(K&0\\M&0)X^*
\end{aligned}
\end{equation}
is a well-defined homeomorphism.
\end{theorem}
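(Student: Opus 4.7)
The plan is to follow the standard three-step pattern for identifying a quotient with a target: first check that the pre-quotient map
\[
\tilde\Phi\colon \F(d, D, \chi) \times \cI(d, \chi) \to \E(d, D, \chi), \quad ((\nu, [X, M]), K) \mapsto \nu X \mqty(K & 0 \\ M & 0) X^{*}
\]
is well-defined and continuous; second, verify that $\tilde\Phi$ descends along both the balanced-product relation defining $\F(d, D, \chi)$ and the diagonal $\Unitary(1) \times \bbP\Unitary(\chi)$-action, by direct substitution of \eqref{eq:U1_PUchi_left} and \eqref{eq:U1_PUchi_right}; and third, show the induced map $\Phi$ is a bijection with continuous inverse. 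Continuity of $\tilde\Phi$ is immediate from continuity of matrix multiplication and scalar multiplication, and surjectivity of $\Phi$ onto $\E(d, D, \chi)$ (take $\nu = 1$) is immediate from Definition \ref{def:EGL}. Descent to the balanced product is a direct computation tracking the cancellations among the scalar phases, the $\Unitary(\chi)$ blocks, and the $\Unitary(D - \chi)$ blocks when substituted into the formula for $\tilde\Phi$.

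The substantive content is the bijectivity of $\Phi$ together with continuity of $\Phi^{-1}$. To prove injectivity, suppose $\tilde\Phi((\nu_1, [X_1, M_1]), K_1) = \tilde\Phi((\nu_2, [X_2, M_2]), K_2) = A$. By \eqref{eq:QA}, $Q(A) = X_j \mqty(\1 & 0 \\ 0 & 0) X_j^{*}$ for $j = 1, 2$, so the range of $Q(A)$ coincides with the span of the first $\chi$ columns of both $X_1$ and $X_2$; unitarity of the $X_j$ then forces $X_2 = X_1 \mqty(U & 0 \\ 0 & V)$ for some $U \in \Unitary(\chi)$ and $V \in \Unitary(D - \chi)$. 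Substituting this into the identity $\nu_1 X_1 \mqty(K_1 & 0 \\ M_1 & 0) X_1^{*} = \nu_2 X_2 \mqty(K_2 & 0 \\ M_2 & 0) X_2^{*}$ and equating block by block yields $\nu_1 K_1 = \nu_2 U K_2 U^{*}$ and $\nu_1 M_1 = \nu_2 V M_2 U^{*}$. Taking $\sum_i K^i K^{i*}$ on both sides of the first relation and invoking right-normalization of both $K_1$ and $K_2$ (Definition \ref{defn:Idchi}) forces $|\nu_2/\nu_1| = 1$, so $\mu \defeq \nu_2/\nu_1 \in \Unitary(1)$. Absorbing $V$ into $X$ via the $\Unitary(1) \times \Unitary(D - \chi)$ balanced-product relation with $(\lambda, Z) = (1, V^{-1})$, and then applying the right $\Unitary(1) \times \bbP\Unitary(\chi)$-action by $(\mu, [U])$ from \eqref{eq:U1_PUchi_right}, exhibits the required equivalence between the two representatives.

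For continuity of $\Phi^{-1}$ the plan is to construct continuous local sections. The projection-valued map $Q\colon \E(d, D, \chi) \to \mathrm{Gr}(\chi, D)$ is continuous on the fixed-essential-rank stratum, and the principal $\Unitary(\chi) \times \Unitary(D - \chi)$-bundle $\Unitary(D) \to \mathrm{Gr}(\chi, D)$ sending $X$ to the span of its first $\chi$ columns is locally trivial. Hence for each $A_0 \in \E(d, D, \chi)$ there exist an open neighborhood $V$ of $A_0$ and a continuous lift $X\colon V \to \Unitary(D)$ with $X(A) \mqty(\1 & 0 \\ 0 & 0) X(A)^{*} = Q(A)$. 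The top-left and bottom-left blocks of $X(A)^{*} A X(A)$ then provide continuous maps $V \to \cI(d, \chi)$ and $V \to M_{(D - \chi) \times \chi}(\bbC)^d$ recovering $K$ and $M$, and together with the constant $\nu = 1$ these assemble into a continuous local section $V \to \F(d, D, \chi) \times \cI(d, \chi)$ whose composition with the quotient map to the balanced product is a continuous local inverse of $\Phi$ on $V$. Thus $\Phi^{-1}$ is continuous and $\Phi$ is a homeomorphism.

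The main obstacle is a careful bookkeeping of how the two $\Unitary(1)$ factors (the overall phase $\nu$ and the center of $\Unitary(\chi)$) interact with the $\Unitary(D - \chi)$ slot as one reshuffles $X_2$ into the form obtained from $X_1$ by right-multiplication, since any sign discrepancy between $\mu$ and $\mu^{*}$ in the action formulas would spoil the equivalence. Once the existence of the continuous local lift $X\colon V \to \Unitary(D)$ is in hand, the remaining verifications reduce to finite-dimensional linear algebra on the four blocks of the decomposition.
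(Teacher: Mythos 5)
Your plan mirrors the paper's proof closely: descend the explicit formula $\tilde\Phi$ from the product, prove injectivity by block-decomposing the unitary relating the two gauges and pushing the blocks through the balanced-product relations, and prove continuity of the inverse by lifting $Q$ through the principal bundle $\Unitary(D) \to \Gr(\chi,D)$ to get local sections. The one step you treat too lightly is ``continuity of $\tilde\Phi$ is immediate from continuity of matrix multiplication.'' The domain $\F(d,D,\chi) \times \I(d,\chi)$ carries a product topology in which the first factor is itself a quotient, so there is no direct matrix description of the space on which your formula lives. The paper closes this by noting that $\Unitary(1) \times \I(d,\chi)$ is locally compact Hausdorff, so the quotient map $\Unitary(1) \times \Unitary(D) \times M_{(D-\chi)\times\chi}(\bbC)^d \to \F(d,D,\chi)$ crossed with $\id_{\I(d,\chi)}$ remains a quotient map, reducing continuity to the explicit formula on the pre-pre-quotient. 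You should supply some version of this argument (or invoke openness of the orbit map for the free compact-group action).

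Your instinct that ``any sign discrepancy between $\mu$ and $\mu^{*}$ in the action formulas would spoil the equivalence'' is in fact sharper than you realize: with \eqref{eq:U1_PUchi_right} taken literally, $\tilde\Phi$ does \emph{not} descend to the balanced product, because replacing $((\nu,[X,M]),K)$ by its translate under $(\mu,W) \in \Unitary(1)\times\PU(\chi)$ scales $\nu X\mqty(K&0\\M&0)X^{*}$ by $(\mu^{*})^{2}$. The first slot of \eqref{eq:U1_PUchi_right} should read $\nu\mu$, not $\nu\mu^{*}$; with that correction the two factors of $\mu$ cancel and the map descends, and the group element $(\mu,[U])$ you extracted is exactly the one carrying $((\nu_1,[X_1,M_1]),K_1)$ to $((\nu_2,[X_2,M_2]),K_2)$ --- the $\Unitary(D-\chi)$ block $V$ is then absorbed automatically by the inner balanced-product relation $[X_1\mqty(U&0\\0&\1),VM_2]=[X_1\mqty(U&0\\0&V),M_2]$, so your separate $(1,V^{-1})$ step is not needed. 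Finally, the deduction of $\abs{\nu_2/\nu_1}=1$ from right-normalization is correct but vacuous: $\nu_1,\nu_2 \in \Unitary(1)$ already by definition of $\F(d,D,\chi)$.
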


\begin{proof}
It is easy to check that this map is well-defined. To show continuity, it suffices to prove continuity of the composition of \eqref{eq:E_d_D_chi_identification} with the projection $\cF(d, D, \chi) \times \I(d, \chi) \rightarrow \cF(d, D, \chi) \times_{\Unitary(1) \times \bbP \Unitary(\chi)} \I(d, \chi)$. Since $\Unitary(1) \times \I(d, \chi)$ is locally compact Hausdorff, the product topology on $\cF(d, D, \chi) \times \I(d, \chi)$ coincides with the quotient topology induced by the projection 
\[
\Unitary(1) \times \Unitary(D) \times M_{(D - \chi) \times \chi}(\bbC)^d \times \I(d, \chi) \rightarrow \F(d, D, \chi) \times \I(d, \chi).
\]
Thus, continuity of \eqref{eq:E_d_D_chi_identification} follows from continuity of the map
\[
\begin{aligned}
\Unitary(1) \times \Unitary(D) \times M_{(D - \chi) \times \chi}(\bbC)^d \times \I(d, \chi) &\rightarrow \E(d, D, \chi)\\
(\nu, X, M, K) &\mapsto \nu X \mqty(K & 0 \\M & 0)X^*
\end{aligned}
\]
which is clear.

It is clear from the definition of $\E(d, D, \chi)$ that the map \eqref{eq:E_d_D_chi_identification} is surjective. Let us show that it is injective. Suppose 
\begin{equation}\label{eq:equal_MPS_reps}
\nu X\mqty(K&0\\M&0)X^* = \mu Y\mqty(L&0\\N&0)Y^*.
\end{equation}
Write $Y^*X$ in block form
\[
Y^*X = \mqty(W&B\\C&Z).
\]
Then multiplying \eqref{eq:equal_MPS_reps} on the right by $Y^*$ and on the left by $X$, we find
\begin{equation}
\nu \mqty(WK+BM & 0 \\ CK + ZM& 0) = \mu \mqty(LW & LB \\ NW & NB).
\end{equation}
Examining the upper right block and using injectivity of $L$, it follows that $B = 0$. Since $Y^*X$ is unitary, we see that $C = 0$ as well. Thus, $W \in \Unitary(\chi)$ and $Z \in \Unitary(D - \chi)$. Examination of the remaining blocks yields
\[
K = \nu^*\mu W^*LW \qqtext{and} M = \nu^*\mu Z^*NW.
\] 
We see that
\begin{align*}
[(\nu, [X, M]), K] &= \qty[\qty(\nu (\nu^*\mu), \qty[Y\mqty(W & 0\\0&Z), \nu^*\mu Z^*NW]), \nu^*\mu W^*LW]\\
&= \qty[\qty(\nu, \qty[Y\mqty(\1&0\\0&Z), Z^*N]), L]\\
&= [(\nu, [Y, N]), L].
\end{align*}
This proves that the map \eqref{eq:E_d_D_chi_identification} is injective.

It remains to show continuity of the inverse. Let $A_0 \in \E(d, D, \chi)$. We show continuity of the inverse at $A_0$. Let $\cP(D, \chi)$ be the space of rank $\chi$ projections in $M_D(\bbC)$. Since the map $\E(d, D, \chi) \rightarrow \cP(D, \chi)$, $A \mapsto Q(A)$ is continuous and since $\Unitary(D) \rightarrow \cP(D, \chi)$, $U \mapsto UQ(A_0)U^*$ is a fiber bundle, we know there exists a neighborhood $O$ of $A_0$ and a continuous map 
\[
U:O \rightarrow \Unitary(D) \qqtext{such that} U(A)Q(A_0)U(A)^* = Q(A)
\]
for all $A \in O$. Let $A_0 = X\mqty(K&0\\M&0)X^*$ and $A = Y\mqty(L&0\\N&0)Y^* \in O$. Then by \eqref{eq:QA},
\[
Y^*U(A)X\mqty(\1 & 0 \\0 & 0) = \mqty(\1 & 0 \\0 & 0)Y^*U(A)X,
\]
hence
\begin{align*}
Y^*U(A)X = \mqty(W&0\\0&Z)
\end{align*}
for some $W \in \Unitary(\chi)$ and $Z \in \Unitary(D - \chi)$. Thus,
\begin{align*}
X^*U(A)^*AU(A)X  = \mqty(W^*LW&0\\Z^*NW&0).
\end{align*}
This implies that the blocks in the matrix on the right hand side are well-defined and continuous functions of $A$.
Therefore the map
\begin{align*}
O &\rightarrow \Unitary(1) \times \Unitary(D) \times M_{(D - \chi) \times \chi}(\bbC)^d \times \I(d, \chi),\\
A &\mapsto \qty(1, U(A)X, Z^*NW, W^*LW)
\end{align*}
is well-defined and continuous. 
Composing with the projections onto the balanced products yields the restriction of the inverse to $O$. This composition is continuous, so we're done.
\end{proof}

\begin{theorem}\label{thm:B_homeomorphism}
The map
\begin{align*}
\I(d,\chi)/\qty(\Unitary(1) \times \bbP\Unitary(\chi)) \rightarrow \B(d, D, \chi)\\
[K] \mapsto p\mqty(K&0\\0&0)
\end{align*}
is a well-defined homeomorphism and the diagram below commutes.
\[
\xymatrix{
\F(d, D, \chi) \times_{\Unitary(1) \times \bbP\Unitary(\chi)} \I(d, \chi) \ar[r]\ar[d]& \E(d, D, \chi) \ar[d]^-p\\
\I(d, \chi)/(\Unitary(1) \times \bbP\Unitary(\chi)) \ar[r] & \B(d, D, \chi)
}
\]
\end{theorem}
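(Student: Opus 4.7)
The plan is to deduce the result by passing to quotients in the homeomorphism $\Phi$ of Theorem \ref{thm:E_d_D_chi_identification}. Writing $\bar{p}[K] \defeq p\mqty(K & 0 \\ 0 & 0)$ for the candidate map and letting
\[
\pi_2 : \F(d,D,\chi) \times_{\Unitary(1) \times \bbP\Unitary(\chi)} \I(d,\chi) \longrightarrow \I(d,\chi)/(\Unitary(1) \times \bbP\Unitary(\chi))
\]
denote the natural projection $[(\nu, [X,M]), K] \mapsto [K]$, the aim is to verify $p \circ \Phi = \bar{p} \circ \pi_2$ and then argue that both $p \circ \Phi$ and $\pi_2$ are quotient maps with identical fibers, so that $\bar{p}$ is automatically a homeomorphism.

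Commutativity and well-definedness of $\bar{p}$ both reduce to writing down explicit gauge transformations in the sense of Definition \ref{def:gauge_transform}. For commutativity, I decompose $\nu X\mqty(K & 0 \\ M & 0)X^* = \nu X(B + \tilde{B})X^*$ with $B = \mqty(K & 0 \\ 0 & 0)$ and $\tilde{B}^i = \mqty(0 & 0 \\ M^i & 0)$; since $Q(B) = \mqty(\1_{\chi} & 0 \\ 0 & 0)$ one has $Q(B)\tilde{B}^i = 0$ and $\tilde{B}^i Q(B) = \tilde{B}^i$, so $\bar{p}[K] = p(B) = p(\nu X(B + \tilde{B})X^*) = (p \circ \Phi)[(\nu, [X,M]), K]$. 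For well-definedness, if $L = \mu WKW^*$ with $(\mu, W) \in \Unitary(1) \times \bbP\Unitary(\chi)$, the gauge transformation with $(\lambda, Z) = (\mu, \mqty(W & 0 \\ 0 & \1))$ and $\tilde{A} = 0$ carries $\mqty(K & 0 \\ 0 & 0)$ to $\mqty(L & 0 \\ 0 & 0)$. Surjectivity of $\bar{p}$ is then immediate from surjectivity of $\Phi$. For injectivity, suppose $\bar{p}[K] = \bar{p}[L]$; then $\mqty(K & 0 \\ 0 & 0) \sim \mqty(L & 0 \\ 0 & 0)$ in $\E$, and since $Q\mqty(K & 0 \\ 0 & 0) = \mqty(\1_{\chi} & 0 \\ 0 & 0)$, the gauge transformation necessarily takes the explicit form $\mqty(L & 0 \\ 0 & 0) = \lambda Z \mqty(K & 0 \\ N & 0) Z^*$ for some $\lambda \in \Unitary(1)$, $Z \in \Unitary(\infty)$, and some $N$. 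This is precisely the equation treated in the injectivity step of the proof of Theorem \ref{thm:E_d_D_chi_identification}, whose conclusion is exactly that $[K] = [L]$ in $\I(d,\chi)/(\Unitary(1) \times \bbP\Unitary(\chi))$.

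The remaining step, and where I expect the most care will be needed, is to upgrade the continuous bijection $\bar{p}$ to a homeomorphism. Since $\Phi$ is a homeomorphism and $p$ is a quotient map, $p \circ \Phi$ is a quotient map. For $\pi_2$, the projection $\F(d,D,\chi) \times \I(d,\chi) \to \I(d,\chi)$ is a quotient map by local compactness of $\F(d,D,\chi)$ (which is a finite-dimensional manifold), so its composition with the quotient $\I(d,\chi) \to \I(d,\chi)/(\Unitary(1) \times \bbP\Unitary(\chi))$ is a quotient map; this composition factors through the balanced-product quotient $\F(d,D,\chi) \times \I(d,\chi) \to \F(d,D,\chi) \times_{\Unitary(1) \times \bbP\Unitary(\chi)} \I(d,\chi)$, and a standard descent lemma then forces $\pi_2$ to be a quotient map. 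With $p \circ \Phi$ and $\pi_2$ both quotient maps sharing the same fibers (the latter being precisely the content of the bijectivity analysis above), $\bar{p}$ is a homeomorphism. The core subtlety of the whole argument is matching the two equivalence relations on the balanced product, and this is exactly what the injectivity analysis inherited from Theorem \ref{thm:E_d_D_chi_identification} secures.
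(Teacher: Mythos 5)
Your proof is correct and takes essentially the same approach as the paper's, which is a terse two-sentence argument that simply observes that all the claims follow from Theorem \ref{thm:E_d_D_chi_identification} and the quotient topologies on both sides; you fill in the details that the paper declares ``clear.'' Two minor refinements: you do slightly more work than needed in showing $\pi_2$ is a quotient map, since continuity of $\bar{p}^{-1}$ already follows from the universal property of the quotient $p\colon \E(d,D,\chi)\to\B(d,D,\chi)$ together with mere continuity (not the quotient property) of $\pi_2\circ\Phi^{-1}$, while continuity of $\bar{p}$ follows from the universal property of $\I(d,\chi)\to\I(d,\chi)/(\Unitary(1)\times\bbP\Unitary(\chi))$; and the projection $\F(d,D,\chi)\times\I(d,\chi)\to\I(d,\chi)$ is a quotient map because it is an open surjection, which does not require local compactness of $\F(d,D,\chi)$.
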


\begin{proof}
It is clear that this map is well-defined, bijective, and that the diagram commutes. Since the domain and codomain both have the quotient topology, it is clear that the map is a homeomorphism by Theorem \ref{thm:E_d_D_chi_identification}. 
\end{proof}

\begin{cor}
The map $p:\E(d, D, \chi) \rightarrow \B(d, D, \chi)$ is a fiber bundle. The base space $\B(d, D, \chi)$ is paracompact Hausdorff and $p$ is therefore a Hurewicz fibration.
\end{cor}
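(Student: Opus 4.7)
The plan is to use the identifications from Theorems \ref{thm:E_d_D_chi_identification} and \ref{thm:B_homeomorphism} to reduce the claim to a statement about a principal bundle together with an associated fiber bundle. Write $G = \Unitary(1)\times\bbP\Unitary(\chi)$, a compact Lie group. By the two preceding theorems, the commutative square
\[
	\xymatrix{
	\F(d,D,\chi)\times_{G}\I(d,\chi) \ar[r]^-{\cong}\ar[d]& \E(d,D,\chi) \ar[d]^-p\\
	\I(d,\chi)/G \ar[r]^-{\cong} & \B(d,D,\chi)
	}
\]
identifies $p$ with the map from a balanced product down to an orbit space, so it is enough to show that $\I(d,\chi)\to\I(d,\chi)/G$ is a principal $G$-bundle; the associated bundle construction will then present $p$ as a fiber bundle with typical fiber $\F(d,D,\chi)$.

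First I would check that the left $G$-action on $\I(d,\chi)$ from \eqref{eq:U1_PUchi_left} is free: if $\mu W K W^{*} = K$ for a right-normalized injective $K$, then injectivity gives $\mu W \cdot W^{*} = \1$ on $M_\chi(\bbC)$, forcing $\mu\in \Unitary(1)$ and $W\in\Unitary(1)\1$, i.e.\ the trivial element of $G$. The space $\I(d,\chi)$ is locally closed in the finite-dimensional vector space $M_\chi(\bbC)^{d}$, hence locally compact Hausdorff (and metrizable), as already noted after Definition \ref{defn:Idchi}. Since $G$ is compact, the action is automatically proper. Gleason's/Koszul's slice theorem for free proper actions of a compact Lie group on a completely regular space then produces local cross-sections for $\I(d,\chi)\to\I(d,\chi)/G$, which is exactly the statement that this quotient is a principal $G$-bundle. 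Applying the associated bundle functor yields that $p$ is a fiber bundle with fiber $\F(d,D,\chi)$.

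For the base: $\I(d,\chi)$ is metrizable, and the orbit space of a metrizable space under a proper action of a compact Lie group is again metrizable (for instance because the orbit map is closed and one can average a compatible metric over the Haar measure of $G$). Thus $\I(d,\chi)/G$ is metrizable, hence paracompact, and Hausdorffness has already been established in Corollary \ref{cor:B_Hausdorff}. Transporting across the homeomorphism, $\B(d,D,\chi)$ is paracompact Hausdorff. Finally, a classical theorem of Hurewicz (and Huebsch) asserts that any fiber bundle with paracompact base is a Hurewicz fibration, which concludes the proof.

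The only step that requires more than a citation is the verification that the compact group $G$ acts freely; the principal bundle structure, the associated bundle, and the upgrade to a Hurewicz fibration are all off-the-shelf once freeness and paracompactness are in hand, and I expect no further obstacle.
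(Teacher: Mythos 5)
Your proof is correct and follows essentially the same route as the paper's: identify $p$ with the associated-bundle projection via Theorems~\ref{thm:E_d_D_chi_identification} and~\ref{thm:B_homeomorphism}, get the principal bundle structure from the free action of the compact Lie group $\Unitary(1)\times\PU(\chi)$ on the completely regular space $\I(d,\chi)$, and invoke Hurewicz--Huebsch at the end. The only cosmetic difference is in establishing paracompactness of $\I(d,\chi)/G$: you pass through metrizability of the orbit space, while the paper cites Michael's theorem that closed continuous images of paracompact Hausdorff spaces are paracompact Hausdorff — both are standard and equally valid here.
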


\begin{proof}
Since $\Unitary(1) \times \PU(\chi)$ is a compact Lie group acting continuously and freely on $\I(d, \chi)$, we know $\I(d, \chi) \rightarrow \I(d, \chi)/(\Unitary(1) \times \PU(\chi))$ is a fiber bundle. It follows that $\F(d, D, \chi) \times_{\Unitary(1) \times \PU(\chi)} \I(d, \chi) \rightarrow \I(d, \chi)/(\Unitary(1) \times \PU(\chi))$ is a fiber bundle.  That $p$ is a fiber bundle then follows by Theorem \ref{thm:B_homeomorphism}. 

The base space $\B(d, D, \chi)$ is homeomorphic to $\I(d, \chi)/\qty(\Unitary(1) \times \bbP\Unitary(\chi))$ by Theorem \ref{thm:B_homeomorphism}. The space $\I(d, \chi)$ is metrizable as it is a subspace of $M_\chi(\bbC)^d$, hence it is paracompact Hausdorff. Since $\Unitary(1) \times \bbP\Unitary(\chi)$ is compact, the projection map $\I(d, \chi) \rightarrow \I(d, \chi)/\qty(\Unitary(1) \times \bbP\Unitary(\chi))$ is closed. The result now follows since the image of a paracompact Hausdorff space under a continuous closed map is paracompact Hausdorff \cite{Michael}.
\end{proof}

\begin{cor}\label{cor:Serre_fibration}
The map $p:\E(\chi) \rightarrow \B(\chi)$ has the homotopy lifting property with respect to all compact spaces. In particular, it is a Serre fibration.
\end{cor}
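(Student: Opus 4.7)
The plan is to reduce the lifting problem on the colimit $\E(\chi) \to \B(\chi)$ to a lifting problem at a finite stage $\E(d,D,\chi) \to \B(d,D,\chi)$, where the previous corollary gives us a Hurewicz fibration.

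Concretely, suppose we are given a compact space $K$ together with a commutative square
\[
\begin{tikzcd}
K \arrow[r,"f"] \arrow[d,"i_0"'] & \E(\chi) \arrow[d,"p"] \\
K \times [0,1] \arrow[r,"H"'] & \B(\chi)
\end{tikzcd}
\]
where $i_0(x) = (x,0)$. The images $f(K) \subset \E(\chi)$ and $H(K \times [0,1]) \subset \B(\chi)$ are compact. Recall from Remark~\ref{rem:topology_discussion} that $\E(\chi)$ is the sequential colimit of the spaces $\E(n,n,\chi)$ along closed embeddings of metrizable (in particular Hausdorff) spaces, and by Remark~\ref{rem:B_space_topologies} the same is true for $\B(\chi)$ with stages $\B(n,n,\chi)$. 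The standard fact that a compact subset of such a sequential colimit lies in some stage then gives integers $d_1,D_1$ and $d_2,D_2$ with $f(K) \subset \E(d_1,D_1,\chi)$ and $H(K \times [0,1]) \subset \B(d_2,D_2,\chi)$. Setting $d = \max(d_1,d_2)$ and $D = \max(D_1,D_2)$, both inclusions hold at stage $(d,D)$.

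Now $p \colon \E(d,D,\chi) \to \B(d,D,\chi)$ is a Hurewicz fibration by the preceding corollary, so it has the homotopy lifting property with respect to the space $K$. Restricting $f$ and $H$ to the stage $(d,D)$ produces a lift $\tilde H \colon K \times [0,1] \to \E(d,D,\chi)$ with $p \circ \tilde H = H$ and $\tilde H \circ i_0 = f$. Post-composing with the inclusion $\E(d,D,\chi) \hookrightarrow \E(\chi)$ solves the original lifting problem. Since every CW pair $(D^n \times [0,1], D^n \times \{0\})$ has compact total space, this in particular shows that $p \colon \E(\chi) \to \B(\chi)$ is a Serre fibration.

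The one nontrivial input is the statement ``compact subsets of a sequential colimit along closed embeddings of Hausdorff spaces lie in some finite stage,'' so this is where I expect the real work to sit. The standard argument is: if a compact $C$ in the colimit met every $\E(n,n,\chi)$ non-trivially (outside the previous stage), one could pick a sequence of points $x_n \in C \cap \E(n,n,\chi)$ with $x_n \notin \E(n-1,n-1,\chi)$, and the set $S = \{x_n : n \in \bbN\}$ would be closed in each stage (using Hausdorffness and closedness of the embeddings), hence closed in $\E(\chi)$; the same applies to every subset of $S$, so $S$ is an infinite discrete closed subset of the compact set $C$, a contradiction. This is where the careful point-set hypotheses set up in Remarks~\ref{rem:topology_discussion} and \ref{rem:B_space_topologies} pay off, and once this step is granted the rest of the argument is purely formal.
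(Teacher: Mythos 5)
Your proof is correct and takes essentially the same approach as the paper: the paper also factors the top and bottom arrows through finite stages $\E(d,D,\chi)$ and $\B(d,D,\chi)$ using compactness, applies the Hurewicz fibration from the preceding corollary, and composes the lift with the inclusion. The only difference is that you spell out the "compact subsets of a sequential colimit along closed embeddings of $T_1$ spaces lie in a finite stage" argument, which the paper leaves implicit; this is a reasonable thing to make explicit and your sketch of it is the standard one.
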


\begin{proof}
Let $X$ be a compact space and assume to be given continuous maps making the diagram below commute.
\[
\begin{tikzcd}
X \times \qty{0} \arrow[r]\arrow[d] & \E(\chi) \arrow[d,"p"]\\
X \times [0,1] \arrow[r] & \B(\chi)
\end{tikzcd}
\]
Since $X$ is compact, the horizontal arrows factor through spaces $\E(d, D, \chi)$ and $\B(d, D, \chi)$. Since $p: \E(d, D, \chi) \rightarrow \B(d, D, \chi)$ is a Hurewicz fibration, we get a map $X \times [0,1] \rightarrow \E(d, D, \chi)$ making the diagram below commute.
\[
\begin{tikzcd}
X \times \qty{0} \arrow[r]\arrow[d] &\E(d, D, \chi) \arrow[r,hook] \arrow[d,"p"]& \E(\chi) \arrow[d,"p"]\\
X \times [0,1] \arrow[r]\arrow[ur,dashed] &\B(d, D, \chi) \arrow[r,hook]& \B(\chi)
\end{tikzcd}
\]
The composition of the dashed arrow with the inclusion $\E(d, D, \chi) \hookrightarrow \E(\chi)$ is then the required lifting of the homotopy $X \times [0,1] \rightarrow \B(\chi)$. 
\end{proof}

Let us consider the colimit $\B(\chi)$ further and identify it with a colimit of $\I(d, \chi)$ spaces. Observe that if $d \leq d'$ then $\I(d, \chi)$ has an obvious closed embedding into $\I(d', \chi)$ by adding zero matrices. We define 
\[
\I(\chi) \defeq \colim_{d} \I(d, \chi).
\]
The action of $\Unitary(1) \times \PU(\chi)$ commutes with the maps $\I(d, \chi) \hookrightarrow \I(d', \chi)$, so we get a group action of $\Unitary(1) \times \PU(\chi)$ on $\I(\chi)$. Compactness of $\Unitary(1) \times \PU(\chi)$ ensures that $\Unitary(1) \times \PU(\chi) \times \I(\chi) = \colim_d \Unitary(1) \times \PU(\chi) \times \I(d, \chi)$, from which it follows that the group action on $\I(\chi)$ is continuous. It is clear that the group action on $\I(\chi)$ is still free.

We note that $\I(\chi)$ is paracompact Hausdorff since it is a colimit of paracompact Hausdorff spaces along closed embeddings. Also, $\I(\chi)/(\Unitary(1) \times \PU(\chi))$ is paracompact Hausdorff since it is the image of a paracompact Hausdorff space under a continuous closed map.

The quotient topology on $\I(\chi)/(\Unitary(1) \times \PU(\chi))$ coincides with the final topology induced by the inclusions $\I(d, \chi)/(\Unitary(1) \times \PU(\chi)) \hookrightarrow \I(\chi)/(\Unitary(1) \times \PU(\chi))$. Moreover, if $d \leq d'$ and $\chi \leq D \leq D'$, then we have a commutative square:
\[
\begin{tikzcd}
\I(d, \chi)/(\Unitary(1) \times \PU(\chi)) \rar \dar & \I(d', \chi)/(\Unitary(1) \times \PU(\chi)) \dar\\
\B(d, D, \chi) \rar & \B(d', D', \chi)
\end{tikzcd}
\]
where the vertical arrows are the homeomorphisms from Theorem \ref{thm:B_homeomorphism}. The following proposition is now immediate.

\begin{prop}\label{prop:Bchihomeotype}
For any $\chi \in \bbN$, the space $\B(\chi)$ is homeomorphic to $\I(\chi)/(\Unitary(1) \times \PU(\chi))$. 
\end{prop}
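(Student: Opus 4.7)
The proof plan is essentially bookkeeping: assemble the family of homeomorphisms produced by Theorem \ref{thm:B_homeomorphism} into a single homeomorphism on colimits.

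First, recall that by Theorem \ref{thm:B_homeomorphism} there is, for every pair $(d,D)$ with $\chi\leq D$, a homeomorphism
\[
\varphi_{d,D}\colon \I(d,\chi)/(\Unitary(1)\times\PU(\chi)) \;\xrightarrow{\;\cong\;}\; \B(d,D,\chi).
\]
The commutative square displayed immediately before the statement of Proposition \ref{prop:Bchihomeotype} shows that these homeomorphisms are compatible with the inclusions $\I(d,\chi)\hookrightarrow \I(d',\chi)$ and $\B(d,D,\chi)\hookrightarrow\B(d',D',\chi)$. In other words, $\{\varphi_{d,D}\}$ is a natural isomorphism between two diagrams of topological spaces indexed by the poset $\{(d,D): d\in\bbN,\ \chi\leq D\}$.

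Next, I would identify the colimits of each side. On the target, Remark \ref{rem:B_space_topologies} tells us that $\B(\chi)=\colim_{d,D}\B(d,D,\chi)$ with the final topology induced by the closed embeddings $\B(d,D,\chi)\hookrightarrow\B(\chi)$. On the source, the quotient $\I(d,\chi)/(\Unitary(1)\times\PU(\chi))$ is independent of $D$, so the two-parameter colimit collapses to the one-parameter colimit $\colim_d \I(d,\chi)/(\Unitary(1)\times\PU(\chi))$. The discussion preceding the proposition already notes that this last space is $\I(\chi)/(\Unitary(1)\times\PU(\chi))$ endowed with its quotient topology, since $\Unitary(1)\times\PU(\chi)$ is compact and the inclusions $\I(d,\chi)\hookrightarrow\I(d',\chi)$ are closed embeddings compatible with the group action.

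Finally, applying the universal property of colimits to the natural isomorphism $\{\varphi_{d,D}\}$ produces a well-defined continuous bijection
\[
\varphi\colon \I(\chi)/(\Unitary(1)\times\PU(\chi)) \;\longrightarrow\; \B(\chi),
\]
whose inverse arises similarly from $\{\varphi_{d,D}^{-1}\}$ and is likewise continuous by the universal property. Under $\varphi$, the class of $K\in\I(d,\chi)$ is sent to $p\bigl(\begin{smallmatrix}K&0\\0&0\end{smallmatrix}\bigr)$, consistent with the formula at each finite stage. There is no real obstacle here; the only point requiring care is verifying that the two colimit topologies really do agree with the topologies the authors have placed on $\B(\chi)$ and $\I(\chi)/(\Unitary(1)\times\PU(\chi))$, and this has already been established in the lead-up to the proposition. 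Hence the proof reduces to invoking the universal property of colimits applied to a natural isomorphism.
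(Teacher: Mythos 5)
Your proposal is correct and takes essentially the same approach as the paper: the paper sets up the commutative square of homeomorphisms from Theorem \ref{thm:B_homeomorphism}, observes the colimit topologies on each side agree with the given ones, and then declares the proposition ``immediate''; you have simply spelled out the colimit bookkeeping the paper leaves implicit.
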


\begin{thm}
The space $\I(\chi)$ is contractible, hence $\B(\chi)$ is a classifying space $B(\Unitary(1) \times \PU(\chi)) = B\Unitary(1) \times B\PU(\chi)$.
\end{thm}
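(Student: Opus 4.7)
The main content is a direct construction of a null-homotopy of $\id_{\I(\chi)}$; once contractibility is established, Proposition~\ref{prop:Bchihomeotype} reduces the classifying-space claim to the fact that the compact Lie group $G \defeq \Unitary(1) \times \PU(\chi)$ acts freely on a contractible space. The strategy is a restricted version of the argument in Theorem~\ref{thm:E_contractible}: first shift every tensor into high physical indices to free up $\chi^2$ slots, then rotate each shifted tensor into a fixed basepoint occupying those slots. For the basepoint, identify $\{1,\ldots,\chi^2\}$ with $\{(a,b):1\le a,b\le \chi\}$ and set $K_*^{(a,b)} = \chi^{-1/2}\,\ket{a}\bra{b}$. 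Then $\sum_i K_*^i K_*^{i*} = \1$ and the $K_*^i$ span $M_\chi(\bbC)$, so $K_* \in \I(\chi^2, \chi) \subset \I(\chi)$.

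For phase one, apply Lemma~\ref{lem:EH_contract_first_homotopy} with $\phi_1(n) = n + \chi^2$ and $\phi_2 = \id_\bbN$, so that $\Delta(t) \equiv \1$ and the homotopy acts only on the physical index. By Proposition~\ref{prop:image_iso_transform} together with $\Gamma(t)^*\Gamma(t) = \1$, right-normalization and injectivity are preserved at every $t$, so the restriction yields a continuous homotopy $\tilde E:\I(\chi)\times [0,1] \to \I(\chi)$ from $\id$ to a shift map $E$ whose image consists of tensors with $K^i = 0$ for all $i \le \chi^2$. For phase two, on this image define
\begin{equation*}
G(K, s)^i = \begin{cases} \cos(\pi s/2)\, K^i, & i > \chi^2,\\ \sin(\pi s/2)\, K_*^i, & i \le \chi^2.\end{cases}
\end{equation*}
Because the supports of $K$ and $K_*$ are disjoint, $\sum_i G(K,s)^i G(K,s)^{i*} = \cos^2(\pi s/2)\,\1 + \sin^2(\pi s/2)\,\1 = \1$, and for $s > 0$ the tensor already contains $\sin(\pi s/2)\,K_*$ whose components span $M_\chi(\bbC)$, while at $s = 0$ it equals the injective tensor $K$. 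Concatenating $\tilde E$ with $G \circ E$ produces a continuous contraction $H:\I(\chi) \times [0,1] \to \I(\chi)$ from $\id$ to the constant map at $K_*$.

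For the classifying-space statement, Proposition~\ref{prop:Bchihomeotype} identifies $\B(\chi)$ with $\I(\chi)/G$. Each finite-stage quotient $\I(d,\chi) \to \I(d,\chi)/G$ is a principal $G$-bundle since $G$ is a compact Lie group acting freely (as noted earlier in Section~\ref{sec:fixed_chi}), and this bundle structure passes to the sequential colimit along the $G$-equivariant closed embeddings. Combined with contractibility of $\I(\chi)$, this gives $\B(\chi) \simeq B(\Unitary(1) \times \PU(\chi)) \simeq B\Unitary(1) \times B\PU(\chi)$. The main obstacle I anticipate is purely point-set: checking that $H$ is continuous out of the colimit topology on $\I(\chi) \times [0,1]$, which requires $-\times[0,1]$ to commute with the sequential colimit $\I(\chi) = \colim_d \I(d,\chi)$ of closed embeddings. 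This is standard because $[0,1]$ is compact Hausdorff, but it parallels colimit manipulations already carried out carefully elsewhere in the paper. A secondary technical point is verifying that the principal $G$-bundle structure genuinely persists under the colimit.
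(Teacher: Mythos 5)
Your proposal is correct and follows essentially the same two-phase argument as the paper: first shift every tensor into physical indices greater than $\chi^2$ via the isometry path of Proposition~\ref{prop:isometry_path} with $\phi(n)=n+\chi^2$, then interpolate to a fixed right-normalized injective basepoint occupying the freed slots. The only differences are cosmetic — you parametrize the second homotopy with $\sin$ and $\cos$ rather than the paper's $\sqrt{2t-t^2}$ and $1-t$, and you pick the concrete basis $K_*^{(a,b)}=\chi^{-1/2}\ket{a}\bra{b}$ where the paper allows any basis $E_i$ of $M_\chi(\bbC)$ with $\sum_i E_iE_i^*=\1$; and the paper constructs the shift homotopy directly on $\I(\chi)$ (citing the proof pattern of Proposition~\ref{prop:image_iso_transform}) rather than restricting Lemma~\ref{lem:EH_contract_first_homotopy} from $\E$, which sidesteps the mild subspace-vs-colimit topology check your route entails.
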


\begin{proof}
The method of proof is similar to that in Section \ref{sec:contractibility}. Let $\phi:\bbN \rightarrow \bbN$ be the function $\phi(n) = n + \chi^2$ and let $\Gamma(t)$ be the $\bbN \times \bbN$ matrices associated with $\phi$ as in Proposition \ref{prop:isometry_path}. Then we have a map $B: \I(\chi) \times [0,1] \rightarrow \I(\chi)$ defined by
\[
B(K, t)^i = \sum_{j=1}^\infty \Gamma_{ij}(t)K^j.
\]
Then $B$ is well-defined and continuous by the same proof as in Proposition \ref{prop:image_iso_transform}. At $t = 0$, we have $B(K, 0) = K$ and at $t = 1$ we have $B(K,1)^i = 0$ for $i \leq \chi^2$ and $B(K, 1)^i = K^{i-\chi^2}$ for $i > \chi^2$.

Now we define a second homotopy as follows. Let $E_1,\ldots, E_{\chi^2}$ be a basis for $M_\chi(\bbC)$ such that $\sum_{i=1}^{\chi^2} E_iE_i^* = \1$. Define $C:\I(\chi) \times [0,1] \rightarrow \I(\chi)$ by
\[
C(K, t)^i = \begin{cases} \sqrt{2t - t^2} E_i &: i \leq \chi^2\\ (1-t)K^{i-\chi^2} &: i > \chi^2 \end{cases}
\]
Then $C$ is a well-defined continuous map such that $C(K, 0) = B(K, 1)$ and $C(K, 1)$ is independent of $K$. This proves that $\I(\chi)$ is contractible.
\end{proof}

We next identify the fibers $\F(\chi)$ more explicitly.
 Recall the fibers of the bundle $p:\E(d, D, \chi) \rightarrow \B(d, D, \chi)$ were
\[
\cF(d, D, \chi) = \Unitary(1) \times \qty(\Unitary(D) \times_{\Unitary(1)\times \Unitary(D - \chi)} M_{(D - \chi) \times \chi}(\bbC)^d)
\]
If $d' \geq d$ and $D' \geq D$, then the map $\cF(d, D, \chi) \rightarrow \cF(d', D', \chi)$ defined by
\[
(\lambda, [X, M]) \mapsto \qty(\lambda, \qty[\mqty(X & 0 \\0& \1), \mqty(M\\0)])
\]
is well-defined and continuous. These maps form a directed system and we define
\[
\cF(\chi) = \colim_{d, D} \cF(d, D, \chi).
\]
Note that each $\cF(d, D, \chi)$ is path-connected, from which it follows that $\cF(\chi)$ is path-connected.

\begin{proposition}\label{prop:fiber_homeomorphism}
Let $K$ be a right-normalized injective MPS of physical dimension $d$ and bond dimension $\chi$. The map
\[
\cF(d, D, \chi) \rightarrow p^{-1}(p(K)) \cap \E(d, D, \chi), \quad (\lambda, [X, M]) \mapsto \lambda X\mqty(K&0\\M&0)X^*
\]
is a well-defined homeomorphism. There exists a unique map $\cF(\chi) \rightarrow p^{-1}(p(K))$ such that the diagram below commutes.
\[
\begin{tikzcd}
\cF(d, D, \chi) \arrow[r]\arrow[d,hook] &  p^{-1}(p(K)) \cap \E(d, D, \chi)\arrow[d,hook]\\
\cF(\chi) \arrow[r] & p^{-1}(p(K))
\end{tikzcd}
\]
Moreover, $\cF(\chi) \rightarrow p^{-1}(p(K))$ is a homeomorphism.
\end{proposition}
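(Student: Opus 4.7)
The plan is to deduce this proposition as a fairly direct corollary of the identifications in Theorem \ref{thm:E_d_D_chi_identification} and Theorem \ref{thm:B_homeomorphism}. Under the homeomorphism
\[
\F(d,D,\chi) \times_{\Unitary(1) \times \PU(\chi)} \I(d,\chi) \;\xrightarrow{\ \cong\ }\; \E(d,D,\chi),
\]
the quotient map $p$ corresponds to the projection to the second factor, so the fiber over $p(K)$ is identified with $\F(d,D,\chi) \times_{\Unitary(1) \times \PU(\chi)} \mathcal{O}_K$, where $\mathcal{O}_K$ is the $\Unitary(1)\times \PU(\chi)$-orbit of $K$ in $\I(d,\chi)$. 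Freeness of the action on $\I(d,\chi)$ (used in the previous section) means that the map $\F(d,D,\chi) \to \F(d,D,\chi) \times_{\Unitary(1)\times \PU(\chi)} \mathcal{O}_K$ given by $f \mapsto [f, K]$ is a bijection. I will verify this is a homeomorphism by constructing an inverse: given $[(\nu,[X,M]),gK] \in \F(d,D,\chi) \times_{\Unitary(1)\times \PU(\chi)} \mathcal{O}_K$, use freeness to extract the unique $g \in \Unitary(1) \times \PU(\chi)$ and send the class to $(\nu,[X,M]).g$. Continuity of this inverse reduces to the fact that $\mathcal{O}_K \cong \Unitary(1)\times \PU(\chi)$ as a homogeneous space, which is compact, so the relevant quotient maps are closed. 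Composing with the formula from Theorem \ref{thm:E_d_D_chi_identification} yields exactly the assignment $(\lambda,[X,M]) \mapsto \lambda X \mqty(K & 0 \\ M & 0) X^*$, proving the first claim. Well-definedness at the level of balanced products is a direct matrix computation: for $(\lambda',Z) \in \Unitary(1)\times \Unitary(D-\chi)$ one checks
\[
X\mqty(\lambda'\1 & 0 \\ 0 & Z)\mqty(K & 0 \\ \lambda' Z^* M & 0)\mqty(\lambda'^*\1 & 0 \\ 0 & Z^*)X^* = X\mqty(K & 0 \\ M & 0)X^*.
\]

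For the colimit statement, compatibility of the homeomorphisms with the transition maps $\F(d,D,\chi) \hookrightarrow \F(d',D',\chi)$ and $\E(d,D,\chi) \hookrightarrow \E(d',D',\chi)$ is an immediate check from the formulas, so the maps assemble into a continuous bijection $\F(\chi) \to p^{-1}(p(K))$. The main issue is to show this is a homeomorphism, i.e.\ that the colimit topology on $p^{-1}(p(K))$ coming from the exhaustion by $p^{-1}(p(K)) \cap \E(d,D,\chi)$ agrees with the subspace topology it inherits from $\E(\chi)$.

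This is where the point-set setup from Remark \ref{rem:topology_discussion} pays off. Each inclusion $p^{-1}(p(K)) \cap \E(d,D,\chi) \hookrightarrow p^{-1}(p(K)) \cap \E(d',D',\chi)$ is a closed embedding, because its target is closed in $\E(d',D',\chi)$ (as the preimage of the Hausdorff point $p(K)$ under the continuous map $p$, using Corollary \ref{cor:B_Hausdorff}) and the inclusion $\E(d,D,\chi) \hookrightarrow \E(d',D',\chi)$ is a closed embedding. The fiber $p^{-1}(p(K))$ is itself closed in $\E(\chi)$ by Hausdorffness of $\B(\chi)$ (Corollary \ref{cor:B_colimits_Hausdorff}). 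I will then invoke Proposition \ref{prop:subspace_colimit_equal} (already used in Remark \ref{rem:topology_discussion} for an analogous purpose) to conclude that on a closed subspace of the sequential colimit $\E(\chi) = \colim_n \E(n,n,\chi)$, the subspace and colimit topologies coincide. Since $\F(\chi)$ is by definition the colimit of the $\F(d,D,\chi)$, this identifies $\F(\chi)$ with $p^{-1}(p(K))$ as topological spaces.

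The main obstacle I anticipate is the last topological identification: making sure one has enough cofinality and closed-embedding structure to move freely between subspace and colimit topologies. The algebra and the fiberwise homeomorphisms are essentially forced by the constructions already in place.
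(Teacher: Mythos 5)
Your proof is correct and takes essentially the same route as the paper: identify the fiber via the homeomorphisms of Theorems \ref{thm:E_d_D_chi_identification} and \ref{thm:B_homeomorphism}, then pass to the colimit using Proposition \ref{prop:subspace_colimit_equal}. You supply more detail than the paper on the first step — you explicitly show that the balanced-product fiber $\cF(d,D,\chi)\times_{\Unitary(1)\times\PU(\chi)}\mathcal{O}_K$ over $[K]$ is homeomorphic to $\cF(d,D,\chi)$ via $f\mapsto[f,K]$, using freeness and compactness of $\Unitary(1)\times\PU(\chi)$, whereas the paper simply says the map is ``the restriction of the homeomorphism to the fiber''; and your observation that $p^{-1}(p(K))$ is closed (hence compactly generated) in $\E(\chi)$ is exactly the hypothesis check needed to invoke Proposition \ref{prop:subspace_colimit_equal}, which the paper leaves implicit.
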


\begin{proof}
The map $\cF(d, D, \chi) \rightarrow p^{-1}(p(K)) \cap \E(d, D, \chi)$ is a homeomorphism because it is the restriction of the homeomorphism in Theorem \ref{thm:E_d_D_chi_identification} to the fiber over $p(K)$ with respect to the diagram in Theorem \ref{thm:B_homeomorphism}. The maps $\cF(d, D, \chi) \rightarrow p^{-1}(p(K))$ commute with the maps $\cF(d, D, \chi) \rightarrow \cF(d', D', \chi)$ and therefore there exists a unique continuous map $\cF(\chi) \rightarrow p^{-1}(p(K))$ making the diagram above commute. It is a homeomorphism since $p^{-1}(p(K))$ is the colimit of the inclusions $p^{-1}(p(K)) \cap \E(d, D, \chi) \hookrightarrow p^{-1}(p(K))$ by Proposition \ref{prop:subspace_colimit_equal}.
\end{proof}

Note that the spaces $\cF(d, D, \chi)$ are paracompact Hausdorff and the maps $\cF(d, D, \chi) \rightarrow \cF(d', D', \chi)$ are closed embeddings as they are the restrictions of the closed embeddings $\E(d, D, \chi) \rightarrow \E(d', D', \chi)$ to the maps $p^{-1}(p(K)) \cap \E(d, D, \chi) \rightarrow p^{-1}(p(K)) \cap \E(d', D', \chi)$.  Thus $\cF(\chi)$ is paracompact Hausdorff and the inclusions $\cF(d, D, \chi) \hookrightarrow \cF(\chi)$ are closed embeddings.

We show that $\cF(\chi)$ is weakly homotopy equivalent to $\Unitary(1) \times \PV_{\chi}(\bbC^\infty)$. Here $V_{\chi}(\bbC^D)$ is the Stiefel manifold of orthonormal $\chi$-frames in $\bbC^D$, $V_{\chi}(\bbC^\infty)$ is the colimit
\[
V_{\chi}(\bbC^\infty) = \colim_{D} V_{\chi}(\bbC^D),
\]
and 
\[
\PV_{\chi}(\bbC^\infty) = V_{\chi}(\bbC^\infty)/\Unitary(1) = \colim_D \PV_{\chi}(\bbC^D),
\]
where $\Unitary(1)$ acts by multiplication on every vector in a $\chi$-frame.

\begin{proposition}\label{prop:Stiefel_weak_homotopy_equiv}
The map
\[
\cF(\chi) \rightarrow \Unitary(1) \times \PV_{\chi}(\bbC^\infty), \quad (\lambda, [X, M]) \mapsto (\lambda, [X_1,\ldots, X_{\chi}])
\]
is a homotopy equivalence, where $X_1,\ldots, X_{\chi}$ are the first $\chi$ columns of $X$. Consequently, $\cF(\chi)$ is homotopy equivalent to $\Unitary(1) \times B\Unitary(1)$.
\end{proposition}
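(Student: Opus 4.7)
The plan is to view $\F(d, D, \chi)$ (modulo the $\Unitary(1)$ factor) as the total space of an associated vector bundle over $\PV_\chi(\bbC^D)$, produce an explicit strong deformation retraction onto the zero section at each finite stage $(d, D)$, and assemble these into a deformation retraction on the colimit $\F(\chi)$. The key observation is that the balanced product
\[
\Unitary(D) \times_{\Unitary(1) \times \Unitary(D - \chi)} M_{(D-\chi)\times\chi}(\bbC)^d
\]
is the vector bundle over $\Unitary(D)/(\Unitary(1) \times \Unitary(D-\chi)) = \PV_\chi(\bbC^D)$ associated to the representation $(\mu, Z) \mapsto \mu^* Z$ of $\Unitary(1) \times \Unitary(D - \chi)$ on $M_{(D-\chi)\times\chi}(\bbC)^d$, and that the map in the proposition is precisely the bundle projection, stabilized in $d$ and $D$.

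First I verify well-definedness and compatibility with the directed system. Under the equivalence $(X, M) \sim (X \, \mathrm{diag}(\mu \1, Z),\, \mu^* Z M)$, the first $\chi$ columns of $X$ are uniformly rescaled by the common phase $\mu$, so their class in $\PV_\chi(\bbC^D)$ is well-defined; continuity is clear. The assignment manifestly commutes with the inclusions $\F(d, D, \chi) \hookrightarrow \F(d', D', \chi)$ and $\PV_\chi(\bbC^D) \hookrightarrow \PV_\chi(\bbC^{D'})$ (both given by zero-padding), so it descends to a continuous map on the colimits. Next, I construct the deformation retraction by scaling the fibre coordinate:
\[
H_t(\lambda, [X, M]) = (\lambda, [X, (1-t)M]).
\]
This is well-defined on each $\F(d, D, \chi)$ because the equivalence relation is linear in $M$, is obviously continuous, fixes the subspace $\Unitary(1) \times \PV_\chi(\bbC^D)$ pointwise, and at $t = 1$ retracts onto it (the elements $(\lambda, [X, 0])$). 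Because scaling commutes with zero-padding, these homotopies are compatible with the colimit structure and assemble into a strong deformation retraction $H\colon \F(\chi) \times [0,1] \to \F(\chi)$ onto $\Unitary(1) \times \PV_\chi(\bbC^\infty)$, with $H_0 = \id$ and $H_1$ the homotopy inverse of the subspace inclusion, exhibiting the map of the proposition as a homotopy equivalence.

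For the concluding claim, $V_\chi(\bbC^\infty)$ is weakly contractible because each $V_\chi(\bbC^D)$ is $(2(D-\chi))$-connected, and the diagonal $\Unitary(1)$-action is free. The quotient $V_\chi(\bbC^\infty) \to \PV_\chi(\bbC^\infty)$ is a (numerable) principal $\Unitary(1)$-bundle, obtained as the colimit of the principal bundles $V_\chi(\bbC^D) \to \PV_\chi(\bbC^D)$, so $\PV_\chi(\bbC^\infty)$ is a model for $B\Unitary(1) \simeq K(\Z, 2)$, giving $\F(\chi) \simeq \Unitary(1) \times B\Unitary(1)$. The main obstacle in the argument is the point-set bookkeeping at the colimit stage: establishing that $H$ is jointly continuous on $\F(\chi) \times [0,1]$ while working in the category of all topological spaces. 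This is handled by the fact that the inclusions $\F(d, D, \chi) \hookrightarrow \F(\chi)$ are closed embeddings and $[0,1]$ is compact Hausdorff, so $\F(\chi) \times [0,1]$ agrees with $\colim_{d, D}(\F(d, D, \chi) \times [0, 1])$, following the same topological principles already deployed in Remark \ref{rem:topology_discussion} and Section \ref{sec:contractibility}.
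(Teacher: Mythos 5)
Your proof is correct and follows essentially the same route as the paper's: both use the linear homotopy $[X,M]\mapsto[X,tM]$ to retract $\cF(\chi)$ onto the subspace $\{(\lambda,[X,0])\}\cong\Unitary(1)\times\PV_\chi(\bbC^\infty)$, with the joint continuity at the colimit stage handled by compactness of $[0,1]$ and the closed-embedding structure of the directed system. Your reformulation as a strong deformation retraction (vs. the paper's explicit two-sided homotopy inverse) and the extra detail on the connectivity of $V_\chi(\bbC^D)$ are cosmetic; the argument is the same.
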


\begin{proof}
For $d, D \in \bbN$, we have a continuous map going the other way defined by
\[
\Unitary(1) \times \PV_{\chi}(\bbC^D) \rightarrow \cF(d, D, \chi), \quad (\lambda, [X_1,\ldots, X_{\chi}]) \mapsto (\lambda, [X, 0]),
\]
where $X \in \Unitary(D)$ is any unitary whose first $\chi$ columns are $X_1,\ldots, X_{\chi}$. This is well-defined and continuous. For $d < d'$ and $D < D'$, the maps above commute with the inclusions $\Unitary(1) \times \bbP V_{\chi}(\bbC^D) \rightarrow \Unitary(1) \times \bbP V_{\chi}(\bbC^{D'})$ and $\cF(d, D, \chi) \rightarrow \cF(d', D', \chi)$ and therefore define a continuous function $\Unitary(1) \times \PV_{\chi}(\bbC^\infty) \rightarrow \cF(\chi)$.  Composing this with the map $\cF(\chi) \rightarrow \Unitary(1) \times \PV_{\chi}(\bbC^\infty)$ in the proposition statement gives the identity on $\Unitary(1) \times \PV_{\chi}(\bbC^D)$ and composing them the other way gives the map
\[
\cF(\chi) \rightarrow \cF(\chi), \quad (\lambda, [X, M]) \mapsto (\lambda, [X, 0]).
\]
This map is homotopic to the identity via the homotopy
\[
\cF(\chi) \times [0,1] \rightarrow \cF(\chi), \quad (\lambda, [X, M]) \mapsto (\lambda, [X, tM]).
\]
Finally, we note that $ V_{\chi}(\bbC^\infty)$ is a contractible free $\Unitary(1)$ space so that $ \PV_{\chi}(\bbC^\infty)$ is a $B\Unitary(1)$.
\end{proof}

We finish by giving a description of the weak homotopy type of $\E(\chi)$ and identifying the map $p$.

\begin{prop}\label{prop:BUchi}
There is a homotopy equivalence
\[ \E(\chi) \xrightarrow{\simeq}  V_\chi(\C^\infty) \times_{\Unitary(\chi)} \cI(\chi) \]
which fits in a commutative diagram
\[ \xymatrix{
\F(\chi) \ar[r] \ar[d]^-\simeq &  \E(\chi)   \ar[r]^-p \ar[d]_-\simeq  &  \B(\chi)    \ar[d]_-\simeq &    \\
\Unitary(1) \times \PV_{\chi}(\bbC^\infty) \ar[r]
&   V_\chi(\C^\infty) \times_{\Unitary(\chi)} \cI(\chi) \ar[r]    &  \cI(\chi)/(\Unitary(1) \times \PU(\chi))   
}\]
where the vertical equivalences for $\B(\chi)$ and  $\F(\chi)$ are those of Propositions~\ref{prop:Bchihomeotype} and \ref{prop:Stiefel_weak_homotopy_equiv} respectivtely.
The space $\E(\chi)$ is thus a classifying space $B\Unitary(\chi)$ and the $p \colon \E(\chi) \to \B(\chi)$ fits in a diagram
\[
	\begin{tikzcd}
	\cF(\chi) \arrow[r, hook] \arrow[d]& \E(\chi) \arrow[rr,"p"]\arrow[d]&& \cB(\chi) \arrow[d]\\
	\Unitary(1) \times B\Unitary(1) \arrow[r] & B\Unitary(\chi) \arrow[r] & B\bbP\Unitary(\chi) \arrow[r] & B\Unitary(1) \times B\bbP \Unitary(\chi)
	\end{tikzcd}
\]
where the vertical arrows are homotopy equivalences, the map $B\Unitary(\chi)  \to  B\PU(\chi) $ is the map on classifying spaces induced by the group quotient and the map $ B\PU(\chi) \to B(\Unitary(1)\times \PU(\chi)) $ is induced by the inclusion of $\PU(\chi)$ in $\Unitary(1)\times \PU(\chi)$.
\end{prop}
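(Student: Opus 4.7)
The plan is to reduce everything to the balanced-product descriptions already established and to transport the equivalence of Proposition~\ref{prop:Stiefel_weak_homotopy_equiv} through them equivariantly. By Theorem~\ref{thm:E_d_D_chi_identification}, for every $d$ and $D$ one has a canonical homeomorphism $\E(d,D,\chi) \cong \cF(d,D,\chi) \times_{\Unitary(1)\times\PU(\chi)} \cI(d,\chi)$, and a direct check shows that the maps and the homotopy $(\nu,[X,M]) \mapsto (\nu,[X,tM])$ appearing in the proof of Proposition~\ref{prop:Stiefel_weak_homotopy_equiv} are $(\Unitary(1)\times\PU(\chi))$-equivariant with respect to the right action on $\Unitary(1)\times \PV_\chi(\bbC^D)$ defined by $(\nu,[v])\cdot(\mu,W) = (\nu\mu^*,[vW])$. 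Applying the balanced product with $\cI(d,\chi)$ therefore yields a homotopy equivalence
\[ \E(d,D,\chi) \;\simeq\; \bigl(\Unitary(1)\times \PV_\chi(\bbC^D)\bigr) \times_{\Unitary(1)\times\PU(\chi)} \cI(d,\chi). \]

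Next I would absorb the leading $\Unitary(1)$ factor, using that it acts freely on the first coordinate while acting by scalar multiplication on $\cI(d,\chi)$; the standard identification $G \times_G Y \cong Y$ gives
\[ \bigl(\Unitary(1)\times \PV_\chi(\bbC^D)\bigr) \times_{\Unitary(1)\times\PU(\chi)} \cI(d,\chi) \;\cong\; \PV_\chi(\bbC^D) \times_{\PU(\chi)} \cI(d,\chi) \;=\; V_\chi(\bbC^D) \times_{\Unitary(\chi)} \cI(d,\chi), \]
where the last equality uses triviality of the conjugation action of the center $\Unitary(1) \subset \Unitary(\chi)$ on $\cI(d,\chi)$. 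Because all spaces involved are paracompact Hausdorff compactly generated spaces, the relevant inclusions are closed embeddings, and the structure group $\Unitary(\chi)$ is compact, the sequential colimit over $d$ and $D$ commutes with the balanced product, producing $\E(\chi) \simeq V_\chi(\bbC^\infty) \times_{\Unitary(\chi)} \cI(\chi)$. Since $\cI(\chi)$ is contractible, as established earlier in this section, the projection to $V_\chi(\bbC^\infty)/\Unitary(\chi) = B\Unitary(\chi)$ is a fiber bundle with contractible fibers and hence a weak equivalence. The same contractibility, together with Proposition~\ref{prop:Bchihomeotype} and freeness of the $\Unitary(1)\times\PU(\chi)$-action, yields $\B(\chi) \simeq B\Unitary(1)\times B\PU(\chi)$, while $\cF(\chi) \simeq \Unitary(1)\times B\Unitary(1)$ is exactly Proposition~\ref{prop:Stiefel_weak_homotopy_equiv}.

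Under these identifications the map $p$ is the two-step composition
\[ V_\chi(\bbC^\infty) \times_{\Unitary(\chi)} \cI(\chi) \;\to\; \cI(\chi)/\PU(\chi) \;\to\; \cI(\chi)/(\Unitary(1)\times\PU(\chi)), \]
where the first arrow forgets the frame and the second divides by scalar multiplication on the MPS tensor. On the Borel-construction models these correspond respectively to $B\Unitary(\chi) \to B\PU(\chi)$ induced by the quotient homomorphism $\Unitary(\chi) \to \PU(\chi)$ and to $B\PU(\chi) \to B\Unitary(1)\times B\PU(\chi)$ induced by the subgroup inclusion $\PU(\chi) \hookrightarrow \Unitary(1)\times\PU(\chi)$, which completes the diagram. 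The main obstacle throughout is the careful bookkeeping of the two commuting actions; once equivariance of the Stiefel equivalence of Proposition~\ref{prop:Stiefel_weak_homotopy_equiv} and the compatibility of the sequential colimit with the balanced product are checked, every remaining step is formal and relies only on standard facts about actions of compact Lie groups on well-behaved spaces.
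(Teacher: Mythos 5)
Your argument is essentially the paper's own proof: balanced-product decomposition via Theorem~\ref{thm:E_d_D_chi_identification}, equivariance of the Stiefel equivalence, absorbing the $\Unitary(1)$ factor, passing from $\PU(\chi)$ to $\Unitary(\chi)$ balanced products, commuting the colimit with the balanced product, and reading off the maps on Borel constructions.

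Two point-set steps need a slightly sharper justification since the paper deliberately works in \emph{all} topological spaces rather than a convenient category. First, the identification $\PV_\chi(\bbC^D)\times_{\PU(\chi)}\cI(d,\chi)\cong V_\chi(\bbC^D)\times_{\Unitary(\chi)}\cI(d,\chi)$: triviality of the central action only gives a continuous bijection; the paper upgrades it to a homeomorphism using local compactness of $\cI(d,\chi)$ to see the map is a quotient map. Second, the colimit--product interchange is not a formal consequence of compactness of the structure group $\Unitary(\chi)$; what the paper actually uses (citing Hirai et al., Theorem 4.1) is compactness of the Stiefel manifolds $V_\chi(\bbC^D)$ together with local compactness of $\cI(d,\chi)$ to conclude $\colim_{d,D}\bigl(V_\chi(\bbC^D)\times\cI(d,\chi)\bigr)\cong V_\chi(\bbC^\infty)\times\cI(\chi)$ before dividing by $\Unitary(\chi)$. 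With those two points made precise, your argument coincides with the paper's.
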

\begin{proof}We have shown that
\[\E(\chi) = \colim_{d,D} \cF(d,D,\chi)\times_{\Unitary(1) \times \PU(\chi) } \cI(d,\chi). \]
The maps of Proposition~\ref{prop:Stiefel_weak_homotopy_equiv} induce homotopy equivalences
\[\cF(d,D,\chi) \to \Unitary(1) \times \PV_\chi(\C^D)  \]
and they are equivariant if we give $ \Unitary(1) \times \PV_\chi(\C^D) $ the right action of $\Unitary(1) \times \PU(\chi)$ given by
\begin{equation}\label{eq:U1_PUchi_right2}
(\nu, X).(\mu, W) = \qty(\nu \mu^*, X\mqty(W & 0 \\0 & \1)).
\end{equation}
It follows that
\begin{align}\label{eq:Echiequivs}
\E(\chi) &\simeq  \colim_{d,D}  \ ( \Unitary(1) \times \PV_\chi(\C^D))\times_{\Unitary(1) \times \PU(\chi) } \cI(d,\chi)   \\
&\cong  \colim_{d,D}  \PV_\chi(\C^D)\times_{ \PU(\chi) } \cI(d,\chi)  \nonumber \\
&\cong  \colim_{d,D}  V_\chi(\C^D)\times_{ \Unitary(\chi) } \cI(d,\chi)  \nonumber \\
&\cong  \left(\colim_{d,D}  V_\chi(\C^D) \times \cI(d,\chi)\right)/\Unitary(\chi) .  
\end{align}
The isomorphism in the second line of the display is induced by the bottom horizontal maps in the diagram,
\[ \xymatrix{   
\PV_\chi(\C^D)\times \cI(d,\chi)  \ar[r] \ar[d] &    \Unitary(1) \times \PV_\chi(\C^D)\times \cI(d,\chi) \ar[d] \\
\PV_\chi(\C^D)\times_{ \PU(\chi) } \cI(d,\chi)   \ar[r]^-\cong &   ( \Unitary(1) \times \PV_\chi(\C^D))\times_{\Unitary(1) \times \PU(\chi) } \cI(d,\chi) }
  \]
where the top map sends $(X, K)$ to $(1, X, K)$. 
The isomorphism in the third line of the display is justified by using the fact that $\cI(d,\chi)$ is locally compact Hausdorff to show that the canonical continuous bijection 
\[V_\chi(\C^D)\times_{ \Unitary(\chi) } \cI(d,\chi)  \rightarrow \PV_\chi(\C^D)\times_{ \PU(\chi) } \cI(d,\chi)\] 
obtained from the universal properties of the objects on both sides is a quotient map and thus a homeomorphism. 
Since $V_\chi(\C^D)$ is compact Hausdorff, and $ \cI(d,\chi)$ is locally compact Hausdorff, by \cite[Theorem 4.1]{Hirai_et_al}, we can commute the colimit with the product to get
\begin{equation}\label{eq:Echieq}
\E(\chi) \simeq  \left( V_\chi(\C^\infty) \times \cI(\chi)\right)/\Unitary(\chi)= V_\chi(\C^\infty) \times_{\Unitary(\chi)} \cI(\chi). 
\end{equation}
But $V_\chi(\C^\infty) \times \cI(\chi)$ is a product of contractible spaces so is contractible.
Hence, $\E(\chi)$ is the quotient of a contractible space by a free action of $\Unitary(\chi)$, so is a classifying space $B\Unitary(\chi)$.

We now examine the map $p\colon \E(\chi) \to \B(\chi)$. From Theorem~\ref{thm:B_homeomorphism} and our construction of the equivalence \eqref{eq:Echieq},  we have a commutative diagram:
\[ \xymatrix{
\F(\chi) \ar[r] \ar[d]^-\simeq &  \E(\chi)   \ar[r]^-p \ar[d]_-\simeq  &  \B(\chi)    \ar[d]_-\simeq &    \\
\Unitary(1) \times \PV_{\chi}(\bbC^\infty) \ar[r]
&   V_\chi(\C^\infty) \times_{\Unitary(\chi)} \cI(\chi) \ar[r]  \ar[d]       &  \cI(\chi)/(\Unitary(1) \times \PU(\chi))   
 \\
 &    \cI(\chi)/\PU(\chi)   \ar[ur] & 
}\]
The identification of each spaces and maps in terms of classifying spaces now follows.
\end{proof}

\section{The Quasifibration \texorpdfstring{$\E \rightarrow \B$}{EtoB}}\label{sec:quasifibration}

In this section we will finally prove that the projection $p:\E \rightarrow \B$ is a quasifibration. Recall that, by definition, this means that the induced homomorphism $p_*:\pi_n(\E, p^{-1}(b), x_0) \rightarrow \pi_n(\B, b)$ is an isomorphism for all $b \in \B$, $x_0 \in p^{-1}(b)$, and $n \geq 0$. For example, every Serre fibration with path-connected fibers is a quasifibration. Therefore, by Corollary \ref{cor:Serre_fibration} of the previous section and by the identification of the fibers in Proposition \ref{prop:fiber_homeomorphism} of this section, we know $\E(\chi) \rightarrow \B(\chi)$ is a quasifibration. Likewise, for any $S \subset \B(\chi)$, the restriction $p^{-1}(S) \rightarrow S$ is a Serre fibration with path-connected fibers and is therefore a quasifibration.

In the following we will use some crucial results due to Dold--Thom \cite{DoldThomQuasifibrations} reproduced below.
For the convenience of the reader we also refer to Appendix A in the textbook \cite{AguilarGitlerPrieto}
where these results have been summarized.
Before we can formulate the theorem we need the following definition:
\begin{definition}
Let $p\colon E \rightarrow B$ be a continuous map. We say that a subset $X \subset B$ is \emph{distinguished} if $X\subset p(E)$ and the restriction $p^{-1}(X) \rightarrow X$ is a quasifibration.
\end{definition}

\begin{theorem}
  \label{thm:qf_identify}
A continuous map $p:E \rightarrow B$  between Hausdorff topological spaces 
is a quasifibration if any one of the following conditions is satisfied:
\begin{enumerate}[(1)]
\item\label{ite:qf_open_sets} {\rm (\cite[Korollar, Satz 2.2]{DoldThomQuasifibrations} \& \cite[A.1.3]{AguilarGitlerPrieto})}
       $B$ can be covered by distinguished open sets $U$ and $V$ whose intersection $U\cap V$ is also distinguished.
	\item\label{ite:qf_union} {\rm (\cite[Satz 2.15]{DoldThomQuasifibrations} \& \cite[A.1.17]{AguilarGitlerPrieto})}
          The topological space $B$ is the colimit of an increasing sequence of distinguished subspaces
          \[ B_1 \subset B_2 \subset \ldots \  . \] 
	\item\label{ite:qf_deformation} {\rm (\cite[Hilfssatz 2.10]{DoldThomQuasifibrations} \& \cite[A.1.11]{AguilarGitlerPrieto})} $p$ is surjective and there exists a distinguished subspace $B'$ such that the following holds: \\
          Letting $E' = p^{-1}(B') \rightarrow B'$, there are homotopies $H_t \colon E \rightarrow E$ and  $h_t \colon B\rightarrow B$ such that
	\begin{align*}
	H_0 &=\id & H_t(E')&\subset E'  & H_1(E)&\subset E' \\
	h_0 &=\id &h_t(B')&\subset B'  & h_1(B)&\subset B' 
		\end{align*}
	and
	\begin{align*}
	 p\circ H_1&= h_1\circ p 
	\end{align*}
	 with the property that, for all $b\in B$, the restriction 
	 \[H_{1,b} \colon p^{-1}(b) \rightarrow p^{-1}(h_1(b))\] 
	 of $H_1$ to the fiber over $b$ is a weak homotopy equivalence.
\end{enumerate}
\end{theorem}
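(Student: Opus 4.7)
The statement packages three independent sufficient conditions drawn from Dold--Thom, so the plan is to treat each in turn, all three organized around the characterization that $p$ is a quasifibration if and only if $p_*\colon \pi_n(E, p^{-1}(b), x_0) \to \pi_n(B,b)$ is a bijection for every choice of $(n, b, x_0)$. Equivalently, via the long exact sequence of the pair $(E, p^{-1}(b))$, quasifibration amounts to saying that the relative homotopy groups $\pi_n(E, p^{-1}(b))$ play the role of $\pi_n(B,b)$, and the three criteria are each mechanisms for verifying this.

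For criterion~(1), fix $b \in B$ and, without loss of generality, assume $b \in U$. The hypotheses give honest quasifibration long exact sequences over each of $U$, $V$, and $U \cap V$. The core technical step is an excision/gluing argument: given a representative $f\colon (D^n, S^{n-1}) \to (B, b)$, I would subdivide $D^n$ so that each small cell maps entirely into $U$ or entirely into $V$, then construct a lift cell-by-cell, using the quasifibration property over $V$ for the $V$-cells, matching on overlaps via the quasifibration property over $U \cap V$, and continuing through the $U$-cells. Relative homotopies of such representatives are handled analogously. This is the geometric heart of the theorem, corresponding to Satz~2.2 of \cite{DoldThomQuasifibrations}, and is the main obstacle; the remaining two criteria rest on it in spirit.

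For criterion~(2), the key formal input is that every compact subset of a sequential colimit along closed embeddings lies in some stage $B_k$, a property implicit in the Dold--Thom formulation and satisfied in the applications within this paper. Any representative $(D^n, S^{n-1}) \to (B, b)$ of a relative homotopy class thus factors through some $B_k$, and likewise for any homotopy of such representatives. Distinguishedness of $B_k$ gives an isomorphism $\pi_n(p^{-1}(B_k), p^{-1}(b)) \cong \pi_n(B_k, b)$ at each finite stage, and passing to the filtered colimit (in the category of sets) of these isomorphisms yields the desired bijection $\pi_n(E, p^{-1}(b)) \cong \pi_n(B, b)$.

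For criterion~(3), the maps $H_1$ and $h_1$ fit into a strictly commuting square of $p$'s, and the hypotheses on $H_t$ and $h_t$ make the vertical maps $H_1$ and $h_1$ homotopy equivalences of the ambient spaces $E \simeq E'$ and $B \simeq B'$ (the compressing homotopies and their restrictions to $E'$ and $B'$ serve as the homotopies needed), while on the fiber over a fixed $b$ the induced map $H_{1,b}\colon p^{-1}(b) \to p^{-1}(h_1(b))$ is a weak equivalence by assumption. Comparing the long exact sequences of the pairs $(E, p^{-1}(b))$ and $(E', p^{-1}(h_1(b)))$ via the induced ladder
\[
\begin{tikzcd}[column sep=small]
\cdots \arrow[r] & \pi_n(p^{-1}(b)) \arrow[d] \arrow[r] & \pi_n(E) \arrow[d] \arrow[r] & \pi_n(E, p^{-1}(b)) \arrow[d] \arrow[r] & \cdots \\
\cdots \arrow[r] & \pi_n(p^{-1}(h_1(b))) \arrow[r] & \pi_n(E') \arrow[r] & \pi_n(E', p^{-1}(h_1(b))) \arrow[r] & \cdots
\end{tikzcd}
\]
and applying the five lemma gives a bijection on the relative groups. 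A parallel argument on the base, via $h_t$, compares $\pi_n(B,b)$ with $\pi_n(B', h_1(b))$, and distinguishedness of $B'$ supplies the bijection between $\pi_n(E', p^{-1}(h_1(b)))$ and $\pi_n(B', h_1(b))$. Chasing the resulting diagram delivers the bijection $\pi_n(E, p^{-1}(b)) \cong \pi_n(B, b)$, completing the argument.
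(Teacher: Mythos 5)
The paper does not prove this theorem at all: it is stated as a citation to Dold--Thom (Satz 2.2 with its Korollar, Hilfssatz 2.10, Satz 2.15) and to Appendix A of Aguilar--Gitler--Prieto. So there is no internal proof to compare against, and any proposal here is necessarily reconstructing Dold--Thom's arguments. Measured on that basis, your sketches for criteria (2) and (3) capture the right ideas, but your sketch for criterion (1) has a genuine gap.

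For criterion (1), you propose to ``subdivide $D^n$ so that each small cell maps entirely into $U$ or entirely into $V$, then construct a lift cell-by-cell, using the quasifibration property over $V$ for the $V$-cells, matching on overlaps via the quasifibration property over $U\cap V$.'' This step does not work: a quasifibration does not have the homotopy lifting property, so it does not furnish an actual lift of a map $D^n \to V$. What the quasifibration condition over $V$ gives is only a bijection $\pi_n\bigl(p^{-1}(V), p^{-1}(b)\bigr) \to \pi_n(V,b)$, i.e., an essentially unique homotopy class of maps of pairs into $p^{-1}(V)$ whose projection is merely \emph{homotopic rel $S^{n-1}$} to the given one. Consequently, lifts of adjacent cells need not (and generally will not) agree, nor even project to the same map, on their common faces, and there is no way to splice them. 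This is exactly why Dold--Thom's Satz 2.2 is not a naive patching argument; it proceeds by a careful comparison of relative homotopy groups using a Lebesgue-number subdivision together with an induction, with the excision and exactness arguments done on the level of homotopy groups of pairs rather than on actual representatives. As written, your argument for (1) would fail at the ``matching on overlaps'' step.

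For criterion (2), your observation that a map $(D^n,S^{n-1}) \to (B,b)$ factors through a finite stage is correct provided every compact subset of $B$ is contained in some $B_k$; this is what Dold--Hilfssatz 2.15 assumes (an increasing union with the colimit topology under conditions guaranteeing that compacta land in finite stages, e.g.\ each $B_k$ being $T_1$ and the inclusions closed, as in Proposition \ref{prop:subspace_colimit_equal} of the appendix). You call this ``implicit,'' which is fair as long as it is verified in each application; it is verified in the paper. For criterion (3), the ladder-and-five-lemma argument is the right skeleton; the two things to pin down are that $H_1 \colon E \to E'$ and $h_1 \colon B \to B'$ are indeed homotopy equivalences (which your hypotheses on $H_t$, $h_t$ deliver, since $H_t|_{E'}$ exhibits $H_1 \circ \iota \simeq \mathrm{id}_{E'}$ and $H_t$ exhibits $\iota \circ H_1 \simeq \mathrm{id}_E$, and similarly in the base), and that the composite isomorphism $\pi_n(E, p^{-1}(b)) \to \pi_n(B,b)$ obtained by chasing the diagram actually agrees with $p_*$ for every choice of basepoint in the fiber. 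You wave at the latter with ``chasing the resulting diagram''; it is true and not hard, but it is the content of the argument and deserves to be said. Overall: (2) and (3) are acceptable sketches, (1) needs to be replaced by the genuine Dold--Thom argument.
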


We note that it is not required that $H_t$ cover $h_t$ for all $t$: this is only enforced at times $t=0,1$. Furthermore, it is not required that $H_t$ and $h_t$ be the identities on the subspaces $E'$ and $B'$. However, both these stronger requirements will hold in our applications: our $H_t$'s will be (strong) deformation retracts covering  (strong) deformation retracts $h_t$.

We will apply the lemma in an inductive fashion as follows. We filter our spaces $\E$ and $\B$ by essential rank:
\[
\begin{tikzcd}
\E(\leq 1) \rar\dar & \E(\leq 2)\dar \rar & \cdots\rar &\E(\lchi)\rar\dar & \cdots \rar & \E\dar \\
\B(\leq 1) \rar & \B(\leq 2) \rar & \cdots \rar & \B(\lchi)\rar &\cdots \rar & \B
\end{tikzcd}
\]
Then, in this section, for each $\chi \geq 2$, we will construct:
\begin{itemize}
	\item A set $O(\lchi)$ such that 
	\[\E(\lchiminus) \subset O(\lchi) \subset \E(\lchi)\] and $O(\lchi)$ is open in $\E(\lchi)$.
	Furthermore, the set $O(\lchi)$ will be saturated with respect to $p$, so that $p(O(\lchi))$ is open in $\B(\lchi)$.
	\item A strong deformation retract
	\[H\colon O(\lchi) \times [0,1] \rightarrow O(\lchi)\] 
	of $O(\lchi)$ onto $\E(\lchiminus)$ such that $H(A,t) \sim H(B,t)$ whenever $A \sim B$, where ${\sim}$ is the relation of gauge equivalence. Thus, $H$ covers a deformation retract
	\[
	h:p(O(\lchi)) \times [0,1] \rightarrow p(O(\lchi))
	\]
	of $p(O(\lchi))$ onto $\B(\lchiminus)$. Furthermore, we will prove that on fibers the restriction 
	\[
	H_{1,b}:p^{-1}(b) \rightarrow p^{-1}(h(b,1)), \quad H_{1,b}(A) = H(A, 1)
	\]
	is a weak homotopy equivalence.
\end{itemize}
As mentioned earlier, $\E(1) = \E(\leq 1) \rightarrow \B(\leq 1) = \B(1)$ is a quasifibration. Now assume that $\E(\lchiminus) \rightarrow \B(\lchiminus)$ is a quasifibration for $\chi - 1$ for some $\chi \geq 2$. Then item (\ref{ite:qf_deformation}) of Theorem \ref{thm:qf_identify} implies that $O(\lchi) \rightarrow p(O(\lchi))$ is a quasifibration. The set $\B(\lchi)\setminus\B(\lchiminus)$ is open in $\B(\lchi)$ and the maps
\begin{align*}
\E(\lchi)\setminus \E(\lchiminus) &\rightarrow \B(\lchi)\setminus \B(\lchiminus) \\
O(\lchi) \cap \E(\lchi)\setminus \E(\lchiminus) &\rightarrow p(O(\lchi)) \cap \B(\lchi) \setminus \B(\lchiminus)
\end{align*}
are quasifibrations since they are restrictions of $\E(\chi) \rightarrow \B(\chi)$. Therefore, item (\ref{ite:qf_open_sets}) of Theorem \ref{thm:qf_identify} implies that $\E(\lchi) \rightarrow \B(\lchi)$ is a quasifibration. By induction, we have that $\E(\lchi) \rightarrow \B(\lchi)$ is a quasifibration for all $\chi$. Item (\ref{ite:qf_union}) of Theorem \ref{thm:qf_identify} then implies that $p:\E \rightarrow \B$ is a quasifibration. 

Thus, all that remains to do is construct the data in bullet points above. We begin working towards this now.

Given $d, D \in \bbN$ and $A \in \cM(d, D)$, recall that we defined
\[
\lop(A) = \sum_i A^{i*}A^i  \qqtext{and} \rop(A) = \sum_i A^iA^{i*}.
\]
Recall that $\lop(A), \rop(A) \in M_D(\bbC)$ and $\lop$ and $\rop$ are continuous functions $\cM(d, D) \rightarrow M_D(\bbC)$ whose values are positive matrices. We also defined $Q(A)$ to be the projection onto the range of $\lop(A)$.

Given $A \in \cM(d, D)$ and $B, C \in M_D(\bbC)$, we denote by $BAC$ the element of $\cM(d, D)$ whose $i$th matrix is $BA^iC$. We will be particularly interested in $Q(A)A$ for $A \in \E(d, D, \chi)$. We denote this by 
\[
\underline{A} \defeq Q(A)A
\]
for ease of notation. Observe that if $X$, $K$, and $M$ correspond to $A$ as in \eqref{eq:EUMPS_def}, then
\[
\underline{A} = X\mqty(K&0\\0&0)X^{-1}
\] 
and
\[
\lop(\underline{A}) = X\mqty(\lop(K)&0\\0&0)X^{-1}.
\]
We will be considering the eigenvalues of this matrix.

\begin{definition}
Given $D \in \bbN$, let $\lambda_1,\ldots, \lambda_D:M_D(\bbC)_\tn{sa} \rightarrow \bbR$ be the ordered list of eigenvalues of a matrix repeated according to their multiplicity in the characteristic polynomial, i.e., given a self-adjoint $A \in M_D(\bbC)$, we have that the $\lambda_i(A)$ are the eigenvalues of $A$ and
\[
\lambda_1(A) \geq \cdots \geq \lambda_D(A).
\] 
Crucially, we note that the $\lambda_i$ are continuous functions.
\end{definition}

Since $K$ is an injective MPS of bond dimension $\chi$, we know $\lop(K)$ is an invertible matrix of rank $\chi$. Thus, the smallest nonzero eigenvalue of $\lop(\underline{A})$ is $\lambda_\chi(\lop(\underline{A}))$. Moreover, the strictly positive part of the spectrum of $\lop(\underline{A})$ coincides with the spectrum of $\lop(K)$. 

While $\lop(\underline{A})$ is continuous as a function of $A \in \E(d, D, \chi)$, it is not continuous on $\E(d, D, \lchi)$ since $Q$ undergoes a discontinuity whenever $\chi$ changes. 
On the other hand, the function $\lop(A)$ is continuous on $\E(d, D, \lchi)$, but its spectrum is gauge dependent and is therefore not a desirable quantity to work with for our purposes. We are therefore compelled to work with $\lop(\underline{A})$; we will have to develop some lemmas for dealing with the discontinuity and controlling the spectrum of this operator.

The following lemma is immediate from Lemma \ref{lem:projection_pt_continuity} and continuity of the function $L:\cM(d, D) \rightarrow M_D(\bbC)$.

\begin{lem}\label{lem:L(Q(A_0)Q(A)A)}
Let $d, D \in \bbN$ and $A_0 \in \E(d, D)$. The function 
\[
\E(d,D) \rightarrow M_D(\bbC), \quad A \mapsto \lop(Q(A_0)Q(A)A) = \lop(Q(A_0)\underline{A})
\]
is continuous at $A_0$.
\end{lem}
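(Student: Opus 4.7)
The plan is to factor the function $A \mapsto L(Q(A_0)Q(A)A)$ through the intermediate map $A \mapsto Q(A_0)Q(A)A$, taking values in $\cM(d,D)$, and apply Lemma \ref{lem:projection_pt_continuity} together with joint continuity of matrix multiplication and the polynomial continuity of $L$.

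First I would note that, by definition of $\underline A = Q(A)A$, the two expressions in the lemma statement are literally equal: the $i$th matrix of $Q(A_0)\underline A$ is $Q(A_0)Q(A)A^i$, which is also the $i$th matrix of $Q(A_0)Q(A)A$. So it suffices to prove continuity at $A_0$ of the map $A \mapsto L(Q(A_0)Q(A)A)$.

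Next, by Lemma \ref{lem:projection_pt_continuity}, the function $A \mapsto Q(A_0)Q(A)$ from $\E(d,D)$ to $M_D(\bbC)$ is continuous at $A_0$. Each coordinate map $A \mapsto A^i$ on $\cM(d,D)$ is continuous, and matrix multiplication $M_D(\bbC) \times M_D(\bbC) \to M_D(\bbC)$ is jointly continuous. Combining these, the map
\[
\E(d,D) \to \cM(d,D), \quad A \mapsto Q(A_0)Q(A)A
\]
(whose $i$th coordinate is $Q(A_0)Q(A)A^i$, and which vanishes for $i > d$) is continuous at $A_0$. Finally, the map $L \colon \cM(d,D) \to M_D(\bbC)$, $B \mapsto \sum_i B^{i*}B^i$, is continuous (it is a finite polynomial in the entries). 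Composing yields the desired continuity at $A_0$.

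There is no real obstacle here; the content of the lemma is entirely that the lack of continuity of $Q$ on $\E(d,D)$ (across strata of different essential rank) is tamed when $Q(A)$ is multiplied on the left by a fixed $Q(A_0)$, which is exactly what Lemma \ref{lem:projection_pt_continuity} provides. The rest is standard continuity of polynomial operations on finite-dimensional matrices.
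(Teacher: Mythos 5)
Your proof is correct and is essentially the paper's own argument: the paper states the lemma is immediate from Lemma \ref{lem:projection_pt_continuity} and the continuity of $L$ on $\cM(d,D)$, which is exactly the factorization you describe.
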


We can bound $\lop(\underline{A})$ using the following matrix inequalities. 

\begin{lemma}\label{lem:L(A)_op_bounds}
Let $d, D \in \bbN$. If $A \in \cM(d, D)$, then
\[
\lop(\underline{A}) \leq \lop(A).
\]
If $A_0 \in \E(d,D)$ and $\varepsilon > 0$, then there exists a neighborhood $U \subset \E(d, D)$ of $A_0$ such that
\[
\lop\qty(\underline{A}_0) - \varepsilon\1_{D \times D} \leq \lop(\underline{A})
\]
for all $A \in U$.
\end{lemma}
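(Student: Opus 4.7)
For the first inequality, I would expand both sides directly from the definitions. Since $Q(A)$ is an orthogonal projection, $Q(A)^{*}Q(A) = Q(A)$, so
\[
\lop(\underline{A}) \;=\; \sum_i (Q(A)A^i)^{*}(Q(A)A^i) \;=\; \sum_i A^{i*}Q(A)A^i.
\]
Thus $\lop(A) - \lop(\underline{A}) = \sum_i A^{i*}(\1 - Q(A))A^i = \sum_i ((\1-Q(A))A^i)^{*}(\1-Q(A))A^i \geq 0$, which gives $\lop(\underline{A}) \leq \lop(A)$.

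For the second inequality, the main idea is to squeeze $\lop(\underline{A})$ from below by the quantity $\lop(Q(A_0)\underline{A})$, which has good continuity properties near $A_0$. First I would observe the matrix inequality
\[
\lop(Q(A_0)\underline{A}) \;\leq\; \lop(\underline{A})
\]
for every $A \in \E(d,D)$. This is the same computation as in the first part: the difference equals $\sum_i A^{i*}Q(A)(\1 - Q(A_0))Q(A)A^i$, which is a sum of terms of the form $B^{*}PB$ for the projection $P = \1 - Q(A_0)$, hence positive semidefinite.

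Next I would exploit continuity. By Lemma~\ref{lem:L(Q(A_0)Q(A)A)}, the map $A \mapsto \lop(Q(A_0)\underline{A}) = \lop(Q(A_0)Q(A)A)$ is continuous at $A_0$, and its value at $A_0$ is $\lop(Q(A_0)\underline{A}_0) = \lop(\underline{A}_0)$ since $Q(A_0)^{2} = Q(A_0)$. Hence, given $\varepsilon > 0$, there is a neighborhood $U$ of $A_0$ in $\E(d,D)$ on which $\|\lop(Q(A_0)\underline{A}) - \lop(\underline{A}_0)\| < \varepsilon$; in particular
\[
\lop(Q(A_0)\underline{A}) \;\geq\; \lop(\underline{A}_0) - \varepsilon\,\1_{D\times D}.
\]
Combining this with the matrix inequality from the previous paragraph yields $\lop(\underline{A}) \geq \lop(\underline{A}_0) - \varepsilon\,\1_{D\times D}$ on $U$, as required.

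The proof is essentially routine; the only mild subtlety is the need to pass through the auxiliary quantity $\lop(Q(A_0)\underline{A})$, since $\lop(\underline{A})$ itself is not continuous at $A_0$ when the essential rank can drop, while $\lop(Q(A_0)\underline{A})$ is. No further obstacle is anticipated.
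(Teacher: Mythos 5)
Your proposal is correct and follows essentially the same route as the paper: both parts reduce to the positivity of $\sum_i A^{i*}PA^i$ for a projection $P$, and the second inequality is obtained by squeezing $\lop(\underline A)$ from below by $\lop(Q(A_0)\underline A)$ and invoking Lemma~\ref{lem:L(Q(A_0)Q(A)A)} for continuity at $A_0$, exactly as in the paper's proof.
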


\begin{proof}
We observe:
\[
\lop(Q(A)A) = \sum_{i} A^{i*}Q(A)A^i \leq \sum_i A^{i*}A^i = \lop(A).
\]
This gives the first bound. 
Given $A_0 \in \E$, we have:
\begin{align*}
\lop(\underline{A}_0) - \norm{\lop(\underline{A}_0) - \lop(Q(A_0)\underline{A})}\1 &\leq \lop(Q(A_0)\underline{A}).
\end{align*}
The norm above can be made small as $A \rightarrow A_0$ by Lemma \ref{lem:L(Q(A_0)Q(A)A)}. We then observe that
\[
\lop(Q(A_0)\underline{A}) = \sum_{i} A^{i*}Q(A)Q(A_0)Q(A)A^i \leq \sum_{i} A^{i*}Q(A)A^i = \lop(\underline{A}),
\]
completing the proof.
\end{proof}

We now work towards obtaining greater control over the spectrum of $\lop(\underline{A})$. We will need the following lemma.

\begin{lemma}\label{lem:L(Q(A)A)_almost_range}
Let $d, D \in \bbN$. Given $A_0 \in \E(d, D)$ and $\varepsilon > 0$, there exists an open set $U \subset \E(d, D)$ containing $A_0$ such that
\[
\norm{\lop(\underline{A}) - \lop(\underline{A})Q(A_0)} < \varepsilon
\]
for all $A \in U$.
\end{lemma}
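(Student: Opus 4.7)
The plan is to exploit the defining property of $Q(A_0)$, namely that it is the orthogonal projection onto $\mathrm{range}(\lop(A_0)) = (\ker \lop(A_0))^{\perp}$, so that
\[
\lop(A_0)(\1 - Q(A_0)) = 0.
\]
By continuity of $\lop$ on $\cM(d,D)$, the quantity $\lop(A)(\1-Q(A_0)) = (\lop(A) - \lop(A_0))(\1-Q(A_0))$ is small near $A_0$. The task is then to transfer this smallness from $\lop(A)$ to $\lop(\underline{A})$, which is where Lemma~\ref{lem:L(A)_op_bounds} comes in.

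First I would fix $\delta > 0$ (to be determined) and choose a neighborhood $U$ of $A_0$ such that $\|\lop(A)-\lop(A_0)\| < \delta$ and such that $\|\lop(A)\|$ is bounded by some constant $C$ (independent of $A \in U$). Conjugating by $\1-Q(A_0)$ and using $\lop(A_0)(\1-Q(A_0))=0$ gives
\[
\|(\1-Q(A_0))\lop(A)(\1-Q(A_0))\| < \delta.
\]
By Lemma~\ref{lem:L(A)_op_bounds}, we have the operator inequality $0 \leq \lop(\underline{A}) \leq \lop(A)$, which is preserved under conjugation by the self-adjoint operator $\1 - Q(A_0)$, so
\[
0 \leq (\1-Q(A_0))\lop(\underline{A})(\1-Q(A_0)) \leq (\1-Q(A_0))\lop(A)(\1-Q(A_0)),
\]
giving $\|(\1-Q(A_0))\lop(\underline{A})(\1-Q(A_0))\| < \delta$.

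The last step is the standard trick for positive operators: since $\lop(\underline{A})$ is positive and self-adjoint, with $T := \lop(\underline{A})(\1-Q(A_0))$ one has
\[
\|T\|^2 = \|T^*T\| = \|(\1-Q(A_0))\lop(\underline{A})^2(\1-Q(A_0))\| \leq \|\lop(\underline{A})\| \cdot \|(\1-Q(A_0))\lop(\underline{A})(\1-Q(A_0))\|,
\]
using $\lop(\underline{A})^2 \leq \|\lop(\underline{A})\|\,\lop(\underline{A})$. Since $\|\lop(\underline{A})\| \leq \|\lop(A)\| \leq C$ on $U$, we conclude $\|T\|^2 < C\delta$, and choosing $\delta < \varepsilon^2/C$ at the start yields the desired estimate.

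I do not anticipate a genuine obstacle here: the whole argument is a short manipulation on top of Lemma~\ref{lem:L(A)_op_bounds} and continuity of $\lop$. The one subtle point worth emphasising in the write-up is the identity $\lop(A_0)(\1-Q(A_0))=0$, which is what allows continuity of $\lop$ (rather than the discontinuous $A \mapsto \lop(\underline{A})$) to do the work; $Q(A)$ itself need never be controlled near $A_0$.
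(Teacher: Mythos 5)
Your proof is correct and, on inspection, is essentially the same argument as the paper's, just packaged differently. The paper first observes continuity of $\lop(A)^{1/2}$ and works with the factorization $\lop(\underline A)(\1-Q(A_0)) = \lop(\underline A)^{1/2}\cdot\lop(\underline A)^{1/2}(\1-Q(A_0))$, controlling the second factor via the $C^*$-identity and the first via operator monotonicity of $\sqrt{\cdot}$; you instead work directly with $\lop(A)$, apply the $C^*$-identity to $T = \lop(\underline A)(\1-Q(A_0))$, and use $P^2 \le \|P\|P$ to get $\|T\|^2 \le \|\lop(\underline A)\|\cdot\|(\1-Q(A_0))\lop(\underline A)(\1-Q(A_0))\|$. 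Both routes yield literally the same bound $\|T\|^2 \le \|\lop(\underline A)\|\cdot\|(\1-Q(A_0))\lop(\underline A)(\1-Q(A_0))\|$; yours is marginally more elementary in that it only needs $\|\lop(\underline A)\|\le\|\lop(A)\|$ (immediate from $0\le\lop(\underline A)\le\lop(A)$), rather than operator monotonicity of the square root. Both hinge on the same three ingredients: $\lop(\underline A)\le\lop(A)$ from Lemma~\ref{lem:L(A)_op_bounds}, the vanishing $\lop(A_0)(\1-Q(A_0))=0$ together with continuity of $\lop$, and the $C^*$-norm identity to pass from the conjugated operator back to $T$.
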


\begin{proof}
Choose $\tilde \varepsilon > 0$. Since $\lop(A)^{1/2}$ is a continuous function of $A$ and $\lop(A_0)^{1/2} = \lop(A_0)^{1/2}Q(A_0)$, there exists a neighborhood $U \subset \E(d,D)$ containing $A_0$ such that 
\[
\norm{\lop(A)^{1/2} - \lop(A)^{1/2}Q(A_0)} < \tilde \varepsilon
\]
for all $A \in U$. Recall that $\lop(\underline{A}) \leq \lop(A)$ by Lemma \ref{lem:L(A)_op_bounds}. Conjugating both sides of this inequality by $\1 - Q(A_0)$, taking norms, and using the $C^*$-property of the norm yields
\[
\norm{\lop(\underline{A})^{1/2}(\1 - Q(A_0))}^2 \leq \norm{\lop(A)^{1/2}(\1 - Q(A_0))}^2 < \tilde \varepsilon^2.
\]
Using submultiplicativity of the norm yields
\begin{align*}
\norm{\lop(\underline{A})(\1 - Q(A_0))} &\leq \norm{\lop(\underline{A})^{1/2}}\tilde \varepsilon\\
&\leq \norm{\lop(A)^{1/2}}\tilde \varepsilon\\
&\leq \norm{\lop(A)^{1/2} - \lop(A_0)^{1/2}}\tilde \varepsilon + \norm{\lop(A_0)^{1/2}}\tilde \varepsilon,
\end{align*}
where we have used operator monotonicity of the square root in the second step. Since $\lop(A)$ is continuous, the above may be made strictly less than $\varepsilon$ by shrinking $\tilde \varepsilon$ and the neighborhood $U$.
\end{proof}

The next lemma gives us the control we need on the spectrum $\sigma(\lop(\underline{A}))$. Recall that given $A_0 \in \E(d, D, \chi_0)$, the smallest nonzero eigenvalue of $\lop(\underline{A}_0)$ is the $\chi_0$th eigenvalue $\lambda_{\chi_0}(\lop(\underline{A}_0))$.

\begin{lemma}\label{lem:spectrum_gap}
Let $d, D, \chi_0 \in \bbN$ with $\chi_0 \leq D$. Given $A_0 \in \E(d, D, \chi_0)$ and $\varepsilon > 0$, there exists an open set $U \subset \E(d, D)$ containing $A_0$ such that:
\begin{equation}\label{eq:spectrum_gap}
\sigma(\lop(\underline{A})) \subset [0,\varepsilon) \cup (\lambda_{\chi_0}(\lop(\underline{A}_0)) - \varepsilon, \infty)
\end{equation}
for all $A \in U$. If $\chi_0 < D$, then $U$ can be chosen such that $\sigma(\lop(\underline{A})) \setminus \qty{0}$ has nonempty intersection with each of the intervals on the right hand side above for all $A \in U \cap \E(d, D, \chi)$ with $\chi_0 < \chi \leq D$.
\end{lemma}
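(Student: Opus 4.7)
The plan is to separate the eigenvalues of $\lop(\underline{A})$ into a ``top $\chi_0$'' group and a ``bottom $D-\chi_0$'' group by combining Weyl's monotonicity of ordered eigenvalues under the Loewner order (applied to Lemma \ref{lem:L(A)_op_bounds}) with the Courant--Fischer min-max principle (applied to Lemma \ref{lem:L(Q(A)A)_almost_range}). Since $A_0 \in \E(d,D,\chi_0)$, equation \eqref{eq:QA} shows that $Q(A_0)$ is a rank-$\chi_0$ orthogonal projection, so $\ker Q(A_0)$ has dimension $D - \chi_0$ and will serve as the test subspace in the min-max argument. To obtain the strict inequalities appearing in \eqref{eq:spectrum_gap}, I would invoke each of the two preceding lemmas with $\varepsilon/2$ in place of $\varepsilon$, giving a bit of slack.

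For the upper interval, Lemma \ref{lem:L(A)_op_bounds} applied with $\varepsilon/2$ provides a neighborhood $U_1 \ni A_0$ on which $\lop(\underline{A}_0) - (\varepsilon/2)\1 \leq \lop(\underline{A})$. By Weyl's monotonicity of ordered eigenvalues, $\lambda_k(\lop(\underline{A})) \geq \lambda_k(\lop(\underline{A}_0)) - \varepsilon/2$ for every $k$, so for each $k \leq \chi_0$ the $k$-th eigenvalue of $\lop(\underline{A})$ lies strictly above $\lambda_{\chi_0}(\lop(\underline{A}_0)) - \varepsilon$.

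For the lower interval, Lemma \ref{lem:L(Q(A)A)_almost_range} applied with $\varepsilon/2$ provides a neighborhood $U_2 \ni A_0$ on which $\|\lop(\underline{A}) - \lop(\underline{A}) Q(A_0)\| < \varepsilon/2$. Taking adjoints (using that $\lop(\underline{A})$ is self-adjoint) yields $\|(\1 - Q(A_0))\lop(\underline{A})\| < \varepsilon/2$, so for any unit vector $v \in \ker Q(A_0)$,
\[
0 \leq \langle v, \lop(\underline{A}) v \rangle = \langle v, (\1 - Q(A_0))\lop(\underline{A}) v \rangle < \varepsilon/2.
\]
Since $\dim \ker Q(A_0) = D - \chi_0$, the Courant--Fischer characterization $\lambda_{\chi_0+1}(T) = \min_{\dim V = D - \chi_0} \max_{v \in V,\, \|v\|=1} \langle v, Tv\rangle$ then gives $\lambda_{\chi_0+1}(\lop(\underline{A})) \leq \varepsilon/2 < \varepsilon$, and positive semidefiniteness of $\lop(\underline{A})$ places all remaining eigenvalues in $[0,\varepsilon)$. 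Setting $U = U_1 \cap U_2$ proves \eqref{eq:spectrum_gap}.

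For the second claim, I would use that $A \in \E(d,D,\chi)$ forces $\lop(\underline{A})$ to have rank exactly $\chi$, so its top $\chi$ ordered eigenvalues are strictly positive and the rest vanish. When $\chi_0 < \chi \leq D$ and $A \in U \cap \E(d,D,\chi)$, the first $\chi_0$ eigenvalues lie in the upper interval by the preceding paragraph, while $\lambda_{\chi_0+1}(\lop(\underline{A})), \ldots, \lambda_{\chi}(\lop(\underline{A}))$ are strictly positive and strictly less than $\varepsilon$, hence belong to $(0,\varepsilon) \subset [0,\varepsilon) \setminus \{0\}$; both intervals therefore meet $\sigma(\lop(\underline{A})) \setminus \{0\}$. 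I do not anticipate a substantive obstacle here: the result is a clean application of standard spectral perturbation bounds to the two preceding estimates, and the only real care needed is tracking strict versus non-strict inequalities, which the $\varepsilon/2$ trick handles.
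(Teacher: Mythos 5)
Your proof is correct and complete, and it takes a genuinely different and in some respects cleaner route than the paper's. The paper argues eigenvector-by-eigenvector: it supposes $\lambda$ is an eigenvalue of $\lop(\underline{A})$ with $\lambda \geq \varepsilon$, decomposes a unit eigenvector $x$ as $Q(A_0)x + (\1-Q(A_0))x$, and derives by a direct chain of norm estimates that $\lambda > \lambda_{\chi_0}(\lop(\underline{A}_0)) - \varepsilon$, with $\tilde\varepsilon$ chosen small enough at the end. For the second claim, the paper shows separately that $\lambda_1(\lop(\underline{A}))$ lands in the upper interval, then brings in the additional ingredient of continuity of $\lambda_{\chi_0+1}\circ \lop$ (the un-reduced $\lop$), shrinks $U$ again, and performs a further eigenvector estimate to pin down $\lambda_\chi(\lop(\underline{A}))$. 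Your argument instead controls all $D$ ordered eigenvalues at once by repackaging the two preceding lemmas through two standard spectral inequalities: Weyl monotonicity applied to the operator bound from Lemma \ref{lem:L(A)_op_bounds} pushes $\lambda_1, \ldots, \lambda_{\chi_0}$ into the upper interval, and Courant--Fischer min-max with test subspace $\ker Q(A_0)$ applied to the norm bound from Lemma \ref{lem:L(Q(A)A)_almost_range} pushes $\lambda_{\chi_0+1}, \ldots, \lambda_D$ into $[0,\varepsilon)$. The second claim then falls out of the same $U$ without any further shrinking, using only that $\lop(\underline{A})$ has rank exactly $\chi$ so $\lambda_1, \ldots, \lambda_\chi$ are strictly positive. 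What the paper's approach buys is self-containedness (no appeal to named theorems, everything is elementary Cauchy--Schwarz and triangle-inequality work); what yours buys is modularity, a smaller $U$-bookkeeping burden, and a structurally transparent split of the spectrum along the index $\chi_0$.
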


\begin{proof}
Choose $\tilde \varepsilon > 0$. By Lemma \ref{lem:L(A)_op_bounds} and Lemma \ref{lem:L(Q(A)A)_almost_range}, there exists an open set $U \subset \E(d,D)$ containing $A_0$ such that
\[
L\qty(\underline{A}_0) - \tilde \varepsilon \1 \leq \lop(\underline{A})
\]
and
\[
\norm{\lop(\underline{A}) - \lop(\underline{A})Q(A_0)} < \tilde \varepsilon
\]
for all $A \in U$. Note that taking a Hermitian conjugate of the above yields
\[
\norm{\lop(\underline{A}) - Q(A_0)\lop(\underline{A})} < \tilde \varepsilon
\]
as well.

Fix $A \in U$. Suppose $\lambda$ is an eigenvalue of $\lop(\underline{A})$ and $\lambda \geq \varepsilon$. Let $x$ be a corresponding unit eigenvector. We can decompose the eigenvector equation as
\[
\lop(\underline{A})Q(A_0)x + \lop(\underline{A})(\1 - Q(A_0))x = \lambda Q(A_0)x + \lambda (\1 - Q(A_0))x.
\]
Multiplying by $\1 - Q(A_0)$ on both sides yields
\[
\norm{\lambda(\1 - Q(A_0))x} < 2\tilde \varepsilon
\]
hence
\[
\norm{(\1 - Q(A_0))x} < \frac{2\tilde \varepsilon}{\varepsilon}.
\]
Multiplying by $Q(A_0)$ on both sides of the eigenvector equation yields
\[
\norm{Q(A_0)\lop(\underline{A})Q(A_0)x - \lambda Q(A_0)x} < \tilde \varepsilon.
\]
Thus,
\begin{align*}
\lambda &\geq \ev{Q(A_0)x, \lambda Q(A_0)x}\\
&> \ev{x, Q(A_0)\lop(\underline{A})Q(A_0)x} - \tilde \varepsilon\\
&\geq \ev{x, Q(A_0)\lop(\underline{A}_0)Q(A_0)x} - \tilde \varepsilon \norm{Q(A_0)x}^2 - \tilde \varepsilon \\
&\geq \lambda_{\chi_0}(\lop(\underline{A}_0))\norm{Q(A_0)x}^2 - \tilde \varepsilon \norm{Q(A_0)x}^2 - \tilde \varepsilon\\
&\geq \lambda_{\chi_0}(\lop(\underline{A}_0))\qty(1 - \qty(\frac{2\tilde \varepsilon}{\varepsilon})^2) - \tilde \varepsilon \norm{Q(A_0)x}^2 - \tilde \varepsilon.
\end{align*}
Choosing $\tilde \varepsilon$ sufficiently small, we can achieve
\[
\lambda > \lambda_{\chi_0}(\lop(\underline{A}_0)) - \varepsilon
\]
as desired. This proves that we can find an open set $U \subset \E(d,D)$ containing $A_0$ such that \eqref{eq:spectrum_gap} holds for all $A \in U$.

If $A \in U$, then we observe that
\begin{align*}
\lambda_1(\lop(\underline{A}_0)) &= \norm{\lop(\underline{A}_0)} \leq \norm{\lop(\underline{A}) + \tilde \varepsilon \1} \leq \lambda_1(\lop(\underline{A})) + \tilde \varepsilon.
\end{align*}
Thus, choosing $\tilde \varepsilon < \varepsilon$ we see that $\sigma(\lop(\underline{A}))$ intersects the second interval in \eqref{eq:spectrum_gap}.  

Suppose $\chi_0 < D$. We know $\lambda_{\chi_0+1}(\lop(A_0)) = 0$ since the rank of $\lop(A_0)$ is $\chi_0$. Since $\lambda_{\chi_0 + 1} \circ \lop$ is continuous, we may shrink $U$ so that
\[
\lambda_{\chi_0+1}(\lop(A)) < \varepsilon
\]
for all $A \in U$. We claim that $0 < \lambda_{\chi}(\lop(\underline{A})) < \varepsilon$ for all $A \in U \cap \E(d, D, \chi)$ with $\chi_0 < \chi \leq D$. For such an $A$ we know $\lambda_{\chi}(\lop(\underline{A})) > 0$ since $\chi$ is the rank of $\lop(\underline{A})$. To show $\lambda_\chi(\lop(\underline{A})) < \varepsilon$, let $v$ be a unit eigenvector of $\lop(A)$ with eigenvalue $\lambda_\chi(\lop(A))$ and let $w_1,\ldots, w_D$ be an orthonormal basis of eigenvectors of $\lop(\underline{A})$ with corresponding eigenvalues $\lambda_1(\lop(\underline{A})),\ldots, \lambda_D(\lop(\underline{A}))$. Then
\begin{align*}
\varepsilon &> \lambda_{\chi}(\lop(A)) = \ev{v, \lop(A)v} \geq \ev{v, \lop(\underline{A})v} \\
&= \sum_{i=1}^D \lambda_i(\lop(\underline{A})) \abs{\ev{v, w_i}}^2 \geq \lambda_\chi(\lop(\underline{A})) \sum_{i=1}^\chi \abs{\ev{v, w_i}}^2 
\end{align*}
Observe that $\lop(\underline{A})$ and $\lop(A)$ have the same range, which contains $v$ and is spanned by $w_1,\ldots, w_\chi$, so $\sum_{i=1}^\chi \abs{\ev{v, w_i}}^2 = 1$. Thus, $\varepsilon > \lambda_\chi(\lop(\underline{A}))$, concluding the proof.
\end{proof}

We can finally start constructing the open set $O(\lchi)$ of $\E(\lchi)$ that we will eventually show retracts onto $\E(\lchiminus)$. Recall from Remark \ref{rem:topology_discussion} that $\E(d, D, \lchi)$ is closed in $\E(d, D)$ and $\E(\lchi)$ is closed in $\E$, and that the subspace topologies on $\E(\lchi)$ and $\E(\chi)$ coincides with the final topologies induced by the inclusions $\E(d, D, \lchi) \hookrightarrow \E(\lchi)$ and $\E(d, D, \chi) \hookrightarrow \E(\chi)$.

\begin{proposition}\label{prop:N(d,D,chi)}
Let $d, D, \chi \in \bbN$ with $2 \leq \chi \leq D$. Define
\[
N(d, D, \chi) \defeq \qty{A \in \E(d,D, \chi): \begin{array}{c} \lop(\underline{A}) \tn{ has at least two} \\ \tn{distinct nonzero eigenvalues} \end{array}}.
\]
Then $N(d, D, \chi)$ is open in both $\E(d, D, \chi)$ and $\E(d, D, \lchi)$. Furthermore, define
\[
N(\chi) \defeq \bigcup_{d, D} N(d, D, \chi).
\]
Then $N(\chi) \cap \E(d, D) = N(d, D, \chi)$, hence $N(\chi)$ is open in $\E(\chi)$ and $\E(\lchi)$.
\end{proposition}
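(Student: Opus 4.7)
The plan is to first establish openness of $N(d, D, \chi)$, then bootstrap via the colimit topologies to the statements about $N(\chi)$.

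First I would show that $\E(d, D, \chi)$ is itself open in $\E(d, D, \lchi)$. On $\E(d, D, \lchi)$ the function $A \mapsto \mathrm{rank}\,\lop(A)$ takes the value $\chi'$ precisely on the stratum $\E(d, D, \chi')$ and is bounded above by $\chi$. Since the rank of a self-adjoint matrix is lower semi-continuous (a non-vanishing $r \times r$ minor survives small perturbations, so $\{B \in M_D(\bbC) : \mathrm{rank}\,B \geq r\}$ is open) and $\lop : \cM(d, D) \to M_D(\bbC)$ is continuous, the subset
\[
\{A \in \E(d, D, \lchi) : \mathrm{rank}\,\lop(A) \geq \chi\} \;=\; \E(d, D, \chi)
\]
is open in $\E(d, D, \lchi)$. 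This lets me localize to the stratum of constant essential rank on which $Q$ is continuous.

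Next I would verify that $N(d, D, \chi)$ is open in $\E(d, D, \chi)$. Continuity of $Q$ on $\E(d, D, \chi)$ (established in the proposition preceding Lemma~\ref{lem:projection_pt_continuity}) implies that $A \mapsto \lop(\underline{A}) = \lop(Q(A)A)$ is a continuous map $\E(d, D, \chi) \to M_D(\bbC)$ with self-adjoint values. For any $A \in \E(d, D, \chi)$, the matrix $\lop(\underline{A}) = X\,\diag(\lop(K), 0)\, X^{-1}$ has exactly $\chi$ strictly positive eigenvalues $\lambda_1(\lop(\underline{A})) \geq \cdots \geq \lambda_\chi(\lop(\underline{A})) > 0$ followed by $D - \chi$ zeros, because $\lop(K)$ is invertible by injectivity of $K$. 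The defining condition of $N(d, D, \chi)$ is thus equivalent to $\lambda_1(\lop(\underline{A})) - \lambda_\chi(\lop(\underline{A})) > 0$, which is an open condition by continuity of the eigenvalue functions. Combined with the previous paragraph, $N(d, D, \chi)$ is also open in $\E(d, D, \lchi)$.

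For the statements about $N(\chi)$, the key identity is $N(\chi) \cap \E(d, D) = N(d, D, \chi)$. The inclusion $\supseteq$ is immediate since $N(d, D, \chi) \subset N(d', D', \chi) \subset N(\chi)$ whenever $d \leq d'$ and $D \leq D'$. Conversely, a tensor $A \in \E(d, D) \cap N(d', D', \chi)$ for some $d', D'$ must have essential rank $\chi$, so $A \in \E(d, D, \chi)$, and the eigenvalue condition on $\lop(\underline{A})$ depends only on $A$, giving $A \in N(d, D, \chi)$. With this identity in hand, and recalling from Remark~\ref{rem:topology_discussion} that $\E(\chi)$ and $\E(\lchi)$ carry the final topologies induced by the inclusions of $\E(d, D, \chi)$ and $\E(d, D, \lchi)$ respectively, openness of $N(\chi)$ in each of these spaces reduces to the openness of $N(d, D, \chi)$ in $\E(d, D, \chi)$ and $\E(d, D, \lchi)$ already handled. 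No step presents a genuine obstacle; the one subtlety worth flagging is that the discontinuity of $Q$ on $\E(d, D, \lchi)$ is circumvented precisely by the first step, after which the rest is a routine continuity argument on the fixed-rank stratum.
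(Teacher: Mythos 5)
Your proposal is correct and follows essentially the same route as the paper: you prove openness in $\E(d, D, \chi)$ from the continuity of $Q$, $\lop$, and the ordered eigenvalue functions on the fixed-rank stratum, deduce openness in $\E(d, D, \lchi)$ from the openness of $\E(d, D, \chi)$ therein (which rests on the same semi-continuity-of-rank observation the paper records in Remark~\ref{rem:topology_discussion}), and pass to the colimit via the final topology. The only difference is that you spell out the verification of $N(\chi) \cap \E(d, D) = N(d, D, \chi)$, which the paper leaves as ``easily verified.''
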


\begin{proof}
Let $A_0 \in  N(d, D, \chi)$.  By definition of $N(d, D, \chi)$, we know that $0 < \lambda_\chi(\lop(\underline{A}_0)) < \lambda_1(\lop(\underline{A}_0))$. Since $L$, $Q$, $\lambda_{\chi}$, and $\lambda_{1}$ are continuous on $\E(d, D, \chi)$, we know there exists a neighborhood $A_0 \in U \subset \E(d, D, \chi)$ such that $A \in U$ implies $0 < \lambda_{\chi}(\lop(\underline{A})) < \lambda_{1}(\lop(\underline{A}))$,  hence $A \in N(d, D, \chi)$. This proves that $N(d, D, \chi)$ is open in $\E(d, D, \chi)$. Since $\E(d, D, \lchiminus)$ is closed in $\E(d, D)$, we know $\E(d, D, \chi)$ is open in $\E(d, D, \lchi)$, hence $N(d, D, \chi)$ is open in $\E(d, D, \lchi)$. The remarks about $N(\chi)$ are easily verified.
\end{proof}

\begin{proposition}
Let $d, D, \chi \in \bbN$ with $2 \leq \chi \leq D$. The set 
\[
O(d, D, \lchi) \defeq \E(d, D, \lchiminus) \cup N(d, D, \chi)
\]
is open in $\E(d, D, \lchi)$. Furthermore, define
\[
O(\lchi) \defeq \bigcup_{d, D} O(d, D, \lchi).
\]
Then $O(\lchi) \cap \E(d, D) = O(d, D, \lchi)$, hence $O(\lchi)$ is open in $\E(\lchi)$.
\end{proposition}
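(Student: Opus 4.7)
The claim naturally splits into three pieces: openness of $O(d,D,\lchi)$ in $\E(d,D,\lchi)$, the identity $O(\lchi) \cap \E(d,D) = O(d,D,\lchi)$, and openness of $O(\lchi)$ in $\E(\lchi)$. The third follows immediately from the first two once we invoke the colimit definition of the topology on $\E(\lchi)$, so the content is in the first two pieces.

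For the first piece, I would show that the complement
\[
C(d,D,\chi) \defeq \E(d,D,\lchi) \setminus O(d,D,\lchi) = \{A \in \E(d,D,\chi) : \lop(\underline{A}) \text{ has a single nonzero eigenvalue}\}
\]
is closed in $\E(d,D,\lchi)$. Let $(A_n)$ be a sequence in $C(d,D,\chi)$ converging to $A_0 \in \E(d,D,\lchi)$, of essential rank $\chi_0 \leq \chi$. There are two cases. If $\chi_0 = \chi$, then $\lop(\underline{A})$ is continuous on $\E(d,D,\chi)$ (because $Q$ is continuous on a stratum of fixed essential rank), and the condition $\lambda_1(\lop(\underline{A})) = \lambda_\chi(\lop(\underline{A}))$ is closed by continuity of the ordered eigenvalues, so $A_0 \in C(d,D,\chi)$. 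If $\chi_0 < \chi$, I would apply Lemma~\ref{lem:spectrum_gap} with $\varepsilon = \tfrac{1}{3}\lambda_{\chi_0}(\lop(\underline{A}_0)) > 0$: this produces a neighborhood $U$ of $A_0$ such that for every $A \in U \cap \E(d,D,\chi)$, the nonzero spectrum of $\lop(\underline{A})$ meets both of the disjoint intervals $[0,\varepsilon)$ and $(\lambda_{\chi_0}(\lop(\underline{A}_0)) - \varepsilon,\infty)$, which in particular forces at least two distinct nonzero eigenvalues. Thus $A_n \in N(d,D,\chi) \subset O(d,D,\lchi)$ for $n$ large, contradicting $A_n \in C(d,D,\chi)$. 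Hence this case is vacuous, and the complement is closed.

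For the second piece, the identity $O(\lchi) \cap \E(d,D) = O(d,D,\lchi)$ is essentially bookkeeping. Unpacking, $O(\lchi) \cap \E(d,D)$ is the union over all $(d',D')$ of $(\E(d',D',\lchiminus) \cup N(d',D',\chi)) \cap \E(d,D)$. Since the essential rank of $A \in \E$ is intrinsically defined as the rank of $\lop(A)$ and is independent of the ambient $(d',D')$, an element of $\E(d,D)$ lying in $\E(d',D',\lchiminus)$ automatically has essential rank $\leq \chi - 1$ and so lies in $\E(d,D,\lchiminus)$; similarly an element of $\E(d,D) \cap N(d',D',\chi)$ automatically has essential rank exactly $\chi$ and the same eigenvalue condition on $\lop(\underline{A})$, hence lies in $N(d,D,\chi)$. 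Both inclusions are clearly reversible.

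Finally, for openness in $\E(\lchi)$, I would use that $\E(\lchi)$ carries the final topology with respect to the closed embeddings $\E(d,D,\lchi) \hookrightarrow \E(\lchi)$ (Remark~\ref{rem:topology_discussion}). By the second piece, $O(\lchi) \cap \E(d,D,\lchi) = O(d,D,\lchi)$, which is open in $\E(d,D,\lchi)$ by the first piece, so $O(\lchi)$ is open in $\E(\lchi)$. I expect the main obstacle to be the first piece, specifically the mixed case where the limit point drops in essential rank --- all the spectral analysis culminating in Lemma~\ref{lem:spectrum_gap} was set up precisely to handle this, and applying it cleanly is the only subtle step.
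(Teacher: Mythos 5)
Your proof is correct and uses the same key technical input as the paper, namely Lemma~\ref{lem:spectrum_gap} applied at a point whose essential rank may drop; the paper proves openness directly (exhibiting, for each $A_0 \in \E(d,D,\lchiminus)$, a neighborhood $U \cap \E(d,D,\lchi) \subset O(d,D,\lchi)$ via the lemma, with openness of $N(d,D,\chi)$ from the preceding proposition covering the remaining points), whereas you show the complement is closed via a sequence argument. These are two phrasings of the same idea, and both are sound since $\E(d,D,\lchi)$ is metrizable.
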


\begin{proof}
We must show that for every $A_0 \in \E(d, D, \lchiminus)$, there exists an open subset of $\E(d, D, \lchi)$ containing $A_0$ contained in $O(d, D, \lchi)$. Let $\chi_0$ be the essential rank of $A_0$. We may find an open set $U \subset \E(d, D)$ containing $A_0$ as in Lemma \ref{lem:spectrum_gap} with $\varepsilon < \lambda_{\chi_0}(\lop(\underline{A}_0))/2$. Then $U \cap \E(d, D, \lchi)$ is open in $\E(d, D, \lchi)$ and is contained in $O(d, D, \lchi)$, proving that $O(d, D, \lchi)$ is open in $\E(d, D, \lchi)$. The remarks about $O(\lchi)$ are easily verified.
\end{proof}

\begin{remark}
Note that both $N(\chi)$ and $O(\lchi)$ are intersections of an open and a closed subset of $\E$ and are therefore compactly generated in the subspace topology. By Proposition \ref{prop:subspace_colimit_equal}, the subspace topologies on $N(\chi)$ and $O(\lchi)$ coincide with the final topologies induced by the inclusions $N(d, D, \chi) \hookrightarrow N(\chi)$ and $O(d, D, \lchi) \hookrightarrow O(\lchi)$.
\end{remark}

Our next goal is to define a gauge equivariant deformation retract of $O(\lchi)$ onto $\E(\lchiminus)$. We therefore fix $\chi \geq 2$ for now. Let us set up some notation. First, we define a family of real-valued functions parametrized by $t \in [0,1]$ and $\delta \geq 0$. We define $f_{t,\delta}:\bbR \rightarrow \bbR$ by
\[
f_{t,\delta}(x) = \begin{cases} 0 &\tn{if $x \leq t\delta$} \\ \sqrt{1 - \frac{t\delta}{x}} &\tn{if $x > t\delta$.} \end{cases}
\]
Observe that if $t = 0$ or $\delta = 0$, then $f_{t,\delta}$ is the Heaviside step function $\theta$. On the other hand, if $t > 0$ and $\delta > 0$, then $t$ is continuous.

We then define $F:O(\lchi) \times [0,1] \rightarrow M_\infty(\bbC)$ using functional calculus by
\[
F(A,t) = f_{t,\delta}(\lop(\underline{A})) \qqtext{where} \delta = \delta(A) = \lambda_\chi(\lop(\underline{A})).
\]
We will typically suppress the argument in $\delta(A)$ and just write $\delta$ for cleaner notation. We note that $F$ maps $O(d, D, \lchi) \times [0,1]$ into $M_D(\bbC)$. Next, we define $G:O(\lchi) \times [0,1] \rightarrow \cM$ by 
\[
G(A,t) = AF(A,t).
\]
Again, note that $G$ maps $O(d, D, \lchi) \times [0,1]$ into $\cM(d, D)$. If $X$, $K$, and $M$ correspond to $A$ as in Definition \ref{def:EGL}, then
\[
F(A,t) = X\mqty(f_{t,\delta}(\lop(K)) & 0\\0 & 0)X^*
\]
and
\[
G(A,t)^i = X \mqty(K^if_{t,\delta}(\lop(K)) & 0 \\ M^i f_{t,\delta}(\lop(K)) & 0)X^*.
\]

While $F$ is not continuous because of discontinuities in $L(\underline{A})$ and the discontinuity in $f_{t,\delta}$ when $t = 0$ and $\delta = 0$, we can in fact show that $G$ is continuous. 
We will require the following lemma. 

\begin{lemma}\label{lem:h_(t,delta)(L(Q(A)A))Q(A_0)}
Given $A_0 \in \E(d, D, \lchiminus)$, the function
\[
O(d, D, \lchi) \times [0,1] \rightarrow M_D(\bbC), \quad (A, t) \mapsto F(A,t)Q(A_0)
\]
is continuous at $(A_0, t_0)$ for all $t_0 \in [0,1]$.
\end{lemma}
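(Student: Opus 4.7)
The plan is to first compute the value of the function at $(A_0,t_0)$, then use the spectral gap from Lemma \ref{lem:spectrum_gap} to split $\lop(\underline{A})$ into a ``large'' and a ``small'' part, and finally observe that right-multiplication by $Q(A_0)$ makes the discontinuity in $\lop(\underline{A})$ invisible. Since $A_0 \in \E(d,D,\lchiminus)$ has some essential rank $\chi_0 \leq \chi - 1$, we have $\delta(A_0) = \lambda_\chi(\lop(\underline{A_0})) = 0$, so $f_{t_0,0}$ is the Heaviside step function $\theta$ and $F(A_0,t_0) = \theta(\lop(\underline{A_0})) = Q(\underline{A_0}) = Q(A_0)$. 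Thus the value at $(A_0,t_0)$ equals $Q(A_0)^2 = Q(A_0)$, and the goal is to prove $F(A,t)Q(A_0) \to Q(A_0)$.

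Let $\lambda_0 = \lambda_{\chi_0}(\lop(\underline{A_0})) > 0$ and choose a small $\varepsilon > 0$ with $\varepsilon < \lambda_0/4$. By Lemma \ref{lem:spectrum_gap}, there is a neighborhood $U$ of $A_0$ in $\E(d,D)$ such that for every $A \in U \cap O(d,D,\lchi)$ one has $\sigma(\lop(\underline{A})) \subset [0,\varepsilon) \cup (\lambda_0-\varepsilon, \infty)$ and $\delta(A) < \varepsilon$. Let $P^+_A$ and $P^-_A = I - P^+_A$ denote the spectral projections of $\lop(\underline{A})$ onto those two intervals. On the large part, $f_{t,\delta(A)}(x) = \sqrt{1 - t\delta(A)/x}$ is well-defined and satisfies $|f_{t,\delta(A)}(x) - 1| \leq t\delta(A)/x \leq \varepsilon/(\lambda_0-\varepsilon)$, so $\|F(A,t)P^+_A - P^+_A\|$ is uniformly small in $t \in [0,1]$. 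Since $\|F(A,t)\| \leq 1$, the whole analysis will then reduce to understanding $P^+_A$.

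The central claim is that $\|P^+_A - Q(A_0)\| \to 0$ as $A \to A_0$, which is the heart of the argument and where the discontinuity of $\lop(\underline{A})$ is overcome. To see this, note first that $P^+_A$ has rank exactly $\chi_0$: by Lemma \ref{lem:L(A)_op_bounds}, after shrinking $U$ we have $\lop(\underline{A_0}) - \tilde\varepsilon \1 \leq \lop(\underline{A})$, so the min--max principle gives at least $\chi_0$ eigenvalues of $\lop(\underline{A})$ above $\lambda_0 - \varepsilon$; conversely, the argument in the proof of Lemma \ref{lem:spectrum_gap} shows that every unit eigenvector $v$ of $\lop(\underline{A})$ with eigenvalue $\geq \varepsilon$ satisfies $\|(I - Q(A_0))v\| \leq 2\tilde\varepsilon/\varepsilon$, so by dimension counting against the $\chi_0$-dimensional space $\range Q(A_0)$ there are at most $\chi_0$ independent such eigenvectors. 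Thus $P^+_A$ and $Q(A_0)$ are both rank-$\chi_0$ orthogonal projections with $\|(I-Q(A_0))P^+_A\| \leq 2\tilde\varepsilon/\varepsilon$; since for orthogonal projections of equal rank one has $\|P_1 - P_2\| = \|(I-P_1)P_2\| = \|(I-P_2)P_1\|$ (standard CS-decomposition), it follows that $\|P^+_A - Q(A_0)\| \to 0$.

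Finally, decompose $F(A,t)Q(A_0) = F(A,t)P^+_A Q(A_0) + F(A,t)P^-_A Q(A_0)$. The first term satisfies
\[
	\|F(A,t)P^+_A Q(A_0) - Q(A_0)\| \leq \|F(A,t)P^+_A - P^+_A\| + \|P^+_A - Q(A_0)\|,
\]
both of which go to zero. The second term has norm bounded by $\|P^-_A Q(A_0)\| = \|(I-P^+_A)Q(A_0)\|$, which also goes to zero by the central claim. Adding the two contributions proves continuity of $(A,t) \mapsto F(A,t)Q(A_0)$ at $(A_0,t_0)$, uniformly in $t_0 \in [0,1]$. The main obstacle throughout is the proof of $\|P^+_A - Q(A_0)\| \to 0$: although $\lop(\underline{A})$ is itself discontinuous at $A_0$ (since $Q(A)$ can jump when the essential rank changes), its rank-$\chi_0$ top spectral projection is continuous there, and this is what makes the right-multiplication by $Q(A_0)$ tame the discontinuity.
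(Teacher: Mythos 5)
Your proof is correct and takes a genuinely different route from the paper's. The paper argues eigenvector-by-eigenvector: for each unit eigenvector $v$ of $\lop(\underline{A})$ it splits into the high-eigenvalue and low-eigenvalue cases and bounds $\norm{F(A,t)Q(A_0)v - Q(A_0)v}$ directly, using the same three estimates from Lemmas \ref{lem:L(A)_op_bounds}, \ref{lem:L(Q(A)A)_almost_range}, and \ref{lem:spectrum_gap} that you do. You instead organize the argument around the spectral projections $P^{\pm}_A$, prove that $P^+_A$ has rank exactly $\chi_0$ (lower bound via Weyl/min--max from Lemma \ref{lem:L(A)_op_bounds}, upper bound via the eigenvector estimate plus linear independence), and then invoke the equal-rank identity $\norm{(\1-P)Q} = \norm{(\1-Q)P}$ from the CS decomposition to pass from $\norm{(\1-Q(A_0))P^+_A}$ small to $\norm{P^+_A - Q(A_0)}$ small. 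This makes visible the conceptual point that the rank-$\chi_0$ top spectral projection of $\lop(\underline{A})$ is continuous at $A_0$ even though $\lop(\underline{A})$ itself is not, at the cost of extra machinery (rank counting, CS decomposition) that the paper's elementary proof avoids. Two small caveats: the per-eigenvector estimate $\norm{(\1-Q(A_0))v}\le 2\tilde\varepsilon/\varepsilon$ controls the operator norm $\norm{(\1-Q(A_0))P^+_A}$ only up to a factor $\sqrt{\chi_0}$ via Cauchy--Schwarz or the Hilbert--Schmidt norm (harmless, but worth flagging; alternatively, writing $(\1-Q(A_0))P^+_A = (\1-Q(A_0))\lop(\underline{A})\cdot(\lop(\underline{A})|_{\range P^+_A})^{-1}P^+_A$ gives the sharper operator-norm bound $\tilde\varepsilon/(\lambda_0-\varepsilon)$ directly from Lemma \ref{lem:L(Q(A)A)_almost_range}); and your first-term bound $\varepsilon/(\lambda_0-\varepsilon)$ shrinks only with $\varepsilon$, not with $\tilde\varepsilon$, so for a target precision $\eta$ the quantifiers must be ordered: first choose $\varepsilon$ small so this term is below $\eta$, then choose $\tilde\varepsilon$ even smaller so the remaining terms are too. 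Both points are easily patched, and the overall argument is sound.
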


\begin{proof}
For $A_0 \in \E(d, D, \lchiminus)$ we have $\delta = 0$, so
\[
F(A_0,t)Q(A_0) = \theta(L(\underline{A}_0))Q(A_0) = Q(A_0).
\]
It therefore suffices to show that for every $\varepsilon > 0$, there exists an open set $U \subset O(d, D, \lchi)$ containing $A_0$ such that
\begin{equation}\label{eq:f_t_delta_range}
\norm{F(A,t)Q(A_0) - Q(A_0)} < \varepsilon
\end{equation}
for all $A \in U$ and $t \in [0,1]$. 

Let $\chi_0$ be the essential rank of $A_0$ and 
set $\gamma = \lambda_{\chi_0}(\lop(\underline{A}_0))$ for ease of notation. By Lemmas \ref{lem:L(A)_op_bounds}, \ref{lem:L(Q(A)A)_almost_range}, and \ref{lem:spectrum_gap}, we know that for any $\tilde \varepsilon \in (0, \gamma/2)$, there exists an open set $U \subset O(d, D, \lchi)$ containing $A_0$ such that for all $A \in U$ we have:
\begin{itemize}
	\setlength\itemsep{0.5em}
	\item $\lop(\underline{A}_0) - \tilde \varepsilon \1 \leq \lop(\underline{A})$,
	\item $\norm{\lop(\underline{A}) - \lop(\underline{A})Q(A_0)} < \tilde \varepsilon$,
	\item $\sigma(\lop(\underline{A})) \subset [0,\tilde \varepsilon) \cup (\gamma - \tilde \varepsilon, \infty)$, with $\delta = \lambda_\chi(\lop(\underline{A})) \in [0,\tilde \varepsilon)$.
\end{itemize}
We show that \eqref{eq:f_t_delta_range} holds for all $A \in U$ and $t \in [0,1]$, provided that $\tilde \varepsilon$ is small enough.

Suppose $v$ is a unit eigenvector of $\lop(\underline{A})$ with eigenvalue $\lambda$. Then
\begin{align*}
\lambda \norm{v - Q(A_0)v} = \norm{\lop(\underline{A})v - Q(A_0)\lop(\underline{A})v} < \tilde \varepsilon.
\end{align*}
If $\lambda \in (\gamma - \tilde \varepsilon, \infty)$, then
\[
\norm{v - Q(A_0)v} \leq \frac{\tilde \varepsilon}{\gamma - \tilde \varepsilon}.
\]
Thus,
\begin{align*}
\norm{F(A,t)Q(A_0)v - Q(A_0)v} &\leq \norm{f_{t,\delta}(\lop(\underline{A}))v - v}  + \frac{2\tilde \varepsilon}{\gamma - \tilde \varepsilon} \\
&= \qty(1 - \sqrt{1 - \frac{t\delta}{\lambda}}) + \frac{2\tilde \varepsilon}{\gamma - \tilde \varepsilon}\\
&\leq 1 - \sqrt{1 - \frac{t\tilde \varepsilon}{\gamma - \tilde \varepsilon}} + \frac{2\tilde \varepsilon}{\gamma - \tilde \varepsilon}.
\end{align*}
This tends to zero as $\tilde \varepsilon \rightarrow 0$.

Now suppose $\lambda \in [0,\tilde \varepsilon)$. Then, by the first bullet point above,
\[
\lambda + \tilde \varepsilon \geq \ev{v, \lop(\underline{A}_0)v} \geq \gamma \norm{Q(A_0)v}^2,
\]
hence
\[
\norm{F(A,t)Q(A_0)v - Q(A_0)v} \leq 2\norm{Q(A_0)v} < 2\sqrt{\frac{2\tilde \varepsilon}{\gamma}}.
\]
Again, this can be made arbitrarily small as $\tilde \varepsilon \rightarrow 0$. Since we have an orthonormal basis of unit eigenvectors of $\lop(\underline{A})$ on which $F(A,t)Q(A_0)$ is close to $Q(A_0)$, this proves the result. 
\end{proof}

\begin{lemma}\label{lem:G_continuous}
The function $G:O(\lchi) \times [0,1] \rightarrow \cM$ is continuous.
\end{lemma}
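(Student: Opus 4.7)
The plan is to verify continuity of $G$ at each point $(A_0, t_0) \in O(\lchi) \times [0,1]$. Since $[0,1]$ is locally compact and $O(\lchi)$ carries the colimit topology with respect to the closed inclusions $O(d,D,\lchi) \hookrightarrow O(\lchi)$, the product topology on $O(\lchi) \times [0,1]$ coincides with the colimit of the products $O(d,D,\lchi) \times [0,1]$, so it suffices to prove continuity of each restriction $G \colon O(d,D,\lchi) \times [0,1] \to \cM(d,D)$. The image lies in $\cM(d,D)$ because $G(A,t)^i = A^i F(A,t)$ vanishes whenever $A^i$ does. The argument then splits according to whether $A_0$ lies in $N(d,D,\chi)$ or in $\E(d,D,\lchiminus)$.

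The case $A_0 \in N(d,D,\chi)$ is the straightforward one: since $\E(d,D,\chi)$ is open in $O(d,D,\lchi)$, a neighborhood of $A_0$ consists entirely of tensors of essential rank exactly $\chi$, and there $Q(A)$, $\lop(\underline{A})$, and $\delta(A) = \lambda_\chi(\lop(\underline{A})) > 0$ are all continuous. Invoking Lemma \ref{lem:spectrum_gap} I shrink to a neighborhood $U \ni A_0$ on which $\sigma(\lop(\underline{A})) \subset \{0\} \cup [\delta_0/2, M]$ for fixed $\delta_0, M > 0$ and $\delta(A) \in [\delta_0/2, 2\delta_0]$. On this compact subset of $\bbR$ the family $f_{t,\delta}$ depends continuously on $(t,\delta) \in [0,1] \times [\delta_0/2, 2\delta_0]$ in the supremum norm, so the continuous functional calculus gives joint continuity of $F(A,t)$, hence of $G(A,t) = AF(A,t)$.

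The main obstacle is the essential rank transition $A_0 \in \E(d,D,\lchiminus)$, where neither $Q$, nor $\lop(\underline{A})$, nor $F$ is individually continuous at $A_0$. The decisive identity is $A_0^i Q(A_0) = A_0^i$ for every $i$: if $v \in \ker Q(A_0) = \ker \lop(A_0)$, then $\sum_i \|A_0^i v\|^2 = \langle v, \lop(A_0) v\rangle = 0$. Since $\delta(A_0) = 0$, we have $F(A_0, t_0) = \theta(\lop(\underline{A}_0)) = Q(A_0)$, whence $G(A_0, t_0) = A_0 Q(A_0) = A_0$. This allows the decomposition
\[
G(A, t) - G(A_0, t_0) = (A - A_0)\, F(A, t) + A_0 \bigl( Q(A_0) F(A,t) - Q(A_0) \bigr).
\]
The first summand tends to zero because $\|F(A,t)\| \le 1$ and $A \to A_0$ in the finite-dimensional space $\cM(d,D)$. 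For the second, self-adjointness of $F(A,t)$ gives $Q(A_0) F(A,t) = (F(A,t) Q(A_0))^*$, and Lemma \ref{lem:h_(t,delta)(L(Q(A)A))Q(A_0)} supplies $F(A,t) Q(A_0) \to Q(A_0)$ as $(A,t) \to (A_0, t_0)$; taking adjoints and then left-multiplying by the fixed tensor $A_0$ makes this summand vanish in the limit, completing the proof of continuity.
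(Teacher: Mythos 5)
Your proof is correct, and the overall skeleton matches the paper's: reduce to $O(d,D,\lchi)\times[0,1]$ via the colimit-commutes-with-locally-compact argument, then split according to whether $A_0$ has essential rank $\chi$ or lies in $\E(d,D,\lchiminus)$, invoking Lemma~\ref{lem:h_(t,delta)(L(Q(A)A))Q(A_0)} for the rank-drop case.

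Where you genuinely diverge is in the treatment of $A_0 \in N(d,D,\chi)$. The paper handles $t_0 \in (0,1]$ by asserting manifest continuity and then separately treats $t_0 = 0$ with the explicit scalar bound $\|F(A,t) - Q(A)\| = 1 - \sqrt{1-t}$ obtained from functional calculus. You instead give a unified argument: pin the spectra of $\lop(\underline{A})$ to a fixed compact $\{0\} \cup [\delta_0/2, M]$ on a small neighborhood (this follows already from continuity of $\lop(\underline{\,\cdot\,})$ and $\lambda_\chi$ and rank constancy on $N(d,D,\chi)$; Lemma~\ref{lem:spectrum_gap} is somewhat heavier than needed here), note that $\delta(A)$ stays in $[\delta_0/2, 2\delta_0]$, and then use that $(t,\delta,x)\mapsto f_{t,\delta}(x)$ is jointly continuous on the resulting compact parameter set, hence $(t,\delta)\mapsto f_{t,\delta}$ is $\sup$-norm continuous there, so $(A,t) \mapsto F(A,t)$ is jointly continuous by functional calculus. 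This is cleaner in that it avoids the case split on $t_0$ and the ad hoc operator-norm computation, at the cost of appealing to a slightly more abstract uniform-continuity fact. For the rank-drop case you also use the cleaner identity $A_0^i = A_0^i Q(A_0)$ to write $A_0 F(A,t) - A_0 = A_0\bigl(Q(A_0)F(A,t) - Q(A_0)\bigr)$, rather than the paper's squaring manipulation with $\lop(A_0)^{1/2}$; both terminate in the same appeal to Lemma~\ref{lem:h_(t,delta)(L(Q(A)A))Q(A_0)} after taking adjoints, and yours is the more direct route.
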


\begin{proof}
As $O(\lchi)$ is the colimit of the spaces $O(d, D, \lchi)$, 
it suffices to show that the restriction $G:O(d, D, \lchi) \times [0,1] \rightarrow \cM(d, D)$ is continuous. We note that $G$ is manifestly continuous on $N(d, D, \chi) \times (0,1]$, which is open in $O(d, D, \lchi) \times [0,1]$. We next show continuity on points of $N(d, D, \chi) \times \qty{0}$.  Given $A_0 \in N(d, D, \chi)$, we have $G(A_0, 0) = A_0Q(A_0) = A_0$. Then given $(A,t) \in O(d, D, \lchi) \times [0,1]$, we observe that
\begin{align*}
\norm{G(A,t)^i - A^i_0} &\leq \norm{A^i[F(A,t) + \1 - Q(A)] - A_0^i}\\
&\leq \norm{A^i - A_0^i} + \norm{A^i}\norm{F(A,t) - Q(A)}.
\end{align*}
By functional calculus, we have
\[
\norm{F(A,t) - Q(A)} = 1 - \sqrt{1 - t}.
\]
Plugging this in, we obtain
\[
\norm{G(A,t)^i - A^i_0} \leq (2 - \sqrt{1 - t})\norm{A^i - A_0^i} + \norm{A_0^i}\qty(1 - \sqrt{1 - t}).
\]
This is arbitrarily small for sufficiently small $t$ and $A$ sufficiently close to $A_0$.

It remains to prove continuity at $(A_0, t_0) \in \E(d, D, \lchiminus) \times [0,1]$. Observe that in this case $\delta(A_0) = 0$ so $G(A_0,t_0) = A_0Q(A_0) = A_0$. If $(A, t) \in O(d, D, \lchi) \times [0,1]$, then 
\[
\norm{G(A,t)^i - A_0^i} \leq \norm{A^i - A_0^i} + \norm{A_0^iF(A,t) - A_0^i}.
\]
It suffices to show that the last norm above can be made small for $A$ sufficiently close to $A_0$. We observe that
\begin{align*}
\norm{A_0^iF(A,t) - A_0^i}^2 &= \norm{\qty[F(A,t) - Q(A_0)]A_0^{i*}A_0^i\qty[F(A,t) - Q(A_0)]}  \\
&\leq \norm{\qty[F(A,t) - Q(A_0)]\lop(A_0)\qty[F(A,t) - Q(A_0)]}\\
&\leq \norm{\lop(A_0)^{1/2}[F(A,t) - Q(A_0)]}\\
&= \norm{\lop(A_0)^{1/2}}\norm{Q(A_0)F(A,t) - Q(A_0)},
\end{align*}
and this can be made arbitrarily small for $(A,t)$ sufficiently close to $(A_0, t_0)$ by Lemma \ref{lem:h_(t,delta)(L(Q(A)A))Q(A_0)}. 
\end{proof}

We are almost ready to define our deformation retract. The issue with $G(A,t)$ is that $K f_{t,\delta}(\lop(K))$ is not right-normalized as an MPS tensor. To normalize it, we do the following. Given $(A, t) \in O(\lchi) \times [0,1]$ with $X$, $K$, and $M$ corresponding to $A$, observe that
\begin{align*}
\rop(Q(A)G(A,t)) = X\mqty(\sum_{i}K^i f_{t,\delta}(\lop(K))^2 K^{i*} & 0 \\0 & 0)X^*.
\end{align*}
Observe that $f_{t,\delta}(\lop(K))$ is positive and nonzero since there exists $\lambda \in \sigma(K)$ such that $\lambda > \delta$ by definition of $O(\lchi)$. 
Since $K$ is an injective MPS tensor, $\sum_i K^i f_{t,\delta}(\lop(K))^2K^{i*}$ is positive and, moreover, invertible. Therefore, the matrix
\begin{equation}\label{eq:S(A,t)}
R(Q(A)G(A,t)) + \1 - Q(A) = X\mqty(\sum_{i}K^i f_{t,\delta}(\lop(K))^2 K^{i*} & 0 \\0 & \1)X^*
\end{equation}
is positive and invertible, where the $\1$ on the left hand side is the $\bbN \times \bbN$ identity matrix. Let us define $S:O(\lchi) \times [0,1] \rightarrow \GL(\infty)$ by
\[
S(A,t) = \rop(Q(A)G(A,t)) + \1 - Q(A)
\]
for ease of notation.

\begin{lemma}\label{lem:S_continuous}
The function $S:O(\lchi) \times [0,1] \rightarrow \GL(\infty)$ is continuous and maps $O(d, D, \lchi) \times [0,1]$ into $\GL(D) \subset \GL(\infty)$. 
\end{lemma}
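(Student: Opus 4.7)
The plan is to handle the codomain, reduce continuity to the finite-dimensional level, and then split the continuity argument into an easy case on $N(d,D,\chi) \times [0,1]$ and a delicate case at transition points $(A_0, t_0)$ with $A_0 \in \E(d,D,\lchiminus)$. For the codomain: when $A \in O(d,D,\lchi)$, the summand $R(Q(A)G(A,t))$ is supported in the top-left $D \times D$ block (since $G(A,t) \in \cM(d,D)$ and $Q(A)$ sits inside this block), while $\1 - Q(A)$ equals the identity outside the block; combined with the positivity and invertibility already established at \eqref{eq:S(A,t)}, this places $S(A,t)$ in $\GL(D) \subset \GL(\infty)$. Since $[0,1]$ is locally compact Hausdorff, continuity on $O(\lchi) \times [0,1]$ reduces to continuity on each $O(d,D,\lchi) \times [0,1] \to M_D(\bbC)$; on the open piece $N(d,D,\chi) \times [0,1]$ this is immediate, because $A \mapsto Q(A)$ is continuous there (it is the projection onto the range of a self-adjoint operator of constant rank $\chi$), $\delta(A) > 0$ depends continuously on $A$, and hence so do $f_{t,\delta(A)}$, $F(A,t)$, $G(A,t)$, and $R(Q(A)G(A,t))$.

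The main obstacle is continuity at $(A_0, t_0)$ with $A_0 \in \E(d,D,\lchiminus)$, where $Q$ itself is discontinuous. The key is the identity
\[
S(A,t) \;=\; \1_D \;-\; t\,\delta(A)\, Q(A)\, \Phi_A\!\bigl(\lop(\underline{A})^{+}\bigr)\, Q(A), \qquad (A,t) \in O(d,D,\lchi) \times [0,1],
\]
where $\Phi_A(M) \defeq \sum_i A^i M A^{i*}$ and $\lop(\underline{A})^{+}$ is the Moore--Penrose pseudoinverse (so that the right-hand side is unambiguous when $\delta(A) = 0$). To derive the identity I would observe that $\sigma(\lop(\underline{A})) \subset \{0\} \cup [\delta,\infty)$, so the functional calculus gives $F(A,t)^2 = Q(A) - t\delta\, \lop(\underline{A})^{+}$. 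Applying $\Phi_A$ and using $A^i Q(A) = A^i$ (hence $\Phi_A(Q(A)) = \rop(A)$), one obtains
\[
R(Q(A)G(A,t)) \;=\; Q(A)\Phi_A(F^2)Q(A) \;=\; Q(A)\rop(A)Q(A) \;-\; t\delta\, Q(A)\Phi_A(\lop(\underline{A})^{+})Q(A).
\]
A direct block calculation from \eqref{eq:EUMPS_def}, using the right-normalization $\sum_i K^i K^{i*} = \1$, shows that $Q(A)\rop(A)Q(A) = Q(A)$; substituting into $S(A,t) = R(Q(A)G(A,t)) + \1_D - Q(A)$ then yields the identity.

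Granted the identity, continuity follows from the bound $\|M\| \leq \tr(M)$ for positive semidefinite $M$. Cyclicity of the trace gives
\[
\tr\!\bigl(Q(A)\Phi_A(\lop(\underline{A})^{+})Q(A)\bigr) \;=\; \tr\!\bigl(\lop(\underline{A})^{+} \lop(\underline{A})\bigr) \;=\; \tr(Q(A)) \;=\; \chi_A,
\]
where $\chi_A$ denotes the essential rank of $A$. Therefore $\|S(A,t) - \1_D\| \leq t\,\delta(A)\,\chi_A \leq \chi\,\delta(A)$. Lemma \ref{lem:spectrum_gap} forces $\delta(A) \to 0$ as $A \to A_0$ (if $A$ has essential rank at most $\chi - 1$ then $\delta(A) = 0$, while on $N(d,D,\chi)$ near $A_0$ the value $\delta(A) = \lambda_\chi(\lop(\underline{A}))$ lies in the small-eigenvalue window $(0,\varepsilon)$ produced by the lemma), so $S(A,t) \to \1_D = S(A_0, t_0)$. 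The delicate point throughout is the cancellation exposed by the identity: the norm of $\lop(\underline{A})^{+}$ blows up like $1/\delta(A)$, but the prefactor $t\delta$ absorbs it exactly, and the trace bound keeps the result uniform because $\lop(\underline{A})^{+}\lop(\underline{A}) = Q(A)$ is a projection of rank at most $\chi$.
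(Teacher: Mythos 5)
Your proof is correct, and the key idea is genuinely different from the paper's. The paper proves continuity at transition points by invoking Lemma \ref{lem:G_continuous} (continuity of $G$) and the identity $Q(A)\rop(A)Q(A)=Q(A)$ to get $\norm{S(A,t)-\1}=\norm{Q(A)[\rop(G(A,t))-\rop(A)]Q(A)}$, which is small by continuity of $A\mapsto \rop(G(A,t))-\rop(A)$. You instead derive the exact algebraic identity $S(A,t)=\1 - t\,\delta(A)\,Q(A)\Phi_A(\lop(\underline A)^+)Q(A)$ from $F(A,t)^2 = Q(A) - t\delta\,\lop(\underline A)^+$, and then control the norm by the trace, which equals $\tr(\lop(\underline A)^+\lop(\underline A))=\chi_A$ exactly. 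This buys a clean quantitative estimate $\norm{S(A,t)-\1}\leq \chi\,\delta(A)$ and makes the cancellation between the $1/\delta$ blow-up in $\lop(\underline A)^+$ and the $t\delta$ prefactor transparent; the paper's route instead piggybacks on the already-needed Lemma \ref{lem:G_continuous}. Two small remarks: (i) you invoke Lemma \ref{lem:spectrum_gap} to get $\delta(A)\to 0$, but this follows more cheaply from the first part of Lemma \ref{lem:L(A)_op_bounds} (namely $\lop(\underline A)\leq\lop(A)$), Weyl monotonicity of ordered eigenvalues, continuity of $\lambda_\chi\circ\lop$, and $\lambda_\chi(\lop(A_0))=0$ for $A_0$ of essential rank less than $\chi$; and (ii) in the easy case you assert $f_{t,\delta(A)}$ itself is continuous, which fails at $t=0$ (there $f_{0,\delta}=\theta$ is the Heaviside function) even though $F(A,t)=f_{t,\delta(A)}(\lop(\underline A))$ is still jointly continuous on $N(d,D,\chi)\times[0,1]$ thanks to the spectral gap; it would be cleaner to read continuity on $N(d,D,\chi)\times[0,1]$ off your own identity, since $\delta$, $Q$, and $\lop(\underline A)^+$ are all continuous there.
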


\begin{proof}
It is clear that $S$ maps $O(d, D, \lchi) \times [0,1]$ into $\GL(D)$. As $O(\lchi)$ is the colimit of the $O(d, D, \lchi)$ it suffices to show continuity of this restriction. Since $G$ is continuous by Lemma \ref{lem:G_continuous} and $Q$ is continuous on $N(d, D, \chi)$, we see that $S$ is continuous on $N(d, D, \chi) \times [0,1]$. 

We must show continuity at $(A_0, t_0) \in \E(d, D, \lchiminus) \times [0,1]$. Since $G(A_0, t_0) = A_0$ and $\rop(Q(A_0)A_0) = Q(A_0)$, we see that $S(A_0,t_0) = \1$. Given $\varepsilon > 0$, there exists a neighborhood $(A_0, t_0) \in U \subset O(d, D, \lchi) \times [0,1]$ such that 
\[
\norm{\rop(G(A,t)) - \rop(A)} < \varepsilon
\]
for all $(A,t) \in U$. Then for all $(A,t) \in U$ we have:
\begin{align*}
\norm{S(A,t) - \1} &= \norm{R(Q(A)G(A,t)) - Q(A)}\\
&=\norm{Q(A)\rop(G(A,t))Q(A) - Q(A)}\\
&< \varepsilon + \norm{Q(A)\rop(A)Q(A) - Q(A)} = \varepsilon,
\end{align*}
where the identity $Q(A)\rop(A)Q(A) = Q(A)$ can be checked by representing $A$ in terms of $X$, $K$, and $M$ as in Definition \ref{def:EGL} and using the right normalization condition on $K$. 
\end{proof}

\begin{theorem}\label{thm:deformation_retract}
The function $H:O(\lchi) \times [0,1] \rightarrow O(\lchi)$ defined by
\[
H(A,t) = S(A,t)^{-1/2}G(A,t)
\]
is a well-defined deformation retract of $O(\lchi)$ onto $\E(\lchiminus)$ such that $H(A,t) \sim H(B,t)$ whenever $A \sim B$. Furthermore, $H$ preserves $O(d, D, \lchi)$.
\end{theorem}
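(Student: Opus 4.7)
The verification breaks into three pieces: continuity and the endpoint identities, the essential-rank computation, and gauge equivariance. Continuity of $H = S^{-1/2}G$ is immediate from Lemmas~\ref{lem:G_continuous} and~\ref{lem:S_continuous} together with continuity of the positive square-root inverse on invertible positive operators. The endpoint identities $H(A,0) = A$ and $H(A,t) = A$ for $A \in \E(\lchiminus)$ both reduce to the fact that $f_{t,\delta}$ becomes the Heaviside step $\theta$ whenever $t = 0$ or $\delta = 0$: in these cases $F(A,t) = Q(A)$, hence $G(A,t) = AQ(A) = A$ and $S(A,t) = \rop(\underline A) + \1 - Q(A) = Q(A) + \1 - Q(A) = \1$.

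The core of the proof is identifying the essential rank of $H(A,t)$. For $A = X\begin{pmatrix}K & 0 \\ M & 0\end{pmatrix}X^*$ of essential rank $\chi_0$, diagonalize $f_{t,\delta}(\lop(K)) = U_\chi \begin{pmatrix}T & 0 \\ 0 & 0\end{pmatrix}U_\chi^*$ with $T \in M_{\chi'}(\bbC)$ positive invertible; one has $\chi' = \chi_0$ everywhere except at $t = 1$ with $A \in N(d,D,\chi)$, where $\chi' < \chi$ because the $\delta$-eigenspace $\pi$ of $\lop(K)$ is a proper subspace. Setting $Y = X\begin{pmatrix}U_\chi & 0 \\ 0 & \1\end{pmatrix}$ and letting $\tilde K^i, \tilde C^i$ denote the relevant blocks of $U_\chi^* K^i U_\chi$, direct expansion gives $S(A,t) = Y\,\diag(\Xi',\1,\1)\,Y^*$ with $\Xi' = \sum_i \tilde K^i T^2 \tilde K^{i*}$. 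Injectivity of $K$ forces $\Xi'$ to be positive invertible: if $v \in \ker\Xi'$ then $\tilde K^{i*}v \in \mathrm{range}(\pi)$ for each $i$, and writing $vw^*$ with $w \in \mathrm{range}(\pi^\perp)$ as $\sum_i c_i K^i$ via the spanning property and pairing with $v$ on the left forces $w = 0$, hence $v = 0$. Conjugating the first block by $\Xi'^{-1/2}$ then exhibits $H(A,t) = Y\begin{pmatrix}K' & 0 \\ M' & 0\end{pmatrix}Y^*$ with $K' = \Xi'^{-1/2}\tilde K T$ right-normalized (the identity $\Xi'^{-1/2}\Xi'\Xi'^{-1/2} = \1$) and injective (via the block surjection $M_\chi(\bbC) \twoheadrightarrow M_{\chi'}(\bbC)$). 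This places $H(A,1) \in \E(\lchiminus)$; for $t < 1$ one further verifies that $\lop(\underline{H(A,t)})$ retains at least two distinct nonzero eigenvalues so that $H(A,t) \in N(d,D,\chi) \subset O(d,D,\lchi)$.

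For gauge equivariance under $A' = \lambda Z(A + \tilde A)Z^{-1}$, one computes $\lop(\underline{A'}) = Z\lop(\underline A)Z^{-1}$, $\delta(A') = \delta(A)$, and $F(A',t) = ZF(A,t)Z^{-1}$, from which $Q(A')G(A',t)^i = \lambda ZQ(A)G(A,t)^iZ^{-1}$ (the $\tilde A$ contribution is killed by $Q(A)$) and $S(A',t) = ZS(A,t)Z^{-1}$. Since $\tilde A^i$ maps into $\ker Q(A)$ where $S(A,t)$ acts as the identity, $S(A,t)^{-1/2}\tilde A^i = \tilde A^i$, and so $H(A',t)^i = \lambda Z(H(A,t)^i + \tilde A^i F(A,t))Z^{-1}$. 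Using $Q(H(A,t)) \leq Q(A)$ together with $F(A,t)Q(H(A,t)) = F(A,t)$, the extra term $\tilde A F(A,t)$ is a valid fluff term for $H(A,t)$, exhibiting $H(A',t) \sim H(A,t)$. The main obstacle is the positive invertibility of $\Xi'$ at $t = 1$, which is not automatic from the formula and requires the full injectivity of $K$ together with the nontriviality of $\pi^\perp$ guaranteed by $A \in N(d,D,\chi)$; a secondary delicacy is showing that $\lop(\underline{H(A,t)})$ remains in $N(d,D,\chi)$ throughout $t \in [0,1)$, which is what upgrades preservation of $O(\lchi)$ to preservation of $O(d,D,\lchi)$.
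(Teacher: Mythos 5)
Your overall strategy (continuity from the lemmas, the Heaviside observation for the endpoint identities, the gauge-equivariance bookkeeping showing $H(A',t)^i = \lambda Z(H(A,t)^i + \tilde A^i F(A,t))Z^{-1}$ with $\tilde A F(A,t)$ a valid fluff term) matches the paper and is sound. The gap is in the central $t=1$ computation.

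You claim that, in the unitary $Y = X\bigl(\begin{smallmatrix} U_\chi & 0 \\ 0 & \1\end{smallmatrix}\bigr)$ that diagonalizes $f_{1,\delta}(\lop(K))$, one has $S(A,1) = Y\,\diag(\Xi',\1,\1)\,Y^*$ with $\Xi' = \sum_i \tilde K^i T^2 \tilde K^{i*}$. This is false when $\chi' < \chi$. From \eqref{eq:S(A,t)} one has $S(A,1) = X\bigl(\begin{smallmatrix}\tilde R & 0 \\ 0 & \1\end{smallmatrix}\bigr)X^*$ with $\tilde R = \sum_i K^i f_{1,\delta}(\lop(K))^2 K^{i*}$, and writing $U_\chi^* K^i U_\chi = \bigl(\begin{smallmatrix}\tilde K^i & \tilde B^i \\ \tilde C^i & \tilde D^i\end{smallmatrix}\bigr)$ gives
\[
U_\chi^* \tilde R\, U_\chi = \sum_i \begin{pmatrix}\tilde K^i T^2 \tilde K^{i*} & \tilde K^i T^2 \tilde C^{i*} \\ \tilde C^i T^2 \tilde K^{i*} & \tilde C^i T^2 \tilde C^{i*}\end{pmatrix},
\]
which has generically nonzero off-diagonal blocks and a lower-right block that is not $\1$, because the $K^i$ do not commute with the spectral projection of $\lop(K)$ onto the complement of its $\delta$-eigenspace. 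Consequently $S(A,1)^{-1/2}$ is not $Y\diag(\Xi'^{-1/2},\1,\1)Y^*$, and your formula $K' = \Xi'^{-1/2}\tilde K T$ for the bond-$\chi'$ tensor appearing in $H(A,1)$ is incorrect; the right-normalization identity $\Xi'^{-1/2}\Xi'\Xi'^{-1/2}=\1$ and the injectivity argument you give are therefore proving the wrong object is a well-formed MPS tensor. (For $t<1$ your formula is fine because there $\chi'=\chi$ and the second block is empty, but $t=1$ is exactly where the essential rank drops and is the crux of the theorem.)

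The paper sidesteps this by never attempting to block-diagonalize $S$. Instead it writes out $H(A,1)^i$ directly: the upper-left $\chi\times\chi$ block in the $Y$-basis is $Y^*\tilde R^{-1/2}K^i Y\bigl(\begin{smallmatrix}\Lambda & 0 \\ 0 & 0\end{smallmatrix}\bigr)$, so the bond-$\chi'$ tensor is $J_{11}^i\Lambda$ where $J^i \defeq Y^*\tilde R^{-1/2}K^i Y$ is conjugated as a single matrix (not $\tilde R^{-1/2}$ and $K^i$ separately). Since $Y^*\tilde R^{-1/2}Y$ is not block diagonal, $J_{11}^i = R_{11}\tilde K^i + R_{12}\tilde C^i \neq \Xi'^{-1/2}\tilde K^i$ in general, where $R_{ab}$ are the blocks of $Y^*\tilde R^{-1/2}Y$. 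Right-normalization of $J_{11}\Lambda$ is obtained by expanding $\1 = Y^*\tilde R^{-1/2}\tilde R\,\tilde R^{-1/2}Y$ and extracting the upper-left $\chi'\times\chi'$ block, giving $\sum_i J_{11}^i\Lambda^2 J_{11}^{i*} = \1$; injectivity of $J_{11}$ then follows from injectivity of $Y^*\tilde R^{-1/2}KY$ by the truncation argument you correctly sketch. You would need to redo the $t=1$ computation along these lines.
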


\begin{proof}
Let us first show that $H$ is well-defined, i.e., that $H(A,t) \in O(\lchi)$ for all $(A,t) \in O(\lchi) \times [0,1]$. If $A \in \E(\lchiminus)$ or $t = 0$, then $G(A,t) = A$ and $S(A,t) = \1$, hence $H(A,t) = A$. Let us examine the case where $A \in N(\chi)$ and $t > 0$.

Let $X$, $K$, and $M$ correspond to $A$ as in Definition \ref{def:EGL}. Then
\[
H(A,t)^i = X\mqty(\rop(Kf_{t,\delta}(K))^{-1/2}K^if_{t,\delta}(\lop(K)) & 0 \\ M^i f_{t,\delta}(\lop(K)) & 0 )X^*.
\]
Consider first the case where $t < 1$. Then $f_{t,\delta}(\lop(K))$ is invertible, hence the upper left corner 
\[
\tilde K \defeq \rop(Kf_{t,\delta}(K))^{-1/2}K f_{t,\delta}(\lop(K))
\]
is an injective MPS tensor. One can easily check that it is right-normalized. Thus, $H(A,t) \in \E(\chi)$.

To show that $H(A,t) \in O(\lchi)$ we must show that $\lop(\tilde K)$ has two distinct nonzero eigenvalues. For this we need estimates for $R(Kf_{t,\delta}(L(K)))^{-1}$. By functional calculus we have
\[
(1 - t)\1 \leq f_{t,\delta}(L(K))^2 \leq \qty(1 - \frac{t\delta}{\lambda_1(\lop(K))})\1,
\]
hence
\[
(1 - t)\1 \leq R(Kf_{t,\delta}(L(K))) \leq \qty(1 - \frac{t\delta}{\lambda_1(\lop(K))})\1
\]
hence
\[
\qty(1 - \frac{t\delta}{\lambda_1(\lop(K))})^{-1}\1 \leq \rop(Kf_{t,\delta}(\lop(K)))^{-1}\leq (1 - t)^{-1}\1.
\]

Let $v_1$ and $v_\chi$ be unit eigenvectors of $\lop(K)$ with eigenvalues $\lambda_1 = \lambda_1(\lop(K))$ and $\lambda_\chi = \lambda_\chi(\lop(K))$ respectively. We know $\lambda_1(\lop(K)) > \lambda_\chi(\lop(K))$ since $A \in N(\chi)$. Since $\tilde K$ is an injective MPS, we also know $\lop(\tilde K)$ is positive and invertible. Therefore if $\lop(\tilde K)$ does not have two distinct nonzero eigenvalues, then it is a scalar multiple of the identity. But:
\begin{align*}
\ev{v_1, \lop(\tilde K) v_1} &= \qty(1 - \frac{t\delta}{\lambda_1}) \sum_i \ev{v_1, K^{i*}R(Kf_{t,\delta}(\lop(K)))^{-1}K^i v_1}\\
&\geq \ev{v_1, \lop(K)v_1} = \lambda_1
\end{align*}
while
\begin{align*}
\ev{v_\chi, \lop(\tilde K)v_\chi} &= \qty(1 - t) \sum_i \ev{v_\chi, K^{i*}R(Kf_{t,\delta}(\lop(K)))^{-1}K^i v_\chi}\\
&\leq \ev{v_\chi, \lop(K)v_\chi} = \lambda_\chi.
\end{align*}
Since $\lambda_\chi < \lambda_1$, we see that $\lop(\tilde K)$ is not a scalar multiple of the identity. This proves that $H(A,t) \in N(\chi)$.

Now consider the case where $t = 1$. In this case, $f_{1,\delta}(\lop(K))$ is not invertible and may be diagonalized as
\[
f_{1,\delta}(\lop(K)) = Y\mqty(\Lambda & 0 \\ 0 & 0)Y^*
\]
where $Y \in \Unitary(\chi)$ and $\Lambda$ is a $\chi' \times \chi'$ diagonal matrix with strictly positive entries on the diagonal and $1\leq \chi' < \chi$. For ease of notation, let us write $\tilde R =  R(Kf_{1,\delta}(\lop(K)))$, so
\[
S(A,t)  = X\mqty(\tilde R & 0 \\0 & \1) X^*.
\] 
Then
\begin{equation}\label{eq:H_at_1}
H(A,1)^i = X\mqty(Y&0\\0&\1)\mqty(Y^*\tilde R^{-1/2}K^iY\mqty(\Lambda & 0 \\0 & 0) & 0 \\ M^iY\mqty(\Lambda  & 0 \\0 & 0) & 0)\mqty(Y^* & 0 \\0 & \1)X^*.
\end{equation}
Let us decompose into block form:
\begin{equation}\label{eq:J_matrices}
Y^*\tilde R^{-1/2}K^i Y = \mqty(J_{11}^i & J_{12}^i \\ J_{21}^i & J_{22}^i),
\end{equation}
where $J_{11}$ is $\chi' \times \chi'$. Since $Y^*\tilde R^{-1/2}K Y$ is an injective MPS tensor, we know $J_{11}$ is an injective MPS tensor of bond dimension $\chi' < \chi$ and therefore so is $J_{11}\Lambda$. To see that $J_{11}\Lambda$ is right-normalized, observe that
\begin{align*}
\1 = Y^*\1Y &= \sum_i Y^*\tilde R^{-1/2} K^{i}f_{1,\delta}(\lop(K))^2 K^{i*}\tilde R^{-1/2}Y\\
&= \sum_i Y^*\tilde R^{-1/2}K^iY\mqty(\Lambda^2 & 0\\0 & 0)Y^*K^{i*}\tilde R^{-1/2}Y\\
&= \sum_i \mqty(J_{11}^i & J_{12}^i \\ J_{21}^i & J_{22}^i)\mqty(\Lambda^2 & 0 \\0 & 0)\mqty(J_{11}^{i*} & J_{21}^{i*} \\ J_{12}^{i*} & J_{22}^{i*})\\
&= \sum_i \mqty(J_{11}^i\Lambda^2J_{11}^{i*} & J_{11}^i\Lambda^2 J_{21}^{i*} \\ J_{21}^i\Lambda^2J_{11}^{i*} & J_{21}^{i}\Lambda^2 J_{21}^{i*}) .
\end{align*}
Examination of the upper left block of this equation proves that $J_{11}\Lambda$ is right-normalized. Thus, $H(A,1) \in \E(\lchiminus)$.

Let us show next that $H(A,t)$ is gauge equivariant. Let $A, B \in O(\lchi)$ such that $A \sim B$. Then $A \in \E(\lchiminus)$ if and only if $B \in \E(\lchiminus)$, in which case $H(A,t) = A \sim B = H(B,t)$. Similarly, if $t = 0$, then $H(A,0) = A \sim B = H(B,0)$. Suppose $A, B \in N(\chi)$ and $t > 0$. We may write $B$ as a gauge transformation $B = \lambda Z(A + \tilde A)Z^*$ as in Definition \ref{def:gauge_transform}. Then $\underline{B} = \lambda Z\underline{A}Z^*$, hence $\lop(\underline{B}) = Z\lop(\underline{A})Z^*$. Thus, $\delta(A) = \delta(B)$, so $F(B,t) = ZF(A,t)Z^*$ and
\[
G(B,t) = \lambda ZG(A,t)Z^* + \lambda Z\tilde AF(A,t)Z^*.
\]
Continuing, one can check that $S(B,t) = ZS(A,t)Z^*$, hence
\begin{align*}
H(B,t) &= \lambda Z\qty[H(A,t) + S(A,t)^{-1/2}\tilde AF(A,t)]Z^*\\
&= \lambda Z \qty[H(A,t) + \tilde AF(A,t)]Z^*,
\end{align*}
where we see that $S(A,t)^{-1/2}\tilde A = \tilde A$ from the expression \eqref{eq:S(A,t)} for $S(A,t)$. Furthermore, we see that $Q(H(A,t))Q(A) = Q(H(A,t))$, and this implies that $Q(H(A,t))\tilde A = 0$ since $Q(A)\tilde A = 0$ by Definition \ref{def:gauge_transform}. Recall that $Q(H(A,t))$ is the projection onto the range of:
\[
\lop(H(A,t)) = \sum_i F(A,t)A^{i*}S(A,t)^{-1}A^iF(A,t).
\]
If $x \in \ker \lop(H(A,t))$, then $S(A,t)^{-1/2}A^iF(A,t)x = 0$ for all $i$. Thus, $A^iF(A,t)x = 0$ for all $i$. Multiplying by $A^{i*}$ and summing over $i$, we see that $\lop(A)F(A,t)x = 0$. Since $Q(A)F(A,t) = F(A,t)$, we see that $F(A,t)x$ is in the orthogonal complement of the kernel of $\lop(A)$. This implies that $F(A,t)x = 0$. Thus, $\ker Q(H(A,t)) = \ker \lop(H(A,t)) \subset \ker F(A,t)$, from which it follows that $F(A,t) = F(A,t)Q(H(A,t))$. Thus, $H(A,t) \sim H(B,t)$.

Continuity of $H$ is clear from Lemmas \ref{lem:G_continuous} and \ref{lem:S_continuous} and it is clear from the definition that $H$ preserves $O(d, D, \lchi)$, so we're done.
\end{proof}

Let $p:\E \rightarrow \B$ be the projection. 
Since $H$ is gauge equivariant, it covers a deformation retract $h:p(O(\lchi)) \times [0,1] \rightarrow p(O(\lchi)) \times [0,1]$ onto $\B(\lchiminus)$, i.e., the diagram below commutes.
\[
\begin{tikzcd}
O(\lchi) \times [0,1] \arrow[r,"H"] \arrow[d,"p \times \id"] & O(\lchi) \arrow[d,"p"]\\
p(O(\lchi)) \times [0,1] \arrow[r,"h"]& p(O(\lchi))
\end{tikzcd}
\]
Observe that $O(\lchi)$ is saturated with respect to the projection $p$, hence $p(O(\lchi))$ is open in $\B$.

We need to show that the homotopy $H$ is a weak homotopy equivalence on fibers at time $t = 1$. In other words, given $b \in p(O(\lchi))$, we must show that $H_{1,b}:p^{-1}(b) \rightarrow p^{-1}(h(b,1))$, $H_{1,b}(A) = H(A, 1)$ is a weak homotopy equivalence. If $b \in p(\E(\lchiminus))$, then $H_{1,b}$ is the identity, so this is trivial. We will consider the case where $b \in p(N(\chi))$. Then $b = p(K)$ for some right-normalized injective MPS tensor $K$ of bond dimension $\chi$ and some physical dimension $d$.

\begin{theorem}
Let $p(K) \in p(N(\chi))$, where $K$ is an injective MPS tensor of bond dimension $\chi$ and some physical dimension $d$. Let $\chi' < \chi$ be such that $h(p(K), 1) \in \B(\chi')$.  There exists a unitary $Y \in \Unitary(\chi)$, a positive invertible diagonal $\chi' \times \chi'$ matrix $\Lambda$, and $(\chi - \chi') \times \chi'$ matrices $J^i$ for $i \in \qty{1,\ldots, d}$, all depending only on $K$, such that the diagram below commutes
\[
\begin{tikzcd}
p^{-1}(p(K)) \arrow[r,"H_{1,b}"]\arrow[d,"\cong"]& p^{-1}(h(p(K),1))\arrow[d,"\cong"]\\
\cF(\chi) \arrow[r,"\zeta"] & \cF(\chi')
\end{tikzcd}
\]
where the map $\zeta: \cF(\chi) \rightarrow \cF(\chi')$ is given by
\[
\zeta(\lambda, [X, M]) = \qty(\lambda, \qty[X\mqty(Y&0\\0&\1), \mqty(J\Lambda \\ MY\mqty(\Lambda \\ 0))]).
\]
Furthermore, $\zeta$ is a weak homotopy equivalence, so $H_{1,b}$ is a weak homotopy equivalence.
\end{theorem}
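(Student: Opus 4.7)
The plan is to compute $H_{1,b}$ explicitly from the formula \eqref{eq:H_at_1} in the proof of Theorem~\ref{thm:deformation_retract}, identify the resulting map under the homeomorphisms of Proposition~\ref{prop:fiber_homeomorphism} with the claimed formula for $\zeta$, and then verify that $\zeta$ is a weak equivalence via the Stiefel model of Proposition~\ref{prop:Stiefel_weak_homotopy_equiv}. Set $\delta = \lambda_\chi(\lop(K))$; since $A \in N(\chi)$, we have $\delta < \lambda_1(\lop(K))$, so $f_{1,\delta}(\lop(K))$ has rank some $\chi' \in \qty{1,\ldots,\chi-1}$, and we diagonalize it as $Y\mqty(\Lambda & 0 \\ 0 & 0)Y^*$ with $Y \in \Unitary(\chi)$ and $\Lambda$ a positive diagonal $\chi' \times \chi'$ matrix. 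Setting $\tilde R = \rop(Kf_{1,\delta}(\lop(K)))$ and $J^i = J_{21}^i$ from the block decomposition \eqref{eq:J_matrices}, the proof of Theorem~\ref{thm:deformation_retract} already verifies that $J_{11}\Lambda$ is a right-normalized injective MPS tensor of bond dimension $\chi'$ representing $h(p(K),1)$, so $p^{-1}(h(p(K),1))$ is identified with $\cF(\chi')$ via Proposition~\ref{prop:fiber_homeomorphism} using $J_{11}\Lambda$ as reference.

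An element $(\lambda, [X, M]) \in \cF(d, D, \chi)$ corresponds to $A = \lambda X\mqty(K & 0 \\ M & 0)X^*$. Rewriting $A$ in the form of Definition~\ref{def:EGL} via $K_A = \lambda K$ and $M_A = \lambda M$ (with $\lambda K$ still right-normalized injective, since $\lop(\lambda K) = \lop(K)$) and substituting into \eqref{eq:H_at_1}, the factors of $\lambda$ in the $K_A$- and $M_A$-blocks combine into an overall phase, and reblocking the central $D \times D$ matrix as $\chi' + (D - \chi')$ yields
\[
H(A,1)^i = \lambda X\mqty(Y & 0 \\ 0 & \1)\mqty(J_{11}^i\Lambda & 0 \\ \tilde M^i & 0)\mqty(Y^* & 0 \\ 0 & \1)X^*, \qquad \tilde M^i = \mqty(J_{21}^i\Lambda \\ M^iY\mqty(\Lambda \\ 0)),
\]
which under Proposition~\ref{prop:fiber_homeomorphism} with reference $J_{11}\Lambda$ is exactly the image of $\zeta(\lambda, [X, M])$, establishing commutativity of the diagram.

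Under Proposition~\ref{prop:Stiefel_weak_homotopy_equiv}, $\cF(\chi) \simeq \Unitary(1) \times \PV_\chi(\bbC^\infty)$ via the first $\chi$ columns of $X$, and the first $\chi'$ columns of $X\mqty(Y & 0 \\ 0 & \1)$ are $X_{[1..\chi]}Y_1$ where $Y_1$ is the $\chi \times \chi'$ isometry of first $\chi'$ columns of $Y$. Hence $\zeta$ corresponds to $\id_{\Unitary(1)} \times \zeta'$, where $\zeta' \colon \PV_\chi(\bbC^\infty) \to \PV_{\chi'}(\bbC^\infty)$ descends from the $\Unitary(1)$-equivariant Stiefel map $\bv \mapsto \bv Y_1$. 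Since $V_\chi(\bbC^\infty)$ and $V_{\chi'}(\bbC^\infty)$ are contractible and $\zeta'$ lifts to the identity on the $\Unitary(1)$-fiber, the five lemma applied to the long exact sequences of the principal $\Unitary(1)$-bundles $V_\chi(\bbC^\infty) \to \PV_\chi(\bbC^\infty)$ and $V_{\chi'}(\bbC^\infty) \to \PV_{\chi'}(\bbC^\infty)$ implies $\zeta'$, hence $\zeta$ and $H_{1,b}$, are weak equivalences. The main technical obstacle is the commutativity step: tracking the normalization $S(A,1)^{-1/2}$ together with the diagonalization $(Y, \Lambda)$ so that \eqref{eq:H_at_1} recast in $\cF(d, D, \chi')$-coordinates yields precisely the formula for $\zeta$; the weak-equivalence argument is then purely formal, since the $\Unitary(1)$-equivariant structure forces the descent to a $B\Unitary(1) \to B\Unitary(1)$ classifying map between universal bundles.
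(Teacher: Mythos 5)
Your proof is correct, and the commutativity step coincides with the paper's (direct substitution into \eqref{eq:H_at_1}, using the diagonalization $f_{1,\delta}(\lop(K)) = Y\mqty(\Lambda&0\\0&0)Y^*$ and the block decomposition \eqref{eq:J_matrices}, then reading off $\zeta$ via Proposition~\ref{prop:fiber_homeomorphism} with $J_{11}\Lambda$ as the reference tensor). For the weak-equivalence step you take a slightly different route. The paper first uses the right $\Unitary(\chi)$-action on $\PV_\chi(\bbC^\infty)$ to factor $\bar\zeta$ as (action of $Y$, a homeomorphism) $\circ$ ("forget the last $\chi-\chi'$ vectors"), and then shows the latter map is a colimit of Stiefel-fiber bundles, hence a Serre fibration with contractible fiber $V_{\chi-\chi'}(\bbC^\infty)$. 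You instead lift $\zeta'$ directly to the $\Unitary(1)$-equivariant Stiefel map $\bv \mapsto \bv Y_1$ between the contractible total spaces $V_\chi(\bbC^\infty)$ and $V_{\chi'}(\bbC^\infty)$, and run the five lemma on the long exact sequences of the two principal $\Unitary(1)$-bundles. Both arguments exploit the same facts (contractibility of $V_k(\bbC^\infty)$ and compatibility with the $\Unitary(1)$-action); the paper's factorization isolates a single Serre fibration whose fiber it identifies explicitly, while yours is shorter and avoids the need to verify the fibration property of $\eta$ at the cost of invoking the bundle LES twice. One small wording point: "lifts to the identity on the $\Unitary(1)$-fiber" is slightly loose — the two fibers are not canonically identified; what you actually use (and what suffices) is that the lift is $\Unitary(1)$-equivariant, hence a homeomorphism of $\Unitary(1)$-torsors on each fiber, so it induces an isomorphism on $\pi_1$ of the fibers.
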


\begin{proof}
That the matrices $Y$, $\Lambda$, and $J$ exist such that $\zeta$ makes the diagram commute follows from direct examination of \eqref{eq:H_at_1} and \eqref{eq:J_matrices} in the proof of Theorem \ref{thm:deformation_retract} (where $J = J_{21}$).

We show that $\zeta$ is a weak homotopy equivalence. 
We have a commutative diagram
\[
\xymatrix{
\cF(\chi) \ar[r]^-\zeta \ar[d] & \cF(\chi') \ar[d] \\
\Unitary(1) \times \PV_{\chi}(\bbC^\infty) \ar[r]^-{\bar{\zeta}} & \Unitary(1) \times \PV_{\chi'}(\bbC^\infty)
}
\]
where the vertical arrows are the weak homotopy equivalences of Proposition \ref{prop:Stiefel_weak_homotopy_equiv} and
\[
\bar \zeta(\lambda, [X_1,\ldots, X_\chi]) = (\lambda, [X_1',\ldots, X_{\chi'}'])
\]
where $X_i'$ is the $i$th column of the matrix $\mqty(X_1&\cdots&X_\chi)Y$. Since the vertical arrows are weak homotopy equivalences, it suffices to show that $\bar \zeta$ is a weak homotopy equivalence.

Note that $\Unitary(\chi)$ has a continuous right group action on $\PV_\chi(\bbC^\infty)$ given by $[X_1,\ldots, X_\chi]Y = [X_1',\ldots, X_\chi']$, where $X_i'$ is the $i$th column of the matrix $\mqty(X_1&\cdots & X_{\chi'})Y$. Then observe that $\bar \zeta$ is the composition of the action of $Y$ with the map
\begin{equation}\label{eq:forget_last_vectors}
\eta(\lambda, [X_1,\ldots, X_\chi]) = (\lambda, [X_1,\ldots, X_{\chi'}]).
\end{equation}
Since the action of $Y$ is a homeomorphism (with inverse given by the action of $Y^{-1}$), it suffices to show that $\eta$ is a weak homotopy equivalence.

The map $\eta$ is the colimit of maps $\eta_D:\Unitary(1) \times \PV_\chi(\bbC^D) \rightarrow \Unitary(1) \times \PV_{\chi'}(\bbC^D)$ defined by the same formula \eqref{eq:forget_last_vectors}. Each map $\eta_D$ is a fiber bundle with typical fiber $V_{\chi - \chi'}(\bbC^D)$. So $\eta$ is a Serre fibration, thus a quasi-fibration, and the typical fiber is the contractible space $V_{\chi - \chi'}(\bbC^\infty)$. Since its target is path connected, $\eta$ is a weak homotopy equivalence.
\end{proof}

We have now constructed the data in the bullet points at the start of the section. Therefore, by the argument at the start of the section, we now have the following result.

\begin{theorem}\label{thm:quasifibration}
The map $p : \E \to \B$ is a quasifibration.
\end{theorem}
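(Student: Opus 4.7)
The plan is to execute the induction sketched just above the statement, assembling the pieces $O(\lchi)$, $H$, $h$, and $\zeta$ via the three criteria of Theorem \ref{thm:qf_identify}. The base case $p \colon \E(\leq 1) = \E(1) \to \B(\leq 1) = \B(1)$ is immediate: by Corollary \ref{cor:Serre_fibration} it is a Serre fibration, and its fiber $\cF(1) \simeq \Unitary(1) \times \PV_1(\bbC^\infty)$ is path connected by Proposition \ref{prop:Stiefel_weak_homotopy_equiv}, so it is a quasifibration.

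For the inductive step, fix $\chi \geq 2$ and assume $p \colon \E(\lchiminus) \to \B(\lchiminus)$ is a quasifibration, i.e.\ that $\B(\lchiminus)$ is distinguished inside $\B(\lchi)$. I would cover $\B(\lchi)$ by the two saturated open sets
\[
U = p(O(\lchi)), \qquad V = \B(\lchi) \setminus \B(\lchiminus) = \B(\chi),
\]
whose intersection is $U \cap V = p(N(\chi))$, and verify that all three are distinguished. For $V$, the preimage is $\E(\chi)$ and for $U \cap V$ the preimage is $N(\chi)$ (using saturation of $O(\lchi)$ and $\E(\chi)$); both restrictions are Serre fibrations with the path-connected fiber $\cF(\chi)$ by Corollary \ref{cor:Serre_fibration} and Proposition \ref{prop:Stiefel_weak_homotopy_equiv}, hence are quasifibrations.

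The key ingredient for $U$ is Theorem \ref{thm:qf_identify}(3), applied with base $U$ and distinguished subspace $\B(\lchiminus)$ (distinguished by the inductive hypothesis, since $p^{-1}(\B(\lchiminus)) = \E(\lchiminus)$). The homotopies are the deformation retract $H \colon O(\lchi) \times [0,1] \to O(\lchi)$ of Theorem \ref{thm:deformation_retract} onto $\E(\lchiminus)$ and its descent $h \colon U \times [0,1] \to U$ onto $\B(\lchiminus)$; gauge equivariance of $H$ ensures $p \circ H = h \circ (p \times \id)$. The fiberwise weak-equivalence condition on $H_{1,b}$ is trivial for $b \in \B(\lchiminus)$ and, for $b \in p(N(\chi))$, is exactly the content of the preceding theorem identifying $H_{1,b}$ with $\zeta \colon \cF(\chi) \to \cF(\chi')$ and proving that $\zeta$ is a weak equivalence. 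Thus $U$ is distinguished, and Theorem \ref{thm:qf_identify}(1) applied to the open cover $\{U, V\}$ with intersection $U \cap V$ yields that $p \colon \E(\lchi) \to \B(\lchi)$ is a quasifibration, completing the induction.

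Finally, to pass to $\B$, I would invoke Theorem \ref{thm:qf_identify}(2): since the essential rank of any element of $\E$ is bounded by its bond dimension, the subsystem $\{\B(\lchi)\}_{\chi \in \bbN}$ is cofinal in $\{\B(d,D)\}_{d,D}$, so $\B = \colim_\chi \B(\lchi)$ along closed embeddings whose preimages are the distinguished subspaces $\E(\lchi)$. This concludes that $p \colon \E \to \B$ is a quasifibration. The main obstacle was never in this formal assembly but rather in the preceding technical constructions, namely the openness and saturation of $O(\lchi)$, the gauge-equivariant continuity of $H$ (requiring careful spectral estimates on $\lop(\underline{A})$ near the locus of dropping essential rank), and the identification of $H_{1,b}$ with the weak equivalence $\zeta$; all of these have already been carried out.
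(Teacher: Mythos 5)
Your proposal is correct and tracks the paper's own proof essentially line by line: the paper outlines the identical inductive bootstrap at the start of Section 5 (base case $\chi = 1$ via the Serre fibration; item (3) of the Dold--Thom lemma for $O(\lchi) \to p(O(\lchi))$ using the inductive hypothesis, the deformation retract $H$, and the weak equivalence $H_{1,b} \simeq \zeta$; item (1) applied to the cover by $p(O(\lchi))$ and $\B(\lchi) \setminus \B(\lchiminus) = \B(\chi)$ with intersection $p(N(\chi))$; then item (2) to pass to the colimit), and its stated proof of the theorem is exactly the remark that this argument now goes through. One tiny imprecision: the filtration $\{\B(\lchi)\}_\chi$ is not literally cofinal among the $\B(d,D)$; the correct observation is that each $\B(d,D) \subset \B(\leq D)$ and each $\B(\lchi)$ is closed in $\B$, so the two filtrations induce the same final topology on $\B$, which is what item (2) requires.
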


\section{Weak Homotopy type of \texorpdfstring{$\B$}{B}}\label{sec:homotopytype}\label{sec:homotopy_type}

In this section, we identify the weak homotopy type of $\B$. Choose, for concreteness, the base point $e$ of $\E$ to be the image of $1 \in \E(1,1,1)$ under the inclusion, the base point of $b\in \B$ to be the image of $e$ under the quotient map $p$. We let $\F = p^{-1}(b)$ and $e\in \cF$ be the basepoint, and $i\colon \F \to \E$ be the inclusion. All homotopy groups $\pi_n(-)$ for $n>0$ will be taken relative to these based points.

Recall that for $A$ an abelian group and $n>0$, a based space $X$ is an Eilenberg--Mac Lane space of type $K(A,n)$ if $\pi_nX=A$ and $\pi_mX$ is trivial for $m\neq n$. More colloquially, we just call $X$ a $K(A,n)$. In Proposition~\ref{prop:Stiefel_weak_homotopy_equiv}, we proved there was a homotopy equivalence 
\[\cF = \cF(1) \simeq \Unitary(1)\times \PV_{1}(\bbC^\infty) . \]
But $\Unitary(1)$ is a $K(\Z,1)$ and $\PV_{1}(\bbC^\infty)$ is the infinite complex projective space, so is a $K(\Z,2)$.
\begin{lem}
The fiber $\F$ is homotopy equivalent  to a product of Eilenberg--Mac Lane spaces $K(\Z,1)\times K(\Z,2)$. In particular, 
\[\pi_n\F \cong \begin{cases} \Z & n=1,2 \\
0 & \text{otherwise.}
\end{cases}\]
\end{lem}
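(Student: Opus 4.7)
The plan is to reduce the lemma to a pair of homotopy computations via the already-established identification of the fiber, and then assemble the answer using the product formula for homotopy groups.

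First, I would invoke Proposition \ref{prop:Stiefel_weak_homotopy_equiv} applied in the case $\chi = 1$ (so that $\F = \F(1)$) to replace $\F$ with the homotopy equivalent space $\Unitary(1) \times \PV_{1}(\bbC^\infty)$. It then suffices to identify each factor as an Eilenberg--Mac Lane space. The factor $\Unitary(1)$ is a circle $S^1$; its homotopy groups can be read off from the universal cover $\bbR \to S^1$ (or from elementary covering space theory), giving $\pi_1(S^1) = \bbZ$ and all higher groups trivial, so $\Unitary(1)$ is a $K(\bbZ,1)$.

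For the second factor, I would use that $\PV_{1}(\bbC^\infty) \cong \bbC\bbP^\infty$, and apply the principal $\Unitary(1)$-bundle $\Unitary(1) \to V_1(\bbC^\infty) \to \bbC\bbP^\infty$, that is, the colimit as $D\to\infty$ of the Hopf bundles $S^1 \to S^{2D-1} \to \bbC\bbP^{D-1}$. The total space $V_1(\bbC^\infty) = S^\infty$ is a colimit of spheres along the equatorial inclusions and is therefore contractible (any map from a compact space factors through some $S^{2D-1}$, which is nullhomotopic in $S^{2D+1}$). The long exact sequence in homotopy for this fibration then yields $\pi_n(\bbC\bbP^\infty) \cong \pi_{n-1}(S^1)$, which is $\bbZ$ in degree $2$ and $0$ otherwise, so $\PV_1(\bbC^\infty)$ is a $K(\bbZ,2)$.

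Finally, I would conclude by combining the two identifications: $\F \simeq K(\bbZ,1) \times K(\bbZ,2)$. The homotopy group computation follows from the standard fact $\pi_n(X \times Y) \cong \pi_n(X) \oplus \pi_n(Y)$, giving $\pi_1 \F \cong \bbZ \oplus 0 = \bbZ$, $\pi_2 \F \cong 0 \oplus \bbZ = \bbZ$, and $\pi_n \F = 0$ for all other $n$, matching the stated formula. There is no serious obstacle here; all ingredients are either already in the excerpt (via Proposition~\ref{prop:Stiefel_weak_homotopy_equiv}) or standard textbook facts about $S^1$ and $\bbC\bbP^\infty$.
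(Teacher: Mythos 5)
Your proof is correct and follows the same route the paper takes: invoke Proposition~\ref{prop:Stiefel_weak_homotopy_equiv} to identify $\F = \F(1) \simeq \Unitary(1) \times \PV_1(\bbC^\infty)$, then recognize the two factors as $K(\Z,1)$ and $K(\Z,2)$ respectively. You merely supply the standard justifications (covering space for $S^1$, Hopf fibration and contractibility of $S^\infty$ for $\bbC\bbP^\infty$) that the paper leaves implicit, so the two arguments agree in all essentials.
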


The spaces $\B,\F,\E$ are path connected, so all have trivial set of path components $\pi_0$.
Since $p: \E\rightarrow \B$ is a quasifibration, we obtain a long exact sequence on homotopy groups,
\[ \xymatrix@C=1.2pc{ \cdots \ar[r] & \pi_n\F \ar[r]^-{i_*}  & \pi_n\E \ar[r]^{p_*} & \pi_n\B \ar[r]^-{\delta} & \pi_{n-1}\F \ar[r]^-{i_*} & \cdots \ar[r]^-{i_*} \ar[r] & \pi_1\E  \ar[r]^-{p_*} & \pi_1\B  \ar[r]^-{\delta} & \pi_0\F.
}\]
Since $\E$ is contractible, all of its homotopy groups are trivial. From this, we obtain:
\begin{prop}\label{prop:connectingiso}
The connecting homomorphisms $\delta : \pi_n \B \rightarrow \pi_{n-1}\F$ induces an isomorphism for all $n>1$. In particular,
\[\pi_n\B \cong \begin{cases} \Z & n=2,3 \\
0 & \text{otherwise.}
\end{cases}\]
\end{prop}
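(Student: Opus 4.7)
The approach is to read off all homotopy groups of $\B$ directly from the long exact sequence of the quasifibration $p\colon \E\to\B$ (Theorem \ref{thm:quasifibration}), using that $\E$ is contractible (Theorem \ref{thm:E_contractible}) and that $\F\simeq K(\Z,1)\times K(\Z,2)$ by the preceding lemma.

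For every $n\geq 2$, the relevant portion
\[
\pi_n\E \;\longrightarrow\; \pi_n\B \;\xrightarrow{\;\delta\;}\; \pi_{n-1}\F \;\longrightarrow\; \pi_{n-1}\E
\]
of the long exact sequence has vanishing outer terms by contractibility of $\E$, so exactness forces $\delta$ to be an isomorphism. This establishes the first assertion of the proposition.

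For the low-degree cases, $\pi_0\B$ is trivial because $\B$ is path connected (the corollary to Theorem \ref{thm:E_contractible}). For $\pi_1\B$, exactness at $\pi_1\B$ in the segment $\pi_1\E\to\pi_1\B\xrightarrow{\delta}\pi_0\F$, combined with $\pi_1\E=0$, shows that $\delta$ is injective as a map of pointed sets; since $\F$ is path connected, $\pi_0\F$ is a one-point set and hence $\pi_1\B=0$. Substituting the values $\pi_1\F\cong\pi_2\F\cong\Z$ and $\pi_k\F=0$ for $k\notin\{1,2\}$ into the isomorphism $\pi_n\B\cong\pi_{n-1}\F$ then yields the displayed table.

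No serious obstacle arises at this stage: all of the substantive input—contractibility of $\E$, the quasifibration property of $p$, and the homotopy type of $\F$—has already been established in earlier sections, and the present proposition is a formal consequence of the long exact sequence. The only mild subtlety worth flagging is that for $n=1$ the long exact sequence is only exact as a sequence of pointed sets near $\pi_0\F$, so one argues set-theoretically rather than group-theoretically in that lowest degree.
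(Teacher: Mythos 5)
Your proof is correct and follows the same route as the paper: write the long exact sequence of the quasifibration $p\colon\E\to\B$, use contractibility of $\E$ to kill the outer terms, and read off that $\delta$ is an isomorphism for $n>1$, then substitute the known homotopy groups of $\F$. Your extra care about $\pi_1\B$ and $\pi_0\F$ being a pointed-set matter is a small refinement the paper leaves implicit.
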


	It remains to prove that $\B$ is a product of Eilenberg--Mac Lane spaces. To do this, we break the gauge transformations of Definition \ref{def:gauge_transform} into two parts. We define $\B_2$ to be the quotient of $\E$ by the gauge transformations $A \mapsto \lambda A$ for $\lambda \in \Unitary(1)$. We define $\B_3$ to be the quotient of $\E$ by the gauge transformations $A \mapsto Z(A + \tilde A)Z^*$ where $Z \in \Unitary(\infty)$ and $\tilde A \in \cM$ satisfies $Q(A)\tilde A^i = 0$ and $\tilde A^iQ(A) = \tilde A^i$ for all $i$. Thus, these are intermediate quotients
	\[ \xymatrix{\E \ar[r]^-{p_2} &  \B_2 \ar[r]^-{q_2} & \B} \quad \quad \text{and} \quad \quad \xymatrix{\E \ar[r]^-{p_3} &  \B_3 \ar[r]^{q_3} & \B} \]
	where $p = q_2 \circ p_2 = q_3 \circ p_3$.

	\begin{prop}\label{prop:B3_B_bundle}
	The map $q_3:\B_3 \rightarrow \B$ is a fiber bundle with fiber $\Unitary(1)$. 
	\end{prop}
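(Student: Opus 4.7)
The plan is to realize $q_3$ as a principal $\Unitary(1)$-bundle by exhibiting explicit local trivializations. The phase action $\lambda \cdot A = \lambda A$ commutes with $\sim_3$ (because $Q(\lambda A) = Q(A)$), so it descends to a continuous $\Unitary(1)$-action on $\B_3$ whose orbits are precisely the fibers of $q_3$. For freeness: if $p_3(\lambda A) = p_3(A)$, then in the canonical decomposition of Definition \ref{def:EGL} the injective right-normalized MPS tensor $K$ of $A$ must be $\Unitary(\chi)$-conjugate to $\lambda K$, so $\lambda K^i = W K^i W^*$ for some $W \in \Unitary(\chi)$ and all $i$. Since $K$ is injective there exist $a_i \in \bbC$ with $\sum_i a_i K^i = \1_{\chi \times \chi}$; summing gives $\lambda \1_{\chi \times \chi} = W \1_{\chi \times \chi} W^* = \1_{\chi \times \chi}$, hence $\lambda = 1$.

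The heart of the argument is constructing, for each finitely supported $c \in \bigoplus_i \bbC$, a continuous $\sim_3$-invariant function on $\E$ of weight one under the $\Unitary(1)$-action:
\[ \phi_c(A) \defeq \sum_i c_i \tr(A^i). \]
Clearly $\phi_c(\lambda A) = \lambda\, \phi_c(A)$. Invariance under $\sim_3$ follows from cyclicity of the trace together with the support conditions $\tilde A^i = \tilde A^i Q(A)$ and $Q(A)\tilde A^i = 0$:
\[ \tr(\tilde A^i) = \tr(\tilde A^i Q(A)) = \tr(Q(A)\tilde A^i) = 0, \]
while $\tr(Z A^i Z^*) = \tr(A^i)$ is immediate. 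Thus $|\phi_c|$ is $\sim$-invariant, and $U_c \defeq \{p(A) \in \B : \phi_c(A) \neq 0\}$ is an open subset of $\B$, since its preimage in $\E$ is saturated and open.

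On $U_c$ I define the section
\[ \sigma_c\colon U_c \longrightarrow \B_3, \qquad \sigma_c(p(A)) \defeq p_3\bigl(\,\overline{\phi_c(A)/|\phi_c(A)|}\cdot A\,\bigr). \]
A direct check shows $\sigma_c$ is well-defined on $\B$: a transformation $A \mapsto \mu Z(A+\tilde A)Z^*$ multiplies $\phi_c$ by $\mu$, and the normalizing phase absorbs $\mu$, leaving a $\sim_3$-equivalent tensor. Continuity follows because $p$ restricts to a quotient map $p^{-1}(U_c) \to U_c$ while $\sigma_c \circ p$ is manifestly continuous. Then $q_3 \circ \sigma_c = \id_{U_c}$, and the resulting trivialization $q_3^{-1}(U_c) \xrightarrow{\cong} U_c \times \Unitary(1)$ is $p_3(B) \mapsto (p(B),\, \phi_c(B)/|\phi_c(B)|)$ with inverse $(b,\lambda) \mapsto \lambda \cdot \sigma_c(b)$.

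It remains to check that the sets $U_{e_i}$ (for standard basis vectors $e_i$) cover $\B$; this follows from the same computation used for freeness, since for any $A \in \E$ with MPS data $K$ we have $\sum_i a_i \tr(A^i) = \sum_i a_i \tr(K^i) = \chi \neq 0$, forcing $\tr(A^i) \neq 0$ for some $i$. The main technical obstacle is producing a $\sim_3$-invariant continuous function of weight one at all: the $\Unitary(\infty)$-conjugation in $\sim_3$ restricts us to conjugation-invariant polynomials in the $A^i$, and the weight-one requirement essentially selects linear combinations of traces $\tr(A^i)$; the nontrivial input is that such combinations happen to be automatically invariant under the $\tilde A$-shift thanks to the support conditions.
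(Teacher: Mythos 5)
Your proof is correct and follows essentially the same approach as the paper: local trivializations of $q_3$ are constructed from the phase of $\tr(A^i)$, using the observations that $\tr(\tilde A^i) = 0$ and $\tr(Z A^i Z^*) = \tr(A^i)$ to get $\sim_3$-invariance, and that the $K^i$ spanning $M_\chi(\bbC)$ forces some $\tr(A^i) \neq 0$ to get a covering. The only cosmetic differences are that you allow arbitrary linear combinations $\phi_c$ rather than single traces (though you immediately note the coordinate functionals $e_i$ suffice), and you make explicit the principal $\Unitary(1)$-bundle structure (freeness of the action, orbits equal fibers), where the paper simply exhibits the charts.
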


	\begin{proof}
	For each $i \in \bbN$, we have well-defined continuous maps 
	\begin{alignat*}{2}
	\B_3 &\rightarrow \bbC, &\quad p_3(A) &\mapsto \tr(A^i)\\
	\B &\rightarrow \bbC, &\quad p(A) &\mapsto \abs{\tr(A^i)}.
	\end{alignat*}
	Fix an arbitrary point $p(A_0) \in \B$ with $A_0 = X\mqty(K&0\\M&0)X^* \in \E(\chi)$. We know there exists $i \in \bbN$ such that $\tr(A^i_0) \neq 0$, otherwise the $K^i$ could not span all of $M_\chi(\bbC)$ since they would all be traceless. Consider the open neighborhood $p(A_0) \in O \defeq \qty{p(A): \abs{\tr(A^i)} \neq 0}$. Define a local trivialization $\phi:q_3^{-1}(O) \rightarrow O \times \Unitary(1)$ by 
	\[
	\phi(p_3(A)) = \qty(p(A), \frac{\tr(A^i)}{\abs{\tr(A^i)}}).
	\]
	Then $\phi$ is obviously continuous and commutes with the projection onto $O$. A continuous inverse is given by the map
	\[
	(p(A), \lambda) \mapsto p_3\qty(\frac{\lambda\abs{\tr(A^i)}}{\tr(A^i)} A^i).
	\]
	Thus, $\phi$ is a homeomorphism, proving local triviality of $q_3$.
	\end{proof}

	The following is an immediate corollary of Proposition \ref{prop:B3_B_bundle} and Corollary \ref{cor:B_colimits_Hausdorff}.

	\begin{cor}
	The space $\B_3$ is Hausdorff.
	\end{cor}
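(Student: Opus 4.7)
The plan is to invoke the standard fact that a fiber bundle with Hausdorff fiber over a Hausdorff base is Hausdorff, applied to $q_3\colon \B_3 \to \B$. By Proposition \ref{prop:B3_B_bundle} this is a fiber bundle with fiber $\Unitary(1)$, and by Corollary \ref{cor:B_colimits_Hausdorff} the base $\B$ is Hausdorff. Since $\Unitary(1)$ is Hausdorff, the only content to supply is the general topology argument.

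Concretely, I would pick two distinct points $x, y \in \B_3$ and split into two cases. If $q_3(x) \neq q_3(y)$, then Hausdorffness of $\B$ produces disjoint open neighborhoods $V_x$ and $V_y$ of $q_3(x)$ and $q_3(y)$, and continuity of $q_3$ makes $q_3^{-1}(V_x)$ and $q_3^{-1}(V_y)$ into disjoint open neighborhoods of $x$ and $y$ in $\B_3$. If instead $q_3(x) = q_3(y)$, choose an open neighborhood $O$ of $q_3(x)$ over which $q_3$ trivializes, i.e.\ a homeomorphism
\[
\varphi\colon q_3^{-1}(O) \longrightarrow O \times \Unitary(1)
\]
commuting with the projection to $O$. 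Then $\varphi(x)$ and $\varphi(y)$ differ only in their $\Unitary(1)$-coordinate, and since $O \times \Unitary(1)$ is a product of Hausdorff spaces it is Hausdorff, yielding disjoint open neighborhoods of $\varphi(x)$ and $\varphi(y)$ whose preimages under $\varphi$ separate $x$ from $y$ in $\B_3$.

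There is no real obstacle here; the statement is explicitly flagged as an immediate corollary, and both inputs (local triviality over a Hausdorff base and Hausdorffness of the fiber $\Unitary(1)$) have already been established. The only mild care needed is to note that the local trivializations constructed in the proof of Proposition \ref{prop:B3_B_bundle} are genuine homeomorphisms onto $O \times \Unitary(1)$, which was verified there, so the reduction to the product topology is legitimate.
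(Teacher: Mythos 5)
Your proof is correct and is exactly the argument the paper has in mind when it calls the corollary ``immediate'' from Proposition \ref{prop:B3_B_bundle} and Corollary \ref{cor:B_colimits_Hausdorff}; the paper simply omits the standard general-topology verification that you spell out. The two-case analysis (points with distinct images separated by pulling back neighborhoods from $\B$, points with the same image separated inside a trivializing chart $O \times \Unitary(1)$) is the standard proof that a fiber bundle with Hausdorff base and fiber has Hausdorff total space, and you have applied it correctly.
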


	We continue to use $e = 1$ and its images $b_j=p_j(e)$ as our base points for $\B_j$. The map $p_2:\E \to \B_2$ is the quotient of a completely regular space by the free action of the compact Lie group and is therefore a fiber bundle. By the same proof as in Section \ref{sec:quasifibration}, we have that $p_3:\E \rightarrow \B_3$ is a quasifibration. Summarizing, we have the following proposition.

\begin{proposition}
The maps $p_2 : \E \to \B_2$ and $ p_3 : \E \to \B_3$ are quasifibrations, with fibers $\F_2=U(1)$ and 
\[\F_3 = \colim_{d,D} \qty(\Unitary(D) \times_{\Unitary(1) \times \Unitary(D - 1)} M_{(D - 1)\times 1}(\bbC)^d) \xrightarrow{\simeq}  \PV_{1}(\bbC^\infty).\] 
 Consequently, $\B_2$ is a $K(\Z,2)$ and $\B_3$ a $K(\Z,3)$. 
\end{proposition}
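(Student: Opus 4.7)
The plan is to treat the two maps in parallel, leveraging most of the machinery already built in Sections \ref{sec:fixed_chi} and \ref{sec:quasifibration}.

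For $p_2 : \E \to \B_2$, the group $\Unitary(1)$ acts on $\E$ by scalar multiplication $A \mapsto \lambda A$. I will observe that this action is continuous and free: the latter because no $A \in \E$ vanishes (in the decomposition $A = X\mqty(K & 0 \\ M & 0)X^*$ of Definition \ref{def:EGL}, $K$ is a right-normalized injective MPS tensor and hence nonzero). Since $\Unitary(1)$ is a compact Lie group acting freely and continuously on the paracompact Hausdorff space $\E$, standard results imply that $p_2$ is a principal $\Unitary(1)$-bundle, in particular a fiber bundle with fiber $\F_2 = \Unitary(1)$. Because the total space $\E$ is contractible by Theorem \ref{thm:E_contractible}, the bundle $p_2$ is universal, so $\B_2$ has the weak homotopy type of $B\Unitary(1) = \bbC\bbP^\infty$, which is a $K(\Z, 2)$.

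For $p_3 : \E \to \B_3$, the plan is to reuse Sections \ref{sec:fixed_chi} and \ref{sec:quasifibration} verbatim with the gauge group $\Unitary(1) \times \PU(\chi)$ replaced by $\PU(\chi)$. At fixed essential rank $\chi$, the same balanced-product analysis as in Theorem \ref{thm:E_d_D_chi_identification} shows that $p_3 : \E(\chi) \to \B_3(\chi)$ is a Serre fibration. The quasifibration argument then transports unchanged: the open sets $O(\lchi)$ and the deformation retract $H$ from Theorem \ref{thm:deformation_retract} are manifestly equivariant under the $(Z, \tilde{A})$ part of the gauge, so they descend to $\B_3$, and the inductive application of Theorem \ref{thm:qf_identify} yields the quasifibration property. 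To identify $\F_3 = p_3^{-1}(b_3)$, I note that $b_3 = p_3(1)$ corresponds to an essential rank $\chi = 1$ tensor, and the arguments of Theorem \ref{thm:E_d_D_chi_identification} and Proposition \ref{prop:fiber_homeomorphism} with the outer $\Unitary(1)$ factor dropped furnish the stated colimit presentation of $\F_3$. The homotopy equivalence $\F_3 \simeq \bbP V_1(\bbC^\infty) \simeq K(\Z, 2)$ then follows by the argument of Proposition \ref{prop:Stiefel_weak_homotopy_equiv}, via the map $[X, M] \mapsto [X_1]$ extracting the first column of $X$.

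To finish, I apply the long exact sequence in homotopy for the quasifibration $p_3$. Since $\E$ is contractible, the connecting maps give $\pi_n(\B_3) \cong \pi_{n-1}(\F_3)$ for all $n \geq 1$, so $\pi_3(\B_3) \cong \Z$ and $\pi_n(\B_3) = 0$ for $n \neq 3$ (with $\pi_0(\B_3) = 0$ coming from path-connectedness of the quotient of $\E$). A map $\B_3 \to K(\Z, 3)$ classifying a generator of $\pi_3(\B_3)$ is then an isomorphism on all homotopy groups, hence a weak homotopy equivalence by Whitehead's theorem. The main obstacle I anticipate is the bookkeeping involved in verifying the verbatim reuse of Section \ref{sec:quasifibration}: in particular, the fiberwise weak equivalence at $t = 1$ must be re-checked with the smaller gauge group, though the proof of that step depended only on the resulting map $\eta$ between projective Stiefel manifolds being a quasifibration with contractible fiber, which is unaffected by omitting the $\lambda$ factor. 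Nothing conceptually new arises, but each step of the transfer must be inspected carefully.
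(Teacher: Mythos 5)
Your proposal is correct and follows essentially the same approach as the paper: $p_2$ is a principal $\Unitary(1)$-bundle because $\Unitary(1)$ is a compact Lie group acting freely on the (completely regular, in fact paracompact Hausdorff) space $\E$, and $p_3$ is shown to be a quasifibration by rerunning the Section~\ref{sec:quasifibration} argument with the scalar $\lambda$-factor of the gauge stripped out, which works because the deformation retract $H$ intertwines the $\lambda$-factor and the $(Z,\tilde A)$-factor separately. The paper's own treatment is terser — it asserts both facts in two sentences and leaves the rest to the reader — so your write-up essentially supplies the omitted elaboration; note only that your final Whitehead-theorem step is superfluous, since the paper's definition of an Eilenberg--Mac\,Lane space is purely in terms of homotopy groups, which your long-exact-sequence computation already pins down.
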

Both $\B_2$ and $\B_3$ admit continuous maps to $\B$ via the universal property of the quotient. We thus get commutative diagrams
\begin{equation}\label{eq:Fj_E_Bj}
\xymatrix{
\F_j \ar[r] \ar[d]& \E \ar[d]\ar[r] & \B_j\ar[d] \\
\F \ar[r] & \E \ar[r] & \B .
}
\end{equation}
By inspection, we see that $\F_2 \rightarrow \F$ is the map
\[ \Unitary(1) \to \F\]
which sends $\lambda$ to $(\lambda, [\1,0])$. This induces an isomorphism on $\pi_1$. On the other hand, $\F_3 \rightarrow \F$ is the map which sends $[X,M]$ to $(1,[X,M])$. This induces an isomorphism on $\pi_2$. 

The diagram \eqref{eq:Fj_E_Bj} induces a diagram of long exact sequences
\[ \xymatrix@C=1.2pc{ 
 0 \ar@{=}[r]  & \pi_3\E \ar[r]^-{p_*} \ar[d] & \pi_3\B_j \ar[r]^-{\delta} \ar[d] & \pi_{2}\F_j \ar[r]^-{i_*} \ar[d]& \pi_{2}\E \ar[r]^-{p_*} \ar[d] &  \pi_{2}\B_j \ar[r]^-{\delta}  \ar[d]&  \pi_{1}\F_j \ar[r] \ar[d] & \pi_1 \E \ar[d] \ar@{=}[r] &  0\\
 0 \ar@{=}[r]  & \pi_3\E \ar[r]^-{p_*} & \pi_3\B \ar[r]^-{\delta} & \pi_{2}\F \ar[r]^-{i_*} &  \pi_{2}\E \ar[r]^-{p_*}   &  \pi_2\B \ar[r]^-{\delta} &  \pi_{1}\F \ar[r] & \pi_1 \E \ar@{=}[r] &  0.
}\]
Since $\pi_2\cE=0$, this splits into two diagrams, and we extract for $j=2,3$,
\[ \xymatrix@C=1.2pc{ 
0  \ar[r] &  \pi_{j}\B_j \ar[r]^-{\delta}_-\cong  \ar[d]&  \pi_{j-1}\F_j \ar[r] \ar[d]^-\cong &  0\\
0 \ar[r]   &  \pi_j\B \ar[r]^-{\delta}_-\cong   &  \pi_{j-1}\F \ar[r] & 0
}\]
which implies that $\B_j \rightarrow \B$ induces an isomorphism on $\pi_j$. Note further that these are based maps, preserving our base points for each space.

We make a remark to motivate the proof below. Consider the product $\B_2 \times \B_3 $ and let
\[\B_2 \vee \B_3  = \B_2 \times \{b_3\} \cup \{b_2\} \times \B_3.\]
Since $q_2(b_2)=b=q_3(b_3)$ we get a continuous function
\[ q_0 \colon \B_2 \vee \B_3 \to \B,\]
We would like to extend this to the whole product. Unfortunately, the universal property of the product is useful for constructing maps into the product, not for constructing maps out of it. Fortunately, we only care about the weak homotopy type, so we can replace $\B_2\vee \B_3$ by a CW-complex and realize the corresponding wedge as a sub-complex of the product. CW-complexes, being built inductively from the pushout diagrams attaching their cells, have the right universal properties these kinds of extension. With this in mind, we state and prove the final result of this section.

\begin{theorem}\label{thm:weakequivalence}
There is a weak homotopy equivalence
\[  K(\Z,2)\times K(\Z,3) \xrightarrow{\simeq} \B.\]
\end{theorem}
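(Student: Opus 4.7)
The plan is, exactly as suggested in the paragraph preceding the theorem, to produce the weak equivalence by extending the comparison map $q_0\colon \B_2\vee\B_3\to\B$ (given by $q_2$ on $\B_2$ and $q_3$ on $\B_3$) from the wedge to the full product. The universal property of the product makes this awkward directly, so I first replace both spaces by CW models and then invoke cellular obstruction theory. Concretely, choose weak equivalences $\alpha\colon X_2\xrightarrow{\simeq}\B_2$ and $\beta\colon X_3\xrightarrow{\simeq}\B_3$, where $X_2$ is a CW model for $K(\bbZ,2)$ (say $\bbC\bbP^\infty$) and $X_3$ is a CW model for $K(\bbZ,3)$. Such $\alpha$ and $\beta$ are built inductively: start with the generating sphere of $\pi_2\B_2$ (resp.\ $\pi_3\B_3$), then attach cells to kill higher homotopy---because all higher $\pi_n\B_j$ vanish, each attaching sphere maps null-homotopically into the target and extends freely. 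Then $(X_2\times X_3,\,X_2\vee X_3)$ is a CW pair, and we obtain a based map
\[
f_0 \colon X_2\vee X_3 \xrightarrow{\alpha\vee\beta} \B_2\vee\B_3 \xrightarrow{q_0} \B.
\]

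The heart of the proof is to extend $f_0$ to a map $f\colon X_2\times X_3\to\B$ by cellular obstruction theory. Since $\pi_1\B=0$ by Proposition~\ref{prop:connectingiso}, the target $\B$ is simple and the successive obstructions to extending $f_0$ skeleton-by-skeleton lie in
\[
H^n\bigl(X_2\times X_3,\,X_2\vee X_3;\,\pi_{n-1}\B\bigr) \;\cong\; \tilde H^n(X_2\wedge X_3;\,\pi_{n-1}\B).
\]
Because $X_2$ is $1$-connected and $X_3$ is $2$-connected, the smash $X_2\wedge X_3$ is $4$-connected, so $\tilde H^n(X_2\wedge X_3;A)=0$ for all $n\le 4$ and any coefficient group $A$. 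On the other hand $\pi_{n-1}\B=0$ unless $n-1\in\{2,3\}$, which would place potentially nonzero obstructions in $H^3$ and $H^4$ of the pair---precisely the range killed by connectivity. Hence every obstruction vanishes and the desired extension $f\colon X_2\times X_3\to\B$ exists.

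It remains to verify that $f$ is a weak equivalence. Both $X_2\times X_3$ and $\B$ have $\pi_n=\bbZ$ for $n=2,3$ and vanish in all other degrees, so by Whitehead it suffices to check $f$ is an isomorphism on $\pi_2$ and $\pi_3$. The generator of $\pi_2(X_2\times X_3)$ is detected by the inclusion $X_2\hookrightarrow X_2\times X_3$, and along this inclusion $f$ agrees by construction with $q_2\circ\alpha$; this is an isomorphism on $\pi_2$ because $\alpha$ is a weak equivalence and $q_2\colon\B_2\to\B$ induces an isomorphism on $\pi_2$ (shown just after \eqref{eq:Fj_E_Bj}). The parallel argument with $\beta$ and $q_3$ handles $\pi_3$. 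Since $X_2\times X_3$ is itself a model for $K(\bbZ,2)\times K(\bbZ,3)$, this proves the theorem.

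The one genuinely delicate step is the obstruction-theoretic extension, which relies simultaneously on the connectivity bound for the smash product and on the vanishing of $\pi_n\B$ above degree $3$ (so that only two cohomology groups need to be controlled); no regularity hypothesis on $\B$ itself is needed, since the extension is performed cell by cell on the CW source. The CW-approximation step for $\B_2$ and $\B_3$ is technically preliminary but conceptually trivial given their Eilenberg--Mac Lane weak homotopy type.
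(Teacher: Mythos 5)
Your proof is correct and follows essentially the same route as the paper: build CW models of $\B_2$ and $\B_3$, map the wedge into $\B$ via $q_2$ and $q_3$, and extend to the product because the relative cells (equivalently, the cohomology of the smash $X_2\wedge X_3$) live in degrees where $\pi_{n-1}\B$ vanishes. The only cosmetic difference is that you package the cell-by-cell extension as formal obstruction theory via $\tilde H^n(X_2\wedge X_3;\pi_{n-1}\B)$ and the $4$-connectivity of the smash, whereas the paper simply observes directly that relative product cells have dimension at least $5$ (the paper says $6$, a harmless overcount).
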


\begin{proof}
Since homotopy groups of a simply connected finite CW-complex are finitely generated abelian groups, we can construct $K(\Z,j)$ to have finitely many cells in each dimension.
Then, a generator $S^j \rightarrow \B_j$ for $\pi_j$ extends to a map $K(\Z,j) \rightarrow \B_j$ which is a weak homotopy equivalence. We thus get a CW-replacement for $\B_j$ which is a CW-complex of finite type. 
Composing with these replacements with the maps to $\B$, and gluing along the base points, we get a continuous map
\[ \iota_0 \colon K(\Z,2) \vee K(\Z,3) \to \B. \]
Now, using, for example, Theorem A.6 in \cite{Hatcher}, we note that the product cell structure on $K(\Z,2) \times K(\Z,3)$ coincides with the product topology since both have countably many cells. We have an inclusion of sub-complexes
\[ K(\Z,2) \vee K(\Z,3)  \subset K(\Z,2) \times K(\Z,3),\]
where here the wedge is included as the subspace of pairs where one coordinate is the base point.
Any cell in $K(\Z,2) \times K(\Z,3)$ which is not in the wedge has degree at least $6$. Since $\pi_n\B =0$ for $n>3$, $\iota_0$ extends to a map
\[\iota : K(\Z,2) \times K(\Z,3) \to \B. \]
Since the inclusions 
\[K(\Z,j) \to K(\Z,2) \vee K(\Z,3)  \to K(\Z,2) \times K(\Z,3)\]
induce an isomorphism on $\pi_j$, $\iota$ is a weak homotopy equivalence.
\end{proof}

\section{The Fundamental Examples}
\label{sec:example}

A continuous family of translation invariant injective MPS can now be defined rigorously as a map from a topological space $X$ into our space $\B$. 
\begin{defn}
A  system of translation invariant injective matrix product states parametrized by a space $X$ is a continuous function $\psi \colon X \to  \B$. We say that two systems $\psi$ and $\psi'$ are in the same phase if they are homotopic as maps from $X$ to $\B$.
\end{defn}

In this section, we go over the two fundamental examples of 
parametrized families of translation invariant injective MPS.  The first example is a system parametrized by $S^2$ and the second is a system parametrized by $S^3$. We call these the fundamental examples because they are given by continuous functions
\[\psi_2 \colon S^2 \to \B \quad \text{and} \quad \psi_3 \colon S^3 \to \B\]
with the property that the homotopy class of $\psi_2$ and $\psi_3$ are generators for $\pi_2\B$ and $\pi_3\B$.

\subsection{The Generator of \texorpdfstring{$\pi_2$}{pi2}}

We start by showing the following useful lemma.
\begin{lem}
The map $\B(1) \to \B$ induces an isomorphism on $\pi_2$. 
\end{lem}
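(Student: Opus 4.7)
The plan is to compare the long exact sequences on homotopy induced by the two quasifibrations $p\colon \E(1) \to \B(1)$ and $p\colon \E \to \B$ and then invoke naturality of the connecting homomorphism. The inclusion $\iota\colon \B(1) \hookrightarrow \B$ is covered by the inclusion $\tilde\iota\colon \E(1) \hookrightarrow \E$, so I would first record that both rows of
\[
\begin{tikzcd}
\F(1) \rar \dar & \E(1) \rar["p"] \dar["\tilde\iota"] & \B(1) \dar["\iota"] \\
\F \rar & \E \rar["p"] & \B
\end{tikzcd}
\]
are quasifibrations: the bottom by Theorem~\ref{thm:quasifibration}, and the top because Corollary~\ref{cor:Serre_fibration} shows $\E(1) \to \B(1)$ is a Serre fibration with path-connected fiber $\F(1)\simeq \Unitary(1)\times \PV_1(\bbC^\infty)$ by Proposition~\ref{prop:Stiefel_weak_homotopy_equiv}.

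The crucial observation — and the geometric content of the lemma — is that $\F = \F(1)$, with the vertical map of fibers being the identity. Since the basepoint $b$ lies in $\B(1)$ and gauge transformations preserve essential rank (a quick check from Definition~\ref{def:gauge_transform}, using that the essential rank is the rank of $L(\underline{A})$, and that $Q$ transforms equivariantly under the gauge group), every $A\in p^{-1}(b)$ has essential rank $1$, so $p^{-1}(b)\subset \E(1)$.

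With the fiber comparison in hand, I would analyze the two long exact sequences. For the bottom row, contractibility of $\E$ (Theorem~\ref{thm:E_contractible}) immediately gives that $\delta\colon \pi_2 \B \to \pi_1 \F$ is an isomorphism, as already noted in Proposition~\ref{prop:connectingiso}. For the top row, Proposition~\ref{prop:BUchi} identifies $\E(1)\simeq B\Unitary(1)$, so $\pi_1\E(1)=0$, making the top $\delta$ surjective. The same proposition shows that $p\colon \E(1)\to\B(1)$ factors up to homotopy as $B\Unitary(1) \to B\PU(1) \to B\Unitary(1)\times B\PU(1)$, which is null-homotopic because $\PU(1)$ is trivial; hence $p_*=0$ on $\pi_2$ and the top $\delta$ is also injective. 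Naturality of $\delta$ in the ladder above then gives a commutative square
\[
\begin{tikzcd}
\pi_2 \B(1) \rar["\delta","\cong"'] \dar["\iota_*"] & \pi_1 \F(1) \dar["="] \\
\pi_2 \B \rar["\delta","\cong"'] & \pi_1 \F
\end{tikzcd}
\]
from which $\iota_*$ is an isomorphism.

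The only place requiring real care is the identification $\F=\F(1)$ and the appeal to naturality of the connecting map for quasifibrations (as opposed to genuine fibrations); everything else is a direct bookkeeping exercise given the results already established. I do not expect any essential obstacle beyond this bookkeeping.
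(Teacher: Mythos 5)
Your proposal is correct and takes essentially the same route as the paper: compare the two long exact sequences via the inclusion $\E(1)\hookrightarrow\E$, use contractibility of $\E$ for the bottom $\delta$, and use the $B\Unitary(1)\to B\PU(1)\to B(\Unitary(1)\times\PU(1))$ factorization from Proposition~\ref{prop:BUchi} to kill $p_*$ and $\pi_1\E(1)$ on top. The only difference is that you spell out a few points the paper leaves implicit (the identification $\F=\F(1)$ and naturality of $\delta$ for quasifibrations, which holds because $\delta$ comes from the LES of the pair $(\E,p^{-1}(b))$, and the identification $\pi_n(\E,p^{-1}(b))\cong\pi_n\B$ via $p_*$ is plainly natural in the square of pairs).
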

\begin{proof}
This follows from the commutative diagram 
\[\xymatrix{
 \pi_2\cE(1) \ar[r]^-0 \ar[d] & \pi_2\cB(1) \ar[r]^-\cong \ar[d]& \pi_1\cF(1)  \ar[d]^-=  \ar[r] &  0 \ar[d] \\
0\ar[r] & \pi_2\cB \ar[r]^-\cong & \pi_1\cF \ar[r]  & 0. \\
}\]
Extracted from the long exact sequence in homotopy. We explain the diagram further.
We have used the fact that $\E(1)$ is a $B\Unitary(1)$ and so has vanishing $\pi_1$. By Proposition \ref{prop:BUchi}, the map $\cE(1) \to \B(1)$ models the composite $B\Unitary(1) \to B\PU(1) \to B(\Unitary(1) \times \PU(1) )$ and $B\PU(1) $ is contractible, so this map is trivial on homotopy groups. 
\end{proof}

It follows from the previous lemma that, to construct a parametrized system $S^2 \to \B$ whose homotopy class generates $\pi_2\B$, it suffices to construct a parametrized system $S^2 \to \B(1)$ with this property. But $\B(1) \cong \cI(1)/\Unitary(1)$. An element of $\cI(1)$ is a right-normalized injective tensor $K$ of bond dimension $1$. We thus have a vector $K=(K^1, K^2, \ldots) \in \C^\infty$ such that 
\[\sum_{i\geq 1} K^i(K^i)^* = 1 ,\]
i.e., $K$ is a unit vector, or a point of $\mathbb{S}^\infty \subset \C^\infty$. The $\Unitary(1)$ action is given by $\mu.K =(\mu K^1, \mu K^2, \ldots)$, so is the standard action of $\Unitary(1)$ on $\mathbb{S}^\infty $. This gives an explicit identification of $\B(1)$ as $\C P^\infty$. Using that $S^2 \cong \C P^1$ we let 
\[\psi_2 \colon S^2 \to \B(1)\]
be the inclusion $\C P^1 \subset \C P^\infty$ given by
\[\psi_2([K^1 : K^2]) = [K^1 : K^2: 0 : \cdots] .\] 
It is well-known that $\psi_2$ induces an isomorphism on $\pi_2$.

Note that the image of this map lies in $\B(2, 1, 1)$, which can be mapped into the pure state space $\sP(2)$. For any tensor $K=(K^1, K^2, 0,  \ldots) \in \I(1)$ representing the state $[K]=[K^1 : K^2: 0 : \cdots] \in \B(2, 1, 1)$, it's easy to see that $\mathbb{E}_{\1}$ has unique positive invertible eigenvector $T=1$, and that
\[\omega_K(C_1\otimes \cdots \otimes C_n) =  \prod_{i=1}^n \bra{\Omega_K}C_i\ket{\Omega_K} \]
where
\[
	\ket{\Omega_K} = K^1 \ket{1} + K^2 \ket{2}.
\]
So, $\omega_K$ is the product state corresponding to the pure state $\ket{\Omega_K}$ at each site.

\subsection{The Generator of \texorpdfstring{$\pi_3$}{pi3}}
Let us consider the example of the Chern number pump introduced in \cite{qpump} and further examined in \cite{ChartingGroundStates}. The parameter space is taken to be the 3-sphere $X = S^3$, with elements written as $(\w, w_4)$, where $\w = (w_1, w_2, w_3)$ satisfies $\w^2 + w_4^2 = 1$. After defining a map $\psi_3: S^3 \rightarrow \B$, we will show that it generates $\pi_3$. 

To define our map $\psi_3: S^3 \rightarrow \B$, we cover $S^3$ with two open sets:
\begin{align*}
U_N &= \qty{(\w, w_4) \in S^3: w_4 > - \frac{1}{2}} \\
U_S &= \qty{(\w, w_4) \in S^3 : w_4 < \frac{1}{2}}.
\end{align*}
We will define continuous functions $A_N: U_N \rightarrow \E(d = 4, D = 2)$ and $A_S:U_S \rightarrow \E(d = 4, D = 1)$ such that the projections $U_N \rightarrow \E(4, 2) \rightarrow \B$ and $U_S \rightarrow \E(4, 1) \rightarrow \B$ agree on the overlap $U_N \cap U_S$. Thus, this will define a continuous map $\psi_3: S^3 \rightarrow \B$. 

Physically, we think of this example as describing a model with two qubits on each site of the one-dimensional lattice $\bbZ$. The Hilbert space describing a single lattice site is therefore $\bbC^2 \otimes \bbC^2$. It is therefore convenient to index the $d = 4$ physical dimensions as pairs $ij$ where $i$ and $j$ index the two standard basis vectors on the first and second $\bbC^2$ factor of the tensor product, respectively. We will write $i,j \in \qty{\uparrow, \downarrow}$, where
\[
\ket{\uparrow} = \mqty(1 \\ 0 ) \qqtext{and} \ket{\downarrow} = \mqty(0\\1).
\]

To define $A_N$ and $A_S$, we first define unitary matrices for $\theta, \phi \in \bbR$:
\[
X(\theta, \phi) = \mqty(\cos \frac{\theta}{2} & -e^{-i\phi}\sin \frac{\theta}{2} \\ e^{i \phi} \sin \frac{\theta}{2} & \cos \frac{\theta}{2})
\]
and we define $\Lambda^{N} : U_N \rightarrow M_2(\bbC)$ and $\Lambda^S:U_S \rightarrow M_2(\bbC)$ by
\[
\Lambda^N(\w, w_4) = \left\{ \begin{array}{cl} \mqty(0 & - \sqrt{\frac{1}{2} - \frac{\norm{\w}}{\sqrt{3}}} \\ \sqrt{\frac{1}{2} + \frac{\norm{\w}}{\sqrt{3}}} & 0)  &\tn{if $w_4 \geq \frac{1}{2}$} \\ \mqty(0&0\\1&0) &\tn{if $-\frac{1}{2} < w_4 \leq \frac{1}{2}$}  \end{array} \right.
\]
and
\[
\Lambda^S(\w, w_4) = \left\{ \begin{array}{cl} \mqty(0 &  \sqrt{\frac{1}{2} + \frac{\norm{\w}}{\sqrt{3}}} \\ -\sqrt{\frac{1}{2} - \frac{\norm{\w}}{\sqrt{3}}} & 0)  &\tn{if $w_4 \leq -\frac{1}{2}$} \\ \mqty(0&1\\0&0) &\tn{if $-\frac{1}{2} \leq w_4 < \frac{1}{2}$}  \end{array} \right.
\]
Now we define $A_N^{ij}:U_N \rightarrow M_2(\bbC)$ and $A_S^{ij}:U_S \rightarrow M_1(\bbC)$ by
\begin{align*}
A^{ij}_N(\w, w_4) &= \ketbra{i}{j} X(\theta, \phi)\Lambda^N(\w, w_4)X(\theta, \phi)^T\\
A^{ij}_S(\w, w_4) &= \mel{i}{X(\theta, \phi)\Lambda^S(\w, w_4)X(\theta, \phi)^T}{j}
\end{align*}
where $\theta, \phi \in \bbR$ are chosen so that $\w = \norm{\w}(\sin \theta \cos \phi, \sin \theta \sin \phi, \cos \theta)$. If one expands out the products $X(\theta, \phi)\Lambda^{N/S}(\w, w_4)X(\theta, \phi)^T$, then one sees that each $A^{ij}_{N/S}$ is a well-defined and continuous function of $(\w, w_4)$, in particular at the poles.

We must show that $A_N$ and $A_S$ are valued in $\E(4,2)$ and $\E(4,1)$, respectively. Since $A^{ij}_S(\w, w_4)$ obviously forms an injective MPS tensor, we need only verify the right normalization condition. 
Using a bar to denote complex conjugation, we compute:
\begin{align*}
\sum_{i,j} A^{ij}_S(A^{ij}_S)^* &= \sum_{i,j} \bra{i}X\Lambda^SX^T\ketbra{j} \overline{X} (\Lambda^S)^* X^*\ket{i}\\
&= \sum_i \bra{i}X\Lambda^S (\Lambda^S)^* X^* \ket{i}\\
&= \tr (\Lambda^S (\Lambda^S)^*)\\
&= 1,
\end{align*}
which holds for all $(\w, w_4) \in U_S$. Note also that for $w_4 \in [-\frac{1}{2}, \frac{1}{2})$, we can compute 
\begin{alignat*}{2}
A^{\uparrow \uparrow}_S(\w, w_4) &= -\frac{1}{2}e^{-i\theta} \sin \theta &\qquad A^{\uparrow \downarrow}_S(\w, w_4) &= \frac{1+ \cos \theta}{2}\\
A^{\downarrow \uparrow}_S(\w, w_4) &= \frac{\cos \theta - 1}{2} &\qquad A^{\downarrow \downarrow}_S(\w, w_4) &= \frac{1}{2}e^{i\phi}\sin \theta
\end{alignat*}

To see that $A_N(\w, w_4) \in \E(4, 2)$, write 
\begin{equation}\label{eq:Aij_north_decomposition}
A_N^{ij}(\w, w_4) = \overline{X(\theta, \phi)}B^{ij}(\theta, \phi, w_4)X(\theta, \phi)^T.
\end{equation}
where
\[
B^{ij}(\theta, \phi, w_4) \defeq X(\theta, \phi)^T\ketbra{i}{j} X(\theta, \phi)\Lambda^N(\w, w_4)
\]
For $w_4 \geq \frac{1}{2}$, the matrices $B^{ij}(\theta, \phi)$
span $M_2(\bbC)$ since $\Lambda^N(\w, w_4)$ is invertible in this case; these matrices therefore form an injective MPS. Furthermore, we observe right normalization:
\begin{align*}
\sum_{i,j} X^T \ketbra{i}{j}X\Lambda^N(\Lambda^N)^* X^* \ketbra{j}{i} \overline{X} &= \tr\qty[\Lambda^N(\Lambda^N)^*] \sum_i X^T \ketbra{i}\overline{X}  \\
&= \tr\qty[\Lambda^N(\Lambda^N)^*] \1\\
&= \1.
\end{align*}
For $w_4 \in (-\frac{1}{2}, \frac{1}{2}]$, we have that $B^{ij}(\theta, \phi, w_4)$ is independent of $w_4$ and we compute:
\begin{equation}\label{eq:Bij}
\begin{aligned}
B^{\uparrow \uparrow}(\theta, \phi) &= \mqty(-\frac{1}{2}e^{-i\phi}\sin \theta & 0 \\  e^{-2i\phi}\sin^2 \frac{\theta}{2} & 0) &\qquad B^{\uparrow \downarrow}(\theta, \phi) &= \mqty(\frac{1}{2}(1 + \cos \theta) & 0 \\ -\frac{1}{2}e^{-i\phi} \sin \theta & 0) \\ 
B^{\downarrow \uparrow}(\theta, \phi) &= \mqty(\frac{1}{2}(\cos \theta - 1) & 0 \\  -\frac{1}{2}e^{-i\phi}\sin \theta & 0) &\qquad B^{\downarrow \downarrow}(\theta, \phi) &= \mqty(\frac{1}{2}e^{i\phi}\sin \theta & 0 \\ \frac{1}{2}(1 + \cos \theta)&0)
\end{aligned}
\end{equation}
We see that $B$ is of the form $B = \mqty(K&0\\M&0)$ with $K^{ij} = A^{ij}_S$, which is a right normalized injective MPS tensor. This proves that $A_N(\w, w_4) \in \E(4,2)$ for all values of $(\w, w_4)$, and that the projections $U_N \rightarrow \E(4,2) \rightarrow \B$ and $U_S \rightarrow \E(4,1) \rightarrow \B$ agree on the overlap $U_N \cap U_S$.

It remains to show that the resulting map $\psi_3: S^3 \rightarrow \B$ is a generator of $\pi_3\B \cong \bbZ$. For this purpose it is convenient to consider the basepoint of $\E$ to be $A_0 \defeq A_S(\bm{0},-1)$ with  $p(A_S(\bm{0},-1)) = \psi_3(\bm{0}, -1)$ as the basepoint of $\B$. Let $\tilde \cF = p^{-1}(\psi_3(\bm{0}, -1))$ be the fiber over this basepoint. Since $p$ is a quasifibration and $\E$ is contractible, we have isomorphisms
\[
\pi_3(\B, \psi_3(\bm{0},-1)) \xrightarrow{p_*^{-1}} \pi_3(\E, \tilde \cF, A_0) \xrightarrow{\partial} \pi_2(\tilde \cF, A_0),
\]
where $\partial$ is the connecting homomorphism of the long exact sequence for the based pair $(\E, \tilde \cF, A_0)$. Therefore, to show that the homotopy class $[\psi_3]$ is a generator, it suffices to show that $\partial p_*^{-1}[\psi_3]$ is a generator of $\pi_2(\tilde \cF, A_0)$. 

Let us first compute $p_*^{-1}[\psi_3]$. Recall that $\pi_3(\E, \tilde \cF, A_0)$ is the set of homotopy classes of maps $(D^3, S^2, (0,0,1)) \rightarrow (\E, \tilde \F, A_0)$, where 
\[
D^3 = \qty{\bv \in \bbR^3: \norm{\bv} \leq 1}.
\]
We regard $S^3$ as $D^3$ with its $S^2$ boundary identified to a point. Explicitly, we have a map $f: (D^3, S^2) \rightarrow (S^3, (\bm{0}, -1))$ defined by
\[
f(\bv) = \qty(2\sqrt{1 - \norm{\bv}^2}  \cdot \bv, 1 - 2\norm{\bv}^2) =\vcentcolon (\w(\bv), w_4(\bv)).
\]
Then $\pi_3(\B, \psi_3(\bm{0}, -1))$ may be regarded as the set of homotopy classes of maps of pairs $(D^3, S^2) \rightarrow (\B, \psi_3(\bm{0}, -1))$. In particular, $[\psi_3]$ is regarded as the homotopy class $[\psi_3 \circ f]$. Then $p_*^{-1}[\psi_3 \circ f]$ is computed as a lift of $\psi_3 \circ f$ to a map $A: (D^3, S^2, (0,0,1)) \rightarrow (\E, \tilde \cF, A_0)$. We compute this lift below.

For $\norm{\bv} < \frac{\sqrt{3}}{2}$, equivalently for $w_4(\bv) > -\frac{1}{2}$, we define 
\begin{equation}\label{eq:Aij_north}
A^{ij}(\bv) = A^{ij}_N(\w(\bv), w_4(\bv)).
\end{equation}
For $\norm{\bv} > \frac{1}{2}$, equivalently for $w_4(\bv) < \frac{1}{2}$, we define
\begin{equation}\label{eq:Aij_south_extension}
A^{ij}(\bv) = \overline{X(\theta, \phi)}\mqty(A^{ij}_S(\w(\bv), w_4(\bv))  & 0 \\ M^{ij}(\theta, \phi, \norm{\bv}) & 0 )X(\theta, \phi)^T
\end{equation}
where $\theta, \phi \in \bbR$ are chosen so that $\bv = \norm{\bv}(\sin \theta \cos \phi, \sin \theta \sin \phi, \cos \theta)$ and $M^{ij}(\theta, \phi, \norm{\bv})$ is defined by the lower left corner of \eqref{eq:Bij} for $w_4(\bv) \in [-\frac{1}{2}, \frac{1}{2})$ and for $w_4(\bv) \leq -\frac{1}{2}$ by
\begin{equation*}
\begin{aligned}
M^{\uparrow \uparrow}(\theta, \phi, \norm{\bv}) &= e^{-2i\phi}\qty(\frac{1 - \cos \theta}{2})\qty(\sqrt{\frac{1}{2} + \frac{\norm{\w(\bv)}}{\sqrt{3}}} - \sqrt{\frac{1}{2} - \frac{\norm{\w(\bv)}}{\sqrt{3}}})\\
M^{\downarrow \downarrow}(\theta, \phi, \norm{\bv}) &= \qty(\frac{1 + \cos \theta}{2})\qty(\sqrt{\frac{1}{2} + \frac{\norm{\w(\bv)}}{\sqrt{3}}} - \sqrt{\frac{1}{2} - \frac{\norm{\w(\bv)}}{\sqrt{3}}})\\
M^{\uparrow \downarrow}(\theta, \phi, \norm{\bv}) &= M^{\downarrow \uparrow}(\theta, \phi, \norm{\bv}) = -\frac{1}{2}e^{-i\phi}\sin \theta.
\end{aligned}
\end{equation*}
If one multiplies out the matrices in \eqref{eq:Aij_south_extension}, then one sees that \eqref{eq:Aij_south_extension} is indeed a well-defined continuous function of $\bv$ for $\norm{\bv} > \frac{1}{2}$. By \eqref{eq:Aij_north_decomposition} and \eqref{eq:Bij}, we see that \eqref{eq:Aij_north} and \eqref{eq:Aij_south_extension} agree on the overlap $w_4(\bv) \in (-\frac{1}{2}, \frac{1}{2})$.  Thus, $A^{ij}:(D^3, S^2, (0,0,1)) \rightarrow (\E, \tilde \cF, A_0)$ is well-defined and continuous. It is clearly a lift of $\psi_3 \circ f$. Thus,
\[
p_*^{-1}[\psi_3 \circ f] = [A].
\]

The connecting homomorphism $\partial:\pi_3(\E, \tilde \cF, A_0) \rightarrow \pi_2(\tilde \cF, A_0)$ is defined by restricting a map $(D^3, S^2, (0,0,1)) \rightarrow (\E, \tilde \cF, A_0)$ to the $S^2$ boundary. Thus, $\partial p_*^{-1}[\psi_3 \circ f] = [A|_{S^2}]$. Using the homeomorphism $\tilde \cF \cong \cF(1)$ of Proposition \ref{prop:fiber_homeomorphism} and the homotopy equivalence $\cF(1) \cong \Unitary(1) \times \bbP V_1(\bbC^\infty) = \Unitary(1) \times \bbC \bbP^\infty$ of Proposition \ref{prop:Stiefel_weak_homotopy_equiv}, we get an isomorphism
\[
\pi_2(\tilde \cF) \cong \pi_2(\Unitary(1) \times \bbC \bbP^\infty) \cong \pi_2(\bbC \bbP^\infty),
\]
where $\Unitary(1)$ is based at $1$ and $\bbC \bbP^\infty$ is based at the projectivization of the first standard basis vector. From the definitions in Propositions \ref{prop:fiber_homeomorphism} and \ref{prop:Stiefel_weak_homotopy_equiv}, we see that $[A|_{S^2}]$ maps to the homotopy class of the map
\[
S^2 \rightarrow \bbC\bbP^\infty, \quad \bv \mapsto \mqty[\cos \frac{\theta}{2} \\ e^{-i\phi} \sin \frac{\theta}{2}\\ 0 \\ \vdots],
\]
i.e., the projectivization of the first column of $\overline{X(\theta, \phi)}$. This is indeed a generator of $\pi_2(\bbC \bbP^\infty)$, physically corresponding to the ground state of (the complex conjugate of) Berry's Hamiltonian $H(\theta, \phi) = -\overline{X(\theta, \phi)}\sigma^z X(\theta, \phi)^T$. This proves that $\psi_3$ is a generator of $\pi_3(\cB)$.

\appendix

\section{Equivalence of Subspace and Colimit Topologies}

\begin{proposition}\label{prop:subspace_colimit_equal}
Let $X$ be a topological space with an increasing sequence of $T_1$ subspaces 
\[
X_1 \subset X_2 \subset \cdots \subset X_n \subset \cdots
\]
such that $X = \bigcup_{n=1}^\infty X_n$ and the topology on $X$ is the final topology induced by the inclusions $X_n \hookrightarrow X$. If $A \subset X$ and $A$ is compactly generated with the subspace topology obtained from $X$, then the subspace topology on $A$ coincides with the final topology induced by the inclusions $A \cap X_n \hookrightarrow A$, where each $A \cap X_n$ is topologized as a subspace of $X$.
\end{proposition}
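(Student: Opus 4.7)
The plan is to prove both containments between $\tau_s$, the subspace topology on $A$ inherited from $X$, and $\tau_c$, the final topology on $A$ induced by the inclusions $A \cap X_n \hookrightarrow A$. One containment is formal: if $U = V \cap A$ with $V$ open in $X$, then for every $n$ we have $U \cap (A \cap X_n) = V \cap (A \cap X_n)$, which is open in $A \cap X_n$ by definition of the subspace topology; hence $U \in \tau_c$. So $\tau_s \subseteq \tau_c$. The substantive direction is $\tau_c \subseteq \tau_s$.

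For this, I would fix $U \in \tau_c$ and exploit that $A$ is a $k$-space: it suffices to check that for every compact Hausdorff $K$ and every continuous $f \colon K \to A$, the preimage $f^{-1}(U)$ is open in $K$. The argument then hinges on a single key lemma, which is the real content of the proposition:

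\emph{Key Lemma.} Every compact subset $C \subset X$ is contained in some $X_n$.

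Granting the lemma, the image $f(K) \subset A$ is compact in $X$, so lies in some $A \cap X_n$; since $A \cap X_n$ carries the subspace topology from $X$, the corestriction $\tilde f \colon K \to A \cap X_n$ is continuous. Because $U \cap (A \cap X_n)$ is open in $A \cap X_n$ by hypothesis, $f^{-1}(U) = \tilde f^{-1}(U \cap (A \cap X_n))$ is open in $K$, finishing the proof.

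To prove the lemma, I would argue by contradiction. If $C$ is not contained in any $X_n$, inductively construct a sequence of distinct points $z_1, z_2, \ldots \in C$ by choosing $z_1 \in C$ arbitrarily and, once $z_1, \ldots, z_k$ have been picked, letting $N_k$ be such that $z_1, \ldots, z_k \in X_{N_k}$ and then selecting $z_{k+1} \in C \setminus X_{N_k}$. Set $S = \{z_k\}$. By construction $S \cap X_n$ is finite for every $n$, and the same is true of every subset $S' \subseteq S$. Since each $X_n$ is $T_1$, these finite intersections are closed in $X_n$, and the final-topology hypothesis on $X$ then promotes every subset of $S$ to a closed subset of $X$. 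Thus $S$ is a closed discrete subspace of $X$, in particular a closed discrete infinite subspace of the compact space $C$, which is impossible.

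The main obstacle is precisely the key lemma, and within it, the use of the $T_1$ assumption on each $X_n$: this is what lets finite subsets of $X_n$ be closed and thereby lets the sequence $S$ become closed discrete in the colimit $X$. Everything else in the proof is bookkeeping with universal properties of subspace and final topologies.
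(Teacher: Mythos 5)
Your proof is correct and takes essentially the same route as the paper's: reduce to checking compactly open sets, push a continuous map from a compact Hausdorff space into some finite stage $A\cap X_n$, and conclude from the definition of the final topology. The one difference is that you prove the key lemma (that compact subsets of a sequential colimit of $T_1$ spaces land in a finite stage) from scratch, whereas the paper invokes it as a known fact with the phrase ``Since $X$ is a sequential colimit of $T_1$ spaces, we know $f(K)\subset X_n$''; your argument makes the role of the $T_1$ hypothesis explicit and the proof self-contained.
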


\begin{proof}
Let $\sT_f$ and $\sT_s$ be the final and subspace topology on $A$, respectively. The identity map $\id:(A, \sT_f) \rightarrow (A, \sT_s)$ is continuous since the composition $A \cap X_n \hookrightarrow A \xrightarrow{\id} A \hookrightarrow X$ is the restriction of the inclusion map $X_n \hookrightarrow X$. Therefore $\sT_s \subset \sT_f$.

Suppose $C \subset A$ is closed in the final topology. Since $\sT_s$ is compactly generated, it suffices to show that $C$ is compactly closed relative to $\sT_s$. That is, for any compact Hausdorff space $K$ and continuous function $f:K \rightarrow (A, \sT_s)$, we need to prove $f^{-1}(C)$ is closed in $K$. Since $X$ is a sequential colimit of $T_1$ spaces, we know $f(K) \subset X_n$ for some $n \in \bbN$. The restriction $\overline f:K \rightarrow A \cap X_n$ is continuous, and with $\iota_n:A \cap X_n \rightarrow (A, \sT_f)$ as the inclusion, we have 
\[
f^{-1}(C) =  \overline{f}^{-1}(\iota_n^{-1}(C)),
\]
which is closed since $\iota_n$ is continuous and $C$ is closed in the final topology. Thus, $\sT_f \subset \sT_s$.
\end{proof}

\subsection*{Acknowledgements}

The authors would like to thank Michael Hopkins, Alexei Kitaev and Bruno Nachtergaele for helpful conversations. This material is based upon work supported by the National Science Foundation under Grant No.  DMS 2055501, DMS 2143811 and DMS 2303063. 

\subsection*{Data Availability Statement}

No datasets were generated or analysed during the current study.

\subsection*{Conflicts of Interest}

The authors have no relevant financial or non-financial interests to disclose.

\bibliographystyle{amsalpha}
\bibliography{KZ3-bib.bib}
\end{document}